\DeclareMathOperator{\diag}{diag}
\DeclareMathOperator*{\Res}{Res}
\DeclareMathOperator{\CHF}{CHF}
\newcommand{\R}{\mathbb{R}}
\newcommand{\C}{\mathbb{C}}
\renewcommand{\Re}{\mathrm{Re}\,}
\renewcommand{\Im}{\mathrm{Im}\,}
\renewcommand{\vec}{\mathbf}
\newcommand{\ud}{\,\mathrm{d}}
\newcommand{\Boh}{\mathcal{O}}
\newtheorem{theorem}{Theorem}[section]
\newtheorem{lemma}[theorem]{Lemma}
\newtheorem{proposition}[theorem]{Proposition}
\newtheorem{corollary}[theorem]{Corollary}
\newtheorem{rhp}[theorem]{RH problem}
\theoremstyle{definition}
\theoremstyle{remark}
\newtheorem{remark}[theorem]{Remark}
\numberwithin{equation}{section}
\begin{document}

\title{Gap probability for the hard edge Pearcey process}

\author{Dan Dai\footnotemark[1], ~Shuai-Xia Xu\footnotemark[2] ~and Lun Zhang\footnotemark[3]}

\renewcommand{\thefootnote}{\fnsymbol{footnote}}
\footnotetext[1]{Department of Mathematics, City University of Hong Kong, Tat Chee
Avenue, Kowloon, Hong Kong. E-mail: \texttt{dandai@cityu.edu.hk}}
\footnotetext[2]{Institut Franco-Chinois de l'Energie Nucl\'{e}aire, Sun Yat-sen University,
Guangzhou 510275, China. E-mail: \texttt{xushx3@mail.sysu.edu.cn}}
\footnotetext[3] {School of Mathematical Sciences and Shanghai Key Laboratory for Contemporary Applied Mathematics, Fudan University, Shanghai 200433, China. E-mail: \texttt{lunzhang@fudan.edu.cn }}

\date{\today}

\maketitle

\maketitle

\begin{abstract}
The hard edge Pearcey process is universal in random matrix theory and many other stochastic models. This paper deals with the gap probability for the thinned/unthinned hard edge Pearcey process over the interval $(0,s)$ by working on the relevant Fredholm determinants. We establish an integral representation of the gap probability via a Hamiltonian related a system of coupled differential equations. Together with some remarkable differential identities for the Hamiltonian, we derive the large gap asymptotics for the thinned case, up to and including the constant term. As an application, we also obtain the asymptotic statistical properties of the counting function for the hard edge Pearcey process.
\end{abstract}

\setcounter{tocdepth}{2} \tableofcontents

\section{Introduction}

A fundamental question in random matrix theory is to understand distributions of the eigenvalues for large random matrices in different regimes.  For the local eigenvalue statistics of random Hermitian matrices, it is well-known that they are encoded in the Fredholm determinants of some integral operators \cite{Dyson,Forrester,metha}. Moreover, the associated determinantal point processes \cite{Johansson06,Soshnikov00} are believed to be universal \cite{EY12,Kui11} for a large family of interacting particle systems, which serves as the cornerstone of widespread applications of random matrix theory in different areas ranging from theoretical to practical sciences.

It comes out that many canonical point processes arising from random matrix theory are characterized by integral kernels of the form
\begin{equation}\label{def:Kxy}
K(x,y)=\frac{1}{ (2 \pi i)^2}\int_{\gamma_s}\int_{\gamma_t}\frac{e^{\mathcal{V}(s;x)-\mathcal{V}(t;y)}}{t-s}\ud s \ud t
\end{equation}
for $x,y \in \mathbb{R}$ or $(0,+\infty)$, where $\mathcal{V}(s;x)$ takes a simple form in $s$, the contours $\gamma_s$ and $\gamma_t$ depend on the precise form of $\mathcal{V}$. The most prominent example is the Airy kernel \cite{metha}, which corresponds to $\mathcal{V}(s;x)=s^3/3-xs$. The Airy determinant gives us the celebrated Tracy-Widom distribution \cite{TWAiry}, which describes the limiting distributions of extreme eigenvalues for many random matrix ensembles and share an inherent connection with statistics of other stochastic models beyond the realm of random matrices. One can generalize the Airy kernel by setting $\mathcal{V}$ to be an odd polynomial of the form $$ (-1)^{n+1}\left(\frac{s^{2n+1}}{2n+1}+\sum_{j=1}^{n-1}\rho_j s^{2j+1}\right)-xs, \quad \rho_j\in\mathbb{R}, \quad n=1,2,\ldots. $$ The associated point processes appear in multicritical edge statistics for the momenta of fermions in nonharmonic traps \cite{DMS18} and other statistical physics model \cite{ACV12}. The classical Pearcey kernel corresponds to $\mathcal{V}(s;x)=s^4/4-\rho s^2/2+xs$, $\rho\in \mathbb{R}$, i.e., a quartic polynomial in $s$. Near the points of the spectrum where the density exhibits a cusp-like singularity, the local statistics is usually modeled by the Pearcey process. It originates from Gaussian random matrices with an external source \cite{BK3,BH,BH1} and appears in many other random matrix ensembles \cite{EKS,GZ,HHNb}. It is also possible that the function $\mathcal{V}$ admits some singularities near the origin. By choosing $\mathcal{V}(s;x)=-xs-\alpha \ln s + 1/(4s)$, $\alpha>-1$, we obtain, up to a conjugation term $(y/x)^{\alpha/2}$, the classical Bessel kernel \cite{Forrester93,TWBessel}. The Bessel point process typically describes the smallest eigenvalue distribution near the hard edge 0 for the Wishart-Laguerre ensemble and its unitary invariant generalizations \cite{KV02,Van07}.

Let $\mathcal{K}$ be the integral operator acting on a proper interval $I\in \mathbb{R}$ with kernel $K$. The Fredholm determinant
\begin{equation}
D(I;\gamma):=\det(I-\gamma \mathcal{K}),\quad 0<\gamma\leq 1,
\end{equation}
has attracted great interest over the past few years. On one hand, $D(I;\gamma)$ is essential in understanding the point process determined by $K$. Indeed, $D(I;1)$ can be interpreted as the probability of finding no particles (also known as the gap probability) on the interval $I$ \cite{Johansson06,Soshnikov00}-- a basic object in the theory, while $D(I;\gamma)$, $0<\gamma<1$, is the gap probability for the associated thinned process. Thinning is a classical operation in the studies of point processes \cite{IPSS}, which is obtained from the original one by removing each particle independently with probability $1-\gamma$; we refer to \cite{Boh06,CC17} for more information about thinning in random matrix theory. It is also worth noticing that one can deduce global rigidity upper bounds for determinantal point processes from $D(I;\gamma)$ by using a general result established in \cite{CC20}. On the other hand, the Fredholm determinant $D(I;\gamma)$ alone has rich structures and connections. It is well-known that for the universal sine, Airy and Bessel kernels, the associated determinants admit integral representations in terms of Painlev\'{e} transcendents or the associated Hamiltonians \cite{JM80,TWBessel,TWAiry}. Moreover, although it is impossible to evaluate $D(I;\gamma)$ explicitly, but the relevant asymptotic expansions as $|I|\to \infty$ usually have a simple and elegant form. To derive these large gap asymptotics, however, one needs a deep and strong effort \cite{Kra1}. For the aforementioned various kernels in the form \eqref{def:Kxy}, the asymptotics of $D(I;\gamma)$ with $I$ being a single interval can be found in \cite{BRD08,Bot:Buck2018,DIK2008} for the Airy point process, in \cite{CCG21} for the generalized Airy point process, in \cite{BIP19,Charlier21,dkv,E10} for the Bessel point process, and in \cite{CM,DXZ202,DXZ21} for the Pearcey process.

In this paper, we are concerned with the hard edge Pearcey process characterized by the kernel
(see \cite[Equation (1.19)]{Kui:AMF:Wie2011})
\begin{equation}\label{eq: pearcey kernel}
K_\alpha(x,y;\rho)
:=\frac{1}{ (2 \pi i)^2}\int_{\gamma_s}\int_{\gamma_t}\frac{e^{\rho/s+1/(2s^2)-\rho/t -1/(2t^2)+xs-yt}}{t-s}
\left(\frac{s}{t}\right)^{\alpha}\ud s \ud t, \quad x,y>0,
\end{equation}
that is, $\mathcal{V}(s;x)=xs+\alpha \ln s+ \rho/s+1/(2s^2)$ in \eqref{def:Kxy}, where the parameters $\alpha > -1$, $\rho\in\mathbb{R}$, and the contours $\gamma_s$ and $\gamma_t$ are illustrated in Figure \ref{fig:gammas}. This process was first introduced by
Desrosiers and Forrester for a chiral Gaussian unitary ensemble with a source term \cite{DF08}, where $K_\alpha$ is given in a double integral form involving the Bessel function of the first kind of order $\alpha$. In the context of non-intersecting squared Bessel paths, Kuijlaars, Mart\'{i}nez-Finkelshtein and Wielonsky obtained the one shown in \eqref{eq: pearcey kernel}. More precisely, consider $n$ non-intersecting squared Bessel paths with the same positive starting point and the same ending point taken to be 0. As $n\to\infty$, all paths, after proper scaling, initially stay away from the hard edge 0, but at a certain critical time $t=t^*$, the lowest paths hit the hard edge and are stuck to it from then on. If $t\neq t^*$, the local correlations are described by the classical sine, Airy, and Bessel kernels in random matrix theory \cite{KMW09}, and one obtains the kernel $K_\alpha$ at the hard edge after proper scaling around $t^*$ \cite{Kui:AMF:Wie2011}. The two different formulations of $K_\alpha$ in \cite{DF08} and \cite{Kui:AMF:Wie2011} are later proved to be equivalent by Delvaux and Vet\H{o} in \cite{DV15}. Since the kernel $K_\alpha$ appears near the cusp of the non-intersecting squared Bessel paths model and bears a resemblance to the Pearcey kernel, we call it the hard edge Pearcey kernel, following the terminology initiated in \cite{DV15}. The hard edge Pearcey process is also universal as can be seen from its emergence in various different stochastic models, which include the random surface model \cite{BK10,Cer}, hard edge asymptotics of the Jacobi growth model \cite{CK20}, critical behavior in Angelesco ensembles \cite{DesKui} and a model of non-intersecting Brownian bridges between reflecting or absorbing walls \cite{LWAdv}; see also \cite{DV15,Gir14} for multi-time extensions of the hard edge Pearcey process.

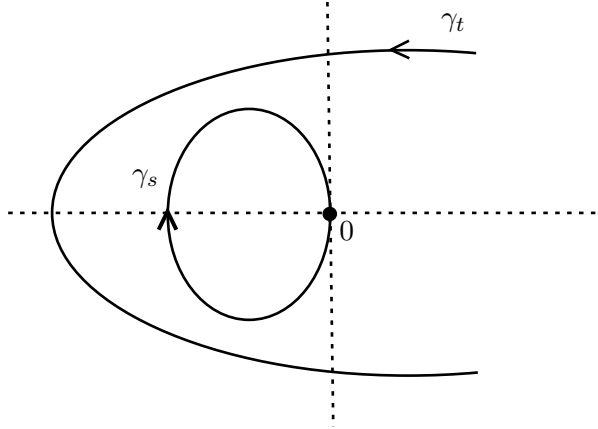
\begin{figure}[h]
\centering
\begin{tikzpicture}[x=0.75pt,y=0.75pt,yscale=-1,xscale=1]

\draw [line width=1]  [dash pattern={on 1.69pt off 2.76pt}]  (171.57,165.96) -- (465.28,165.96) ;
\draw [line width=1]  [dash pattern={on 1.69pt off 2.76pt}]  (330.85,59.86) -- (333.77,273.57) ;
\draw  [draw opacity=0][line width=1]  (405.9,246.23) .. controls (394.63,247.26) and (382.97,247.8) .. (371.03,247.8) .. controls (272.98,247.8) and (193.49,211.16) .. (193.49,165.96) .. controls (193.49,120.75) and (272.98,84.11) .. (371.03,84.11) .. controls (382.67,84.11) and (394.04,84.62) .. (405.05,85.61) -- (371.03,165.96) -- cycle ; \draw  [line width=1]  (405.9,246.23) .. controls (394.63,247.26) and (382.97,247.8) .. (371.03,247.8) .. controls (272.98,247.8) and (193.49,211.16) .. (193.49,165.96) .. controls (193.49,120.75) and (272.98,84.11) .. (371.03,84.11) .. controls (382.67,84.11) and (394.04,84.62) .. (405.05,85.61) ;
\draw  [line width=1]  (251.21,166.71) .. controls (251.21,137.42) and (269.36,113.66) .. (291.76,113.66) .. controls (314.15,113.66) and (332.31,137.42) .. (332.31,166.71) .. controls (332.31,196.01) and (314.15,219.76) .. (291.76,219.76) .. controls (269.36,219.76) and (251.21,196.01) .. (251.21,166.71) -- cycle ;
\draw  [line width=1]  (371.57,79.33) -- (362.5,83.92) -- (371.72,88.17) ;
\draw  [fill={rgb, 255:red, 0; green, 0; blue, 0 }  ,fill opacity=1 ][line width=1]  (332.31,169.67) .. controls (332.31,169.54) and (332.2,169.43) .. (332.06,169.43) .. controls (331.92,169.43) and (331.81,169.54) .. (331.81,169.67) .. controls (331.81,169.81) and (331.92,169.92) .. (332.06,169.92) .. controls (332.2,169.92) and (332.31,169.81) .. (332.31,169.67) -- cycle ;
\draw  [fill={rgb, 255:red, 0; green, 0; blue, 0 }  ,fill opacity=1 ][line width=1]  (329.09,166.48) .. controls (329.09,164.85) and (330.42,163.53) .. (332.06,163.53) .. controls (333.69,163.53) and (335.02,164.85) .. (335.02,166.48) .. controls (335.02,168.11) and (333.69,169.43) .. (332.06,169.43) .. controls (330.42,169.43) and (329.09,168.11) .. (329.09,166.48) -- cycle ;
\draw  [line width=1.5]  (255.49,174.41) -- (251.07,165.26) -- (246.65,174.4) ;

\draw (386.3,64.52) node [anchor=north west][inner sep=0.75pt]   [align=left] {$\gamma_t$};
\draw (232.18,142.75) node [anchor=north west][inner sep=0.75pt]   [align=left] {$\gamma_s$};
\draw (335.68,169.13) node [anchor=north west][inner sep=0.75pt]   [align=left] {0};

\end{tikzpicture}

\caption{The contours $\gamma_s$ and $\gamma_t$ in the definition of $K_\alpha$ \eqref{eq: pearcey kernel}.}
\label{fig:gammas}

\end{figure}

The purpose of this paper to add to the collection of formulas for the very few basic gap probabilities by working on the hard edge Pearcey determinant. Our main results will be stated in the next section.

\section{Main results}
Let $\mathcal{K}_{s, \alpha}$ be the trace class operator acting on $L^2\left(0, s\right)$, $s\geq 0$, with the kernel $K_\alpha$ shown in \eqref{eq: pearcey kernel}. As aforementioned, the Fredholm determinant $\det\left(I-\gamma \mathcal{K}_{s, \alpha} \right)$, $0< \gamma \leq 1$ gives us the gap probability for the thinned/unthinned hard edge Pearcey process over the interval $(0,s)$, which corresponds to $0<\gamma<1$ and $\gamma=1$, respectively. By setting
\begin{equation}\label{def:Fnotation}
F(s;\gamma,\rho):=\ln  \det \left(I-\gamma \mathcal{K}_{s, \alpha}  \right),
\end{equation}
our first result establishes an integral representation of $F$ via a Hamiltonian related a system of 12 coupled differential equations.

The system of differential equations involves 12 unknown functions denoted by
\begin{equation}
p_{i,k}(s),\quad q_{i,k}(s), \quad i=0,1, \quad  k=1,2,3,
\end{equation}
and reads as follows:
\begin{equation}\label{def:sysdiff}
 \left\{
   \begin{array}{ll}
     q_{0,k}'(s)=\frac{1}{s}q_{1,k}(s)\sum_{j=1}^3p_{1,j}(s)q_{0,j}(s),
     \\
     q_{1,k}'(s)=q_{1,k+1}(s)  +\frac{1}{s}q_{0,k}(s)\sum_{j=1}^3p_{0,j}(s)q_{1,j}(s),
     \\
      p_{0,k}'(s)=-\frac{1}{s}p_{1,k}(s)\sum_{j=1}^3p_{0,j}(s)q_{1,j}(s),
     \\
     p_{1,k}'(s)=-p_{1,k-1}(s)  -\frac{1}{s}p_{0,k}(s)\sum_{j=1}^3p_{1,j}(s)q_{0,j}(s),
   \end{array}
 \right.
\end{equation}
where it is assumed that $p_{1,0}=q_{1,4} \equiv 0$. By furthering imposing the conditions
\begin{equation}\label{eq:constraint}
\sum_{k=1}^3q_{0,k}(s)p_{0,k}(s)=-\alpha, \qquad \sum_{k=1}^3q_{1,k}(s)p_{1,k}(s)=0,
\end{equation}
it is readily seen that the function
\begin{equation}\label{def:H}
H(s) := \left(p_{1,1}(s) q_{1,2}(s)+p_{1,2}(s) q_{1,3}(s) \right)
+\frac 1 s \left(\sum_{j=1}^3p_{1,j}(s)q_{0,j}(s)\right)
\left (\sum_{j=1}^3p_{0,j}(s)q_{1,j}(s)\right)
\end{equation}
satisfies the relations
\begin{equation}\label{eq:Heq}
\frac{\partial H}{\partial p_{i,k} }=q'_{i,k}(s), \qquad \frac{\partial H}{\partial q_{i,k} }=-p'_{i,k}(s),
\quad i=0,1,\quad  k=1,2,3.
\end{equation}
Thus, $H$ in \eqref{def:H} is a Hamiltonian for the system \eqref{def:sysdiff} and \eqref{eq:constraint}. The following theorem reveals an elegant relation between $F$ and $H$.

\begin{theorem}\label{thm:integralRep}
With the function $F(s;\gamma,\rho)$ defined in \eqref{def:Fnotation}, we have
\begin{equation}\label{eq: integralRep}
F(s;\gamma,\rho)=\int_0^sH(\tau) \ud \tau, \qquad s\in (0,+\infty),
\end{equation}
where $H(s)$ given in \eqref{def:H} is the Hamiltonian associated with a special solution to the system of equations \eqref{def:sysdiff} and \eqref{eq:constraint}. Moreover, $H(s)$ satisfies the following asymptotics:
as $s \to 0^+$,
\begin{equation}\label{eq:Hasy0}
H(s)=\Boh (s^{\alpha} ), \quad 0<\gamma \leq 1,
\end{equation}
and as $s \to +\infty$,
\begin{equation} \label{thm:H-asy-infty}
H(s)=\frac{\sqrt{3} \beta i}{ s ^{1/3}} - \frac{\sqrt{3}\rho \beta i}{3 s^{2/3}}-\frac{2\beta^2}{3s} -\frac{2\beta i}{ 3 \sqrt{3} s} \cos \left(2\psi(s) \right)+\Boh(s^{-\frac43})
\end{equation}
for $0<\gamma<1$, where
\begin{equation} \label{beta-def}
\beta= \frac{1}{2\pi i} \ln (1-\gamma),
\end{equation}
and
\begin{equation}\label{def:psi}
\psi(s)=\frac{\alpha \pi  }{3} +  \arg \Gamma(1-\beta) -\beta i \left(\frac{2}{3}\ln s+ \ln 9 \right)+ \frac{\sqrt{3}}{2}\left(\rho s^{\frac13}-\frac32 s^{\frac23} \right)
\end{equation}
with $\Gamma$ being Euler's gamma function.
\end{theorem}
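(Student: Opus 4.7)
My plan is to convert the Fredholm determinant $F(s;\gamma,\rho)$ into a Riemann--Hilbert (RH) problem and to read off both the Hamiltonian identity and the asymptotics from there. The kernel \eqref{eq: pearcey kernel} is of Its--Izergin--Korepin--Slavnov integrable type because of its double-contour structure, so the resolvent of $I-\gamma\mathcal{K}_{s,\alpha}$ is governed by a matrix-valued RH problem for some $Y(z;s)$ with jumps on contours inherited from $\gamma_s,\gamma_t$ and an additional jump on $(0,s)$ carrying the non-unimodular multiplier $1-\gamma$ that encodes the thinning. The standard IIKS computation then gives $\partial_s F(s;\gamma,\rho)$ as an explicit trace in the boundary values of $Y$ at the moving endpoint $z=s$, and this is the starting point for both the Hamiltonian identity and the asymptotics.

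I would next parametrize $Y$ by the twelve functions $p_{i,k},q_{i,k}$: the factor $s^{\alpha}$ at $z=0$ produces a convergent expansion of $Y$ whose leading and next-to-leading matrix coefficients supply $(p_{0,k})_{k=1}^{3}$ and $(q_{0,k})_{k=1}^{3}$, while the behaviour near $z=\infty$ dictated by the term $-1/(2z^{2})$ in $\mathcal{V}$ gives $(p_{1,k})$ and $(q_{1,k})$. Compatibility between the $z$-Lax equation and the $s$-evolution (the jumps on $\gamma_s,\gamma_t$ being $s$-independent) forces the system \eqref{def:sysdiff}, and the constraints \eqref{eq:constraint} encode $\det Y\equiv 1$ together with the monodromy exponents ($-\alpha$ at $z=0$ and $0$ at $z=\infty$). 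A direct algebraic check shows that $H$ given by \eqref{def:H} is the Hamiltonian in the sense of \eqref{eq:Heq}, and matching it against the IIKS derivative formula yields $\partial_s F=H$; since $F(0;\gamma,\rho)=0$, integration gives \eqref{eq: integralRep}. The small-$s$ estimate \eqref{eq:Hasy0} follows from the Neumann expansion $F=-\sum_{n\geq 1}\gamma^{n}\operatorname{tr}(\mathcal{K}_{s,\alpha}^{n})/n$, valid since $\|\mathcal{K}_{s,\alpha}\|\to 0$; a saddle-point analysis of \eqref{eq: pearcey kernel} along the diagonal $x=y$ near the origin shows that $K_\alpha(s,s;\rho)=\mathcal{O}(s^\alpha)$, hence $H(s)=-\gamma K_\alpha(s,s;\rho)+\mathcal{O}(s^{2\alpha+1})=\mathcal{O}(s^\alpha)$.

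The main obstacle is the large-$s$ expansion \eqref{thm:H-asy-infty}, which I would attack by a Deift--Zhou steepest-descent analysis of the RH problem for $Y$. The plan is to construct a $g$-function from an equilibrium measure whose support generates the scales $s^{1/3}$ and $s^{2/3}$ and the polynomial combination $\tfrac{\sqrt{3}}{2}(\rho s^{1/3}-\tfrac{3}{2}s^{2/3})$ in $\psi$, then deform the contours into steepest-descent paths and solve a global parametrix on the cut geometry together with local parametrices. I expect a Bessel-type parametrix at $z=0$ (carrying the hard-edge parameter $\alpha$) and, crucially, a confluent hypergeometric (CHF) parametrix at the moving endpoint $z=s$, since the jump there has the non-unimodular multiplier $1-\gamma$. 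The CHF parametrix is the universal mechanism for thinned determinants in this class (compare \cite{BIP19,Charlier21,CCG21,CM,DXZ202,DXZ21}) and automatically produces the $\arg\Gamma(1-\beta)$ ingredient, the logarithmic contribution $\tfrac{2}{3}\ln s+\ln 9$, and the constant $\beta$ of \eqref{beta-def}. The oscillatory $\cos(2\psi(s))/s$ correction in $H$ arises from the interaction between the outer and CHF parametrices at the moving endpoint (the factor of $2$ in $2\psi$ reflecting a complex-conjugate pair of contributing saddles), while the non-oscillatory $-2\beta^{2}/(3s)$ term requires careful bookkeeping of the next-to-leading matching between outer and CHF parametrices. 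Inserting the resulting expansion of $Y$ into the IIKS/Hamiltonian identity for $H$ and collecting terms through order $s^{-4/3}$ should then yield \eqref{thm:H-asy-infty}. The most delicate pieces are likely to be the explicit construction of the outer parametrix on the Pearcey-type Riemann surface attached to the hard edge, and the sub-leading matching at the endpoint needed to capture simultaneously the oscillatory and the $\beta^{2}/s$ contributions.
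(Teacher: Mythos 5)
Your overall skeleton is the same as the paper's: an IIKS-type RH formulation for the resolvent, an undressing to a constant-jump problem, the identity $\frac{\ud}{\ud s}F=H$ obtained by matching the resolvent-trace formula at the endpoint with the Jimbo--Miwa--Ueno Hamiltonian of a Lax pair, integration using $F\to 0$ as $s\to 0^+$, and a Deift--Zhou analysis with a confluent hypergeometric parametrix at the moving endpoint to produce \eqref{thm:H-asy-infty}. Your Neumann-series argument for \eqref{eq:Hasy0} is a legitimate, more elementary substitute for the paper's small-$s$ RH analysis, provided you actually establish $K_\alpha(x,x;\rho)=\Boh(x^{\alpha})$; this follows most directly from the representation \eqref{eq:PearceyRH} together with the local behavior \eqref{eq:Psizero} of $\Psi_\alpha$ at the origin rather than from a saddle-point computation.

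Two structural identifications in your plan are, however, wrong and would derail a literal execution. First, the functions $p_{1,k},q_{1,k}$ and the constraint $\sum_{k}q_{1,k}p_{1,k}=0$ are not attached to the expansion at $z=\infty$: they are the rank-one factorization of the residue $A_1(s)$ of the Lax matrix at the \emph{moving endpoint} $z=s$, i.e.\ they come from $\Phi_0^{(1)}(s)$ in \eqref{eq: Phi-expand-s}, and the second constraint is $\mathrm{Tr}\,A_1=0$ there (the first being $\mathrm{Tr}\,A_0=-\alpha$ at $z=0$), not a monodromy exponent at infinity or a determinant normalization. The $s$-equation of the Lax pair is $U(z;s)=-A_1(s)/(z-s)$, and it is exactly this pole at $z=s$ that generates the coupled system \eqref{def:sysdiff} and the Hamiltonian \eqref{def:H}; an expansion at $z=\infty$ carries no $s$-dependence of the required form and cannot yield these equations. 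Second, in the $s\to+\infty$ analysis the local parametrix at the origin is not of Bessel type: inside $D(0,\delta)$ the exact jumps consist of the three rays $\Sigma_2,\Sigma_3,\Sigma_4$ plus the segment carrying the factor $1-\gamma$, which is the $3\times3$ hard-edge Pearcey structure itself, and the paper solves it with $\Psi_\alpha(sz)$ dressed by powers of $1-\gamma$, see \eqref{eq:P0Solution} and Lemma \ref{lem:P0Solution}; a Bessel model (even embedded in $3\times 3$ form) does not satisfy these jumps, and without a valid parametrix there the small-norm estimate for $R$, hence the error bound $\Boh(s^{-4/3})$ in \eqref{thm:H-asy-infty}, is not justified. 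Relatedly, no equilibrium-measure $g$-function needs to be constructed: the phases $\theta_k$ in \eqref{eq: theta-k-def} are already supplied by the model asymptotics \eqref{eq:asyPsi}, and the first transformation merely rescales and strips them off before the standard lens opening on $(0,1)$.
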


The local behavior of $H$ in \eqref{eq:Hasy0} ensures the integral \eqref{eq: integralRep} is well-defined. The existence of a special solution to the system of equations \eqref{def:sysdiff} and \eqref{eq:constraint} follows from the Lax pair for a Riemann-Hilbert (RH) problem. Asymptotic properties of this solution is collected in Proposition \ref{prop:asypq} below, which might be of independent interest. We also note that Girotti found a connection between $F(s;1,\rho)$ and a Lax pair which might be related to a higher order Painlev\'{e} III hierarchy in \cite{Gir14}. It would be interesting to compare \eqref{eq: integralRep} with the result therein and to work out the precise relationship.

The complicated form of $H$ in $p_{i,k}(s)$ and $q_{i,k}(s)$ implies that one cannot evaluate $F$ explicitly for a given value of $s$, it is then  natural to approximate it by establishing the asymptotic formula. We next show the large-$s$ asymptotics of $F(s;\gamma,\rho)$ for $0<\gamma<1$, up to and including the notoriously difficult constant term.
\begin{theorem}\label{thm:FAsy}
With the function $F(s;\gamma,\rho)$ defined in \eqref{def:Fnotation}, we have, as $s\to+\infty$,
\begin{align}
F(s;\gamma,\rho) &=\frac{3\sqrt{3} \beta i}{2} s ^{\frac23} - \sqrt{3} \rho \beta i  s^{\frac13}-\frac{2\beta^2}{3} \ln s \nonumber \\
& ~~~ +\ln\left( G(1+\beta) G(1-\beta) \right) - 2 \beta^2  \ln 3- \frac{2\alpha \beta \pi i}{3}+\Boh(s^{-\frac13}), \qquad 0 \leq \gamma <1,
\label{eq: large gap asy}
\end{align}
uniformly for $\gamma$ and $\rho$ in any compact subset of $[0,1)$ and $\mathbb{R}$, respectively, where $\beta$ is given in \eqref{beta-def} and $G(z)$ is the Barnes G-function.
\end{theorem}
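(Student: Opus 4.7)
The strategy is to combine the integral representation $F(s;\gamma,\rho)=\int_0^s H(\tau)\,d\tau$ from Theorem \ref{thm:integralRep} with the large-$s$ expansion \eqref{thm:H-asy-infty} of $H$ to extract the polynomial and logarithmic parts of \eqref{eq: large gap asy}, and then to pin down the $s$-independent constant by a differential identity in the thinning parameter $\gamma$, anchored at the boundary value $F(s;0,\rho)\equiv 0$.

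Substituting \eqref{thm:H-asy-infty} into the integral and integrating term by term from a fixed large reference point, the three non-oscillatory decaying pieces $\sqrt{3}\beta i\,\tau^{-1/3}$, $-\tfrac{\sqrt{3}\rho\beta i}{3}\tau^{-2/3}$, and $-\tfrac{2\beta^2}{3}\tau^{-1}$ have elementary antiderivatives producing exactly $\tfrac{3\sqrt{3}\beta i}{2}s^{2/3}$, $-\sqrt{3}\rho\beta i\,s^{1/3}$, and $-\tfrac{2\beta^2}{3}\ln s$, matching the first three terms of \eqref{eq: large gap asy}; the $\Boh(\tau^{-4/3})$ remainder integrates to $\Boh(s^{-1/3})$. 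The oscillatory piece $-\tfrac{2\beta i}{3\sqrt{3}\tau}\cos(2\psi(\tau))$ is controlled by integration by parts: since $\psi'(\tau)\sim -\tfrac{\sqrt{3}}{2}\tau^{-1/3}$, one IBP produces boundary contributions of size $\Boh(s^{-2/3})$ and a residual integral of the same order, so it only feeds into the error. At this stage we have
$$F(s;\gamma,\rho)=\tfrac{3\sqrt{3}\beta i}{2}s^{2/3}-\sqrt{3}\rho\beta i\,s^{1/3}-\tfrac{2\beta^2}{3}\ln s+C(\gamma,\rho)+\Boh(s^{-1/3})$$
for some $s$-independent $C(\gamma,\rho)$ that integration of the asymptotic series alone cannot determine.

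To identify $C(\gamma,\rho)$, I would derive a differential identity of the form $\partial_\gamma F(s;\gamma,\rho)=\Phi(s;\gamma,\rho)$, starting from $\partial_\gamma\ln\det(I-\gamma\mathcal{K}_{s,\alpha})=-\mathrm{tr}\bigl((I-\gamma\mathcal{K}_{s,\alpha})^{-1}\mathcal{K}_{s,\alpha}\bigr)$ and expressing the right-hand side through the same Riemann--Hilbert data that drives the special solution of \eqref{def:sysdiff}--\eqref{eq:constraint} in Proposition \ref{prop:asypq}. Running the steepest-descent analysis underlying \eqref{thm:H-asy-infty} with uniformity in $\gamma\in[0,1-\epsilon]$ and $\rho$ on compact subsets yields a large-$s$ expansion of $\Phi$. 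Integrating from $\gamma=0$ (where $F\equiv 0$), the polynomial and logarithmic parts reproduce the previous paragraph as a consistency check, and the $s$-independent part of $\Phi$ delivers $\partial_\gamma C(\gamma,\rho)$.

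The change of variable $\beta'=\tfrac{1}{2\pi i}\ln(1-\gamma')$ then recasts the $\gamma'$-integral as an integral over $\beta'$ whose integrand involves $\ln\Gamma(1\pm\beta')$, inherited from the $\arg\Gamma(1-\beta)$ in \eqref{def:psi}. Using the Barnes G-function identity $\tfrac{d}{dz}\ln G(1+z)=\tfrac12\ln(2\pi)+\tfrac{1-2z}{2}+\ln\Gamma(1+z)$ together with its counterpart under $z\mapsto -z$, the $\Gamma$-contributions integrate to $\ln\bigl(G(1+\beta)G(1-\beta)\bigr)$, while the remaining $s$-independent phases in \eqref{def:psi}, namely $\tfrac{\alpha\pi}{3}$ and $\ln 9=2\ln 3$, produce the terms $-\tfrac{2\alpha\beta\pi i}{3}$ and $-2\beta^2\ln 3$. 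The main obstacle is this differential-identity step: one must cast the identity in a form whose large-$s$ asymptotics are uniform in $\gamma$ on $[0,1-\epsilon]$, and one must control the subdominant oscillatory contributions in $\Phi$ sharply enough that no spurious constant survives the $\gamma$-integration. Everything else is either bookkeeping or a direct consequence of the steepest-descent analysis already implicit in Theorem \ref{thm:integralRep}.
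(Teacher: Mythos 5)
Your overall strategy is the same as the paper's: get the growing and logarithmic terms from the large-$s$ expansion of $H$, and pin the constant by differentiating in the thinning parameter and integrating from $\gamma=0$ (equivalently $\beta=0$), where $F\equiv0$. The genuine gap is in the middle step, which is precisely where the work lies. Writing $\partial_\gamma F(s;\gamma,\rho)=-\mathrm{tr}\bigl((I-\gamma\mathcal{K}_{s,\alpha})^{-1}\mathcal{K}_{s,\alpha}\bigr)=-\gamma^{-1}\int_0^s R(u,u)\,\ud u$ gives a \emph{bulk} integral over the whole interval $(0,s)$; its $s$-independent part as $s\to+\infty$ cannot be read off from the local parametrices at $z=0$ and $z=s$, so the assertion that ``the $s$-independent part of $\Phi$ delivers $\partial_\gamma C$'' assumes exactly what must be proved. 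To make that route work you would need uniform asymptotics of $R(u,u)$ for $u$ throughout $(0,s)$ (including the bulk) with error control sharp enough to integrate for the constant --- far more than bookkeeping. The paper avoids this via the Hamiltonian structure: the action identity \eqref{eq: action-diff} gives $F(s)=\int_0^s\bigl(\sum_{i,k}p_{i,k}q_{i,k}'-H\bigr)\ud\tau+sH(s)$, and the parameter identity \eqref{eq: dH-gamma} (with vanishing data at $s=0$ from \eqref{eq: pk-qk-s=0} and at $\beta=0$) converts the remaining $\tau$-integral into the purely endpoint quantity $\int_0^\beta\sum_{i,k}p_{i,k}(s)\,\partial_{\beta'}q_{i,k}(s)\,\ud\beta'$, whose large-$s$ asymptotics, uniform in $\beta$, follow from Proposition \ref{prop:asypq} together with the algebraic relations \eqref{eq: equationpqk-1}--\eqref{eq: equationpqk-2} that cancel the $s^{1/3}$ and $s^{2/3}$ cross terms. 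Your proposal contains no substitute for Proposition \ref{prop:H-diff}, and without it (or the heavy bulk-resolvent alternative) the constant $\ln\bigl(G(1+\beta)G(1-\beta)\bigr)-2\beta^2\ln 3-\tfrac{2\alpha\beta\pi i}{3}$ is not reached.

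Two smaller points. First, your treatment of the non-constant terms by directly integrating \eqref{thm:H-asy-infty} (with integration by parts on the oscillatory term) is fine and is a legitimate variant; the paper instead recovers these terms from $sH(s)$ plus the $\beta'$-integral, and only uses the integrated form of \eqref{thm:H-asy-infty} at the very end to upgrade the error from $\Boh(s^{-1/3}\ln s)$ to $\Boh(s^{-1/3})$. Second, the derivative identity you quote for the Barnes function is not correct: one has $\frac{\ud}{\ud z}\ln G(1+z)=\tfrac12\ln(2\pi)-\tfrac12-z+z\,\psi(1+z)$ (no $\ln\Gamma(1+z)$ term); the $\Gamma$-to-$G$ passage in the final $\beta'$-integration goes through $\int_0^\beta\beta'\,\partial_{\beta'}\arg\Gamma(1-\beta')\,\ud\beta'$ as in the computation the paper cites, so this is repairable bookkeeping rather than a structural flaw.
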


If $\gamma \to 0^+$, we have $\beta \to 0$. Since $G(1)=0$, it is readily seen from \eqref{eq: large gap asy} that $F(s;0,\rho)=\Boh(s^{-1/3})$, which is consistent with the fact that $F(s;0,\rho)=0$. The asymptotics of $F(s;1,\rho)$, however, should exhibit a significantly different asymptotic behavior. Indeed, according to the Forrester-Chen-Eriksen-Tracy conjecture \cite{CET,Forrester93}, it is expected that $F(s;1,\rho)=\Boh(s^{4/3})$ for large $s$. The relevant result will be reported elsewhere.

As an application of Theorem \ref{thm:FAsy}, we are able to extract the asymptotic statistical properties of $N(s)$ -- the counting function of the number of particles in the hard edge Pearcey process falling into the interval $(0,s)$.
\begin{corollary}\label{cor:application}
As $s\to +\infty$, we have
\begin{align}\label{eq:expecvar}
\mathbb{E}(N(s)) &= \mu(s)-\frac{\alpha}{3}+o(1),\\
{\rm Var}(N(s))  &= \sigma(s)^2+\frac{1+\ln9+\gamma_{{\rm E}}}{2\pi^2} + o(1),\label{eq:var}
\end{align}
where $\gamma_{{\rm E}}=-\Gamma'(1)\thickapprox 0.57721$ is Euler's constant,
\begin{equation}\label{def:musigma}
\mu(s)= \frac{3\sqrt{3}}{4\pi}s^{\frac23}-\frac{\sqrt{3}\rho}{4\pi}s^{\frac13},\qquad \sigma(s)^2=\frac{1}{3\pi^2}\ln s.
\end{equation}
Moreover, the random variable $\frac{N(s)-\mu(s)}{\sqrt{\sigma(s)^2}}$ converges in distribution to the normal law $\mathcal{N}(0,1)$ as $s\to +\infty$ and the following upper bound for the maximum fluctuation of the counting function $N(s)$ holds:
\begin{equation}\label{eq:upperbound}
\lim_{a\to\infty} \mathbb{P}\left(\sup_{s>a}\left |\frac{N(s)-\left(\frac{3\sqrt{3}}{4\pi}s^{2/3}-\frac{\sqrt{3}\rho}{4\pi}s^{1/3}\right)}{\ln s}\right | \leq \frac{2}{3\pi} +\epsilon \right)=1,\qquad \epsilon>0.
\end{equation}
\end{corollary}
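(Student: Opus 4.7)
The plan is to extract all three statistical properties from the generating function identity
\begin{equation*}
\mathbb{E}\bigl((1-\gamma)^{N(s)}\bigr)=\det\bigl(I-\gamma\mathcal{K}_{s,\alpha}\bigr)=e^{F(s;\gamma,\rho)},
\end{equation*}
which is the standard consequence of the thinning interpretation mentioned in the introduction. Setting $\gamma=1-e^{2\pi i\beta}$ turns this into a cumulant generating function in the variable $2\pi i\beta$, so that the large-$s$ expansion in Theorem \ref{thm:FAsy} becomes, after substitution,
\begin{equation*}
\ln\mathbb{E}\bigl(e^{2\pi i\beta N(s)}\bigr)=\frac{3\sqrt 3\beta i}{2}s^{2/3}-\sqrt 3\rho\beta i s^{1/3}-\tfrac{2}{3}\beta^2\ln s+\ln\bigl(G(1+\beta)G(1-\beta)\bigr)-2\beta^2\ln 3-\tfrac{2\alpha\beta\pi i}{3}+o(1).
\end{equation*}
All other ingredients of the corollary are consequences of reading off coefficients of $\beta$, $\beta^2$, etc., and of choosing $\beta$ appropriately.

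First I would derive \eqref{eq:expecvar} and \eqref{eq:var} by matching Taylor coefficients at $\beta=0$. The coefficient of $2\pi i\beta$ gives $\mathbb{E}(N(s))$, which immediately yields $\mu(s)-\alpha/3+o(1)$. For the variance I would expand the Barnes $G$-factor using the even identity
\begin{equation*}
\ln\bigl(G(1+\beta)G(1-\beta)\bigr)=-(1+\gamma_{\rm E})\beta^2+O(\beta^4),
\end{equation*}
recovering from the $\beta^2$-coefficient the formula $\mathrm{Var}(N(s))=\sigma(s)^2+(1+\ln 9+\gamma_{\rm E})/(2\pi^2)+o(1)$. A small technical point here is to justify that the $o(1)$ error in Theorem \ref{thm:FAsy} may be differentiated in $\beta$; this follows from the uniformity of the expansion on compact subsets of $[0,1)\times\mathbb R$ (allowing a small complex neighborhood of real $\gamma$, since the RH construction behind Theorem \ref{thm:FAsy} is analytic in $\gamma$) together with Cauchy's formula.

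For the central limit theorem I would study the characteristic function of $Z(s):=(N(s)-\mu(s))/\sigma(s)$. Setting $\beta=t/(2\pi\sigma(s))$ with fixed real $t$, one has $\beta\to 0$ as $s\to\infty$, and
\begin{equation*}
\mathbb{E}\bigl(e^{it Z(s)}\bigr)=e^{-it\mu(s)/\sigma(s)}\,e^{F(s;1-e^{2\pi i\beta},\rho)}.
\end{equation*}
Plugging in the expansion and using $\beta^2\ln s\to t^2/(4\pi^2\sigma(s)^2)\ln s=-\tfrac{3}{4}t^2\cdot\tfrac{2}{3}/(\cdots)$ with the precise value $\sigma(s)^2=(3\pi^2)^{-1}\ln s$, the $s^{2/3}$, $s^{1/3}$ and constant $\beta$-terms cancel against $e^{-it\mu(s)/\sigma(s)}$ and against the mean correction $-\alpha/3$, while the $\beta^2\ln s$ term contributes $-t^2/2$, and all remaining terms vanish because $\beta\to 0$. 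This yields $\mathbb{E}(e^{itZ(s)})\to e^{-t^2/2}$ and hence convergence in distribution to $\mathcal N(0,1)$ by L\'evy's theorem.

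Finally I would derive the rigidity bound \eqref{eq:upperbound} by invoking the general mechanism of \cite{CC20}, which turns sharp asymptotics of $\det(I-\gamma\mathcal{K}_{s,\alpha})$, uniform for $\gamma$ in an appropriate complex set, into global upper bounds for the fluctuations of the counting function. Concretely, combining a Markov-type inequality $\mathbb P(|N(s)-\mu(s)|\geq\lambda)\leq 2\,\mathrm{Re}\,\{(1-\gamma)^{-\lambda}\,e^{F(s;\gamma,\rho)}\}$ for a judiciously chosen $\gamma=\gamma(s)$ near $1$, and optimising $\lambda$ against $\tfrac{2}{3\pi}\ln s$, one extracts from the $\beta^2\ln s$ leading term of $F$ a summable bound along a logarithmically-spaced net $\{s_j\}$, and then upgrades to a supremum over $s>a$ via Borel--Cantelli. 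The main obstacle I anticipate is controlling the error term in Theorem \ref{thm:FAsy} uniformly as $\gamma$ approaches $1$ (equivalently, $|\beta|\to\infty$) along the relevant slice, since the corollary requires the expansion just beyond the range $[0,1)$ declared in the theorem; the strategy is to invoke the analyticity in $\gamma$ of the RH construction and the framework of \cite{CC20}, which is precisely designed to handle this passage to $\gamma\nearrow 1$.
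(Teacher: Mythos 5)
Your overall strategy coincides with the paper's: both arguments feed the thinning/generating-function identity \eqref{eq: M-F} into the large-gap expansion of Theorem \ref{thm:FAsy}, read off the mean and variance from the Taylor coefficients at the origin (via the same expansion of the Barnes G-factor), obtain the Gaussian limit from convergence of the rescaled generating function, and obtain \eqref{eq:upperbound} from the general machinery of \cite{CC20}. The differences are in implementation, and two of them work against you. First, the paper stays entirely inside the hypotheses of Theorem \ref{thm:FAsy} by working with the Laplace transform $\mathbb{E}(e^{-2\pi\nu N(s)})$ for $\nu\ge 0$, i.e.\ $\gamma=1-e^{-2\pi\nu}\in[0,1)$ and $\beta=i\nu$ purely imaginary as in \eqref{beta-def}, and then notes that since $\sigma(s)^2\to\infty$ the rescaled moment generating function already converges to $e^{t^2/2}$, which suffices for the central limit theorem; your characteristic-function route needs $\beta$ real, hence $\gamma$ complex with $|1-\gamma|=1$, which lies outside the range $0\le\gamma<1$ for which the theorem is stated, so your ``small technical point'' (analytic continuation of the uniform error term in $\gamma$ plus Cauchy's formula) is an extra piece of analysis the paper never has to supply. (Your care about differentiating the error term when matching coefficients is legitimate and is in fact glossed over in the paper, but it does not require leaving real $\gamma$.) Second, for \eqref{eq:upperbound} the paper simply combines \eqref{eq:momentexpan2} and \eqref{eq:MomentGExpansion} with Lemma 2.1 and Theorem 1.2 of \cite{CC20}; there is no passage $\gamma\to 1$ anywhere, because the \cite{CC20} mechanism only uses exponential-moment asymptotics uniformly for the thinning parameter in a fixed compact set, which is exactly the uniformity Theorem \ref{thm:FAsy} provides, so the obstacle you anticipate does not arise. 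Moreover, the Markov-type inequality you write down, with a complex $\gamma$ and a real part, is not a legitimate bound as stated: Chernoff--Markov arguments require a nonnegative exponential weight, so that step of your sketch would have to be replaced by the real-$\nu$ estimates that \cite{CC20} already packages. If you swap the characteristic function for the real Laplace transform and cite \cite{CC20} instead of re-deriving its net-plus-Borel--Cantelli step, your argument becomes the paper's proof.
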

\begin{proof}
It is well-known that the moment generating function of $N(s)$ is equal to a specified deformed hard edge Pearcey determinant:
\begin{equation}\label{eq: M-F}
\mathbb{E}\left(e^{-2\pi \nu N(s)}\right)=\sum_{k=0}^{\infty}\mathbb{P}(N(s)=k)e^{-2 \pi \nu k}= \det\left(I-(1-e^{-2\pi \nu}) \mathcal{K}_{s,\alpha}\right), \qquad \nu \geq 0;
\end{equation}
cf. \cite{Johansson06,Soshnikov00}. Since
\begin{equation}\label{eq:momentexpan}
\mathbb{E}\left(e^{-2\pi \nu N(s)}\right)=1-2\pi \mathbb{E}(N(s))\nu+2\pi^2 \mathbb{E}(N(s)^2 )\nu^2+\Boh(\nu^3), \quad \nu \to 0,
\end{equation}
one has
\begin{align}\label{eq:momentexpan2}
F(s;1-e^{-2\pi \nu},\rho) & = \log \mathbb{E}\left(e^{-2\pi \nu N(s)}\right)
\nonumber
\\
&=-2\pi \mathbb{E}(N(s))\nu+2\pi^2 {\rm Var}(N(s))\nu^2+ \Boh(\nu^3), \quad \nu \to 0,
\end{align}
where $F$ is defined in \eqref{def:Fnotation}. In view of \eqref{eq: large gap asy}, we have, as $s\to +\infty$,
\begin{multline}\label{eq:MomentGExpansion}
F(s;1-e^{-2\pi \nu},\rho)
=-2\pi\left(\mu(s)-\frac{\alpha}{3}\right) \nu + 2 \pi^2\left(\sigma(s)^2+\frac{\ln 3}{\pi^2}\right)\nu^2
\\+\ln \left(G(1+\nu i )G(1-\nu i)\right) +\Boh(s^{-\frac13}),
\end{multline}
where the functions $\mu(s)$ and $\sigma(s)^2$ are defined in \eqref{def:musigma}. Note that
\begin{equation}
G(1+z)=1+\frac{\ln(2\pi)-1}{2}z+\left(\frac{(\ln(2\pi)-1)^2}{8}-\frac{1+\gamma_{\textrm{E}}}{2}\right)z^2+\Boh(z^3),\qquad z\to 0,
\end{equation}
where $\gamma_{{\rm E}}$ is Euler's constant, we then obtain \eqref{eq:expecvar} and \eqref{eq:var} from \eqref{eq:momentexpan2} and \eqref{eq:MomentGExpansion}.

Finally, since $\sigma(s)^2\to +\infty$ for large positive $s$, it is easily seen from \eqref{eq: M-F} and \eqref{eq:MomentGExpansion} that
\begin{equation}
\mathbb{E}\left(e^{t\cdot \frac{N(s)-\mu(s)}{\sqrt{\sigma(s)^2}}}\right) \to e^{\frac{t^2}{2}}, \qquad s\to+\infty,
\end{equation}
which implies the convergence of $\frac{N(s)-\mu(s)}{\sqrt{\sigma(s)^2}}$ in distribution to the normal law $\mathcal{N}(0,1)$. The upper bound
\eqref{eq:upperbound} follows directly by combining \cite[Lemma 2.1 and Theorem 1.2]{CC20}, \eqref{eq:momentexpan2} and \eqref{eq:MomentGExpansion}.

This completes the proof of Corollary \ref{cor:application}.
\end{proof}

According to \cite{SDMS1,SDMS2}, it is believed that ${\rm Var}(N(s))$ takes a universal asymptotic form $c_0 \ln s + c_1 +o(1)$ for many determinantal point processes in random matrix theory, where the constants $c_0$ and $c_1$ vary for different processes. This has been confirmed for the classical sine, Airy, Bessel and Pearcey processes; cf. \cite{CharlierPlms,Charlier21,DXZ202,SosJSP}. By \eqref{eq:var}, the same is true for the hard edge Pearcey process. In addition, the upper bound \eqref{eq:upperbound} particularly means that, for large $s$, the counting function of hard edge Pearcey process lies in the interval $(\mu(s)-(2 /(3\pi) + \epsilon)\ln s,\mu(s)+ (2 /(3\pi) +\epsilon)\ln s)$ with high probability.

The rest of this paper is devoted to the proofs of our main results. The idea follows from a general framework established in \cite{BD02,DIK2008}, which gives a connection between the Fredholm determinants of integrable kernels and RH problems. The precise relationship in our case is
discussed in Section \ref{sec:RHP}, where we connects $\ud F(s;\gamma,\rho)/\ud s$ to a $3 \times 3$ RH problem for $\Phi$ with constant jumps. We then derive a Lax pair for $\Phi$ and several useful differential identities for the Hamiltonian in Section \ref{sec:Lax}. After performing a Deift-Zhou steepest descent analysis \cite{DZ93} on the RH problem for $\Phi$ as $s \to \infty$ and $s \to 0^+$ in Sections \ref{sec:AsyPhiinfty} and \ref{sec:AsyPhi0}, we present asymptotics of the relevant solution to the system of equations \eqref{def:sysdiff} and \eqref{eq:constraint}, and proofs of Theorems \ref{thm:integralRep} and \ref{thm:FAsy} in Sections \ref{sec:Asypq} and \ref{sec:proofmainresults}, respectively.

\section{An RH formulation} \label{sec:RHP}

\subsection{RH characterizations of $K_\alpha$ and its resolvent kernel}

As illustrated in \cite[RH problem 1.3 and Lemma 5.3]{Kui:AMF:Wie2011}, the hard edge Pearcey kernel $K_\alpha$ in \eqref{eq: pearcey kernel} is also characterized by the following $3 \times 3$ RH problem.

\begin{rhp} \label{rhp: Pearcey}
\hfill 
\begin{itemize}
\item[\rm (1)] $\Psi_\alpha(z)=\Psi_\alpha(z;\rho)$ is analytic in $\C \setminus
\{\cup_{j=0}^5\Sigma_j \cup \{ 0 \} \}$,
where
\begin{equation}\label{def:sigmai}
\begin{aligned}
&\Sigma_0=(0,+\infty), \qquad \Sigma_1=e^{\frac{\pi i}{4}}(0,+\infty), \quad &&\Sigma_2=e^{\frac{ 3 \pi i}{4}}(0,+\infty),
\\
&\Sigma_3=(-\infty, 0), \qquad \Sigma_4=e^{-\frac{3\pi i}{4}}(0,+\infty), \quad &&\Sigma_5=e^{-\frac{\pi i}{4}}(0,+\infty),
\end{aligned}
\end{equation}
with the orientations as shown in Figure \ref{fig:Pearcey}.

\begin{figure}[h]
\begin{center}
   \setlength{\unitlength}{1truemm}
   \begin{picture}(100,70)(-5,2)
       \put(40,40){\line(-1,-1){20}}
       \put(40,40){\line(-1,1){20}}
       \put(40,40){\line(-1,0){30}}
       \put(40,40){\line(1,0){30}}
       \put(40,40){\line(1,1){20}}
       \put(40,40){\line(1,-1){20}}

       \put(30,50){\thicklines\vector(1,-1){1}}
       \put(30,40){\thicklines\vector(1,0){1}}
       \put(30,30){\thicklines\vector(1,1){1}}
       \put(50,50){\thicklines\vector(1,1){1}}
       \put(50,40){\thicklines\vector(1,0){1}}
       \put(50,30){\thicklines\vector(1,-1){1}}

       \put(39,36.3){$0$}

       \put(27,22){$\Sigma_4$}
       \put(-7,18) {$\begin{pmatrix} 1&0&0 \\ 0&1&e^{- \alpha \pi i} \\ 0&0&1 \end{pmatrix}$}

       \put(27,55){$\Sigma_2$}
       \put(-7,66){$\begin{pmatrix} 1&0&0 \\ 0&1& e^{\alpha \pi i} \\ 0&0&1 \end{pmatrix}$}

       \put(14,42){$\Sigma_3$}
       \put(-28,40){$\begin{pmatrix} 1&0&0 \\ 0&0& - e^{-\alpha \pi i} \\ 0&e^{-\alpha \pi i}&0 \end{pmatrix}$}

       \put(48,22){$\Sigma_5$}
       \put(60,18){$\begin{pmatrix} 1&0&0 \\ 1&1&0 \\ 0&0&1 \end{pmatrix}$}

       \put(48,55){$\Sigma_1$}
       \put(60,66) {$\begin{pmatrix} 1&0&0 \\ 1&1&0 \\ 0&0&1 \end{pmatrix}$}

       \put(58,42){$\Sigma_0$}
       \put(72,40){ $\begin{pmatrix} 0&1&0 \\ -1&0&0 \\ 0&0&1 \end{pmatrix}$ }

       \put(40,40){\thicklines\circle*{1}}

 \end{picture}
   \vspace{-10mm}
   \caption{The jump contours $\Sigma_k$, $k=0,\ldots,5$, and the corresponding jump matrix $J_{\Psi_{\alpha}}$ in the RH problem for $\Psi_\alpha$.}
   \label{fig:Pearcey}
\end{center}
\end{figure}
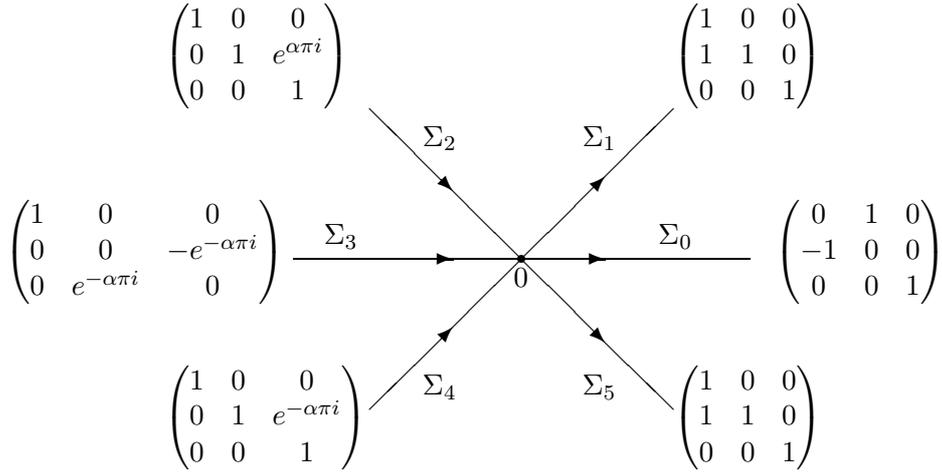

\item[\rm (2)] For $z\in \Sigma_k$, $k=0,1,\ldots,5$, the limiting values of $\Psi_\alpha$ exist and satisfy the jump condition
\begin{equation}\label{jumps:M}
\Psi_{\alpha,+}(z) = \Psi_{\alpha, -}(z)J_{\Psi_{\alpha}}(z),
\end{equation}
where $J_{\Psi_{\alpha}}(z)$ is depicted in Figure \ref{fig:Pearcey}.

\item[\rm (3)]
As $z \to \infty$ and $\pm\Im z>0$, we have
\begin{multline}\label{eq:asyPsi}
 \Psi_\alpha(z) = \frac{i z^{-\frac{\alpha}{3}}}{ \sqrt{3} }  C_\Psi \left(I+ \Boh(z^{-1} ) \right) \diag \left(z^{\frac13},1,z^{-\frac13} \right)L_{\pm} \diag \left( e^{\pm \frac{\alpha \pi i}{3}}, e^{\mp \frac{\alpha \pi i}{3}}, 1 \right) e^{\Theta(z)},
\end{multline}
where the constant matrix $C_\Psi$ is defined in \eqref{eq:cons-C-Psi} below, $L_{\pm}$ are given by
\begin{align}\label{def:Lpm}
L_{+}=
\begin{pmatrix}
\omega & \omega^2 & 1 \\ 1&1&1 \\ \omega^2 & \omega & 1
\end{pmatrix},
\qquad
L_{-}=
\begin{pmatrix}
\omega^2 & -\omega & 1 \\ 1&-1&1 \\ \omega & -\omega^2 & 1
\end{pmatrix},
\end{align}
with $\omega=e^{\frac{2\pi i}{3}}$, and
\begin{align}\label{def:Theta}
\Theta(z)=\Theta(z;\rho)&= \begin{cases}
\diag (\theta_1(z;\rho),\theta_2(z;\rho),\theta_3(z;\rho)), & \text{$\Im z >0$,} \\
\diag (\theta_2(z;\rho),\theta_1(z;\rho),\theta_3(z;\rho)), & \text{$\Im z <0$,} \\
\end{cases}
\end{align}
with
\begin{equation} \label{eq: theta-k-def}
\theta_k(z;\rho)=\frac32 \omega^{2k}z^{\frac23}+\rho\omega^kz^{\frac13}, \qquad k=1,2,3.
\end{equation}

\item[\rm (4)]
As $z\to 0$, we have
\begin{equation}\label{eq:Psizero}
\Psi_\alpha(z)=\Psi_\alpha^{(0)}(z)\left\{
                                     \begin{array}{ll}
                                      \begin{pmatrix}
1 & \frac{i}{2\pi}\ln z & 0
\\
0 & z^{-\alpha} & 0
\\
0 & \frac{(-1)^{\alpha+1}}{2\pi}i\ln z & 1
                                      \end{pmatrix}, & \hbox{if $\alpha \in \mathbb{N}\cup \{0\}$,}  \vspace{2mm}
\\
 \begin{pmatrix}
1 & \frac{e^{-\alpha \pi i}}{2\sin(\alpha \pi)}i & 0
\\
0 & z^{-\alpha} & 0
\\
0 & -\frac{i}{2\sin(\alpha \pi)} & 1
                                       \end{pmatrix}, & \hbox{if $\alpha \notin \mathbb{Z}$,}
\end{array}
                                   \right.
\end{equation}
for $\pi/4<\arg z<3\pi /4$, where $\Psi_\alpha^{(0)}(z)$ is analytic at $z=0$. The local behaviors of $\Psi_\alpha(z)$ near $z=0$ in other regions can be determined through the above formula and the jump condition \eqref{jumps:M}.

\end{itemize}
\end{rhp}

It is shown in \cite[Theorem 1.4 and Proposition 5.2]{Kui:AMF:Wie2011} that the above RH problem has a unique solution with
\begin{equation}\label{eq:detPsi}
\det \Psi_\alpha(z) = z^{-\alpha}, \qquad z \in \C \setminus
\{\cup_{j=0}^5\Sigma_j \cup \{ 0 \} \}.
\end{equation}
Moreover, the solution is explicitly given by
\begin{align}
\Psi_\alpha(z) &= \frac{e^{\rho^2/6}} {\sqrt{2 \pi}} \begin{pmatrix}
-p_4(z) & p_3(z) & p_1(z) \\
-p'_4(z) & p'_3(z) & p'_1(z)  \\
-p''_4(z) & p''_3(z) & p''_1(z)
\end{pmatrix}, \qquad 0<\arg z <\frac{\pi}{4}, \label{Psi-in-I} \\
\Psi_\alpha(z) &= \frac{e^{\rho^2/6}} {\sqrt{2 \pi}}
\begin{pmatrix}
p_2(z) & p_3(z) & p_1(z) \\
p'_2(z) & p'_3(z) & p'_1(z)  \\
p''_2(z) & p''_3(z) & p''_1(z)
\end{pmatrix}, \qquad \frac{\pi}{4}<\arg z <\frac{3\pi}{4}, \label{Psi-in-II} \\
\Psi_\alpha(z) &= \frac{e^{\rho^2/6}} {\sqrt{2 \pi}} \begin{pmatrix}
p_2(z) & p_3(z) & -e^{-\alpha \pi i} \, p_4(z) \\
p'_2(z) & p'_3(z) & -e^{-\alpha \pi i} \, p'_4(z)  \\
p''_2(z) & p''_3(z) & -e^{-\alpha \pi i} \, p''_4(z)
\end{pmatrix}, \qquad \frac{3\pi}{4}<\arg z <\pi,  \\
\Psi_\alpha(z) &= \frac{e^{\rho^2/6}} {\sqrt{2 \pi}} \begin{pmatrix}
p_2(z) & p_4(z) & e^{\alpha \pi i} \, p_3(z) \\
p'_2(z) & p'_4(z) & e^{\alpha \pi i} \, p'_3(z)  \\
p''_2(z) & p''_4(z) & e^{\alpha \pi i} \, p''_3(z)
\end{pmatrix}, \qquad -\pi<\arg z < -\frac{3\pi}{4}, \\
\Psi_\alpha(z) &= \frac{e^{\rho^2/6}} {\sqrt{2 \pi}} \begin{pmatrix}
p_2(z) & p_4(z) & p_1(z) \\
p'_2(z) & p'_4(z) & p'_1(z)  \\
p''_2(z) & p''_4(z) & p''_1(z)
\end{pmatrix}, \qquad -\frac{3\pi}{4}<\arg z <-\frac{\pi}{4}, \label{Psi-in-V}
\\
\Psi_\alpha(z) &= \frac{e^{\rho^2/6}} {\sqrt{2 \pi}} \begin{pmatrix}
p_3(z) & p_4(z) & p_1(z) \\
p'_3(z) & p'_4(z) & p'_1(z)  \\
p''_3(z) & p''_4(z) & p''_1(z)
\end{pmatrix}, \qquad  -\frac{\pi}{4}< \arg z < 0. \label{Psi-in-VI}
\end{align}
In the above expressions, $p_1(z)$ and $p_2(z)$ are entire functions in $z$, while $p_3(z)$ and $p_4(z)$ are analytic for $z \in \C \setminus i\R_-$ and $z \in \C \setminus i\R_+$, respectively. They are solutions to the third-order differential equation
\begin{equation}
x y'''(x) + \alpha y''(x) - \rho y'(x) - y(x) =0, \label{eq:Pearcey1}
\end{equation}
and any three of $p_k(z)$, $k=1,\ldots,4$, are linearly independent. In addition,  we have the following integral representations for $p_k$:
\begin{equation}\label{def:pkint}
p_k(z) =\left\{
          \begin{array}{ll}
            \int_{\Gamma_k} t^{-1} e^{zt+V(t)} \ud t, & \hbox{$k=1$,} \\
            e^{-\alpha \pi i}\int_{\Gamma_k} t^{-1} e^{zt+V(t)} \ud t, & \hbox{$k=2,3$,} \\
            e^{\alpha \pi i}\int_{\Gamma_k} t^{-1} e^{zt+V(t)} \ud t, & \hbox{$k=4$,}
          \end{array}
        \right.
\end{equation}
where
\begin{equation}\label{def:V}
V(t) = (\alpha - 2) \ln t + \frac{\rho}{t} + \frac{1}{2t^2}.
\end{equation}
Note that $t^{-1} e^{zt+V(t)}=t^{\alpha-3}e^{zt+\frac{\rho}{t}+\frac{1}{2t^2}}$, we choose $t^{\alpha}$ with $-\frac{\pi}{2}< \arg t < \frac{\pi}{2}$ in the definition of $p_1$, with $\frac{\pi}{2}< \arg t < \frac{3\pi}{2}$ in the definition of $p_2$, with $0< \arg t < \pi$ in the definition of $p_3$ and with $-\pi < \arg t < 0$ in the definition of $p_4$. The contours $\Gamma_k$, $k=1,2,3,4$, in \eqref{def:pkint} are defined as follows: $\Gamma_1$ is a simple closed contour lying in the right half plane, which is tangent to the imaginary axis at the origin, $\Gamma_2$ is the reflection of $\Gamma_1$ with respect to the imaginary axis, $\Gamma_3$ lies in the upper half plane which starts at infinity with an angle where $\Re (zt)<0$ as $t\to \infty$ and ends at the origin along the positive imaginary axis and $\Gamma_4$ lies in the lower half plane which starts at infinity with an angle where $\Re (zt)<0$ as $t\to \infty$ and ends at the origin along the negative imaginary axis; see Figure \ref{fig: Gamma-k} for an illustration.

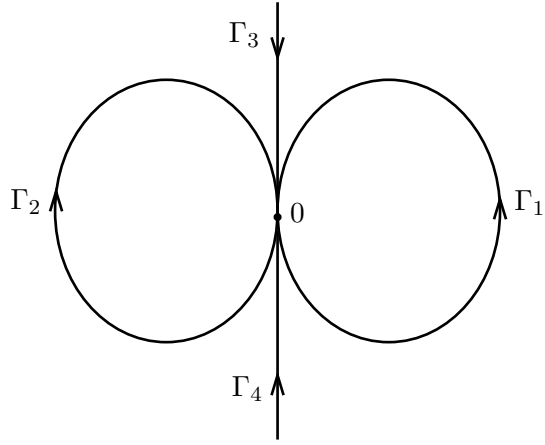
\begin{figure}[t]
\centering

\begin{tikzpicture}[x=0.75pt,y=0.75pt,yscale=-1,xscale=1]

\draw  [line width=1]  (215,147) .. controls (215,110.55) and (239.85,81) .. (270.5,81) .. controls (301.15,81) and (326,110.55) .. (326,147) .. controls (326,183.45) and (301.15,213) .. (270.5,213) .. controls (239.85,213) and (215,183.45) .. (215,147) -- cycle ;
\draw  [line width=1]  (326,147) .. controls (326,110.55) and (350.85,81) .. (381.5,81) .. controls (412.15,81) and (437,110.55) .. (437,147) .. controls (437,183.45) and (412.15,213) .. (381.5,213) .. controls (350.85,213) and (326,183.45) .. (326,147) -- cycle ;
\draw [line width=1]    (326,42) -- (326,147) ;
\draw [line width=1]    (326,157) -- (326,262) ;
\draw  [line width=1]  (329,59.67) -- (326,69.67) -- (323,59.67) ;
\draw  [line width=1]  (440,151.17) -- (437,141.17) -- (434,151.17) ;
\draw  [line width=1]  (329,239.67) -- (326,229.67) -- (323,239.67) ;
\draw  [line width=1]  (218.5,147.17) -- (215.5,137.17) -- (212.5,147.17) ;
\draw  [fill={rgb, 255:red, 0; green, 0; blue, 0 }  ,fill opacity=1 ][line width=1.5]  (327,150.08) .. controls (327,149.53) and (326.55,149.08) .. (326,149.08) .. controls (325.45,149.08) and (325,149.53) .. (325,150.08) .. controls (325,150.64) and (325.45,151.08) .. (326,151.08) .. controls (326.55,151.08) and (327,150.64) .. (327,150.08) -- cycle ;

\draw (191.5,135) node [anchor=north west][inner sep=0.75pt]   [align=left] {$\Gamma_{2}$};
\draw (302,229) node [anchor=north west][inner sep=0.75pt]   [align=left] {$\Gamma_{4}$};
\draw (300.5,51) node [anchor=north west][inner sep=0.75pt]   [align=left] {$\Gamma_{3}$};
\draw (443,135.5) node [anchor=north west][inner sep=0.75pt]   [align=left] {$\Gamma_{1}$};
\draw (331,142) node [anchor=north west][inner sep=0.75pt]   [align=left] {0};
\end{tikzpicture}
\caption{The contour of integration $\Gamma_k$ in the definition of $p_k(z)$, $k=1,2,3,4$.}
\label{fig: Gamma-k}
\end{figure}

\begin{remark}
Comparing the above RH problem with \cite[RH problem 1.3]{Kui:AMF:Wie2011}, the large-$z$ behavior in item (3) is slightly modified for brevity, while the local behavior near the origin in item (4) is expressed in details to facilitate our future discussions. The constant matrix $ C_\Psi$ in \eqref{eq:asyPsi} is given by
\begin{equation} \label{eq:cons-C-Psi}
C_\Psi =
\begin{pmatrix}
1 & \pi_3(\rho) & \pi_6(\rho)
\\
0 & 1 & \pi_3(\rho) + \frac{\rho}{3}
\\
0 & 0 & 1
\end{pmatrix},
\end{equation}
where
\begin{align}
\pi_3(\rho) &= \frac{\rho(\rho^2+9\alpha-18)}{27}, \label{pi3-def} \\
\pi_6(\rho)& = \frac{\rho^6 +(18 \alpha- 45) \rho^4 + (81 \alpha^2 -
    405 \alpha + 405) \rho^2  - 243 \alpha^2 + 729 \alpha - 405}{2 \cdot 3^6},
     \label{pi6-def}
\end{align}
are two polynomials in $\rho$. The derivation of $C_\Psi$ follows directly the explicit solution of $\Psi_\alpha$ and an asymptotic analysis of the relevant integrals.
\end{remark}

By \cite[Equation (8.20)]{Kui:AMF:Wie2011}, it follows that
\begin{equation}\label{eq:PearceyRH}
K_\alpha(x,y;\rho)=\frac{1}{2 \pi i(x-y)}
\begin{pmatrix}
0 & 1 & 0
\end{pmatrix}
 \widetilde\Psi_{\alpha}(y;\rho)^{-1} \widetilde\Psi_{\alpha}(x;\rho)
\begin{pmatrix}
1
\\
0
\\
0
\end{pmatrix}, \qquad x,y >0,
\end{equation}
where
\begin{equation} \label{eq: tilde-psi}
\widetilde\Psi_\alpha(z)=\widetilde\Psi_\alpha(z;\rho): =  \frac{e^{\frac{\rho^2}{6}}} {\sqrt{2 \pi}}
\begin{pmatrix}
p_2(z) & p_3(z) & p_1(z) \\
p'_2(z) & p'_3(z) & p'_1(z)  \\
p''_2(z) & p''_3(z) & p''_1(z)
\end{pmatrix}, \qquad z\in \mathbb{C}\setminus i\mathbb{R}_-,
\end{equation}
is the analytic extension of the restriction of $\Psi_\alpha$ on the region bounded by $\Sigma_1$ and $\Sigma_2$ in \eqref{Psi-in-II} to $\C \setminus i \R_-$. Then, one readily sees that
\begin{equation}\label{eq:tildeKdef}
\gamma K_\alpha (x,y;\rho) = \frac{\vec{f}(x)^t\vec{h}(y)}{x-y},
\end{equation}
where
\begin{equation}\label{def:fh}
\vec{f}(x)=\begin{pmatrix}
f_1
\\
f_2
\\
f_3
\end{pmatrix}:= \widetilde \Psi_{\alpha}(x)
\begin{pmatrix}
1
\\
0
\\
0
\end{pmatrix}, \qquad
\vec{h}(y)=\begin{pmatrix}
h_1
\\
h_2
\\
h_3
\end{pmatrix}
:=
\frac{\gamma}{2 \pi i}
\widetilde \Psi_{\alpha}(y)^{-t} \begin{pmatrix}
0
\\
1
\\
0
\end{pmatrix}.
\end{equation}
It is clear that the kernel $\gamma K_\alpha (x,y;\rho)$ is integrable for all $0< \gamma \leq 1$ in the sense of Its et al. \cite{IIKS90}, which is crucial for the subsequent analysis of $F$  defined in \eqref{def:Fnotation}.

First, it is known that the derivative of $F$ satisfies the following relation:
\begin{align}
\frac{\ud}{\ud s}F(s;\gamma,\rho)&=\frac{\ud}{\ud s} \ln \det(I-\gamma \mathcal{K}_{s,\alpha})
=-\textrm{tr}\left((I-\gamma \mathcal{K}_{s,\alpha} )^{-1} \gamma
\frac{\ud}{\ud s}\mathcal{K}_{s,\alpha} \right) \nonumber \\
&=-R(s,s), \label{eq:derivatives}
\end{align}
where $R(u,v)$ stands for the kernel of the resolvent operator, that is, the operator defined by
$$\left(I-\gamma \mathcal{K}_{s,\alpha} \right)^{-1}-I=\gamma \mathcal{K}_{s,\alpha} \left(I-\gamma \mathcal{K}_{s,\alpha} \right)^{-1}=\gamma \left(I-\gamma \mathcal{K}_{s,\alpha} \right)^{-1}\mathcal{K}_{s,\alpha} .$$
As $\gamma K_\alpha (x,y;\rho) $ is integrable, its resolvent kernel is integrable as well; cf. \cite{DIZ97,IIKS90}. More precisely, let $\vec{F}$ and $\vec{H}$ be defined as
\begin{equation}\label{def:FH}
\vec{F}(u)=
\begin{pmatrix}
F_1 \\
F_2 \\
F_3
\end{pmatrix}:=\left(I-\gamma \mathcal{K}_{s,\alpha} \right)^{-1}\vec{f}, \qquad \vec{H}(v)=\begin{pmatrix}
H_1 \\
H_2 \\
H_3
\end{pmatrix}
:=\left(I-\gamma \mathcal{K}_{s,\alpha} \right)^{-1}\vec{h},
\end{equation}
we then have
\begin{equation}\label{eq:resolventexpli}
R(u,v)=\frac{\vec{F}(u)^t\vec{H}(v)}{u-v}.
\end{equation}
Next, due to the integrable structure of the kernel $\gamma K_\alpha (x,y;\rho)$, the following fundamental RH problem relates $\vec{f},\vec{h}$ to $\vec{F},\vec{H}$, respectively; cf. \cite{DIZ97}.

\begin{rhp}  \label{rhp:Y}
\hfill
\begin{enumerate}
\item[\rm (1)] $Y(z)$ is defined and analytic in $\mathbb{C}\setminus [0,s]$, where the orientation is taken from the left to the right.

\item[\rm (2)] For $x\in(0,s)$, we have
\begin{equation}\label{eq:Y-jump}
 Y_+(x)=Y_-(x)(I-2\pi i \vec{f}(x)\vec{h}(x)^t),
 \end{equation}
with the functions $\vec{f}$ and $\vec{h}$  defined in \eqref{def:fh}.
\item[\rm (3)] As $z \to \infty$, we have
\begin{equation}\label{eq:Y-infty}
 Y(z)=I+\frac{\mathsf{Y}_1}{z}+\mathcal \Boh(z^{-2}).
 \end{equation}

\item[\rm (4)] As $z \to  0$, we have
\begin{equation} \label{eq:y-zero}
Y(z) = \left\{
           \begin{array}{ll}
  \Boh( \ln z), & \hbox{$\alpha \geq 0$,}  \\          

              \Boh(z^\alpha ), & \hbox{$-1<\alpha<0$.} 
           \end{array}
         \right.
\end{equation}

\item[\rm (5)]As $z \to  s$, we have $Y(z) = \mathcal \Boh(\ln(z - s))$.

\end{enumerate}
\end{rhp}
The unique solution to the above RH problem is given by
\begin{equation}\label{eq:Yexpli}
Y(z)=I-\int_{0}^s\frac{\vec{F}(w)\vec{h}(w)^t}{w-z}\ud w,
\end{equation}
and one has
\begin{equation}\label{def:FH2}
\vec{F}(z)=Y(z)\vec{f}(z), \qquad \vec{H}(z)=Y(z)^{-t}\vec{h}(z).
\end{equation}

\begin{remark}
Since $\det(I-\gamma \mathcal{K}_{s,\alpha} )$ stands for the gap probability for the thinned/unthinned hard edge Pearcey process, it must be strictly positive and invertible. This guarantees the solvability of the above RH problem for $Y$.
\end{remark}

\subsection{An RH problem related to $\frac{\ud }{\ud s}F$}
With the aid of RH problems \ref{rhp: Pearcey} and \ref{rhp:Y} for $\Psi_{\alpha}$ and $Y$, we will construct a new one with constant jumps and establish its connection with $\frac{\ud }{\ud s}F$. This is a standard procedure known as the undressing transformation in the study of Fredholm determinants associated with integrable kernels.
The idea is to incorporate the jump condition of $Y$ on $(0,s)$ in the RH problem \ref{rhp: Pearcey} for $\Psi_{\alpha}$, and replace the rays $\Sigma_1$ and $\Sigma_5$ by their parallel rays $\Sigma_1^{(s)}$ and $\Sigma_5^{(s)}$ emanating from the point $s$; see Figure \ref{fig:X}.


\begin{figure}[h]
\begin{center}
   \setlength{\unitlength}{1truemm}
   \begin{picture}(100,70)(-5,2)
       \put(25,40){\line(-1,0){30}}
       \put(55,40){\line(1,0){30}}

       \put(25,40){\line(1,0){30}}

       \put(25,40){\line(-1,-1){25}}
       \put(25,40){\line(-1,1){25}}

       \put(55,40){\line(1,1){25}}
       \put(55,40){\line(1,-1){25}}

       \put(15,40){\thicklines\vector(1,0){1}}
       \put(65,40){\thicklines\vector(1,0){1}}

       \put(10,55){\thicklines\vector(1,-1){1}}
       \put(10,25){\thicklines\vector(1,1){1}}
       \put(70,25){\thicklines\vector(1,-1){1}}
       \put(70,55){\thicklines\vector(1,1){1}}

       \put(-2,11){$\Sigma_4$}

       \put(-2,67){$\Sigma_2$}
       \put(3,42){$\Sigma_3$}
       \put(80,11){$\Sigma_5^{(s)}$}
       \put(80,67){$\Sigma_1^{(s)}$}
       \put(73,42){$\Sigma_0^{(s)}$}

       \put(10,46){$\texttt{III}$}
       \put(10,34){$\texttt{IV}$}
       \put(68,46){$\texttt{I}$}
       \put(68,34){$\texttt{VI}$}
       \put(38,55){$\texttt{II}$}
       \put(38,20){$\texttt{V}$}

       \put(25,40){\thicklines\circle*{1}}
       \put(55,40){\thicklines\circle*{1}}

       \put(24,36.3){$0$}
       \put(54,36.3){$s$}

   \end{picture}
   \caption{Regions $\texttt{I-VI}$ and the jump contours for the RH problem for $\Phi$.}
   \label{fig:X}
\end{center}
\end{figure}
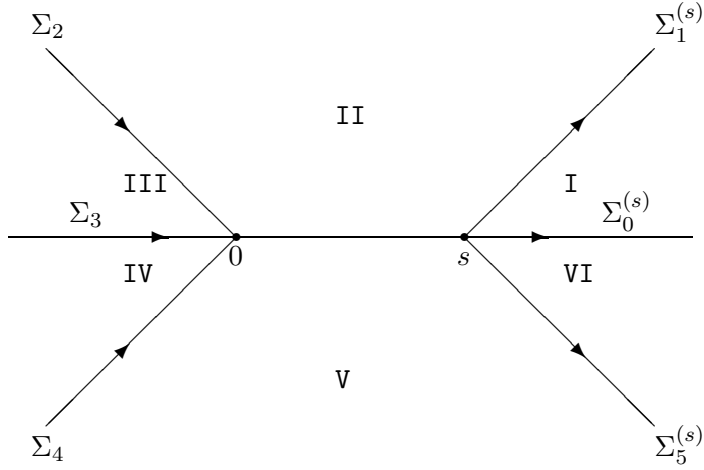

Let \texttt{I}--\texttt{VI} be six regions in the complex plane illustrated in Figure \ref{fig:X}, we now define $\Phi(z)=\Phi(z;s)$ as follows:
\begin{align}\label{eq:YtoX}
 \Phi(z) =  \begin{cases}
   Y(z)\Psi_{\alpha}(z), & \hbox{ $z \in \texttt{I}\cup \texttt{III}\cup \texttt{IV} \cup \texttt{VI}$,} \\
         Y(z)\widetilde \Psi_{\alpha}(z), & \hbox{ $z \in \texttt{II}$,} \\
Y(z)\widehat \Psi_{\alpha}(z), & \hbox{ $z \in \texttt{V}$,}
 \end{cases}
\end{align}
where $\widetilde \Psi_{\alpha}(z)$ is given in \eqref{eq: tilde-psi} and
\begin{equation} \label{eq: hat-psi}
\widehat\Psi_\alpha(z) :=  \frac{e^{\frac{\rho^2}{6}}} {\sqrt{2 \pi}} \begin{pmatrix}
p_2(z) & p_4(z) & p_1(z) \\
p'_2(z) & p'_4(z) & p'_1(z)  \\
p''_2(z) & p''_4(z) & p''_1(z)
\end{pmatrix},\qquad z\in \C \setminus i \R_+,
\end{equation}
i.e., the analytic extension of the restriction of $\Psi_\alpha$ on the region bounded by $\Sigma_4$ and $\Sigma_5$ in \eqref{Psi-in-V} to  $\C \setminus i \R_+$.
Then, $\Phi$ satisfies the following RH problem.
\begin{proposition}\label{rhp:X}
The function $\Phi(z)=\Phi(z;s)$ defined in \eqref{eq:YtoX} has the following properties:
\begin{enumerate}
\item[\rm (1)] $\Phi(z)$ is defined and analytic in $\mathbb{C}\setminus \Sigma_{\Phi}$, where
\begin{equation}\label{def:Sigmaphi}
\Sigma_{\Phi}:=\cup^4_{i=2}\Sigma_i\cup \Sigma_0^{(s)} \cup \Sigma_1^{(s)} \cup \Sigma_5^{(s)} \cup [0,s],
\end{equation}
with
\begin{equation}\label{def:sigmais}
\begin{aligned}
&\Sigma_0^{(s)}=(s,+\infty), \quad \Sigma_1^{(s)}=s+e^{\frac{\pi i}{4}}(0,+\infty), \quad \Sigma_5^{(s)}=s+e^{-\frac{\pi i}{4}}(0,+\infty),
\end{aligned}
\end{equation}
and $\Sigma_j$, $j=2,3,4$, given in \eqref{def:sigmai}; see Figure \ref{fig:X} for an illustration and the orientations.

\item[\rm (2)] For $z\in \Sigma_{\Phi}$, $\Phi$ satisfies the jump condition
\begin{equation}\label{eq:X-jump}
 \Phi_+(z)=\Phi_-(z)J_\Phi(z),
\end{equation}
where
\begin{equation}\label{def:JX}
J_\Phi(z):=\left\{
 \begin{array}{ll}
          \begin{pmatrix} 0 & 1 &0 \\ -1 & 0 &0 \\ 0&0&1 \end{pmatrix}, & \qquad \hbox{$z\in \Sigma_0^{(s)}$,} \\
          \begin{pmatrix} 1&0&0 \\ 1&1&0 \\ 0&0&1 \end{pmatrix},  & \qquad  \hbox{$z\in \Sigma_1^{(s)}$,} \\
           \begin{pmatrix} 1&0&0 \\ 1&1&0 \\ 0&0&1 \end{pmatrix}, & \qquad  \hbox{$z\in \Sigma_5^{(s)}$,} \\
          \begin{pmatrix} 1&1-\gamma&0\\ 0&1&0 \\ 0&0&1 \end{pmatrix}, & \qquad  \hbox{$z\in (0,s)$,}  \\
          J_{\Psi_{\alpha}}(z), & \qquad  \hbox{$z\in \cup^4_{i=2}\Sigma_i$,}
        \end{array}
      \right.
 \end{equation}
with $J_{\Psi_{\alpha}}(z)$ given in \eqref{jumps:M}.

\item[\rm (3)]As $z \to \infty$ and $\pm \Im z>0$, we have
\begin{equation}\label{eq:asyX}
\Phi(z)=
 \frac{i z^{-\frac{\alpha}{3}}}{ \sqrt{3} }   C_\Psi\left(I+ \Boh(z^{-1} ) \right) \diag \left(z^{\frac13},1,z^{-\frac13} \right)L_{\pm} \diag \left( e^{\pm \frac{\alpha \pi i}{3}}, e^{\mp \frac{\alpha \pi i}{3}}, 1 \right) e^{\Theta(z)},
\end{equation}
where $C_\Psi$, $L_{\pm}$ and $\Theta(z)$ are given in \eqref{eq:cons-C-Psi}, \eqref{def:Lpm} and \eqref{def:Theta}, respectively.

\item[\rm (4)] As $z \to 0$, we have, 
if $\alpha \in \mathbb{N} \cup \{0\}$,
\begin{align} \label{eq:X-near-0-2}
	\Phi(z) =& \ \Phi^{(L)}(z) \begin{pmatrix}
	1 & \frac{\gamma - 1}{2\pi i} \ln z & 0 \\
	0 & z^{-\alpha} & 0 \\
	0&  \frac{e^{\alpha \pi i}}{2\pi i} \ln z  & 1
	\end{pmatrix}
\begin{cases}
	I, & z \in \textrm{\texttt{II}}, \\
	\begin{pmatrix}
1 & \gamma-1 & 0
\\
0 & 1 & 0
\\
0 & 0 & 1
\end{pmatrix}, & z \in \textrm{\texttt{V}};
	\end{cases}
\end{align}
if $\alpha \notin \mathbb{Z}$,
\begin{equation} \label{eq:X-near-0-3}
	\Phi(z) = \Phi^{(L)}(z) \left\{
             \begin{array}{ll}
               \begin{pmatrix}
	1 & c_+ & 0 \\
	0 & z^{-\alpha} & 0 \\
	0& \frac{1}{2 \sin ( \alpha \pi ) i} & 1
	\end{pmatrix}, & \hbox{$z \in \textrm{\texttt{II}},$}
           \\
               \begin{pmatrix}
	1 & c_- & 0 \\
	0 & z^{-\alpha} & 0 \\
	0& \frac{1}{2 \sin ( \alpha \pi )i} & 1
	\end{pmatrix}, & \hbox{$z \in \textrm{\texttt{V}}$,}
             \end{array}
           \right.
 \end{equation}
where we take principal branch for $\ln z$ and $z^{\alpha}$,
\begin{equation}\label{def:cpm}
c_{\pm}=\frac{\gamma-1}{2i\sin(\alpha \pi)}e^{\mp\alpha \pi i},
\end{equation}and $\Phi^{(L)}(z)$ is analytic at $z=0$ satisfying the  expansion
\begin{equation}\label{eq: Phi-expand-0}
  \Phi^{(L)}(z) =\Phi_0^{(0)}(s)\left(I+\Phi_1^{(0)}(s) \, z+\Boh(z^2)\right ),\qquad z\to 0,
 \end{equation}
for some functions $\Phi_0^{(0)}(s)$ and $\Phi_1^{(0)}(s)$. The local behavior of $\Phi$ near $z=0$ in other regions can be determined  through \eqref{eq:X-near-0-2}, \eqref{eq:X-near-0-3} and the jump condition \eqref{eq:X-jump} and \eqref{def:JX}.

\item[\rm (5)] As $z \to s$, we have
\begin{align} \label{eq:X-near-s}
	\Phi(z) =& \ \Phi^{(R)}(z) \begin{pmatrix}
	1 & -\frac{\gamma}{2\pi i} \ln(z-s) & 0 \\
	0 & 1 & 0 \\
	0& 0 & 1
	\end{pmatrix}
\begin{cases}
	I, & z \in \textrm{\texttt{II}}, \\
	\begin{pmatrix}
1 & -1 & 0
\\
0 & 1 & 0
\\
0 & 0 & 1
\end{pmatrix}, & z \in \textrm{\texttt{V}},
	\end{cases}
\end{align}
where the principal branch is taken for $\ln(z-s)$, and $\Phi^{(R)}(z)$ is analytic at $z=s$ satisfying the expansion
\begin{equation}\label{eq: Phi-expand-s}
  \Phi^{(R)}(z) =\Phi_0^{(1)}(s)\left(I+\Phi_1^{(1)}(s)(z-s)+\Boh(z-s)^2\right ),\qquad z\to s,
 \end{equation}
for some functions $\Phi_0^{(1)}(s)$ and $\Phi_1^{(1)}(s)$. The local behavior of $\Phi$ near $z=s$ in other regions can be determined  through \eqref{eq:X-near-s} and the jump condition \eqref{eq:X-jump} and \eqref{def:JX}.

\end{enumerate}
\end{proposition}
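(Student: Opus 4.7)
The plan is to verify the five items directly from the piecewise definition \eqref{eq:YtoX}, by assembling the jump and analyticity structures of $Y$ (RH problem \ref{rhp:Y}) with those of $\Psi_\alpha$ and its analytic extensions $\widetilde\Psi_\alpha$, $\widehat\Psi_\alpha$ (RH problem \ref{rhp: Pearcey}). The essential geometric point for item (1) is that the original rays $\Sigma_1$ and $\Sigma_5$ of $\Psi_\alpha$ disappear from $\Sigma_\Phi$ because in regions \texttt{II} and \texttt{V} of Figure \ref{fig:X} we use the extensions $\widetilde\Psi_\alpha$, $\widehat\Psi_\alpha$, which are analytic across them; the jumps they carried are instead ``pushed'' onto the parallel rays $\Sigma_1^{(s)}$, $\Sigma_5^{(s)}$ emanating from $s$. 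Combined with the analyticity of $Y$ off $[0,s]$, this yields item (1).

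For item (2) I would treat the jumps piece by piece. On $\Sigma_2$, $\Sigma_3$, $\Sigma_4$ and on $\Sigma_0^{(s)}\subset\Sigma_0$, $Y$ is analytic across the contour, so $J_\Phi$ equals the corresponding piece of $J_{\Psi_\alpha}$. On $\Sigma_1^{(s)}$ and $\Sigma_5^{(s)}$, $Y$ is again analytic, and because $\widetilde\Psi_\alpha$ (resp.\ $\widehat\Psi_\alpha$) is the analytic continuation of the region-\texttt{II} (resp.\ region-\texttt{V}) formula of $\Psi_\alpha$, the discrepancy between this extension and the region-\texttt{I} (resp.\ \texttt{VI}) formula on the new ray is precisely $J_{\Psi_\alpha}|_{\Sigma_1}$ (resp.\ $J_{\Psi_\alpha}|_{\Sigma_5}$). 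The delicate case is $(0,s)$: there
\begin{align*}
\Phi_-(x)^{-1}\Phi_+(x)
&=\widehat\Psi_\alpha(x)^{-1}\bigl(I-2\pi i\vec f(x)\vec h(x)^t\bigr)\widetilde\Psi_\alpha(x)\\
&=\widehat\Psi_\alpha(x)^{-1}\widetilde\Psi_\alpha(x)\bigl(I-\gamma\,e_1 e_2^t\bigr),
\end{align*}
using $\vec f=\widetilde\Psi_\alpha e_1$ and $\vec h^t\widetilde\Psi_\alpha=\frac{\gamma}{2\pi i}e_2^t$ from \eqref{def:fh}. The algebraic linchpin is then the identity $\widehat\Psi_\alpha(x)^{-1}\widetilde\Psi_\alpha(x)=\left(\begin{smallmatrix}1&1&0\\0&1&0\\0&0&1\end{smallmatrix}\right)$ for $x\in(0,+\infty)$, which amounts to the linear relation $p_3=p_2+p_4$. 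This relation is read off at once by comparing the first columns in the jump identity $\Psi_{\alpha,+}=\Psi_{\alpha,-}J_{\Psi_\alpha}$ across $\Sigma_1$: the first columns $(-p_4,p_3,p_1)$ in \eqref{Psi-in-I} and $(p_2,p_3,p_1)$ in \eqref{Psi-in-II} force $p_2=p_3-p_4$. Right multiplication by $I-\gamma e_1 e_2^t$ then produces exactly the $(0,s)$ jump in \eqref{def:JX}.

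For item (3), since $Y(z)=I+\Boh(z^{-1})$ at infinity by \eqref{eq:Y-infty} and the extensions $\widetilde\Psi_\alpha$, $\widehat\Psi_\alpha$ share the large-$z$ expansion \eqref{eq:asyPsi} with $\Psi_\alpha$ in their respective sectors, the asymptotic \eqref{eq:asyX} follows by left multiplication, the $\Boh(z^{-1})$ factor from $Y$ being absorbed into the $I+\Boh(z^{-1})$ prefactor. Items (4) and (5) are similarly a composition game. Near $z=0$ I would start from the local expansion \eqref{eq:Psizero}, conjugate through the region-transition factors to obtain the local form of $\widetilde\Psi_\alpha$ and $\widehat\Psi_\alpha$ in regions \texttt{II} and \texttt{V}, and then multiply by $Y$; since $Y$ has only the mild behavior \eqref{eq:y-zero}, every analytic piece is absorbed into $\Phi^{(L)}$, and the case split $\alpha\in\mathbb{N}\cup\{0\}$ versus $\alpha\notin\mathbb{Z}$ directly mirrors \eqref{eq:Psizero}; the additional upper-triangular factor in region \texttt{V} of \eqref{eq:X-near-0-2}--\eqref{eq:X-near-0-3} records the $(0,s)$ jump just computed. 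Near $z=s$, both $\widetilde\Psi_\alpha$ and $\widehat\Psi_\alpha$ are analytic, so all singular behavior of $\Phi$ comes from $Y$; extracting from \eqref{eq:Yexpli} the rank-one logarithmic piece at $z=s$ and again using $\vec h^t\widetilde\Psi_\alpha=\frac{\gamma}{2\pi i}e_2^t$ produces the $-\frac{\gamma}{2\pi i}\ln(z-s)$ entry in \eqref{eq:X-near-s}.

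The main obstacle is not any single step but the bookkeeping across five regions, two analytic extensions, and the contour orientations. The single nontrivial algebraic input is the identity $p_3=p_2+p_4$ underpinning the constant $(0,s)$ jump; once this is in hand, all remaining verifications reduce to routine matrix multiplications.
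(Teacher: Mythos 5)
Items (1)--(3) and the crucial $(0,s)$ jump in item (2) are handled correctly and essentially as in the paper. Your derivation of the key identity $\widehat\Psi_\alpha^{-1}\widetilde\Psi_\alpha=\bigl(\begin{smallmatrix}1&1&0\\0&1&0\\0&0&1\end{smallmatrix}\bigr)$ via the linear relation $p_3=p_2+p_4$ (read off from the first columns of \eqref{Psi-in-I} and \eqref{Psi-in-II} across $\Sigma_1$) is equivalent to the paper's route, which obtains the same matrix as the product of the three jump matrices of $\Psi_\alpha$ on $\Sigma_5$, $\Sigma_0$, $\Sigma_1$ separating regions \texttt{V} and \texttt{II}; either way the computation $J_\Phi=\widehat\Psi_\alpha^{-1}\widetilde\Psi_\alpha\bigl(I-\gamma e_1e_2^t\bigr)$ on $(0,s)$ is sound, as are the jumps pushed onto $\Sigma_1^{(s)}$, $\Sigma_5^{(s)}$ and the large-$z$ behavior.

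There is, however, a genuine gap in your treatment of items (4) and (5). The assertion that ``since $Y$ has only the mild behavior \eqref{eq:y-zero}, every analytic piece is absorbed into $\Phi^{(L)}$'' is not tenable: compare the singular factor in \eqref{eq:X-near-0-2} with that of $\Psi_\alpha$ in \eqref{eq:Psizero}. The $(1,2)$ entry is $\frac{\gamma-1}{2\pi i}\ln z$ rather than $\frac{i}{2\pi}\ln z=-\frac{1}{2\pi i}\ln z$ (and similarly $c_\pm$ in \eqref{def:cpm} differ from the constants in \eqref{eq:Psizero} by $\gamma$-dependent terms). The extra $\frac{\gamma}{2\pi i}\ln z$ is precisely the singular contribution of $Y$ at the origin; it can neither be produced from, nor absorbed into an analytic prefactor on the basis of, the mere growth bounds in \eqref{eq:y-zero}. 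To make your ``composition'' argument work you would need the precise endpoint expansions of $Y$ from \eqref{eq:Yexpli}, i.e.\ regularity of $\vec{F}(w)\vec{h}(w)^t$ as $w\to 0$ and $w\to s$, which you do not establish (the same issue, in milder form, affects your sketch at $z=s$). The paper sidesteps this entirely: it checks that the stated local forms are \emph{consistent} with the constant jumps of $\Phi$ near $0$ and $s$, so that $\Phi$ multiplied by the inverse of the model factor has no jump in a punctured neighbourhood, and then the weak a priori bounds (from \eqref{eq:y-zero}, item (5) of RH problem \ref{rhp:Y}, and \eqref{eq:Psizero}) make the isolated singularity removable, yielding analyticity of $\Phi^{(L)}$ and $\Phi^{(R)}$. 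Either supply the endpoint structure of $Y$ explicitly or switch to this jump-consistency-plus-removability argument; as written, items (4) and (5) are not proved.
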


\begin{proof}
It is clear that $\Phi(z)$ defined in \eqref{eq:YtoX} is analytic in $\mathbb{C}\setminus \Sigma_{\Phi}$. To show the jump condition \eqref{eq:X-jump} and \eqref{def:JX}, we only need to check the jump over $(0,s)$, while the other jumps follow directly from \eqref{eq:YtoX} and \eqref{jumps:M}. By \eqref{def:fh} and \eqref{eq:Y-jump}, we have, for $0<x<s$,
\begin{equation}
Y_-(x)^{-1}Y_+(x) = I-2\pi i  \vec{f}(x)\vec{h}(x)^t=\widetilde\Psi_\alpha(x)
\begin{pmatrix}
1 & -\gamma & 0
\\
0 & 1  & 0
\\
0 & 0 & 1
\end{pmatrix}\widetilde\Psi_\alpha(x)^{-1}.
\end{equation}
This, together with \eqref{eq:YtoX}, implies that for $x\in(0,s)$,
\begin{align}
J_\Phi(x) &= \Phi_-(x)^{-1} \Phi_+(x) =  \widehat \Psi_\alpha(x)^{-1} Y_-(x)^{-1} Y_+(x) \widetilde \Psi_\alpha(x)
\nonumber
\\
&=\widehat \Psi_\alpha(x)^{-1} \widetilde \Psi_\alpha(x)
\begin{pmatrix}
1 & -\gamma & 0
\\
0 & 1  & 0
\\
0 & 0 & 1
\end{pmatrix}.
\end{align}
Recall that both $\widetilde \Psi_\alpha(x)$ and $\widehat \Psi_\alpha(x)$ are analytic continuation of $\Psi_\alpha(z)$, it then follows from the jump condition of $\Psi_\alpha$ in \eqref{jumps:M} that
\begin{equation}
\widehat \Psi_\alpha(x)^{-1} \widetilde \Psi_\alpha(x)
=
\begin{pmatrix} 1&0&0 \\ 1&1&0 \\ 0&0&1 \end{pmatrix} \begin{pmatrix} 0&1&0 \\ -1&0&0 \\ 0&0&1 \end{pmatrix} \begin{pmatrix} 1&0&0 \\ 1&1&0 \\ 0&0&1 \end{pmatrix}
=
\begin{pmatrix}
1 & 1 & 0
\\
0 & 1  & 0
\\
0 & 0 & 1
\end{pmatrix}.
\end{equation}
A combination of the above two formulas gives us the jump of $\Phi(z)$ over $(0,s)$.

We next verify the local behavior of $\Phi(z)$ near $z=0$ given in \eqref{eq:X-near-0-2} and \eqref{eq:X-near-0-3}, and it suffices to show that they satisfy the jump condition of $\Phi$. If $\alpha \in \mathbb{N} \cup \{0\}$, we see from the jumps of $\Phi(z)$ on $\Sigma_2$ and $\Sigma_4$ that, as $z\to 0$,
\begin{align}\label{eq:PhizeroIII}
\Phi(z) = \Phi^{(L)}(z) \begin{pmatrix}
	1 & \frac{\gamma - 1}{2\pi i} \ln z & 0 \\
	0 & z^{-\alpha} & 0 \\
	0& \frac{e^{\alpha \pi i}}{2\pi i} \ln z & 1
	\end{pmatrix}
\left\{
  \begin{array}{ll}
    \begin{pmatrix} 1&0&0 \\ 0&1& - e^{\alpha \pi i} \\ 0&0&1 \end{pmatrix}, & \hbox{$z \in \textrm{\texttt{III}}$,} \\
    \begin{pmatrix}
1 & \gamma-1 & 0
\\
0 & 1 & 0
\\
0 & 0 & 1
\end{pmatrix} \begin{pmatrix} 1&0&0 \\ 0&1& e^{-\alpha \pi i} \\ 0&0&1 \end{pmatrix}, & \hbox{$z \in \textrm{\texttt{IV}}$.}
  \end{array}
\right.
\end{align}
Thus, for $x<0$, it follows from \eqref{eq:X-jump}, \eqref{def:JX} and \eqref{eq:PhizeroIII} that
\begin{align*}
\Phi_-(x)^{-1} \Phi_+(x) =& \begin{pmatrix} 1&0&0 \\ 0&1& -e^{-\alpha \pi i} \\ 0&0&1 \end{pmatrix} \begin{pmatrix}
1 & 1-\gamma & 0
\\
0 & 1 & 0
\\
0 & 0 & 1
\end{pmatrix}
 \begin{pmatrix}
	1 & -\frac{\gamma - 1}{2\pi i} (\ln |x| - \pi i)|x|^\alpha e^{-\alpha \pi i} & 0 \\
	0 & |x|^\alpha e^{-\alpha \pi i}  & 0 \\
	0& -\frac{e^{\alpha \pi i}}{2\pi i} (\ln |x| - \pi i)|x|^\alpha e^{-\alpha \pi i} & 1
	\end{pmatrix}
	\\
	& \times \Phi^{(L)}_-(x)^{-1}\Phi^{(L)}_+(x)\begin{pmatrix}
	1 & \frac{\gamma - 1}{2\pi i} (\ln |x| + \pi i) & 0 \\
	0 & |x|^{-\alpha} e^{-\alpha \pi i} & 0 \\
	0& \frac{e^{\alpha \pi i}}{2\pi i} (\ln |x| + \pi i) & 1
	\end{pmatrix} \begin{pmatrix} 1&0&0 \\ 0&1& - e^{\alpha \pi i} \\ 0&0&1 \end{pmatrix} \\
	= & \begin{pmatrix} 1&0&0 \\ 0&0& - e^{-\alpha \pi i} \\ 0&e^{-\alpha \pi i}&0 \end{pmatrix} ,
\end{align*}
as required.
Similarly, if $\alpha \notin \mathbb{Z}$, we have, as $z\to 0$,
\begin{align}\label{eq:PhizeroIV}
\Phi(z) =\Phi^{(L)}(z)\left\{
                           \begin{array}{ll}
                            \begin{pmatrix}
	                        1 & c_+ & 0 \\
	                        0 & z^{-\alpha} & 0 \\
	                        0& \frac{1}{2 i \sin(\alpha \pi)}  & 1
	                       \end{pmatrix}
                           \begin{pmatrix}
                           1&0&0 \\ 0&1& - e^{\alpha \pi i} \\ 0&0&1
                           \end{pmatrix}, & \hbox{$z \in \textrm{\texttt{III}}$,} \\
                            \begin{pmatrix}
	                        1 & c_- & 0 \\
	                        0 & z^{-\alpha} & 0 \\
	                        0& \frac{1}{2 i \sin(\alpha \pi)}  & 1
	                        \end{pmatrix}
                            \begin{pmatrix} 1&0&0 \\ 0&1& e^{-\alpha \pi i} \\ 0&0&1
                            \end{pmatrix}, & \hbox{$z \in \textrm{\texttt{IV}}$.}
                           \end{array}
                         \right.
\end{align}
Thus, it follows from
\eqref{eq:X-jump}, \eqref{def:JX} and \eqref{eq:PhizeroIV} that, for $x<0$,
\begin{align*}
\Phi_-(x)^{-1} \Phi_+(x) = &\begin{pmatrix} 1&0&0 \\ 0&1& -e^{-\alpha \pi i} \\ 0&0&1 \end{pmatrix}  \begin{pmatrix}
	1 &  -c_- |x|^\alpha e^{-\alpha \pi i}  & 0 \\
	0 & |x|^\alpha e^{-\alpha \pi i}  & 0 \\
	0& -\frac{1}{2  i \sin(\alpha \pi)} |x|^\alpha e^{-\alpha \pi i} & 1
	\end{pmatrix}
	\\
	& \times  \Phi^{(L)}_-(x)^{-1}\Phi^{(L)}_+(x) \begin{pmatrix}
	1 & c_+ & 0 \\
	0 & |x|^{-\alpha} e^{-\alpha \pi i} & 0 \\
	0& \frac{1}{2  i \sin(\alpha \pi)}   & 1
	\end{pmatrix} \begin{pmatrix} 1&0&0 \\ 0&1& - e^{\alpha \pi i} \\ 0&0&1 \end{pmatrix} \\
	 = & \begin{pmatrix} 1&0&0 \\ 0&0& - e^{-\alpha \pi i} \\ 0&e^{-\alpha \pi i}&0 \end{pmatrix},
\end{align*}
and from \eqref{eq:X-jump}, \eqref{def:JX} and \eqref{eq:X-near-0-3} that, for $x\in(0,s)$,
\begin{align*}
\Phi_-(x)^{-1} \Phi_+(x) = &
\begin{pmatrix}
1 & -c_-x^{\alpha} & 0 \\ 0 & x^{\alpha} & 0 \\ 0& -\frac{1}{2i\sin(\alpha \pi)}x^{\alpha} & 1 \end{pmatrix}  \Phi^{(L)}_-(x)^{-1}\Phi^{(L)}_+(x)
\begin{pmatrix}
	1 & c_+ & 0 \\
	0 & x^{-\alpha} & 0 \\
	0& \frac{1}{2  i \sin(\alpha \pi)}   & 1
\end{pmatrix}
	\\
	 = & \begin{pmatrix} 1 &  1-\gamma & 0 \\ 0& 1 & 0 \\ 0&0 &1 \end{pmatrix},
\end{align*}
which is consistent with the jump condition satisfied by $\Phi$.

Finally, the local behavior of $\Phi(z)$ at $z=s$ can be verified in a manner similar to that at $z=0$, we omit the details here.

This completes the proof of Proposition \ref{rhp:X}.
\end{proof}

The connection between the above RH problem and the derivative of $F(s;\gamma,\rho)$ is revealed in the following proposition.
\begin{proposition}\label{prop:derivativeandX}
With $F$ defined in \eqref{def:Fnotation}, we have
\begin{align}
\frac{\ud}{\ud s} F(s;\gamma,\rho)&=\frac{\ud}{\ud s} \ln \det\left(I- \gamma \mathcal{K}_{s,\alpha} \right) =-\frac{\gamma}{ 2 \pi i} \left(\Phi_1^{ (1) } (s) \right)_{21} ,
\label{eq:derivativeins}
\end{align}
where $\Phi_1^{(1)} (s)$ is given in \eqref{eq: Phi-expand-s} and $(M)_{ij}$ stands for the $(i,j)$-th entry of a matrix $M$.
\end{proposition}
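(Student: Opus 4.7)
The plan is to start from the general derivative formula \eqref{eq:derivatives} and reduce everything to the local expansion of $\Phi$ near $z=s$ given in \eqref{eq: Phi-expand-s}. The point is that in region $\texttt{II}$ (just above $(0,s)$), the definition \eqref{eq:YtoX} says $\Phi=Y\widetilde\Psi_\alpha$, so the resolvent formula \eqref{eq:resolventexpli} can be rewritten entirely in terms of $\Phi$. Combining \eqref{def:FH2} with \eqref{def:fh}, I would write
\begin{equation*}
\vec{F}(u)=Y(u)\widetilde\Psi_\alpha(u)e_1=\Phi(u)e_1,\qquad
\vec{H}(v)=\frac{\gamma}{2\pi i}(Y(v)\widetilde\Psi_\alpha(v))^{-t}e_2=\frac{\gamma}{2\pi i}\Phi(v)^{-t}e_2,
\end{equation*}
for $u,v$ in region $\texttt{II}$, where $e_j$ denotes the $j$-th standard basis vector in $\mathbb{R}^3$. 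A direct substitution into \eqref{eq:resolventexpli} then yields
\begin{equation*}
R(u,v)=\frac{\vec{F}(u)^t\vec{H}(v)}{u-v}=\frac{\gamma}{2\pi i}\,\frac{\bigl(\Phi(v)^{-1}\Phi(u)\bigr)_{21}}{u-v}.
\end{equation*}

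The next step is to take the limit $u,v\to s$ from region $\texttt{II}$. Note first that the integrability of the kernel $\gamma K_\alpha$ (which is equivalent to $\vec{f}^{\,t}\vec{h}=e_1^t e_2=0$) guarantees that $\bigl(\Phi(v)^{-1}\Phi(u)\bigr)_{21}$ vanishes at $u=v$, so the quotient above has a finite limit. To compute it, I would plug in the expansion \eqref{eq: Phi-expand-s}, which in region $\texttt{II}$ reads
\begin{equation*}
\Phi(z)=\Phi_{0}^{(1)}(s)\bigl(I+\Phi_{1}^{(1)}(s)(z-s)+\mathcal{O}((z-s)^2)\bigr)\,M(z),\qquad
M(z):=\begin{pmatrix} 1 & -\frac{\gamma}{2\pi i}\ln(z-s) & 0\\ 0 & 1 & 0\\ 0 & 0 & 1\end{pmatrix}.
\end{equation*}
Then
\begin{equation*}
\Phi(v)^{-1}\Phi(u)=M(v)^{-1}\bigl(I+\Phi_{1}^{(1)}(s)(u-v)+\mathcal{O}(|u-s|^2+|v-s|^2)\bigr)M(u).
\end{equation*}

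Because $M(v)^{-1}$ and $M(u)$ are both upper triangular with unit diagonal, every contribution to the $(2,1)$ entry from the identity term or from the $\ln$ factors is zero, and what survives is precisely the $(2,1)$ entry of $\Phi_{1}^{(1)}(s)(u-v)$. Hence
\begin{equation*}
\bigl(\Phi(v)^{-1}\Phi(u)\bigr)_{21}=\bigl(\Phi_{1}^{(1)}(s)\bigr)_{21}(u-v)+o(u-v),
\end{equation*}
so $R(s,s)=\tfrac{\gamma}{2\pi i}\bigl(\Phi_{1}^{(1)}(s)\bigr)_{21}$, and inserting this into \eqref{eq:derivatives} gives \eqref{eq:derivativeins}. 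The only mildly delicate point is making sure the logarithmic singularities in $M(z)$ and $M(v)^{-1}$ do not contribute to the $(2,1)$ entry—this is automatic from the upper triangular structure, which is why the identification is clean; the rest is bookkeeping.
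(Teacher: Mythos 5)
Your proposal is correct and follows essentially the same route as the paper: you identify $\vec{F}=\Phi\,e_1$ and $\vec{H}=\frac{\gamma}{2\pi i}\Phi^{-t}e_2$ in region $\texttt{II}$, express $R$ through $\bigl(\Phi(v)^{-1}\Phi(u)\bigr)_{21}$, and read off the limit at $z=s$ from the expansion \eqref{eq: Phi-expand-s}, with the unit upper-triangular logarithmic factor leaving the $(2,1)$ entry untouched — exactly the paper's argument, which phrases the diagonal value as $R(z,z)=\frac{\gamma}{2\pi i}\bigl(\Phi(z)^{-1}\Phi'(z)\bigr)_{21}$ before sending $z\to s$. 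The only cosmetic refinement: rather than the stated $o(u-v)$ bound (which is not uniform in a joint limit), it is cleaner to use analyticity of $\Phi^{(R)}$ to write $\Phi^{(R)}(v)^{-1}\Phi^{(R)}(u)=I+B(u,v)(u-v)$ with $B(s,s)=\Phi_1^{(1)}(s)$, or simply restrict to the diagonal $u=v=z$ first as the paper does.
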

\begin{proof}
For $z\in \texttt{II}$, we see from \eqref{def:fh}, \eqref{def:FH2} and \eqref{eq:YtoX} that
\begin{equation}\label{eq:FHinX}
\vec{F}(z)=Y(z)\vec{f}(z)=Y(z)\widetilde \Psi_\alpha(z)\begin{pmatrix}
1
\\
0
\\
0
\end{pmatrix}= \Phi(z)
\begin{pmatrix}
1
\\
0
\\
0
\end{pmatrix}
\end{equation}
and
\begin{align}
\vec{H}(z)&=Y(z)^{-t}\vec{h}(z)=  \Phi(z)^{-t}\widetilde \Psi_\alpha(z)^{t} \cdot \frac{\gamma }{2 \pi i} \widetilde \Psi_\alpha(z)^{-t} \begin{pmatrix}
0
\\
1
\\
0
\end{pmatrix} =\frac{\gamma }{2 \pi i}  \Phi(z)^{-t}
\begin{pmatrix}
0
\\
1
\\
0
\end{pmatrix}.
\end{align}
Combining the above formulas and \eqref{eq:resolventexpli}, we obtain
\begin{equation} \label{eq: R-Xij}
	R(z,z) = \frac{\gamma}{2 \pi i}  \left(\Phi(z)^{-1}\Phi'(z)\right)_{21}, \qquad  z\in \texttt{II}.
\end{equation}
This, together with \eqref{eq:derivatives}, implies that
\begin{equation}
 \frac{\ud}{\ud s} F(s;\gamma,\rho) = -\frac{\gamma}{2 \pi i} \lim_{z \to s} \left(\Phi(z)^{-1}\Phi'(z)\right)_{21}, \qquad  z\in \texttt{II}.
\end{equation}
With the aid of the local behavior of $\Phi(z)$ as $z \to s$ given in \eqref{eq:X-near-s} and \eqref{eq: Phi-expand-s}, we arrive at \eqref{eq:derivativeins}.

This completes the proof of Proposition \ref{prop:derivativeandX}.
\end{proof}

\section{The Lax pair and differential identities}\label{sec:Lax}

Based on the RH problem for $\Phi$ in Proposition \ref{rhp:X}, we will construct the Lax pair for $\Phi(z;s)$, which leads to the system of equations \eqref{def:sysdiff} and \eqref{eq:constraint}.  Several useful differential identities for the associated Hamiltonian \eqref{def:H} will be also derived for later use.

\subsection{Lax pair equations}
The Lax pair for $\Phi(z;s)$ is given as follows.
\begin{proposition}\label{pro:Lax pair}
  For the function $\Phi(z) = \Phi(z;s)$ defined in \eqref{eq:YtoX}, we have
  \begin{equation} \label{eq:Lax pair}
    \frac{\partial}{\partial z}\Phi(z;s) = L(z;s)\Phi(z;s), \qquad  \frac{\partial}{\partial s}\Phi(z;s) = U(z;s)\Phi(z;s),
  \end{equation}
  where
   \begin{equation} \label{eq:L}
L(z;s)=\begin{pmatrix}
      0 & 1 & 0 \\
      0& 0 & 1 \\
      0   & 0& 0
    \end{pmatrix}+\frac{A_0(s)}{z}+\frac{A_1(s)}{z-s}
  \end{equation}
  and
   \begin{equation} \label{eq:U}
U(z;s)=
     - \frac{A_1(s)}{z-s},
  \end{equation}
   with
  \begin{equation} \label{eq:A-k}
A_i(s)  = \begin{pmatrix}
      q_{i,1}(s) \\
      q_{i,2}(s) \\
      q_{i,3}(s)
    \end{pmatrix}
    \begin{pmatrix}
      p_{i,1}(s) &  p_{i,2}(s) & p_{i,3}(s)
    \end{pmatrix}, \quad i=0,1.\end{equation}
Moreover, the functions $p_{i,k}$ and $q_{i,k}$, $i=0, 1$, $k=1,2,3$, in \eqref{eq:A-k} satisfy \eqref{def:sysdiff}, \eqref{eq:constraint}
and
\begin{align}
p_{0,1}(s)q_{0,3}(s)+p_{1,1}(s)q_{1,3}(s)&=1, \label{eq: equationpqk-1} \\
p_{0,1}(s)q_{0,2}(s)+p_{0,2}(s)q_{0,3}(s)+p_{1,1}(s)q_{1,2}(s)+p_{1,2}(s)q_{1,3}(s) & = \rho. \label{eq: equationpqk-2}
\end{align}
\end{proposition}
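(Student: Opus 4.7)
The plan is to run the standard RH-to-Lax-pair machinery. Because every jump matrix in \eqref{def:JX} is independent of both $z$ and $s$, the logarithmic derivatives $L(z;s):=\bigl(\partial_z\Phi\bigr)\Phi^{-1}$ and $U(z;s):=\bigl(\partial_s\Phi\bigr)\Phi^{-1}$ have no jumps across $\Sigma_\Phi$. Hence they are meromorphic on $\C$, with possible singularities only at $z\in\{0,s,\infty\}$. I will analyse each of these three points and then read off the system, the trace identities, and the rank-one identities.

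For the local analysis at $z=0$ I will factor $\Phi=\Phi^{(L)}(z)D_0(z)M_0$ as in \eqref{eq:X-near-0-2}--\eqref{eq: Phi-expand-0}, where $D_0$ is the explicit middle block and $M_0$ is the piecewise-constant block. A direct calculation of $D_0'D_0^{-1}$ shows that it has a simple pole at $z=0$ whose residue is rank one with trace $-\alpha$, so that $L(z;s)=A_0(s)/z+O(1)$ with $A_0=\vec q_0\vec p_0^{T}$ and $\mathrm{tr}(A_0)=-\alpha$. Because neither $D_0$ nor $M_0$ depends on $s$, $U$ is analytic at $z=0$. The analogue at $z=s$ uses \eqref{eq:X-near-s}--\eqref{eq: Phi-expand-s}: now both $D_1$ and $M_1$ depend on $s$ only through $z-s$, so $\partial_sD_1=-\partial_zD_1$, which immediately produces $L(z;s)=A_1(s)/(z-s)+O(1)$ \emph{and} $U(z;s)=-A_1(s)/(z-s)+O(1)$ with the \emph{same} $A_1=\vec q_1\vec p_1^{T}$, rank one with trace zero. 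The cleanest way to handle $z=\infty$ is to recall that in the regions where $\Phi=Y\Psi_\alpha$, the columns of $\Psi_\alpha$ are built from solutions $p_k$ of the third-order ODE \eqref{eq:Pearcey1}, so $\Psi_\alpha$ satisfies the companion-form equation
\begin{equation*}
\partial_z\Psi_\alpha=\Bigl(L_\infty+\tfrac{1}{z}M_1^{\Psi}\Bigr)\Psi_\alpha,\qquad L_\infty=\begin{pmatrix}0&1&0\\0&0&1\\0&0&0\end{pmatrix},\quad M_1^{\Psi}=\begin{pmatrix}0&0&0\\0&0&0\\1&\rho&-\alpha\end{pmatrix}.
\end{equation*}
Since $Y(z)\to I$ and its jumps are in fixed rays that do not contribute to the logarithmic derivative at infinity, this yields $L(z;s)\to L_\infty$ and $U(z;s)\to 0$. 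Liouville's theorem then produces exactly \eqref{eq:L}--\eqref{eq:A-k}.

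Next, the compatibility condition $\partial_sL-\partial_zU+[L,U]=0$ applied to these rational expressions, after matching coefficients of $1/z$ and $1/(z-s)$, gives
\begin{equation*}
\partial_sA_0=\tfrac{1}{s}[A_1,A_0],\qquad \partial_sA_1=[L_\infty,A_1]+\tfrac{1}{s}[A_0,A_1].
\end{equation*}
Substituting $A_i=\vec q_i\vec p_i^{T}$ and fixing the natural gauge (i.e.\ requiring $\partial_s\vec q_i$ and $\partial_s\vec p_i$ individually to match the outer-product decomposition) produces precisely the four blocks of \eqref{def:sysdiff}, using $(L_\infty\vec q_1)_k=q_{1,k+1}$ and $(L_\infty^{T}\vec p_1)_k=p_{1,k-1}$ together with the boundary convention $p_{1,0}=q_{1,4}\equiv 0$. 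The constraints in \eqref{eq:constraint} are the traces $\mathrm{tr}(A_0)=-\alpha$ and $\mathrm{tr}(A_1)=0$ obtained in Step 2; a one-line calculation on the system \eqref{def:sysdiff} verifies that they are conserved, so they hold for all $s$. Finally, to get \eqref{eq: equationpqk-1}--\eqref{eq: equationpqk-2}, I will match the $1/z$ term of $L$ at infinity: expanding $\Phi=Y\Psi_\alpha$ with $Y=I+\mathsf{Y}_1/z+O(z^{-2})$ gives
\begin{equation*}
A_0+A_1=M_1^{\Psi}+[\mathsf{Y}_1,L_\infty],
\end{equation*}
and because of the peculiar zero-pattern of $L_\infty$, the commutator has vanishing $(3,1)$ entry and the $(3,2)$ and $(2,1)$ entries of the commutator are negatives of each other. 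Reading off entry $(3,1)$ yields \eqref{eq: equationpqk-1} and adding entries $(3,2)$ and $(2,1)$ yields \eqref{eq: equationpqk-2}.

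The main technical obstacle is the behavior at infinity in Step~3: the raw asymptotic formula \eqref{eq:asyX} contains the constant matrix $C_\Psi$, the rotations $L_\pm$ and the diagonal $z$-powers that all mix under $\partial_z$, so a brute-force computation that $L\to L_\infty$ is fairly painful. The detour through the companion-form ODE for $\Psi_\alpha$ (which follows from the fact that the entries $p_k$ of $\widetilde\Psi_\alpha$ solve \eqref{eq:Pearcey1}) bypasses the mess entirely and simultaneously hands us the explicit matrix $M_1^{\Psi}$ that is needed for the identities in Step~6. The remaining work, including the cancellations of commutator entries used to prove \eqref{eq: equationpqk-1}--\eqref{eq: equationpqk-2} and the direct verification that \eqref{eq:constraint} is preserved by the flow, is bookkeeping.
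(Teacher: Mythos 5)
Your proposal is correct in substance and shares the paper's skeleton (constant jumps $\Rightarrow$ $L=\Phi_z\Phi^{-1}$, $U=\Phi_s\Phi^{-1}$ are rational; local analysis at $0$, $s$, $\infty$; Liouville), but it departs from the paper at two points, both legitimately. First, at infinity the paper expands \eqref{eq:asyX} directly and identifies the $O(z^{-1})$ term \eqref{eq:polecoeff} using the identity for $M(z)\Theta'(z)M(z)^{-1}$, which is where $\pi_3(\rho)$ from \eqref{pi3-def} enters and then cancels; you instead exploit that every column of $\Psi_\alpha$ (in each sector) is $(y,y',y'')^t$ with $y$ solving \eqref{eq:Pearcey1}, so $\partial_z\Psi_\alpha=(A+M_1^{\Psi}/z)\Psi_\alpha$ in companion form, and with $Y=I+\mathsf{Y}_1/z+\Boh(z^{-2})$ from \eqref{eq:Y-infty} you get $A_0+A_1=M_1^{\Psi}+[\mathsf{Y}_1,A]$. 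Since $[\mathsf{Y}_1,A]$ has vanishing $(3,1)$ entry and its $(2,1)$ and $(3,2)$ entries cancel in the sum, this yields \eqref{eq: equationpqk-1}--\eqref{eq: equationpqk-2} more cleanly than the paper's route (no $\pi_3$ bookkeeping), and I checked the entries: $[\mathsf{Y}_1,A]_{31}=0$, $[\mathsf{Y}_1,A]_{21}=-(\mathsf{Y}_1)_{31}=-[\mathsf{Y}_1,A]_{32}$, so the argument goes through. Second, for the system \eqref{def:sysdiff} the paper does not use zero curvature at all: it evaluates $U$ at $z=0$ and $L+U$ at $z=s$ (see \eqref{eq: Ulimit}--\eqref{eq:LUlimit}) to get first-order ODEs for $\Phi_0^{(0)}$ and $\Phi_0^{(1)}$, and the vectors $p_{i,k},q_{i,k}$ are the specific columns/rows in \eqref{def: q-0}--\eqref{def: q-1}, so \eqref{def:sysdiff} holds for that canonical choice, which is exactly the solution whose asymptotics are needed later in Proposition \ref{prop:asypq}. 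Your zero-curvature computation correctly gives $A_0'=\frac1s[A_1,A_0]$ and $A_1'=[A,A_1]+\frac1s[A_0,A_1]$, but the step ``fix the natural gauge'' should be made precise: the matrix equations only determine the factorization up to $\vec q_i\mapsto c_i(s)\vec q_i$, $\vec p_i\mapsto c_i(s)^{-1}\vec p_i$, so you should either argue as the paper does, or solve the coupled vector system \eqref{def:sysdiff} with initial data factorizing $A_0(s_0),A_1(s_0)$ and invoke ODE uniqueness to conclude that the resulting outer products coincide with $A_0,A_1$; that existence-of-gauge argument suffices for the proposition as stated (the constraints \eqref{eq:constraint} and the identities \eqref{eq: equationpqk-1}--\eqref{eq: equationpqk-2} are gauge invariant), but it is weaker than the paper's pinning down of the specific functions used in the sequel.
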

\begin{proof}
Since all the jump matrices of $\Phi$ given in \eqref{def:JX} are independent of $z$ and $s$,  then the coefficient matrices
   \begin{equation}
   L(z;s):=\frac{\partial}{\partial z}\Phi(z;s)\cdot \Phi(z;s)^{-1}, \qquad U(z;s):=\frac{\partial}{\partial s}\Phi(z;s)\cdot \Phi(z;s)^{-1}
   \end{equation}
are analytic in the complex $z$-plane except for possible isolated singularities at $z= 0$, $z= s $ and $z = \infty$.

To evaluate $L$ and $U$, we need to know their local behaviors near these points. As $z \to \infty$,  we have from \eqref{eq:asyX} that
\begin{align}\label{eq:L-expand}
   &L(z;s)=  A + \Boh(z^{-1}),
  \end{align}
where
\begin{equation}\label{def:A}
A:=\begin{pmatrix}
      0 & 1 & 0 \\
      0 & 0 & 1 \\
      0& 0 & 0
    \end{pmatrix}.
\end{equation}
On the other hand, in view of the local behaviors of $\Phi$ near $z=z_0=0$ and $z=z_1=s$ as shown in \eqref{eq:X-near-0-2}--\eqref{eq: Phi-expand-s}, it is readily seen that
$$L(z;s)\sim \frac{A_i(s)}{z-z_i},\qquad z \to z_i, \quad i=0,1,$$
where
 \begin{align}\label{eq: L-expand-z-k}
 A_0(s) &=\left\{
            \begin{array}{ll}
              -\alpha \Phi_0^{(0)}(s) \begin{pmatrix} 0&0&0\\ 0&1&0 \\ 0&0&0 \end{pmatrix} \Phi_0^{(0)}(s)^{-1}, & \hbox{$\alpha \neq 0$,}
              \\
              \frac{1}{2 \pi i} \Phi_0^{(0)}(s) \begin{pmatrix} 0&\gamma - 1&0\\ 0&0&0 \\ 0&1&0 \end{pmatrix} \Phi_0^{(0)}(s)^{-1}, & \hbox{$\alpha=0$,}
            \end{array}
          \right.
 \\
 A_1(s) &= -\frac{\gamma}{2\pi i}\Phi_0^{(1)}(s) \begin{pmatrix} 0&1&0\\ 0&0&0 \\ 0&0&0 \end{pmatrix} \Phi_0^{(1)}(s)^{-1}, \label{def:A1}
 \end{align}
and where $\Phi_0^{(i)}(s)$, $i=0,1$, are given in \eqref{eq: Phi-expand-0} and \eqref{eq: Phi-expand-s}, respectively.
Thus, if we define
 \begin{equation}\label{def: q-0}
\begin{pmatrix}
      q_{0,1}(s) \\
      q_{0,2}(s) \\
    q_{0,3}(s)
    \end{pmatrix}=\left\{
                     \begin{array}{ll}
                       \Phi_0^{(0)}(s) \begin{pmatrix}
      0\\
    1 \\
   0
    \end{pmatrix}, & \hbox{$\alpha \neq 0$,}
\\
                      \frac{1}{2\pi i} \Phi_0^{(0)}(s) \begin{pmatrix}
      \gamma-1\\
    0 \\
    1
    \end{pmatrix}, & \hbox{$\alpha=0$,}
                     \end{array}
                   \right.
\end{equation}
\begin{equation}\label{def:p0i}
\begin{pmatrix}
      p_{0,1}(s) \\
      p_{0,2}(s) \\
    p_{0,3}(s)
    \end{pmatrix}=\left\{
                     \begin{array}{ll}
                       -\alpha \Phi_0^{(0)}(s)^{-t}\begin{pmatrix}
      0\\
    1 \\
   0
    \end{pmatrix}, & \hbox{$\alpha \neq 0$,}
\\
                       \Phi_0^{(0)}(s)^{-t}
\begin{pmatrix}
      0\\
    1 \\
   0
    \end{pmatrix}, & \hbox{$\alpha=0$,}
                     \end{array}
                   \right.
\end{equation}
and
\begin{equation}\label{def: q-1}
\begin{pmatrix}
      q_{1,1}(s) \\
      q_{1,2}(s) \\
    q_{1,3}(s)
    \end{pmatrix}=\Phi_0^{(1)}(s) \begin{pmatrix}
      1\\
    0 \\
   0
    \end{pmatrix},\qquad
    \begin{pmatrix}
      p_{1,1}(s) \\
      p_{1,2}(s) \\
    p_{1,3}(s)
    \end{pmatrix}=-\frac{\gamma}{2\pi i}\Phi_0^{(1)}(s)^{-t}\begin{pmatrix}
      0\\
    1 \\
   0
    \end{pmatrix},
\end{equation}
we see the expressions of $A_0(s)$ and $A_1(s)$ in \eqref{eq:A-k} from \eqref{eq: L-expand-z-k} and \eqref{def:A1}. Moreover, the equations \eqref{eq: L-expand-z-k} and \eqref{def:A1} also imply that
\begin{equation}\label{eq: const1}
\mathtt{Tr} A_0(s) = \sum_{k=1}^3q_{0,k}(s)p_{0,k}(s)=-\alpha, \quad \mathtt{Tr} A_1(s) = \sum_{k=1}^3q_{1,k}(s)p_{1,k}(s)=0,
\end{equation}
which are the relations \eqref{eq:constraint}. Similarly, by \eqref{eq:asyX}, it follows that
\begin{align}\label{eq:U-expand}
   U(z;s)= \Boh(z^{-1}), \qquad z\to \infty.
\end{align}
A further appeal to \eqref{eq:X-near-0-2}--\eqref{eq: Phi-expand-s} shows that
\begin{equation}\label{eq:U-local}
U(z;s)= \Boh(1), \quad z\to 0, \qquad U(z;s) \sim -\frac{A_1(s)}{z-s}, \quad z\to s,
\end{equation}
where $A_1(s)$ is given in \eqref{def:A1}. Combining \eqref{eq:U-expand} and \eqref{eq:U-local} gives us \eqref{eq:U}.

We next come to the differential equations satisfied by the functions $p_{i,k}$ and $q_{i,k}$, $i=0, 1$, $k=1,2,3$. From \eqref{eq:Lax pair}--\eqref{eq:U}, one has
\begin{equation}\label{eq: Ulimit}
	\lim_{z\to 0} \frac{\partial}{\partial s}\Phi(z;s) \Phi(z;s)^{-1} = \lim_{z\to 0}U(z;s)=\frac{A_1(s)}{s},
\end{equation}
and
 \begin{equation}\label{eq:LUlimit}
	\lim_{z\to s} \left(\frac{\partial }{\partial z}\Phi(z;s)+\frac{\partial}{\partial s}\Phi(z;s) \right)\Phi(z;s)^{-1}=\lim_{z\to s}(L(z;s)+U(z;s))= A + \frac{A_0(s)}{s}.
\end{equation}
Substituting \eqref{eq: Phi-expand-0} and  \eqref{eq: Phi-expand-s} into the left-hand side of \eqref{eq: Ulimit} and \eqref{eq:LUlimit}, respectively, it is easily seen that
\begin{equation}\label{eq:dphi00}
\frac{\ud}{\ud s}\Phi_0^{(0)}(s) =\frac{A_1(s)}{s}\Phi_0^{(0)}(s),
\end{equation}
and
\begin{equation}\label{eq:dphi01}
\frac{\ud}{\ud s}\Phi_0^{(1)}(s) = \left( A +\frac{A_0(s)}{s} \right) \Phi_0^{(1)}(s).
\end{equation}
Recall the definitions of $q_{i,k}(s)$, $i=0,1$, $k=1,2,3$, given in \eqref{def: q-0} and \eqref{def: q-1}, we then obtain from the above two formulas that
\begin{align}\label{eq:d-Phi-0}
	\begin{pmatrix}
  q_{0,1}'(s)  & q_{0,2}'(s) & q_{0,3}'(s)
\end{pmatrix}^{t}&=\frac{A_1(s)}{s} \begin{pmatrix}
  q_{0,1}(s)  & q_{0,2}(s) & q_{0,3}(s)
\end{pmatrix}^{t},
	\\
 \label{eq:d-Phi-1}
	\begin{pmatrix}
  q_{1,1}'(s)  & q_{1,2}'(s) & q_{1,3}'(s)
\end{pmatrix}^{t}&=\left( A +\frac{A_0(s)}{s} \right)  \begin{pmatrix}
  q_{1,1}(s)  & q_{1,2}(s) & q_{1,3}(s)
\end{pmatrix}^{t}.
	\end{align}
By \eqref{eq:A-k} and \eqref{def:A}, equations \eqref{eq:d-Phi-0} and \eqref{eq:d-Phi-1} are equivalent to the first two equations in \eqref{def:sysdiff}. To show the equations for $p_{0,k}'(s)$ and $p_{1,k}'(s)$ in \eqref{def:sysdiff}, we note that for an invertible matrix-valued function $\mathcal{M}(s)$,
$$
\frac{\ud}{\ud s}\mathcal{M}(s)^{-t}=-\mathcal{M}(s)^{-t}\cdot \frac{\ud}{\ud s} \mathcal{M}(s)^t \cdot \mathcal{M}(s)^{-t}.
$$
Thus, we see from \eqref{eq:dphi00}, \eqref{eq:dphi01} and the definitions of $p_{i,k}(s)$, $i=0,1$, $k=1,2,3$, given in \eqref{def:p0i} and \eqref{def: q-1} that
\begin{align}\label{eq:d-p-0}
	\begin{pmatrix}
  p_{0,1}'(s)  & p_{0,2}'(s) & p_{0,3}'(s)
\end{pmatrix}^{t}&=-\frac{A_1(s)^t}{s} \begin{pmatrix}
  p_{0,1}(s)  & p_{0,2}(s) & p_{0,3}(s)
\end{pmatrix}^{t},
	\\
 \label{eq:d-p-1}
	\begin{pmatrix}
  p_{1,1}'(s)  & p_{1,2}'(s) & p_{1,3}'(s)
\end{pmatrix}^{t}&=- \left( A^t +\frac{A_0(s)^t}{s} \right)  \begin{pmatrix}
  p_{1,1}(s)  & p_{1,2}(s) & p_{1,3}(s)
\end{pmatrix}^{t},
	\end{align}
which gives us the last two equations in \eqref{def:sysdiff}.

Finally, to show \eqref{eq: equationpqk-1} and \eqref{eq: equationpqk-2}, we note that, after a direct calculation, the $\Boh(z^{-1})$ term in \eqref{eq:L-expand} reads
\begin{equation}\label{eq:polecoeff}
\begin{pmatrix}
\ast & \ast & \ast
\\
\pi_3(\rho)+\frac{2\rho}{3} & \ast & \ast
\\
1 & -\pi_3(\rho)+\frac{\rho}{3} & \ast
\end{pmatrix} \frac{1}{z},
\end{equation}
where $\ast$ stands for some unimportant entry and $\pi_3(\rho)$ is defined in \eqref{pi3-def}. To that end, the following observation is helpful:
\begin{equation}
M(z) \Theta'(z) M(z)^{-1} =
\begin{pmatrix}
0 & 1 & \frac{\rho}{3} \vspace{3pt} \\
\frac{\rho}{3z} & 0 & 1 \vspace{3pt}  \\
\frac{1}{z} & \frac{\rho}{3z} & 0
\end{pmatrix},
\end{equation}
where $\Theta(z)$ is defined in \eqref{def:Theta} and $M(z) = \diag \left(z^{\frac13},1,z^{-\frac13} \right) L_{\pm} \diag \left( e^{\pm \frac{\alpha \pi i}{3}}, e^{\mp \frac{\alpha \pi i}{3}}, 1 \right)$ with $L_{\pm}$ given in \eqref{def:Lpm}. In view of \eqref{eq:L}, it is readily seen from \eqref{eq:L-expand} and \eqref{eq:polecoeff} that
\begin{equation}\label{eq: equationAk-1}
(A_0)_{31}+(A_1)_{31}=1, \quad (A_0)_{21}+(A_0)_{32}+(A_1)_{21}+(A_1)_{32}= \rho.
\end{equation}
Combining the above equations with the expressions of $A_0$ and $A_1$ in \eqref{eq:A-k} gives us \eqref{eq: equationpqk-1} and \eqref{eq: equationpqk-2}.

This completes the proof of the Proposition \ref{pro:Lax pair}.
\end{proof}

\begin{remark}
From the general theory of Jimbo-Miwa-Ueno \cite{JM}, it follows that
\begin{align}\label{eq:H-F}
H(s)= -\frac{\gamma}{2\pi i} \mathtt{Tr}
\left(\Phi_1^{(1)}(s)\begin{pmatrix}
      0 & 1 & 0 \\
      0 & 0 & 0 \\
      0 & 0 & 0
    \end{pmatrix}\right)
= -\frac{\gamma}{2\pi i}  \left(\Phi_1^{(1)} (s) \right)_{21},
\end{align}
where $\Phi_1^{(1)} (s)$ given in \eqref{eq: Phi-expand-s} appears in the expansion of $\Phi(z)$ near $z =s$. In view of the first equation in the Lax pair \eqref{eq:Lax pair}, by sending $z\to s$, we obtain from  \eqref{eq:L} and \eqref{eq:X-near-s} that the $\Boh(1)$ term in the expansion yields
\begin{equation}\label{eq:Phi-12}
\Phi_1^{(1)}(s)=\frac{\gamma}{2\pi i}\left[\Phi_1^{(1)}(s),\begin{pmatrix}
      0 & 1 & 0 \\
      0 & 0 & 0 \\
      0& 0 & 0
    \end{pmatrix} \right]+\Phi_0^{(1)}(s)^{-1}\left( A +\frac{A_0(s)}{s}\right) \Phi_0^{(1)}(s),
    \end{equation}
where $[A,B]:=AB-BA$ denotes the commutator of two matrices $A$ and $B$.
Since the $(2,1)$-entry of $\left[\Phi_1^{(1)}(s),\begin{pmatrix}
      0 & 1 & 0 \\
      0 & 0 & 0 \\
      0& 0 & 0
\end{pmatrix} \right]$ equals $0$, we obtain from \eqref{eq:H-F}, \eqref{eq:Phi-12}, the expression of $A_0(s)$ in \eqref{eq:A-k}, and the definitions of $p_{i,k}(s)$ and $q_{i,k}(s)$, $i=0,1$, $k=1,2,3$, in \eqref{def: q-0}--\eqref{def: q-1} that
\begin{align}\label{eq:Hpq}
sH(s) & = -\frac{s\gamma}{2\pi i}  \left(\Phi_1^{(1)} (s) \right)_{21}=-\frac{s\gamma}{2\pi i}  \begin{pmatrix}
0 & 1 & 0
\end{pmatrix}\Phi_1^{(1)} (s) \begin{pmatrix} 1
\\
0
\\
0
\end{pmatrix}
\nonumber
\\
&=-\frac{\gamma}{2\pi i}  \begin{pmatrix}
0 & 1 & 0
\end{pmatrix}\Phi_0^{(1)}(s)^{-1}\left(sA+A_0(s)\right) \Phi_0^{(1)}(s) \begin{pmatrix} 1
\\
0
\\
0
\end{pmatrix}
\nonumber
\\
&=\begin{pmatrix}
      p_{1,1}(s) &  p_{1,2}(s) & p_{1,3}(s)
    \end{pmatrix}
    \nonumber
    \\
    \nonumber
    &\quad \times \left(\begin{pmatrix}
      0 & s & 0 \\
      0 & 0 & s \\
      0& 0 & 0
    \end{pmatrix}+\begin{pmatrix}
      q_{0,1}(s) \\
      q_{0,2}(s) \\
      q_{0,3}(s)
    \end{pmatrix}
    \begin{pmatrix}
      p_{0,1}(s) &  p_{0,2}(s) & p_{0,3}(s)
    \end{pmatrix}\right) \begin{pmatrix}
      q_{1,1}(s) \\  q_{1,2}(s) \\ q_{1,3}(s)
    \end{pmatrix}
\nonumber
\\
&= s\left(p_{1,1}(s) q_{1,2}(s)+p_{1,2}(s) q_{1,3}(s)  \right)
 +\left(\sum_{k=1}^3p_{1,k}(s)q_{0,k}(s)\right)
\left (\sum_{k=1}^3p_{0,k}(s)q_{1,k}(s)\right),
\nonumber
\end{align}
which is equivalent to \eqref{def:H}.

\end{remark}

\subsection{Differential identities for the Hamiltonian}

Besides the Hamiltonian $H$ defined in \eqref{def:H}, the following differential identities are crucial in the derivation of the asymptotics for $F(s;\gamma,\rho)$ in Theorem \ref{thm:FAsy}, especially for the constant term.

\begin{proposition} \label{prop:H-diff}
With the Hamiltonian $H$ defined in \eqref{def:H}, we have
\begin{equation}\label{eq: dH-s}
 \frac{\ud}{\ud s}(sH(s)) =p_{1,1}(s)q_{1,2}(s)+p_{1,2}(s)q_{1,3}(s),
\end{equation}
and $H$ is related to the action differential  by
\begin{equation} \label{eq: action-diff}
\sum_{i=0}^1\sum_{k=1}^3p_{i,k}(s)q_{i,k}'(s)-H(s)
=H(s)-\frac{\ud}{\ud s}(sH(s)) .
\end{equation}
Moreover, we also have the following differential identities with respect to the parameters $\gamma$ and $\rho$:
\begin{align}
  \frac{\partial}{\partial \gamma}\left( \sum_{i=0}^1\sum_{k=1}^3p_{i,k}(s)q_{i,k}'(s)-H(s) \right)&=\frac{\ud}{\ud s}\left(
   \sum_{i=0}^1\sum_{k=1}^3p_{i,k}(s) \frac{\partial}{\partial \gamma}q_{i,k}(s)\right), \label{eq: dH-gamma} \\
   \frac{\partial}{\partial \rho}\left( \sum_{i=0}^1\sum_{k=1}^3p_{i,k}(s)q_{i,k}'(s)-H(s) \right)&=\frac{\ud}{\ud s}\left(
   \sum_{i=0}^1\sum_{k=1}^3p_{i,k}(s) \frac{\partial}{\partial \rho}q_{i,k}(s)\right).  \label{eq: dH-rho}
\end{align}
\end{proposition}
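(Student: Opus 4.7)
The plan is to establish the four identities by exploiting the Hamiltonian structure of the system \eqref{def:sysdiff}–\eqref{eq:constraint} together with the explicit form \eqref{def:H} of $H$. Throughout, I will write $A(s) := \sum_{j=1}^3 p_{1,j}(s) q_{0,j}(s)$ and $B(s) := \sum_{j=1}^3 p_{0,j}(s) q_{1,j}(s)$, so that
\begin{equation*}
H(s) = \bigl(p_{1,1}q_{1,2} + p_{1,2}q_{1,3}\bigr) + \frac{A(s)B(s)}{s}.
\end{equation*}
The only explicit $s$-dependence in $H$ sits in the prefactor $1/s$; as a function of the $12$ variables $p_{i,k},q_{i,k}$ alone, $H$ carries no explicit $s$, $\gamma$, or $\rho$ dependence.

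\textbf{Step 1 (identity \eqref{eq: dH-s}).} I would first compute $\frac{dH}{ds}$ along the flow via the chain rule. Because $H$ is a Hamiltonian with the canonical relations \eqref{eq:Heq}, every contribution of the form $\frac{\partial H}{\partial p_{i,k}}p'_{i,k}+\frac{\partial H}{\partial q_{i,k}}q'_{i,k}$ cancels in pairs, leaving only the partial in $s$, giving $\frac{dH}{ds}= -\frac{AB}{s^2}$. Substituting into $\frac{d}{ds}(sH) = H + s\frac{dH}{ds}$ and recognising $H - AB/s = p_{1,1}q_{1,2}+p_{1,2}q_{1,3}$ yields \eqref{eq: dH-s}.

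\textbf{Step 2 (identity \eqref{eq: action-diff}).} Using Hamilton's equations $q'_{i,k} = \partial H/\partial p_{i,k}$, I would expand $\sum_{i,k} p_{i,k}q'_{i,k}$ as $\sum_{i,k} p_{i,k}\,\partial_{p_{i,k}}H$. A direct computation of the partials $\partial H/\partial p_{0,k} = A q_{1,k}/s$ and $\partial H/\partial p_{1,k} = (\text{kinetic term in } k) + B q_{0,k}/s$ gives
\begin{equation*}
\sum_{i,k} p_{i,k}q'_{i,k} = (p_{1,1}q_{1,2}+p_{1,2}q_{1,3}) + \frac{2AB}{s},
\end{equation*}
so that $\sum p q' - H = AB/s$. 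On the other hand, Step 1 yields $H - \frac{d}{ds}(sH) = H - (p_{1,1}q_{1,2}+p_{1,2}q_{1,3}) = AB/s$, and \eqref{eq: action-diff} follows.

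\textbf{Step 3 (identities \eqref{eq: dH-gamma}–\eqref{eq: dH-rho}).} These are instances of a standard on-shell variational identity: for a one-parameter family of Hamiltonian systems with parameter $\lambda$, the on-shell Lagrangian density $\mathcal{L}(s;\lambda) = \sum p_{i,k}q'_{i,k} - H$ satisfies $\partial_\lambda \mathcal{L} = \frac{d}{ds}(\sum p_{i,k}\partial_\lambda q_{i,k}) - \partial_\lambda H|_{p,q}$, the last term being the explicit derivative of $H$ in $\lambda$ at frozen $p,q$. I would derive this by differentiating $\sum p q' - H$ in $\lambda$, using Hamilton's equations to rewrite $\sum \partial_\lambda p_{i,k}\cdot q'_{i,k} - \sum \partial_\lambda q_{i,k}\cdot p'_{i,k}$ as part of $\partial_\lambda H|_{\text{on-shell}}$, and recognising the remainder as a total $s$-derivative. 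Since $H$ in \eqref{def:H} involves $\gamma$ and $\rho$ only implicitly through $p_{i,k},q_{i,k}$ (the explicit $\partial_\gamma H|_{p,q}$ and $\partial_\rho H|_{p,q}$ both vanish), this general identity reduces to \eqref{eq: dH-gamma} and \eqref{eq: dH-rho} respectively.

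The bookkeeping in Step 2—checking that the contribution of the nonlinear coupling $AB/s$ to the Euler-type sum $\sum p\,\partial_p H$ doubles correctly while the linear kinetic piece does not—will be the one place where the argument could easily be miscounted, but it is purely mechanical given the factorised form of the coupling. No global analytic input (such as the RH asymptotics) is required for this proposition: everything follows from the algebraic structure \eqref{def:sysdiff}–\eqref{eq:Heq}.
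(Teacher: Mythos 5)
Your proposal is correct and follows essentially the same route as the paper: the canonical cancellation from \eqref{eq:Heq} gives \eqref{eq: dH-s}, the explicit Euler-type computation of $\sum p_{i,k}\,\partial H/\partial p_{i,k}$ (equivalently, substituting \eqref{def:sysdiff}) gives \eqref{eq: action-diff}, and the parameter identities follow from the chain rule plus Hamilton's equations, exactly as in the paper's proof. The only implicit step shared with the paper is the interchange of $\partial/\partial\gamma$ (or $\partial/\partial\rho$) with $\ud/\ud s$, which is harmless here.
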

	
\begin{proof}
Applying the Hamiltonian equations \eqref{eq:Heq}, we have
\begin{equation}\label{eq:sHD}
  \frac{\ud}{\ud s} (sH )(s)= \sum_{i=0}^1\sum_{k=1}^3  \left(s\frac{\partial H}{\partial p_{i,k} }p_{i,k}'(s) +s\frac{\partial H}{\partial q_{i,k}} q_{i,k}'(s)\right)+\frac{\partial}{\partial s}(sH(s))=\frac{\partial }{\partial s}(sH(s)).
   \end{equation}
This, together with \eqref{def:H}, gives us \eqref{eq: dH-s}. A further combination of \eqref{def:H}, \eqref{eq: dH-s} and the differential equations \eqref{def:sysdiff} implies \eqref{eq: action-diff}. To obtain the differential identity with respect to the parameter $\gamma$, it is readily seen from \eqref{eq:Heq} that
\begin{multline}
  \frac{\partial}{\partial \gamma} H(s) = \sum_{i=0}^1\sum_{k=1}^3  \left(\frac{\partial H}{\partial p_{i,k} } \frac{\partial}{\partial \gamma} p_{i,k}(s) +\frac{\partial H}{\partial q_{i,k}}  \frac{\partial}{\partial \gamma} q_{i,k}(s)\right)
  \\
  =\sum_{i=0}^1\sum_{k=1}^3  \left(q_{i,k}'(s) \frac{\partial}{\partial \gamma} p_{i,k}(s) -p'_{i,k}(s)\frac{\partial}{\partial \gamma} q_{i,k}(s)\right),
\end{multline}
which gives us \eqref{eq: dH-gamma}. The differential identity with respect to $\rho$ in \eqref{eq: dH-rho} can be proved in a similar manner and we omit the details here.

This completes the proof of Proposition \ref{prop:H-diff}.
\end{proof}

In the next two sections, we will investigate asymptotics of the RH problem for $\Phi$ as $s \to +\infty$ and $s \to 0^+$ by using Deift-Zhou nonlinear steepest descent method \cite{DZ93}, respectively. The main idea is to covert the original RH problem into a small-norm one via a series of explicit and invertible transformations.


\section{Asymptotic analysis of the RH problem for $\Phi$ as $s \to +\infty$}\label{sec:AsyPhiinfty}

Throughout this section, it is assumed that $0<\gamma <1$.
	
\subsection{First transformation: $\Phi \to T$}
Note that the jump contour $\Sigma_{\Phi}$ in \eqref{def:Sigmaphi} is unbounded when $s \to +\infty$, our first transformation is to rescale the problem as follows:
\begin{equation}\label{def:PhiToT}
T(z)= -i \sqrt{3} \, s^{\frac{\alpha}{3}}  \diag \left( s^{-\frac13},1, s^{\frac13} \right) C_\Psi^{-1} \Phi(sz)e^{-\Theta(sz)},
\end{equation}
where we also take this opportunity to get rid of the exponential factor in the large-$z$ expansion in \eqref{eq:asyX}. In the above formula, the function $\Theta(z)$ and the constant matrix $C_\Psi$ are given in \eqref{def:Theta} and \eqref{eq:cons-C-Psi}, respectively. Then, $T(z)$ satisfies the following RH problem.

\begin{proposition}\label{rhp:T}
The function $T$ defined in \eqref{def:PhiToT} has the following properties:
\begin{enumerate}
\item[\rm (1)] $T(z)$ is defined and analytic in $\mathbb{C}\setminus \Sigma_T$,
where
$$
\Sigma_T:=\cup^4_{i=2}\Sigma_i\cup \Sigma_0^{(1)} \cup \Sigma_1^{(1)} \cup \Sigma_5^{(1)} \cup [0,1],
$$
and where the contours $\Sigma_j^{(1)}$, $j=0,1,5$, are defined in \eqref{def:sigmais} with $s=1$.

\item[\rm (2)] For $z\in \Sigma_T$, $T$ satisfies the jump condition
\begin{equation}\label{eq:T-jump}
 T_+(z)=T_-(z)J_T(z),
\end{equation}
where
\begin{equation}\label{def:JT}
J_T(z):=\left\{
 \begin{array}{ll}
          \begin{pmatrix} 0 & 1 &0 \\ -1 & 0 &0 \\ 0&0&1 \end{pmatrix}, & \qquad \hbox{$z\in \Sigma_0^{(1)}$,} \\
          \begin{pmatrix} 1&0&0 \\ e^{\theta_2(sz)-\theta_1(sz)}&1& 0 \\ 0&0&1 \end{pmatrix},  & \qquad  \hbox{$z\in \Sigma_1^{(1)}$,} \\
          \begin{pmatrix} 1&0&0 \\ 0&1&e^{\alpha \pi i} e^{\theta_2(sz)-\theta_3(sz)} \\ 0 & 0 &1 \end{pmatrix},  & \qquad \hbox{$z\in \Sigma_2$,} \\
          \begin{pmatrix} 1 &0&0 \\ 0&0&-e^{-\alpha \pi i} \\ 0&e^{-\alpha \pi i}&0 \end{pmatrix},  & \qquad  \hbox{$z\in \Sigma_3$,} \\
          \begin{pmatrix} 1&0&0 \\ 0&1& e^{- \alpha \pi i} e^{\theta_1(sz)-\theta_3(sz)} \\ 0 &  0 &1 \end{pmatrix}, & \qquad  \hbox{$z\in \Sigma_4$,} \\
          \begin{pmatrix} 1&0&0 \\ e^{\theta_1(sz)-\theta_2(sz)}&1& 0  \\ 0&0&1 \end{pmatrix}, & \qquad  \hbox{$z\in \Sigma_5^{(1)}$,} \\
          \begin{pmatrix} e^{\theta_2(sz)-\theta_1(sz)}&1-\gamma& 0\\ 0&e^{\theta_1(sz)-\theta_2(sz)}&0 \\ 0&0&1 \end{pmatrix}, & \qquad  \hbox{$z\in  (0,1)$,}
        \end{array}
      \right.
      \end{equation}
      with $\theta_k(z)=\theta_k(z;\rho)$, $k=1,2,3$, being defined in \eqref{eq: theta-k-def}.
\item[\rm (3)]As $z \to \infty$ and $\pm \Im z>0$, we have
\begin{equation}\label{eq:asyT}
T(z)= z^{-\frac{\alpha}{3}}
\left(I+ \Boh(z^{-1}) \right)\diag \left( z^{\frac13},1, z^{-\frac13} \right)L_{\pm} \diag \left( e^{\pm \frac{\alpha \pi i}{3}}, e^{\mp \frac{\alpha \pi i}{3}}, 1 \right),
\end{equation}
where $L_{\pm}$ is given in \eqref{def:Lpm}.

\item[\rm (4)]
As $z \to  1$, we have $T(z)=\Boh(\ln(z - 1))$.

\item[\rm (5)]
As $z \to  0$, we have
\begin{equation}
T(z)=\left\{
       \begin{array}{ll}
         \Boh( z^{-\alpha} ), & \hbox{$\alpha>0$,}
\\
\Boh(\ln z), & \hbox{$\alpha=0$,}
\\
\Boh(1), & \hbox{$-1<\alpha<0$.}
       \end{array}
     \right.
\end{equation}
\end{enumerate}
\end{proposition}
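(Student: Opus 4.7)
The plan is to verify each item of the RH problem by direct substitution, using the corresponding properties of $\Phi$ from Proposition \ref{rhp:X}. The transformation \eqref{def:PhiToT} is designed to simultaneously (a) rescale the unbounded jump contour $\Sigma_\Phi$ to the fixed contour $\Sigma_T$ via $z \mapsto sz$, (b) strip the exponential growth of $\Phi$ at infinity via the factor $e^{-\Theta(sz)}$, and (c) normalize the leading matrix in the large-$z$ expansion via the constant prefactor $-i\sqrt{3}\,s^{\alpha/3}\diag(s^{-1/3},1,s^{1/3}) C_\Psi^{-1}$. Analyticity of $T$ on $\mathbb{C}\setminus\Sigma_T$ is then immediate, since the branch cuts of $e^{-\Theta(sz)}$ are contained in $\mathbb{R} \subset \Sigma_T$, the prefactor is constant, and $\Phi(sz)$ is analytic off $\Sigma_T$.

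For item (2) one uses $J_T(z) = T_-(z)^{-1} T_+(z) = e^{\Theta_-(sz)} J_\Phi(sz) e^{-\Theta_+(sz)}$ and treats each arc of $\Sigma_T$ separately. On the four rays $\Sigma_1^{(1)}, \Sigma_2, \Sigma_4, \Sigma_5^{(1)}$ off the real axis, $\Theta$ is continuous across, so $\Theta_+ = \Theta_-$ and conjugation by the diagonal $e^{\Theta(sz)}$ simply inserts $e^{\theta_i(sz)-\theta_j(sz)}$ in the off-diagonal entries of $J_\Phi$, producing the matrices in \eqref{def:JT}. On the positive real axis $(0,1) \cup \Sigma_0^{(1)}$ one has $\Theta_+ = \diag(\theta_1,\theta_2,\theta_3)$ and $\Theta_- = \diag(\theta_2,\theta_1,\theta_3)$, and the interchange of the first two diagonal entries is precisely what the permutation-like $J_\Phi$ on $\Sigma_0^{(s)}$ requires to yield a constant jump; on $(0,1)$ the same swap produces the stated form with off-diagonal entry $1-\gamma$ and diagonal entries $e^{\pm(\theta_2-\theta_1)(sz)}$. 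On $\Sigma_3 = (-\infty, 0)$ one invokes the branch-cut identities $\theta_1(x+i0) = \theta_2(x-i0)$, $\theta_2(x+i0) = \theta_3(x-i0)$, $\theta_3(x+i0) = \theta_1(x-i0)$ for $x<0$, which follow from $\theta_k(z) = \frac{3}{2}\omega^{2k}z^{2/3} + \rho\omega^k z^{1/3}$ with the principal branches of $z^{1/3}$ and $z^{2/3}$; combined with the transposition built into $\Theta_-$ via \eqref{def:Theta}, these identities force the conjugation to preserve the constant jump of $J_\Phi$ on $\Sigma_3$.

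For item (3), I would substitute \eqref{eq:asyX} at $z \mapsto sz$ into \eqref{def:PhiToT}, use the factorizations $(sz)^{\pm 1/3} = s^{\pm 1/3} z^{\pm 1/3}$ and $(sz)^{-\alpha/3} = s^{-\alpha/3} z^{-\alpha/3}$, and observe that the prefactor is chosen precisely so that the constants $\frac{i}{\sqrt{3}} C_\Psi$ and the $s$-dependent diagonal factor cancel, while $e^{-\Theta(sz)}$ annihilates the exponential $e^{\Theta(sz)}$, leaving exactly \eqref{eq:asyT}. Items (4) and (5) reduce to the local behaviors of $\Phi$ near $z=s$ and $z=0$ given in \eqref{eq:X-near-s}, \eqref{eq:X-near-0-2}, \eqref{eq:X-near-0-3}, since $e^{-\Theta(sz)}$ is bounded at both points (as $\theta_k(w) \to 0$ when $w \to 0$ and is finite at $w=s$); the three regimes in item (5) follow by reading off the dominant singular factor, noting in particular that $z^{-\alpha} \to 0$ as $z \to 0$ when $-1 < \alpha < 0$. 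The principal obstacle in all of this is the jump verification on $\Sigma_3$, which requires a careful matching between the cyclic permutation $\theta_1 \to \theta_2 \to \theta_3 \to \theta_1$ across the principal branch cut and the transposition-type swap of $\theta_1, \theta_2$ in \eqref{def:Theta}; every other step is routine substitution and bookkeeping.
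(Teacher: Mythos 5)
Your proposal is correct and follows essentially the same route as the paper: every item is obtained by direct substitution of the properties of $\Phi$ from Proposition \ref{rhp:X} into \eqref{def:PhiToT}, with the only non-routine point being the jump on $\Sigma_3$, which you resolve exactly as the paper does via the branch-cut relations $\theta_{1,+}(x)=\theta_{2,-}(x)$, $\theta_{2,+}(x)=\theta_{3,-}(x)$, $\theta_{3,+}(x)=\theta_{1,-}(x)$ for $x<0$ (the paper's \eqref{eq:thetarelations}) combined with the swap of the first two diagonal entries in \eqref{def:Theta}. No gaps to report.
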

\begin{proof}
The above problem is a straightforward result from the transformation \eqref{def:PhiToT} and the RH problem for $\Phi$ given in Proposition \ref{rhp:X}, except for the jump on $\Sigma_3=(-\infty,0)$ where the function $\Theta(z)$ is not analytic.
For $x\in (-\infty,0)$, it is readily seen from  \eqref{eq:X-jump}, \eqref{def:JX} and \eqref{def:PhiToT} that
\begin{align}\label{eq:JT}
J_T(x)&=T_-(x)^{-1}T_+(x)
\nonumber \\
&=\diag\left(e^{\theta_{2,-}(sx)},e^{\theta_{1,-}(sx)},e^{\theta_{3,-}(sx)}\right)\Phi_-(sx)^{-1}\Phi_+(sx)
\nonumber \\
& \quad \times \diag\left(e^{-\theta_{1,+}(sx)},e^{-\theta_{2,+}(sx)},e^{-\theta_{3,+}(sx)}\right)
\nonumber \\
&=\begin{pmatrix} e^{\theta_{2,-}(sx)-\theta_{1,+}(sx)}& 0 & 0
\\ 0& 0 & - e^{-\alpha \pi i} e^{\theta_{1,-}(sx)-\theta_{3,+}(sx)}
\\ 0 &  e^{-\alpha \pi i} e^{\theta_{3,-}(sx)-\theta_{2,+}(sx)} & 0
\end{pmatrix}.
\end{align}	
In view of the definitions of $\theta_k$, $k=1,2,3$, in \eqref{eq: theta-k-def}, we have
\begin{align}\label{eq:thetarelations}
\theta_{1,+}(x)=\theta_{2,-}(x),\quad \theta_{2,+}(x)=\theta_{3,-}(x), \quad \theta_{3,+}(x)=\theta_{1,-}(x), \qquad x <0.
\end{align}
Combining the above two formulas, we obtain the jump $J_T$ on $\Sigma_3=(-\infty,0)$ as shown in \eqref{def:JT}.

This completes the proof of Proposition \ref{rhp:T}.
\end{proof}

\subsection{Second transformation: $T \to S$}
From the definitions of $\theta_k$, $k=1,2,3$, in \eqref{eq: theta-k-def}, one can see that all the entries containing $\theta_k(sz)$ in $J_T(z)$ \eqref{def:JT} are exponentially small as $s \to +\infty$, except $e^{\theta_2(sx)-\theta_1(sx)}$ when $x \in(0,1)$.
Indeed, we have
\begin{equation}
\theta_2(sx)-\theta_1(sx)=\sqrt{3}i s^{\frac23} \left( \frac{3}{2}x^{\frac23}-\frac{\rho}{s^{\frac{1}{3}}}x^{\frac13} \right), \qquad x\in(0,1),
\end{equation}
which implies that the $(1,1)$ and $(2,2)$ entries of $J_T(z)$ are highly oscillatory for $x\in(0,1)$ as $s \to +\infty$. Based on the factorizations
\begin{align}\label{eq:Deformation-1}
&
\begin{pmatrix}
e^{\theta_2(sx)-\theta_1(sx)}&1-\gamma& 0\\ 0&e^{\theta_1(sx)-\theta_2(sx)}&0 \\ 0&0&1
\end{pmatrix}
\nonumber \\
&=
\begin{pmatrix} 1&0&0\\ \frac{ e^{\theta_1(sx)-\theta_2(sx)}}{1-\gamma}&1& 0 \\ 0&0&1 \end{pmatrix}
\begin{pmatrix}
0 &1-\gamma&0
\\
\frac{1}{\gamma-1} & 0  & 0
\\ 0&0&1
\end{pmatrix}
\begin{pmatrix} 1&0&0\\ \frac{e^{\theta_2(sx)-\theta_1(sx)}}{1-\gamma} &1& 0 \\ 0&0&1 \end{pmatrix}, \qquad x\in (0,1),
\end{align}
a standard lens opening transformation can be introduced to remove the highly oscillatory entries of $J_T$ on $(0,1)$. Let $\Omega_\pm$ be the lens-shaped regions in the neighbourhood of $[0,1]$; see Figure \ref{fig:S}. The second transformation is set to be
\begin{equation}\label{def:TtoS}
S(z)=T(z) \left\{
\begin{array}{ll}
\begin{pmatrix} 1&0&0\\ \frac{ e^{\theta_1(sz)-\theta_2(sz)}}{1-\gamma}&1& 0 \\ 0&0&1 \end{pmatrix},  & \qquad z\in \Omega_{-},\\
  \begin{pmatrix} 1&0&0\\ \frac{e^{\theta_2(sz)-\theta_1(sz)}}{\gamma-1} &1&0 \\ 0&0&1 \end{pmatrix}, & \qquad z\in\Omega_{+},\\
I, & \qquad \mbox{elsewhere},
\end{array}
\right.
\end{equation}

\begin{figure}[h]
\begin{center}
   \setlength{\unitlength}{1truemm}
   \begin{picture}(100,70)(-5,2)
       \put(25,40){\line(-1,0){30}}
       \put(55,40){\line(1,0){30}}

       \put(25,40){\line(1,0){30}}

       \put(25,40){\line(-1,-1){25}}
       \put(25,40){\line(-1,1){25}}
       \put(55,40){\line(1,1){25}}
       \put(55,40){\line(1,-1){25}}

       \put(15,40){\thicklines\vector(1,0){1}}
       \put(65,40){\thicklines\vector(1,0){1}}

       \put(10,55){\thicklines\vector(1,-1){1}}
       \put(10,25){\thicklines\vector(1,1){1}}
       \put(70,25){\thicklines\vector(1,-1){1}}
       \put(70,55){\thicklines\vector(1,1){1}}

       \put(-2,11){$\Sigma_4$}

       \put(-2,67){$\Sigma_2$}
       \put(3,42){$\Sigma_3$}
       \put(80,11){$\Sigma_5^{(1)}$}
       \put(80,67){$\Sigma_1^{(1)}$}
       \put(73,42){$\Sigma_0^{(1)}$}


\put(39,42){$\Omega_{+}$}
\put(39,36){$\Omega_{-}$}


 \put(39,45){\thicklines\vector(1,0){1}}
  \put(39,35){\thicklines\vector(1,0){1}}
  \put(39,47){$\partial \Omega_{+}$}
  \put(39,31){$\partial \Omega_{-}$}


       \qbezier(25,40)(40,50)(55,40)
       \qbezier(25,40)(40,30)(55,40)

       \put(25,40){\thicklines\circle*{1}}
       \put(55,40){\thicklines\circle*{1}}

       \put(25,36.3){$0$}
       \put(54,36.3){$1$}
\end{picture}
   \caption{Regions $\Omega_\pm$ and the jump contours for the RH problem for $S$.}
   \label{fig:S}
\end{center}
\end{figure}
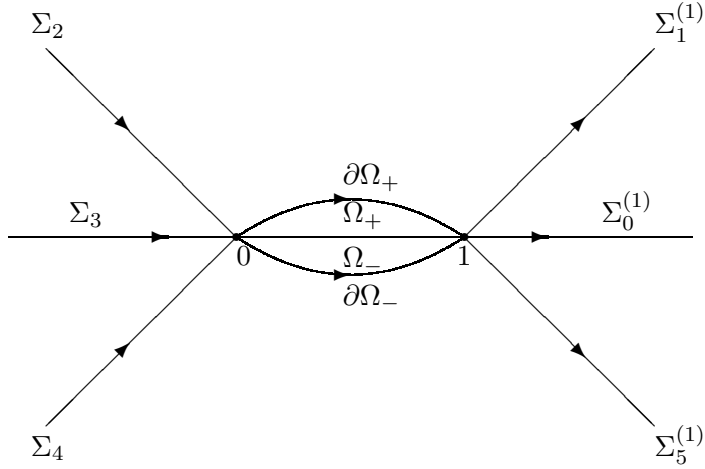

Then, it is straightforward to verify that $S(z)$ satisfies the following RH problem.
\begin{rhp}\label{rhp:S}
\hfill
\begin{enumerate}
\item[\rm (1)] $S(z)$ is defined and analytic in $\mathbb{C} \setminus \Sigma_S$,
where
\begin{equation}\label{def:SigmaS}
\Sigma_S:=\cup_{j=2}^4\Sigma_j\cup[0,1] \cup \partial \Omega_{\pm} \cup\Sigma_0^{(1)} \cup\Sigma_1^{(1)}  \cup\Sigma_5^{(1)};
\end{equation}
see Figure \ref{fig:S} for an illustration.

\item[\rm (2)] For $z\in \Sigma_S$, $S(z)$ satisfies the jump condition
\begin{equation}\label{eq:S-jump}
 S_+(z)=S_-(z)J_S(z),
\end{equation}
where
\begin{equation}\label{def:JS}
J_S(z):=\left\{
 \begin{array}{ll}
           J_T(z), & \qquad  \hbox{$z\in \cup_{j=2}^4\Sigma_j\cup\Sigma_0^{(1)} \cup\Sigma_1^{(1)}  \cup\Sigma_5^{(1)}$,} \\
    \begin{pmatrix} 1&0&0\\ \frac{ e^{\theta_1(sz)-\theta_2(sz)}}{1-\gamma}&1& 0 \\ 0&0&1 \end{pmatrix}, & \qquad  \hbox{$z\in \partial\Omega_{-}$,} \\
    \begin{pmatrix} 1&0&0\\ \frac{e^{\theta_2(sz)-\theta_1(sz)}}{1-\gamma} &1& 0 \\ 0&0&1 \end{pmatrix}, & \qquad  \hbox{$z\in \partial\Omega_{+}$,} \\
 \begin{pmatrix} 0&1-\gamma&0\\ \frac{1}{\gamma-1}&0&0 \\ 0&0&1 \end{pmatrix}, & \qquad  \hbox{$z\in  (0,1)$,}
        \end{array}
      \right.
 \end{equation}
and where $J_T$ is defined in \eqref{def:JT}.
\item[\rm (3)]As $z \to \infty$ and $\pm \Im z>0$, $S(z)$ has the same behavior as $T(z)$ given in \eqref{eq:asyT}.

\item[\rm (4)] As $z \to  1$ or $z\to 0$ outside the lens, $S(z)$ has the same local behaviors as $T(z)$.
\end{enumerate}
\end{rhp}

\subsection{Global parametrix}
As all the exponential entries in $J_S(z)$ tend to 0 exponentially fast as $s\to +\infty$, we are led to consider the following global parametrix.

\begin{rhp}\label{rhp:N}
\hfill
\begin{enumerate}
\item[\rm (1)] $N(z)$ is defined and analytic in $\mathbb{C}\setminus \mathbb{R}$.

\item[\rm (2)] For $x\in \mathbb{R}$, $N$ satisfies the jump condition
\begin{equation}\label{eq:N-jump}
 N_+(x)=N_-(x)J_N(x),
\end{equation}
where
\begin{equation}\label{def:JN}
J_N(x)=
\left\{
 \begin{array}{ll}
          \begin{pmatrix} 1 &0&0 \\ 0&0&-e^{-\alpha \pi i} \\ 0&e^{-\alpha \pi i}&0 \end{pmatrix}, & \qquad  \hbox{$x\in(-\infty,0)$,} \\
          \begin{pmatrix} 0&1-\gamma&0\\ \frac{1}{\gamma-1}&0&0 \\ 0&0&1 \end{pmatrix}, & \qquad  \hbox{$x\in (0,1)$,}\\
           \begin{pmatrix} 0 & 1 &0 \\ -1 & 0 &0 \\ 0&0&1 \end{pmatrix}, & \qquad \hbox{$x\in(1,+\infty)$.}
                  \end{array}
      \right.
 \end{equation}

\item[\rm (3)]As $z \to \infty$ and $\pm \Im z>0$, we have
\begin{equation}\label{eq:Ninfty}
N(z)= z^{-\frac{\alpha}{3}}
\left(I+ \Boh(z^{-1}) \right)  \diag \left( z^{\frac13},1, z^{-\frac13} \right)  L_{\pm} \diag \left( e^{\pm \frac{\alpha \pi i}{3}}, e^{\mp \frac{\alpha \pi i}{3}}, 1 \right).
\end{equation}

\end{enumerate}
\end{rhp}

The above RH problem can be solved explicitly in terms of the functions $d_k(z)$, $k=1,2,3,$ defined as follows. Recall that $\omega=e^{\frac{2}{3}\pi i}$, we set
\begin{equation}\label{eq:lambda}
\lambda(\xi)=\left( \frac{  \xi - \omega }{\xi-1}\right)^{\beta},\qquad \xi  \in \C \setminus \{ (  0, 1 ) \cup (0, 1)\omega \},
\end{equation}
where $\beta$ is given in \eqref{beta-def} and the branch is chosen such that $\lambda(\xi) \to 1$ as $\xi \to \infty$; see Figure \ref{fig:lambdaContour} for an illustration. The functions $d_k(z)$, $k=1,2,3,$ are then defined by
\begin{equation}\label{eq:dklambda}
\begin{array}{ll}
d_1(z)=\left\{
         \begin{array}{ll}
          \lambda(z^{\frac13}), & \quad \hbox{$\Im z>0$,} \\
           \lambda(\omega z^{\frac13}), & \quad \hbox{$\Im z<0$,}
         \end{array}
       \right.
 \\
d_{2}(z)= \left\{\begin{array}{ll}
          \lambda(\omega  z^{\frac13}), & \quad \hbox{$\Im z>0$,} \\
          \lambda(z^{\frac13}), & \quad \hbox{$\Im z<0$,}
         \end{array}
       \right.
\\
d_{3}(z)=\lambda(\omega^{-1} z^{\frac13}),
\end{array}
\end{equation}
where we choose  $\arg z\in(-\pi,\pi)$.

\begin{figure}[h]
\begin{center}
   \setlength{\unitlength}{1truemm}
   \vspace{-16mm}
   \begin{picture}(100,70)(-5,2)

       \put(40,40){\line(-1,2){6}}

       \put(40,40){\line(1,0){15}}

       \put(47.5,40){\thicklines\vector(1,0){1}}
       \put(37,46){\thicklines\vector(1,-2){1}}

  \put(40,40){\thicklines\circle*{1}}
%
       \put(55,40){\thicklines\circle*{1}}
       \put(34,52){\thicklines\circle*{1}}

       \put(32,54){$\omega$}
       \put(38,36.3){$0$}
       \put(54,36.3){$1$}
   \end{picture}
   \vspace{-33mm}
   \caption{The branch cut of the function $\lambda$.}
   \label{fig:lambdaContour}
\end{center}
\end{figure}
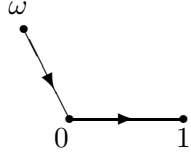

\begin{proposition}\label{prop:propdk}
The functions $d_k(z)$, $k=1,2,3,$ in \eqref{eq:dklambda} satisfy the following properties.
\begin{enumerate}
\item [\rm (i)] For $x \in \mathbb{R}$, we have
\begin{align}\label{eq:dJump1}
d_{1,+}(x)&=\frac{d_{2,-}(x)}{1-\gamma}, \qquad d_{2,+}(x)=d_{1,-}(x)(1-\gamma), \qquad &&x\in(0,1),
\\
d_{3,+}(x)&=d_{3,-}(x), \qquad  &&x\in(0,+\infty),\\
d_{1,+}(x)&=d_{2,-}(x), \qquad d_{2,+}(x)=d_{1,-}(x),  \qquad  && x\in(1,+\infty),
\\
d_{1,+}(x)&=d_{1,-}(x),  \qquad  && x\in(-\infty,0),   \\  \label{eq:dJump5}
d_{2,+}(x)&=d_{3,-}(x), \qquad d_{3,+}(x)=d_{2,-}(x), \qquad  && x\in(-\infty,0)  .
\end{align}

\item [\rm (ii)] As $z \to \infty$, we have
\begin{equation}\label{eq:dk-infty}
\begin{array}{ll}
d_1(z)=1+(1-\omega)\beta z^{-\frac{1}{3}}+(-\frac{3}{2}\omega \beta^2+\frac{1-\omega^2}{2}\beta) z^{-\frac{2}{3}}+\Boh(z^{-1}),\\
d_{2}(z)=1+(1-\omega)\omega^2\beta z^{-\frac{1}{3}} +(-\frac{3}{2}\omega \beta^2+\frac{1-\omega^2}{2}\beta)\omega z^{-\frac{2}{3}} +\Boh(z^{-1}),
\\
d_{3}(z)=1+(1-\omega)\omega \beta z^{-\frac{1}{3}}+(-\frac{3}{2}\omega \beta^2+\frac{1-\omega^2}{2}\beta)\omega^2 z^{-\frac{2}{3}}+\Boh(z^{-1})
.
\end{array}
\end{equation}

\item [\rm (iii)] As $z\to 1$, we have, for $\Im z>0$,
\begin{equation}\label{eq:dk-1}
\begin{array}{ll}
d_1(z)=(3\sqrt{3})^{\beta}e^{-\frac{\beta\pi i}{6}} (z-1)^{-\beta}\left(1 +\frac{\beta}{{3\sqrt{3}}} (\sqrt{3}+e^{\frac{\pi i}{6}})(z-1)+\Boh((z-1)^2)\right),\\
d_{2}(z)=(3\sqrt{3})^{-\beta}e^{-\frac{\beta\pi i}{6}} (z-1)^{\beta}\left(1  -\frac{\beta}{{3\sqrt{3}}}(\sqrt{3}+e^{-\frac{\pi i}{6}})(z-1)+\Boh((z-1)^2)\right),
\\
d_{3}(z)=e^{\frac{\beta\pi i}{3}} \left(1-\frac{ \beta i}{3\sqrt{3}}(z-1)+\Boh((z-1)^2)\right)
.
\end{array}
\end{equation}

\item [\rm (iv)] As $z\to 0$, we have, for $\Im z>0$,
\begin{equation}\label{eq:dk-0}
\begin{array}{ll}
d_1(z)=e^{-\frac{4\beta\pi i}{3}} \left(1+\sqrt{3}e^{\frac{\pi i}{6}} \beta z^{\frac13}+\frac{\sqrt{3}}{2}\beta(e^{-\frac{\pi i}{6}}-\sqrt{3}\beta \omega^2)z^{\frac23}+\Boh(z)\right),\\
d_{2}(z)= e^{\frac{2\beta\pi i}{3}}  \left(1-\sqrt{3}e^{-\frac{\pi i}{6}} \beta z^{\frac13}-\frac{\sqrt{3}}{2}\beta(e^{\frac{\pi i}{6}}+\sqrt{3}\beta \omega)z^{\frac23}+\Boh(z)\right),
\\
d_{3}(z)=e^{\frac{2\beta\pi i}{3}} \left(1-\sqrt{3}\beta i z^{\frac13}+\frac{\sqrt{3}}{2}\beta(i-\sqrt{3}\beta)z^{\frac23}+\Boh(z)\right).
\end{array}
\end{equation}

\end{enumerate}
\end{proposition}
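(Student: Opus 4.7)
My plan is to prove Proposition 5.5 by direct computation from the explicit formula $\lambda(\xi)=\bigl((\xi-\omega)/(\xi-1)\bigr)^{\beta}$, treating the four items (i)--(iv) separately but using a small common toolkit: the two convenient factorizations
\begin{equation*}
\lambda(\xi)=\Bigl(1+\frac{1-\omega}{\xi-1}\Bigr)^{\!\beta}
=\omega^{\beta}\Bigl(\frac{1-\omega^{-1}\xi}{1-\xi}\Bigr)^{\!\beta},
\end{equation*}
the identity $(1-\omega)^{2}=-3\omega$ (which follows from $1+\omega+\omega^{2}=0$), the numerical value $1-\omega=\sqrt{3}\,e^{-i\pi/6}$, and the monodromy relation $e^{2\pi i\beta}=1-\gamma$ coming from \eqref{beta-def}. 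The normalization $\lambda(\infty)=1$ fixes all branches unambiguously once we track arguments along paths.

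For item (i), the key bookkeeping is that on $(-\infty,0)$ one has $z_{+}^{1/3}=\omega\,z_{-}^{1/3}$ (the cube-root monodromy), so the two pieces in the piecewise definition of $d_{1}$ and $d_{2}$ are precisely designed to cancel this jump: $d_{1,+}(x)=\lambda(z_{+}^{1/3})=\lambda(\omega z_{-}^{1/3})=d_{1,-}(x)$, and analogous cancellations give $d_{2,+}=d_{3,-}$, $d_{3,+}=d_{2,-}$ after using $\omega\cdot\omega^{1/2}=-\omega^{-1/2}$ type identities. On $(0,+\infty)$ the cube root is continuous but the piecewise formula changes, which produces the jumps on $(1,\infty)$ simply by swapping $d_{1}\leftrightarrow d_{2}$. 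On $(0,1)$ the same swap is combined with the actual jump of $\lambda$ across its branch cut $(0,1)$: only the $(\xi-1)^{\beta}$ factor contributes, giving the multiplicative factor $e^{2\pi i\beta}=1-\gamma$, which is precisely what appears in \eqref{eq:dJump1}. One checks that $d_{3}$ is unaffected on $(0,\infty)$ because its argument $\omega^{-1}z^{1/3}$ stays away from both branch cuts of $\lambda$.

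For item (ii), I expand the first factorization geometrically: writing $(1-\omega)/(\xi-1)=(1-\omega)/\xi+(1-\omega)/\xi^{2}+O(\xi^{-3})$ and using the binomial expansion $(1+w)^{\beta}=1+\beta w+\tfrac{\beta(\beta-1)}{2}w^{2}+O(w^{3})$ gives, at order $\xi^{-2/3}$, the coefficient $\beta(1-\omega)+\tfrac{\beta(\beta-1)}{2}(1-\omega)^{2}$; substituting $(1-\omega)^{2}=-3\omega$ and simplifying $1-\omega+\tfrac{3\omega}{2}=\tfrac{1-\omega^{2}}{2}$ yields the expansion for $d_{1}$. The expansions of $d_{2}$ and $d_{3}$ then follow from $d_{2}(z)=\lambda(\omega z^{1/3})$ and $d_{3}(z)=\lambda(\omega^{-1}z^{1/3})$, so their asymptotic coefficients are obtained by replacing $z^{-1/3}$ by $\omega^{2}z^{-1/3}$ and $\omega\,z^{-1/3}$ respectively, which is exactly the pattern in \eqref{eq:dk-infty}. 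For item (iii), I use the second key observation $z^{1/3}-1=\tfrac{z-1}{3}(1-\tfrac{z-1}{3}+O((z-1)^{2}))$ and factor $\lambda(z^{1/3})=(1-\omega)^{\beta}(z^{1/3}-1)^{-\beta}(1+\tfrac{z^{1/3}-1}{1-\omega})^{\beta}$; pulling out $(z-1)^{-\beta}$ produces the prefactor $(3(1-\omega))^{\beta}=(3\sqrt{3})^{\beta}e^{-i\pi\beta/6}$, and combining the two linear corrections $\beta/3$ and $\beta/(3(1-\omega))$ into $\tfrac{\beta(2-\omega)}{3(1-\omega)}=\tfrac{\beta}{3\sqrt{3}}(\sqrt{3}+e^{i\pi/6})$ recovers \eqref{eq:dk-1}. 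The expansions of $d_{2}$ and $d_{3}$ at $z=1$ follow analogously, noting that $\omega\cdot 1-1=\omega-1$ and $\omega^{-1}-1$ are nonzero, so $d_{2}$ picks up $(\omega-1)^{\beta}=(1-\omega)^{\beta}e^{-i\pi\beta}$ etc.

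For item (iv), I use the second factorization $\lambda(\xi)=\omega^{\beta}((1-\omega^{-1}\xi)/(1-\xi))^{\beta}$, expand $(1-\omega^{-1}\xi)^{\beta}$ and $(1-\xi)^{-\beta}$ by the binomial theorem in $\xi=z^{1/3}$, and apply the argument-tracking analysis to pin down the branch of $\omega^{\beta}$ on each side. Concretely, for $d_{1}$ in the upper half plane I parametrize a path from $\xi=+i\infty$ down to $\xi=0^{+}i$ and follow the continuous argument of $(\xi-\omega)/(\xi-1)$: it starts at argument $0$, sweeps through the lower half plane, and returns to $\omega$ with continuous argument $-4\pi/3$, yielding the overall phase $e^{-4\pi i\beta/3}$ displayed in \eqref{eq:dk-0}. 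The phases $e^{2\pi i\beta/3}$ for $d_{2}$ and $d_{3}$ are determined by the analogous argument continuation, and the linear and quadratic coefficients come directly from the binomial expansion of the rational factor. The main obstacle throughout is precisely this bookkeeping of arguments: the branch cuts of $\lambda$ lie along $(0,1)$ and $(0,\omega)$, and the cube-root map sends the upper and lower half-planes into sectors that intersect both cuts non-trivially, so I will draw explicit pictures of where $z^{1/3}$, $\omega z^{1/3}$, $\omega^{-1}z^{1/3}$ sit relative to these cuts in order to assign the correct values of $\lambda$ unambiguously before doing any series expansion.
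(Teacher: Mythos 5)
Your proposal is correct and takes essentially the same route as the paper: the paper's proof simply records the jump $\lambda_+(\xi)=\lambda_-(\xi)/(1-\gamma)$ across the cuts $(0,1)\cup(0,1)\omega$ and the binomial expansion of $\lambda$ at infinity, and dismisses the expansions at $z=0$ and $z=1$ as a ``straightforward computation,'' which is exactly the computation you carry out. The constants you actually verify all check out (e.g.\ $(1-\omega)^2=-3\omega$ giving the $\xi^{-2}$ coefficient, $(3(1-\omega))^{\beta}=(3\sqrt{3})^{\beta}e^{-\beta\pi i/6}$, and the continued argument $-4\pi/3$ at the origin), so your write-up is just a more detailed version of the paper's argument.
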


\begin{proof}
It follows from the definition of $\lambda(\xi)$ in \eqref{eq:lambda} that
\begin{equation}\label{eq:lambdaJump}
\lambda_+(\xi)=e^{-2\beta \pi i}\lambda_{-}(\xi)=\frac{\lambda_{-}(\xi)}{1-\gamma}, \qquad \xi  \in ( 0,1 ) \cup (0, 1) \omega,
\end{equation}
where the orientation of the contour is shown in Figure \ref{fig:lambdaContour}, and as $\xi \to \infty$,
\begin{equation}
\lambda(\xi) =1+(1-\omega)\beta \frac{1}{\xi}+\left(-\frac{3}{2}\omega \beta^2+\frac{1-\omega^2}{2}\beta \right) \frac{1}{\xi^2}+\Boh(\xi^{-3}).
\end{equation}
The above formulas, together with the definitions of $d_k$, $k=1,2,3$, in \eqref{eq:dklambda}, give us the jump condition \eqref{eq:dJump1}--\eqref{eq:dJump5} and the large-$z$ behavior \eqref{eq:dk-infty}. Similarly, we obtain the asymptotics of $d_k$ near the points 0 and 1 via a straightforward computation.

This completes the proof of Proposition \ref{prop:propdk}.
\end{proof}

As a consequence of the above proposition, we have the following result.

\begin{lemma}\label{lem:NSolution}
The solution of RH problem \ref{rhp:N} is given by
\begin{equation}\label{eq:NSolution}
N(z)=z^{-\frac{\alpha}{3}}C_N \diag \left(  z^{\frac13},1,z^{-\frac13}  \right)L_{\pm}\diag\left( e^{\pm \frac{\alpha \pi i}{3}}d_1(z), e^{\mp \frac{\alpha \pi i}{3}}d_2(z),d_3(z)\right)
\end{equation}
for $\pm \Im z>0$,
where
\begin{equation}\label{eq:CN}
C_N =
\begin{pmatrix}
1 & -\sqrt{3}\beta i &-\frac{3}{2}\beta^2+\frac{\sqrt{3}}{2}\beta i \\0 &1 &-\sqrt{3} \beta i \\ 0&0&1
\end{pmatrix},
\end{equation}
the functions $d_k(z)$, $k=1,2,3$ and the constant matrices $L_{\pm}$ are given in \eqref{eq:dklambda} and \eqref{def:Lpm}, respectively.
\end{lemma}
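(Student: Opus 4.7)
The plan is to verify by direct substitution that the right-hand side of \eqref{eq:NSolution} satisfies all three conditions of RH problem \ref{rhp:N}. Analyticity in $\mathbb{C}\setminus\mathbb{R}$ is built into the construction: with the principal branch $\arg z\in(-\pi,\pi)$, the scalar prefactor $z^{-\alpha/3}\diag(z^{1/3},1,z^{-1/3})$ is analytic off $(-\infty,0]$; the piecewise constants $L_\pm$ change only across $\mathbb{R}$; and by \eqref{eq:dklambda} together with the location of the branch cut of $\lambda$ in Figure \ref{fig:lambdaContour}, each $d_k(z)$ is analytic off $\mathbb{R}$.

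For the jump \eqref{eq:N-jump}, I would treat the three segments $(-\infty,0)$, $(0,1)$, $(1,\infty)$ in turn. On the positive real axis the scalar prefactor is continuous, so the computation of $N_-^{-1}N_+$ reduces to determining $L_-^{-1}L_+$ together with the $d_k$ jumps. A direct matrix product gives
\begin{equation*}
L_-^{-1}L_+ = \begin{pmatrix} 0 & 1 & 0 \\ -1 & 0 & 0 \\ 0 & 0 & 1 \end{pmatrix},
\end{equation*}
and therefore $N_-^{-1}N_+ = \diag(1/a_-,1/b_-,1/c_-)\,(L_-^{-1}L_+)\,\diag(a_+,b_+,c_+)$ with $a_\pm = e^{\pm\alpha\pi i/3}d_{1,\pm}$, $b_\pm = e^{\mp\alpha\pi i/3}d_{2,\pm}$, $c_\pm = d_{3,\pm}$. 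Inserting \eqref{eq:dJump1}--\eqref{eq:dJump5} and noting that the $\alpha$-phases cancel pairwise in the ratios $a_+/b_-$ and $b_+/a_-$ recovers the second and third cases of \eqref{def:JN}. On $(-\infty,0)$ both the scalar prefactor (via $z_+^{1/3}=\omega z_-^{1/3}$ and $z_+^{-\alpha/3}=e^{-2\alpha\pi i/3}z_-^{-\alpha/3}$) and $L_\pm$ contribute; combined with $d_{1,+}=d_{1,-}$, $d_{2,+}=d_{3,-}$, $d_{3,+}=d_{2,-}$, a short calculation yields the first case of \eqref{def:JN}.

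For the asymptotic \eqref{eq:Ninfty}, substituting the expansions \eqref{eq:dk-infty} gives $\diag(d_1,d_2,d_3)=I+A_\pm z^{-1/3}+B_\pm z^{-2/3}+O(z^{-1})$ with $A_\pm, B_\pm$ explicit diagonal matrices; after the scalar $z^{-\alpha/3}$ is pulled out, what remains to be shown is that
\begin{equation*}
C_N\diag(z^{1/3},1,z^{-1/3})\bigl(L_\pm\diag(d_1,d_2,d_3)L_\pm^{-1}\bigr)\diag(z^{-1/3},1,z^{1/3}) = I+O(z^{-1}).
\end{equation*}
The key point is that the $(1,3)$ entry of $L_\pm A_\pm L_\pm^{-1}$ must vanish (otherwise a growing $z^{1/3}$ term would appear), a fact verified by a short computation using $1+\omega+\omega^2=0$. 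The remaining nonzero constant contributions at order $z^0$ are upper-triangular and are cancelled by left multiplication by $C_N^{-1}$, which uniquely forces $C_N$ to be as in \eqref{eq:CN}.

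The main obstacle is ensuring that a single $C_N$ works for both $\Im z>0$ and $\Im z<0$. Since $L_-$ differs from $L_+$ by a column-permutation-with-sign $P$ (namely $L_-=L_+P$) and since $A_-=PA_+P^{-1}$ and $B_-=PB_+P^{-1}$, a consequence of $d_1,d_2$ swapping roles between the two half-planes in \eqref{eq:dklambda}, the conjugates $L_\pm A_\pm L_\pm^{-1}$ and $L_\pm B_\pm L_\pm^{-1}$ turn out to be independent of the sign of $\Im z$. This consistency is what makes a single constant matrix $C_N$ suffice and yields the explicit form \eqref{eq:CN}.
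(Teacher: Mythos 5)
Your treatment of analyticity and of the jumps is correct and is essentially the paper's own argument: the paper factors $N(z)=C_N N_\infty(z)\diag(d_1,d_2,d_3)$ with $N_\infty$ as in \eqref{def:Ninfty}, checks the ``bare'' jumps \eqref{eq:NinftyJump} once, and then invokes \eqref{eq:dJump1}--\eqref{eq:dJump5}; you perform the same computation by hand via $L_-^{-1}L_+$ and the boundary relations of $z^{1/3}$, $z^{-\alpha/3}$ on $(-\infty,0)$, which checks out. Your closing remark — $L_-=L_+P$ with $P$ a signed transposition, together with $d_1,d_2$ exchanging their roles across $\mathbb{R}$, so that $L_\pm A_\pm L_\pm^{-1}$ and $L_\pm B_\pm L_\pm^{-1}$ do not depend on the half-plane — is a correct and useful way to see why a single constant $C_N$ works. (Minor slip: the order-$z^0$ upper-triangular part $U$ is removed by left multiplication by $C_N$ itself, i.e.\ $C_N=(I+U)^{-1}$, not by $C_N^{-1}$.)

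The gap is in the verification of condition (3). You only control the orders $z^{1/3}$ (vanishing of the $(1,3)$ entry of $L_\pm A_\pm L_\pm^{-1}$) and $z^{0}$ (absorbed into $C_N$), but \eqref{eq:Ninfty} demands an error $\Boh(z^{-1})$, so you must also show that the coefficients of $z^{-1/3}$ and $z^{-2/3}$ in $C_N\diag(z^{1/3},1,z^{-1/3})L_\pm\diag(d_1,d_2,d_3)L_\pm^{-1}\diag(z^{-1/3},1,z^{1/3})$ vanish; as written, your argument only gives $I+\Boh(z^{-1/3})$, which does not verify RH problem \ref{rhp:N}(3) as stated (and these orders receive contributions from the $z^{-1}$ and $z^{-4/3}$ terms of the $d_k$, i.e.\ beyond the data \eqref{eq:dk-infty}). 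They do vanish, but for a structural reason you have not used: writing $\lambda(\xi)=\sum_{m\ge0}u_m\xi^{-m}$, the coefficient of $z^{-m/3}$ in $\diag(d_1,d_2,d_3)$ is $u_m\diag(1,\omega^{-m},\omega^{m})$ (up to the half-plane permutation you already identified), and $L_\pm\diag(1,\omega^{-m},\omega^{m})L_\pm^{-1}$ is supported exactly on the entries of the $m$-th power of the cyclic permutation $E_{12}+E_{23}+E_{31}$; after conjugation by $\diag(z^{1/3},1,z^{-1/3})$ every surviving entry therefore carries an integer power of $z$, so no $z^{\pm1/3}$, $z^{\pm2/3}$ terms occur at all. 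Equivalently, $N(z)N_\infty(z)^{-1}$ has the same jump as the identity across $(1,\infty)\cup(-\infty,0)$, hence is single-valued and analytic near infinity and of size $\Boh(z^{2/3})$ there, so its expansion contains only integer powers of $z$ and equals $C_N^{-1}+\Boh(z^{-1})$ once $C_N$ is chosen as in \eqref{eq:CN}. Adding either of these observations closes the gap; without one of them the asymptotic part of your proof is incomplete.
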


\begin{proof}
By setting
\begin{equation}\label{def:Ninfty}
N_{\infty}(z)=z^{-\frac{\alpha}{3}} \diag \left( z^{\frac13},1,z^{-\frac13}  \right)L_{\pm}\diag \left( e^{\pm \frac{\alpha \pi i}{3}}, e^{\mp \frac{\alpha \pi i}{3}}, 1 \right),\quad \pm \Im z>0,
\end{equation}
it is readily seen that we could rewrite $N(z)$ in \eqref{eq:NSolution} as
\begin{equation}\label{eq:NNinfty}
N(z)= C_N N_\infty(z)\diag\left(d_1(z),d_2(z),d_3(z)\right).
\end{equation}
First, one could check directly that
\begin{equation}\label{eq:NinftyJump}
N_{\infty,+}(x)=N_{\infty,-}(x)\left\{
 \begin{array}{ll}
          \begin{pmatrix} 1 &0&0 \\ 0&0&-e^{-\alpha \pi i} \\ 0&e^{-\alpha \pi i}&0 \end{pmatrix}, & \qquad  \hbox{$x\in(-\infty,0)$,} \\
           \begin{pmatrix} 0 & 1 &0 \\ -1 & 0 &0 \\ 0&0&1 \end{pmatrix}, & \qquad \hbox{$x\in(0,+\infty)$.}
                  \end{array}
      \right.
\end{equation}
Next, with the help of the relations of $d_k(z)$ established in \eqref{eq:dJump1}--\eqref{eq:dJump5}, one can see $N(z)$ in \eqref{eq:NNinfty} indeed satisfies the desired  jump conditions \eqref{eq:N-jump} and \eqref{def:JN}. Finally, substituting the large-$z$ asymptotics of $d_k(z)$  \eqref{eq:dk-infty} into \eqref{eq:NSolution}, we obtain the asymptotic behavior \eqref{eq:Ninfty}.

This completes the proof of Lemma \ref{lem:NSolution}.
\end{proof}


Although $J_S(z) \to I$ as $s\to + \infty$ for $z\in \Sigma_S \setminus \mathbb{R}$, the convergence is not uniform near $z=0$ and $z=1$. The next two subsections are then devoted to the construction of local parametrices near these two points.

\subsection{Local parametrix near the origin}

Let $D(z_0,\delta)$ be a fixed open disc centered at $z_0$ with radius $\delta>0$ and denote by $\partial D(z_0,\delta)$ its boundary. We look for a function $P^{(0)}(z)$ satisfying the following RH problem in $D(0,\delta)$.

\begin{rhp}\label{rhp:P-0}
\hfill
\begin{enumerate}
\item[\rm (1)]  $P^{(0)}(z)$ is defined and analytic in $D(0,\delta)\setminus \Sigma_S$,
where $\delta$ is a small positive number and $\Sigma_S$ is defined in \eqref{def:SigmaS}.

\item[\rm (2)]  $P^{(0)}(z)$ satisfies the jump condition
\begin{equation}\label{eq:P0-jump}
P^{(0)}_+(z)=P^{(0)}_-(z)J_S(z), \qquad z\in D(0,\delta) \cap \Sigma_S,
\end{equation}
where $J_S(z)$ is defined in \eqref{def:JS}.

\item[\rm (3)]As $s\to +\infty$, we have the matching condition
\begin{equation}\label{eq:MatchingCond0}
P^{(0)}(z)=\left(I+\Boh(s^{-\frac13}) \right) N(z), \qquad  z\in \partial D(0,\delta),
\end{equation}
where $N(z)$ is given in \eqref{eq:NSolution}.
\end{enumerate}
\end{rhp}

We solve the above RH problem explicitly in terms of the solution $\Psi_{\alpha}(z)$ of the RH problem  \ref{rhp: Pearcey} in the following lemma.
\begin{lemma}\label{lem:P0Solution}
The solution to the RH problem \ref{rhp:P-0} is given by
\begin{equation}\label{eq:P0Solution}
P^{(0)}(z)= s^{\frac{\alpha}{3}}  E_0(z) \diag(s^{-\frac13}, 1,s^{\frac13}) \Psi_{\alpha}(sz)e^{-\Theta(sz)} \diag\left((1-\gamma)^{-\frac23},(1-\gamma)^{\frac13},(1-\gamma)^{\frac13}\right),
\end{equation}
where
\begin{equation}\label{eq:E0}
E_0(z)=  - \sqrt{3}\,i  N(z) \diag\left((1-\gamma)^{\frac23},(1-\gamma)^{-\frac13},(1-\gamma)^{-\frac13}\right)N_{\infty}(z)^{-1} ,
\end{equation}
with $\Theta(z)$, $N(z)$ and  $N_{\infty}(z)$ defined in \eqref{def:Theta}, \eqref{eq:NSolution} and \eqref{def:Ninfty}, respectively.
\end{lemma}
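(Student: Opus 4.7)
The plan is to verify the three items in RH problem \ref{rhp:P-0} directly from formula \eqref{eq:P0Solution}: the jump condition on $\Sigma_S\cap D(0,\delta)$, analyticity of the prefactor $E_0(z)$, and the matching condition on $\partial D(0,\delta)$. For the jumps, I would first shrink the lens boundaries $\partial\Omega_\pm$ inside $D(0,\delta)$ so they coincide with the rays $\Sigma_1$ and $\Sigma_5$ near the origin. Granting analyticity of $E_0$ (verified below), all jumps of $P^{(0)}$ are inherited from those of $\Psi_\alpha(sz)\,e^{-\Theta(sz)}$ determined by \eqref{jumps:M} and \eqref{def:Theta}. A direct matrix computation gives
\begin{equation*}
\diag((1-\gamma)^{2/3},(1-\gamma)^{-1/3},(1-\gamma)^{-1/3})\begin{pmatrix}0 & 1 & 0\\ -1 & 0 & 0\\ 0 & 0 & 1\end{pmatrix}\diag((1-\gamma)^{-2/3},(1-\gamma)^{1/3},(1-\gamma)^{1/3})=\begin{pmatrix}0 & 1-\gamma & 0\\ \frac{1}{\gamma-1} & 0 & 0\\ 0 & 0 & 1\end{pmatrix},
\end{equation*}
which matches the jump of $S$ on $(0,1)$ from \eqref{def:JS}; analogous conjugations on $\Sigma_1$, $\Sigma_5$ produce the factor $e^{\theta_2(sz)-\theta_1(sz)}/(1-\gamma)$ in the $(2,1)$-entry, and on $\Sigma_2$, $\Sigma_3$, $\Sigma_4$ they reproduce the remaining entries of $J_S$.

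Second, for analyticity of $E_0$ in $D(0,\delta)$, I would use \eqref{eq:NNinfty} to rewrite
\begin{equation*}
E_0(z) = -\sqrt{3}\,i\,C_N\,N_\infty(z)\,\diag(d_1(z)(1-\gamma)^{2/3},\,d_2(z)(1-\gamma)^{-1/3},\,d_3(z)(1-\gamma)^{-1/3})\,N_\infty(z)^{-1}.
\end{equation*}
The jumps of $N$ and $N_\infty$ in \eqref{def:JN} and \eqref{eq:NinftyJump} satisfy the compatibility identity
\begin{equation*}
J_N\,\diag((1-\gamma)^{2/3},(1-\gamma)^{-1/3},(1-\gamma)^{-1/3})=\diag((1-\gamma)^{2/3},(1-\gamma)^{-1/3},(1-\gamma)^{-1/3})\,J_{N_\infty}
\end{equation*}
on both $(-\infty,0)$ and $(0,1)$, so $E_0$ has no jumps in $D(0,\delta)\setminus\{0\}$. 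At the origin, the identities $e^{-4\beta\pi i/3}=(1-\gamma)^{-2/3}$ and $e^{2\beta\pi i/3}=(1-\gamma)^{1/3}$ (following from \eqref{beta-def}) combined with \eqref{eq:dk-0} yield $d_k(z)(1-\gamma)^{\cdots}=1+\Boh(z^{1/3})$, so the diagonal factor tends to $I$ and $E_0(0)=-\sqrt{3}\,i\,C_N$; a closer look at the $z^{1/3}$ and $z^{2/3}$ coefficients, together with the conjugation by $\diag(z^{1/3},1,z^{-1/3})$ inside $N_\infty(z)$, shows that only non-negative integer powers of $z$ survive in the Taylor expansion of $E_0$ at $0$.

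Third, the matching condition follows by substituting \eqref{eq:asyPsi} into \eqref{eq:P0Solution} on $\partial D(0,\delta)$. The exponentials cancel as $e^{\Theta(sz)}e^{-\Theta(sz)}=I$, and the upper-triangular structure of $C_\Psi$ in \eqref{eq:cons-C-Psi} gives
\begin{equation*}
\diag(s^{-1/3},1,s^{1/3})\,C_\Psi\,(I+\Boh(s^{-1}))\,\diag(s^{1/3},1,s^{-1/3})=I+\Boh(s^{-1/3}),
\end{equation*}
which combined with $\diag((sz)^{1/3},1,(sz)^{-1/3})=\diag(s^{1/3},1,s^{-1/3})\diag(z^{1/3},1,z^{-1/3})$ and the definition \eqref{def:Ninfty} of $N_\infty$ reduces $P^{(0)}(z)$ to
\begin{equation*}
P^{(0)}(z)=\tfrac{i}{\sqrt{3}}\,E_0(z)\,(I+\Boh(s^{-1/3}))\,N_\infty(z)\,\diag((1-\gamma)^{-2/3},(1-\gamma)^{1/3},(1-\gamma)^{1/3}).
\end{equation*}
Inserting \eqref{eq:E0} and using $-\sqrt{3}\,i\cdot\tfrac{i}{\sqrt{3}}=1$ produces $P^{(0)}(z)=N(z)(I+\Boh(s^{-1/3}))=(I+\Boh(s^{-1/3}))\,N(z)$, which is \eqref{eq:MatchingCond0}.

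The main technical obstacle I foresee is the analyticity of $E_0$ at $z=0$: while boundedness follows cleanly from the fact that the inner diagonal factor evaluates to $I$ at the origin, eliminating the apparent fractional-power singularities coming from the subleading terms in \eqref{eq:dk-0} requires a careful matrix computation using the specific form \eqref{def:Lpm} of $L_\pm$ to show that the $z^{1/3}$ contributions in $L_\pm\diag(\cdots)L_\pm^{-1}$ pair exactly with the conjugation by $\diag(z^{1/3},1,z^{-1/3})$ to produce integer powers of $z$. The remaining checks are straightforward bookkeeping of the explicit matrix factors.
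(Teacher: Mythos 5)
Your overall strategy is the same as the paper's: verify the jumps of $P^{(0)}$ by conjugating the jumps of $\Psi_\alpha(sz)e^{-\Theta(sz)}$ with the constant diagonal factor, show $E_0$ has no jumps in the punctured disc via the compatibility of $J_N$ with $J_{N_\infty}$ (your identity $J_N\,\diag((1-\gamma)^{2/3},(1-\gamma)^{-1/3},(1-\gamma)^{-1/3})=\diag((1-\gamma)^{2/3},(1-\gamma)^{-1/3},(1-\gamma)^{-1/3})\,J_{N_\infty}$ is exactly the paper's observation), and obtain the matching condition by pushing $\diag(s^{-1/3},1,s^{1/3})$ through $C_\Psi(I+\Boh(s^{-1}))$; these parts are correct.

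The flaw is in your treatment of $E_0$ at the origin. Writing $E_0(z)=-\sqrt{3}\,i\,C_N\diag(z^{1/3},1,z^{-1/3})L_\pm\diag(\delta_1,\delta_2,\delta_3)L_\pm^{-1}\diag(z^{-1/3},1,z^{1/3})$ with $\delta_k(z)=1+\Boh(z^{1/3})$, the outer conjugation multiplies lower-triangular entries of the inner matrix by $z^{-1/3}$ or $z^{-2/3}$, so the fact that the inner factor tends to $I$ gives only an a priori bound of the form $E_0(z)=\Boh(z^{-2/3})$ (the bound the paper uses), \emph{not} boundedness; and for the same reason your parenthetical claim $E_0(0)=-\sqrt{3}\,i\,C_N$ is false: the $z^{1/3}$ and $z^{2/3}$ terms in \eqref{eq:dk-0} contribute at order one after the conjugation, and the correct value is $E_0(0)=-\sqrt{3}\,i\,C_N C_N^t$, cf. \eqref{eq:E0zero} in Proposition \ref{prop:E0expansion} (this value is not needed for the lemma, but the error signals the gap in the local analysis). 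The clean way to close the step — and the paper's way — is to note that the jump compatibility you already verified makes $E_0$ single-valued analytic in $D(0,\delta)\setminus\{0\}$, so its Laurent expansion contains only integer powers of $z$; the estimate $E_0(z)=\Boh(z^{-2/3})$ then kills all negative-index coefficients, and the singularity is removable. In particular, the ``careful matrix computation showing that the fractional powers pair up,'' which you flag as the main obstacle, is unnecessary: single-valuedness rules out fractional powers automatically, and only the crude growth estimate is required.
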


\begin{proof}
We see from \eqref{eq:NinftyJump} that  $N_{\infty}(z) \diag\left((1-\gamma)^{-\frac23},(1-\gamma)^{\frac13},(1-\gamma)^{\frac13}\right)$ satisfies the same jump relation \eqref{def:JN} as $N(z)$ for $z\in(-\delta,0)\cup (0,\delta)$. This means that  $E_0(z)$ given in \eqref{eq:E0} is analytic for $z \in D(0,\delta)\setminus\{0\}$. It follows from the definitions of $N(z)$ and  $N_{\infty}(z)$ in \eqref{eq:NSolution} and \eqref{def:Ninfty} that
\begin{align*}
& N(z) \diag\left((1-\gamma)^{\frac23},(1-\gamma)^{-\frac13},(1-\gamma)^{-\frac13}\right)N_{\infty}(z)^{-1}  = C_N \diag\left(z^{\frac{1}{3}}, 1 , z^{-\frac{1}{3}} \right) \\
& \qquad \qquad \times L_\pm \diag\left((1-\gamma)^{\frac23} d_1(z),(1-\gamma)^{-\frac13}d_2(z),(1-\gamma)^{-\frac13}d_3(z) \right) L_\pm^{-1} \diag\left(z^{-\frac{1}{3}}, 1 , z^{\frac{1}{3}} \right).
\end{align*}
On account of the local behaviors of $d_k(z)$, $k=1,2,3$, given in \eqref{eq:dk-0},  it is readily seen that $E_0(z)=\Boh(z^{-2/3})$ as $z\to 0$, which implies that $E_0(z)$ is actually analytic in $D(0,\delta)$. Recalling the jump of $\Psi_\alpha$ in \eqref{jumps:M}, it is then straightforward to see that $P^{(0)}(z)$ in \eqref{eq:P0Solution} satisfies the required jump condition \eqref{eq:P0-jump}. Finally, we observe that one may rewrite the large-$z$ asymptotics of $ \Psi_\alpha$ in \eqref{eq:asyPsi} as
\begin{multline}\label{eq:asyPsi-1/3}
 \Psi_\alpha(z) = \frac{i z^{-\frac{\alpha}{3}}}{ \sqrt{3} } \diag \left(z^{\frac13},1,z^{-\frac13} \right)
\\
\times \left(I+ \frac{ \Psi_1(\rho)}{z^{1/3}}+ \frac{\Psi_2(\rho)}{z^{2/3}}+ \Boh(z^{-1}) \right) L_{\pm}  \diag \left( e^{\pm \frac{\alpha \pi i}{3}}, e^{\mp \frac{\alpha \pi i}{3}}, 1 \right) e^{\Theta(z)},
\end{multline}
with
\begin{align}
\Psi_1(\rho) &=
\begin{pmatrix}
0 & \pi_3(\rho) & 0
\\
0 & 0 & \pi_3(\rho) + \frac{\rho}{3}
\\
\pi_3(\rho) + \frac{2\rho}{3} & 0 & 0
\end{pmatrix}, \label{def:psi1rho}
\\
\Psi_{2}(\rho)& = \begin{pmatrix} 0 & 0 & \pi_6(\rho) \\ \pi_6(\rho) + \frac{\rho}{3} \pi_3(\rho) + \frac{1 - \alpha}{3} & 0 & 0 \\ 0 & \pi_6(\rho) + \frac{2 \rho}{3} \pi_3(\rho)+ \frac{\rho^2}{9}+ \frac{1 - 2 \alpha}{3}  & 0  \end{pmatrix}, \label{def:psi2rho}
\end{align}
where $\pi_3(\rho)$ and $\pi_6(\rho)$ are polynomials defined in \eqref{pi3-def} and \eqref{pi6-def}. This implies that, as $z\to \infty$,
$$
 \Psi_{\alpha}(z)=\frac{i}{\sqrt{3}}N_\infty(z)\left(I+\Boh(z^{-\frac13})\right)e^{\Theta(z)}
$$
with $N_\infty(z)$ given in \eqref{def:Ninfty}. A combination of \eqref{eq:P0Solution}, \eqref{eq:E0} and the above formula gives us
\begin{align*}
P^{(0)}(z) N(z)^{-1} &= N(z) \diag\left((1-\gamma)^{\frac23},(1-\gamma)^{-\frac13},(1-\gamma)^{-\frac13}\right) \left(I+\Boh(s^{-\frac13}) \right) \\
& \quad \times \diag\left((1-\gamma)^{-\frac23},(1-\gamma)^{\frac13},(1-\gamma)^{\frac13}\right) N(z)^{-1}, \qquad s \to +\infty,
\end{align*}
for $z\in \partial D(0,\delta)$. As $N(z)$ is independent of $s$, the above formula is indeed equivalent to the matching condition \eqref{eq:MatchingCond0}.


This completes the proof of Lemma \ref{lem:P0Solution}.
\end{proof}

In view of \eqref{eq:dk-0}, the following proposition is immediate.
\begin{proposition} \label{prop:E0expansion}
With the function $E_0(z)$ defined in \eqref{eq:E0}, we have
\begin{equation}
E_0(z) = E_0(0) \Big(I + E_0(0)^{-1} E_0'(0) z + \Boh(z^2) \Big), \qquad z \to 0,
\end{equation}
where
\begin{equation}\label{eq:E0zero}
 E_0(0)= -\sqrt{3} \, i ~ C_N C_N^t
\end{equation}
with $C_N$ given in \eqref{eq:CN}, and
\begin{align}
 & E_0(0)^{-1} E_0'(0) \label{eq:E0zero-prime}\\
 & = \frac{\beta}{2} \begin{pmatrix}
 \beta(3 + \sqrt{3} \beta i) & \sqrt{3} \, i (1+\sqrt{3} \beta i) & -2\sqrt{3} \, i \\
- \frac{1}{4}(9 \beta^3 - 6\sqrt{3}\beta^2 i + 3 \beta +2\sqrt{3} \, i) & -2\sqrt{3} \beta^2 i & \sqrt{3}\,i (1-\sqrt{3} \beta i) \\
 \beta(\beta^2 + 1) - \frac{2\sqrt{3} \, i}{15}(6\beta^4 + 15\beta^2 -1) & \frac{1}{4}(9 \beta^3 + 6\sqrt{3}\beta^2 i + 3 \beta -2\sqrt{3} \, i) & -\beta (3-\sqrt{3} \beta i)
 \end{pmatrix}. \nonumber
\end{align}
\end{proposition}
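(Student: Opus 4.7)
The plan is to substitute the factored form $N(z) = C_N N_\infty(z) \diag(d_1(z), d_2(z), d_3(z))$ from \eqref{eq:NNinfty} into the definition \eqref{eq:E0} of $E_0(z)$, and then expand at $z=0$. Using $1-\gamma = e^{2\pi i\beta}$ from \eqref{beta-def}, one finds that the $\alpha$-dependent diagonal factor $\diag(e^{\pm\alpha\pi i/3}, e^{\mp\alpha\pi i/3}, 1)$ in $N_\infty$ (together with the scalar prefactor $z^{-\alpha/3}$) commutes with the resulting inner diagonal matrix and cancels, leaving
\begin{equation*}
E_0(z) = -\sqrt{3}\,i\, C_N \, \diag(z^{1/3},1,z^{-1/3}) \, L_\pm \mathcal{D}(z) L_\pm^{-1} \, \diag(z^{-1/3},1,z^{1/3}),
\end{equation*}
where $\mathcal{D}(z) := \diag\left(d_1(z)e^{4\pi i\beta/3},\, d_2(z)e^{-2\pi i\beta/3},\, d_3(z)e^{-2\pi i\beta/3}\right)$.

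Next, I would expand $\mathcal{D}(z)$ at $z=0$ using \eqref{eq:dk-0}: the $(1-\gamma)^{\pm 1/3}$ factors exactly kill the leading exponentials $e^{-4\pi i\beta/3}, e^{2\pi i\beta/3}, e^{2\pi i\beta/3}$ appearing in \eqref{eq:dk-0}, confirming $\mathcal{D}(0)=I$, and yielding $\mathcal{D}(z) = I + z^{1/3} A + z^{2/3} B + z\,C + \Boh(z^{4/3})$ with explicit diagonal matrices $A, B, C$ depending only on $\beta$. Conjugation by $\diag(z^{1/3},1,z^{-1/3}) L_\pm$ produces powers $z^{k/3}$; since Lemma \ref{lem:P0Solution} already guarantees analyticity of $E_0(z)$ at $0$, all non-integer power contributions must cancel, providing a useful internal consistency check. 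The value $E_0(0)$ is then assembled from three surviving $z^0$-contributions: the identity from $\mathcal{D}(0)$; the entries $(L_\pm A L_\pm^{-1})_{21}$ and $(L_\pm A L_\pm^{-1})_{32}$, each multiplied by $z^{1/3}\cdot z^{-1/3}=1$; and the entry $(L_\pm B L_\pm^{-1})_{31}$, multiplied by $z^{2/3}\cdot z^{-2/3}=1$. A direct calculation using \eqref{def:Lpm} identifies the resulting lower-triangular matrix with exactly $C_N^t$ (the transpose of \eqref{eq:CN}), establishing \eqref{eq:E0zero}.

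Analogous bookkeeping for the $z^1$ coefficient of $E_0(z)$ involves the $(1,3)$ entry of $L_\pm A L_\pm^{-1}$, the $(1,2)$ and $(2,3)$ entries of $L_\pm B L_\pm^{-1}$, all diagonal entries of $L_\pm C L_\pm^{-1}$, and sub-diagonal corrections from the $z^{4/3}$ and $z^{5/3}$ coefficients of $\mathcal{D}(z)$. Assembling these and left-multiplying by $E_0(0)^{-1}$ (obtained by inverting \eqref{eq:E0zero}) yields \eqref{eq:E0zero-prime}. The only obstacle is computational—several $3\times 3$ matrix conjugations must be carried out at the appropriate orders—but no new analytic idea is needed beyond extending \eqref{eq:dk-0} by a few additional terms, which is immediate from the closed forms \eqref{eq:lambda}--\eqref{eq:dklambda} of $\lambda$ and $d_k$.
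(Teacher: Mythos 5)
Your proposal is correct and follows essentially the same route as the paper: substitute the factorization $N(z)=C_N N_\infty(z)\diag(d_1(z),d_2(z),d_3(z))$ into \eqref{eq:E0}, observe that the $\alpha$-dependent factors cancel and that the powers of $1-\gamma=e^{2\beta\pi i}$ exactly remove the leading exponentials in \eqref{eq:dk-0}, and then read off the integer-power coefficients of the conjugated expansion, extending \eqref{eq:dk-0} by a few terms exactly as the paper's remark after the proposition indicates. Your order bookkeeping is accurate (the constant term is the unit lower-triangular matrix built from $(L_\pm A L_\pm^{-1})_{21}$, $(L_\pm A L_\pm^{-1})_{32}$, $(L_\pm B L_\pm^{-1})_{31}$, which indeed equals $C_N^t$, and the $z$-coefficient collects precisely the entries you list), so what remains is the same routine matrix arithmetic that the paper also omits.
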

To derive \eqref{eq:E0zero-prime}, we actually need more terms than those given in \eqref{eq:dk-0}. Since all the calculations are straightforward, we omit the details here.

\subsection{Local parametrix near $1$}
Near $z=1$, we intend to find a function $P^{(1)}(z)$ satisfying the following RH problem.
\begin{rhp}\label{rhp:P1}
\hfill
\begin{enumerate}
\item[\rm (1)]
$P^{(1)}(z)$ is defined and analytic in $D(1,\delta)\setminus \Sigma_S$, where $\Sigma_S$ is defined in \eqref{def:SigmaS}.

\item[\rm (2)]
For $z\in \Sigma_S \cap D(1,\delta)$, $P^{(1)}(z)$ satisfies the jump condition
\begin{equation}\label{eq:P1-jump}
P^{(1)}_+(z)=P^{(1)}_-(z)J_S(z),
\end{equation}
where $J_S(z)$ is given in \eqref{def:JS}.

\item[\rm (3)]
As $ s \to +\infty$, we have the matching condition
\begin{equation}\label{eq:MatchingCondP1}
P^{(1)}(z)=\left(I+\Boh(s^{-\frac23}) \right)N(z),\qquad z\in \partial D(1,\delta),
\end{equation}
where $N(z)$ is given in \eqref{eq:NSolution}.
\end{enumerate}
\end{rhp}

As in \cite{DXZ202}, we can construct $P^{(1)}(z)$ explicitly by using the confluent hypergeometric parametrix. To proceed, we introduce the following local conformal mapping near $z=1$:
\begin{align}\label{eq:f}
f(z) &:= -is^{-\frac23}[(\theta_2(sz)-\theta_1(sz))-(\theta_2(s)-\theta_1(s))]
\nonumber \\
& = \frac{3\sqrt{3}}{2} \left( z^{\frac23}-1\right)-\frac{\sqrt{3}\rho}{s^{1/3}}\left(z^{\frac13}-1 \right), \qquad z \in D(1,\delta).
\end{align}
It is easily seen that
\begin{equation}\label{eq:f'}
f(1)=0,\qquad f'(1)=\sqrt{3}-\frac{\sqrt{3}}{3s^{1/3}}\rho, \qquad f''(1)=-\frac{1}{\sqrt{3}}+\frac{2\rho}{3\sqrt{3}s^{1/3}},
\end{equation}
where $f'(1)$ is positive when $s$ is large enough. Let $\Phi^{(\CHF)}(z;\beta)$ be the confluent hypergeometric parametrix given in Appendix \ref{sec:CHF} with the parameter $\beta$ given in \eqref{beta-def}. We then define, for $z\in D(1,\delta)\setminus \Sigma_S$,
\begin{align}\label{eq:P1Solution}
P^{(1)}(z)&=E_1(z) \diag \left( \Phi^{(\CHF)}(s^{\frac23}f(z);\beta)e^{-\frac\beta2 \pi i\sigma_3},1 \right)
\nonumber\\
&~~~\times\left\{
\begin{array}{ll}
\diag\left(e^{\frac{\theta_2(sz)-\theta_1(sz)}{2}\sigma_3},1\right), & \quad  \Im z>0,\\
\diag\left(e^{\frac{\theta_1(sz)-\theta_2(sz)}{2}\sigma_3},1\right), & \quad  \Im z<0,
\end{array}
\right.
\end{align}
where $\sigma_3=
\begin{pmatrix}
1 & 0
\\
0 & -1
\end{pmatrix}$, $f(z)$ is given in \eqref{eq:f} and
\begin{equation}\label{def:E1}
E_1(z)=\left\{
         \begin{array}{ll}
           N(z)\diag\left(\left(e^{\frac{\theta_1(s)-\theta_2(s)+\beta \pi i}{2}}s^{\frac{2\beta}{3}}f(z)^{\beta}\right)^{\sigma_3},1\right), & \quad \hbox{$\Im z>0$}, \\
           N(z)\diag\left(
\begin{pmatrix}
0 & 1
\\
-1 & 0
\end{pmatrix}
\left(e^{\frac{\theta_1(s)-\theta_2(s)+\beta \pi i}{2}}s^{\frac{2\beta}{3}}f(z)^{\beta}\right)^{\sigma_3},1 \right), & \quad \hbox{$\Im z<0$},
         \end{array}
       \right.
\end{equation}
with $N(z)$ given in \eqref{eq:NSolution}. Here, we use the notation
$$\diag \left(A,1\right)=\begin{pmatrix}(A)_{11} & (A)_{12} & 0
\\
(A)_{21} & (A)_{22} & 0
\\
0 & 0 & 1
\end{pmatrix}$$
for a $2\times 2$ matrix $A$.

From the RH problem for $\Phi^{(\CHF)}(z;\beta)$ and our construction of $P^{(1)}(z)$ in \eqref{eq:P1Solution}, the following lemma is immediate.
We omit the proof but refer to \cite[Proposition 5.8]{DXZ202} for a similar situation.
\begin{lemma}\label{lem:P1}
The local parametrix $P^{(1)}(z)$  defined in \eqref{eq:P1Solution} solves the RH problem \ref{rhp:P1} for large positive $s$. Moreover, the prefactor $E_1(z)$ therein is analytic in $D(1,\delta)$.
\end{lemma}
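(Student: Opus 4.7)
\medskip

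\noindent\textbf{Plan.} The statement splits into three verifications: (i) $P^{(1)}(z)$ satisfies the jump condition \eqref{eq:P1-jump}; (ii) $P^{(1)}(z)$ satisfies the matching condition \eqref{eq:MatchingCondP1}; and (iii) the prefactor $E_1(z)$ in \eqref{def:E1} has only a removable singularity at $z=1$ and no jumps inside $D(1,\delta)$. The whole construction is standard once one realises that the conformal map $f$ in \eqref{eq:f} straightens the jump contour $\Sigma_S\cap D(1,\delta)$ near $z=1$ into the contour of the confluent hypergeometric (CHF) model problem, and that the parameter $\beta=\tfrac{1}{2\pi i}\ln(1-\gamma)$ matches the jump constants.

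For (i) I would work piecewise. On each of the four local arcs $\Sigma_0^{(1)}$, $\Sigma_1^{(1)}$, $\Sigma_5^{(1)}$ and $(1-\delta,1)$, the exponential conjugation $e^{\frac12(\theta_2(sz)-\theta_1(sz))\sigma_3}$ turns the constant jumps of the CHF parametrix $\Phi^{(\mathrm{CHF})}(\cdot;\beta)$ into the desired $s$-dependent jumps $J_S(z)$ given in \eqref{def:JS}, after checking that the sign change of $\theta_2-\theta_1$ across $\R$ is absorbed by the matrix $\bigl(\begin{smallmatrix}0 & 1\\ -1 & 0\end{smallmatrix}\bigr)$ inserted in the lower half plane in \eqref{def:E1}. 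Since $f(z)>0$ on $(1,1+\delta)$ and $f(z)<0$ on $(1-\delta,1)$ by \eqref{eq:f'}, the pre-image of the CHF jump contour matches the orientations of $\Sigma_S$. The prefactor $E_1(z)$ must be analytic across the local contour for this to work, which is precisely point (iii).

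For (iii) I would check jump-cancellation and removal of the branch-point singularity. Across $(1-\delta,1)$ the factor $N(z)$ picks up the jump $J_N$ in \eqref{def:JN}, which acts on the upper $2\times 2$ block as $\bigl(\begin{smallmatrix} 0 & 1-\gamma\\ 1/(\gamma-1) & 0\end{smallmatrix}\bigr)$; simultaneously the scalar $f(z)^{\beta}$ jumps by $e^{2\pi i\beta}=(1-\gamma)^{-1}$ (with my chosen branch of $f^\beta$ along $f\in(-\infty,0)$), and on the lower half-plane side the extra matrix $\bigl(\begin{smallmatrix}0 & 1\\ -1 & 0\end{smallmatrix}\bigr)$ conjugates $\sigma_3$ to $-\sigma_3$, producing the same scalar factor on the opposite diagonal entry. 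A direct computation shows these cancel exactly, so $E_1$ extends analytically across $(1-\delta,1)$; across $\Sigma_0^{(1)}$ the jump of $N$ is the permutation $\bigl(\begin{smallmatrix} 0 & 1\\ -1 & 0\end{smallmatrix}\bigr)$, which is likewise compensated by the insertion of the same matrix between the upper/lower half-plane definitions of $E_1$. Finally, at $z=1$ itself one uses the local behavior \eqref{eq:dk-1} of $d_{1},d_{2}$: the factor $N(z)$ behaves as $\diag\bigl((z-1)^{-\beta},(z-1)^{\beta},1\bigr)$ times something bounded and invertible, exactly cancelling the $f(z)^{\beta\sigma_3}\sim (f'(1))^{\beta\sigma_3}(z-1)^{\beta\sigma_3}$ in the definition of $E_1$. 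Hence $E_1(z)$ has at worst a removable singularity at $z=1$, and by Riemann's theorem it is analytic in $D(1,\delta)$.

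For (ii) I would plug the standard large-argument expansion
$\Phi^{(\mathrm{CHF})}(\zeta;\beta)=\bigl(I+\tfrac{A_1(\beta)}{\zeta}+O(\zeta^{-2})\bigr)\zeta^{-\beta\sigma_3}e^{-\tfrac{\zeta}{2}\sigma_3}e^{\tfrac{\beta\pi i}{2}\sigma_3}$
from the appendix into \eqref{eq:P1Solution} with $\zeta=s^{2/3}f(z)$. The exponential $e^{-\zeta\sigma_3/2}$ recombines with the $e^{(\theta_2-\theta_1)\sigma_3/2}$ in \eqref{eq:P1Solution} and the $e^{(\theta_1(s)-\theta_2(s))\sigma_3/2}$ in $E_1$, leaving exactly $N(z)$; the $\zeta^{-\beta\sigma_3}$ cancels $f(z)^{\beta\sigma_3}s^{2\beta\sigma_3/3}$, and the $e^{\beta\pi i\sigma_3/2}$ cancels the $e^{-\beta\pi i\sigma_3/2}$ in \eqref{eq:P1Solution}. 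The residual factor is then $I+O(\zeta^{-1})=I+O(s^{-2/3})$, uniformly for $z\in\partial D(1,\delta)$, which is \eqref{eq:MatchingCondP1}.

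The main obstacle is bookkeeping in (iii): every branch choice of $f^{\beta}$, every orientation of the lens contour, and the two distinct $E_1$-formulas in the upper and lower half-planes have to be balanced so that all jumps of $N$ in $D(1,\delta)$ and the $(z-1)^{\pm\beta}$ singularities of $d_1,d_2$ get absorbed. Once these cancellations are checked, parts (i) and (ii) are essentially mechanical consequences of the properties of the CHF parametrix.
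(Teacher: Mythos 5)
Your verification is correct and follows the standard route: the paper itself omits this proof (referring to the analogous \cite[Proposition 5.8]{DXZ202}), and your three-step check — cancellation of the jumps of $N$ by the branch jump of $f(z)^{\beta}$ and by the inserted matrix $\bigl(\begin{smallmatrix}0&1\\-1&0\end{smallmatrix}\bigr)$ in \eqref{def:E1}, removal of the $(z-1)^{\pm\beta}$ singularity at $z=1$ via \eqref{eq:dk-1}, and matching via the large-argument expansion of $\Phi^{(\CHF)}$ — is exactly what that argument consists of. Only minor bookkeeping: with $\beta=\frac{1}{2\pi i}\ln(1-\gamma)$ one has $e^{2\pi i\beta}=1-\gamma$ (not $(1-\gamma)^{-1}$), the expansion \eqref{H at infinity} carries $e^{-\frac{i\zeta}{2}\sigma_3}$ which is what recombines with $\theta_2(sz)-\theta_1(sz)$ through \eqref{eq:f}, and in the matching step one should add that $E_1^{\pm 1}$ remain bounded as $s\to+\infty$ (since $\beta$ and $\theta_1(s)-\theta_2(s)$ are purely imaginary), so conjugating the $I+\Boh(s^{-2/3})$ factor by $E_1$ indeed yields \eqref{eq:MatchingCondP1}.
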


We conclude this section with the local behavior of $E_1(z)$ near $z = 1$.
\begin{proposition}\label{prop:E1expansion}
With the function $E_1(z)$ defined in \eqref{def:E1}, we have
\begin{equation}\label{eq:E-expand}
E_1(z)=E_1(1)\left(I+E_1(1)^{-1}E_1'(1)(z-1)+\Boh((z-1)^2) \right), \qquad  z \to 1,
\end{equation}
where
\begin{equation}\label{eq:E-expand-coeff-1}
E_1(1)=C_N L_{+}\diag\Big(e^{\frac{\beta\pi i }{3}}c_1(s), e^{-\frac{2\beta\pi i }{3}}c_1(s)^{-1}, e^{\frac{\beta \pi i }{3}} \Big),
\end{equation}
and
\begin{align}
& E_1(1)^{-1}E_1'(1) \nonumber \\
& = \frac{1}{\sqrt{3}}\begin{pmatrix} \frac{\sqrt{3}\beta   f''(1)}{2f'(1)} - \frac{\beta }{3} e^{-\frac{1}{6}\pi i}  + \frac{2\beta}{\sqrt{3}}  &  \frac{i}{3} e^{-\beta \pi i} c_1(s)^{-2} &-  \frac{i}{3} c_1(s)^{-1}\\
-  \frac{i}{3}e^{\beta \pi i } c_1(s)^{2} &-\frac{\sqrt{3}\beta   f''(1)}{2f'(1)} + \frac{\beta }{3} e^{\frac{1}{6}\pi i}  - \frac{2\beta}{\sqrt{3}}  &  \frac{i}{3} e^{\beta \pi i}  c_1(s) \\
 \frac{i}{3}c_1(s)  & - \frac{i}{3}  e^{-\beta \pi i } c_1(s)^{-1}& -\frac{i }{3}\beta  \end{pmatrix}-\frac{\alpha}{3}I
\label{eq:E-expand-coeff-2}
\end{align}
with
\begin{equation}\label{eq:c-1}
c_1(s):=(3 \sqrt{3})^{\beta}f'(1)^{\beta}e^{\frac{\alpha \pi i}{3}}e^{\frac{\theta_1(s)-\theta_2(s)}{2}}s^{\frac{2\beta}{3}},
\end{equation}
$C_N$, $L_+$ and $f$ given in \eqref{eq:CN},  \eqref{def:Lpm} and \eqref{eq:f}, respectively.
\end{proposition}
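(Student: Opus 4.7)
The plan is to prove Proposition \ref{prop:E1expansion} by a direct substitution of the known local expansions of the building blocks of $E_1(z)$ near $z=1$. Since Lemma \ref{lem:P1} already guarantees that $E_1(z)$ is analytic at $z=1$, it suffices to work in the upper half plane, where, using Lemma \ref{lem:NSolution} and the definition \eqref{def:E1}, I can write
\begin{equation*}
E_1(z) = z^{-\frac{\alpha}{3}} C_N \diag(z^{\frac13},1,z^{-\frac13}) L_+\, D(z), \qquad D(z):=\diag\!\bigl(e^{\frac{\alpha\pi i}{3}} d_1(z)h(z),\; e^{-\frac{\alpha\pi i}{3}} d_2(z)h(z)^{-1},\; d_3(z)\bigr),
\end{equation*}
with $h(z):=e^{\frac{\theta_1(s)-\theta_2(s)+\beta\pi i}{2}}s^{\frac{2\beta}{3}}f(z)^{\beta}$. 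The key structural point is that the singular factors $(z-1)^{-\beta}$ in $d_1(z)$ and $(z-1)^{\beta}$ in $d_2(z)$ from Proposition \ref{prop:propdk}(iii) are exactly cancelled by the factor $f(z)^{\pm\beta}= f'(1)^{\pm\beta}(z-1)^{\pm\beta}(1+O(z-1))$ coming from $h(z)^{\pm 1}$, so each diagonal entry of $D(z)$ is analytic at $z=1$.

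First I would evaluate at $z=1$. Combining the leading coefficients from \eqref{eq:dk-1} with $f(z)^{\beta}$ and the definition \eqref{eq:c-1} of $c_1(s)$, the phase combinations $e^{\frac{\alpha\pi i}{3}}(3\sqrt 3)^\beta e^{-\frac{\beta\pi i}{6}}e^{\frac{\theta_1-\theta_2+\beta\pi i}{2}} s^{\frac{2\beta}{3}}f'(1)^{\beta}$ collapse to $c_1(s)e^{\frac{\beta\pi i}{3}}$ in the first slot, and similarly to $c_1(s)^{-1}e^{-\frac{2\beta\pi i}{3}}$ and $e^{\frac{\beta\pi i}{3}}$ in the other slots, immediately giving \eqref{eq:E-expand-coeff-1}.

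For the derivative, I differentiate the factorization and get
\begin{equation*}
E_1'(1)=-\tfrac{\alpha}{3} C_N L_+ D(1)+C_N\diag(\tfrac13,0,-\tfrac13)L_+ D(1)+C_N L_+ D'(1),
\end{equation*}
so that
\begin{equation*}
E_1(1)^{-1}E_1'(1)=-\tfrac{\alpha}{3}I+D(1)^{-1}L_+^{-1}\diag(\tfrac13,0,-\tfrac13)L_+ D(1)+D(1)^{-1}D'(1).
\end{equation*}
The inverse can be written explicitly as $L_+^{-1}=\tfrac{1}{3}\bigl(\begin{smallmatrix}\omega^2&1&\omega\\ \omega&1&\omega^2\\1&1&1\end{smallmatrix}\bigr)$, and a short calculation using $\omega-\omega^2=\sqrt{3}\,i$ gives $L_+^{-1}\diag(\tfrac13,0,-\tfrac13)L_+=\tfrac{i}{3\sqrt 3}\bigl(\begin{smallmatrix}0&1&-1\\-1&0&1\\1&-1&0\end{smallmatrix}\bigr)$; subsequent conjugation by $D(1)$ reproduces the off-diagonal pattern of \eqref{eq:E-expand-coeff-2}. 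The diagonal piece $D(1)^{-1}D'(1)$ is a sum of logarithmic derivatives of $d_k(z)$ and of $f(z)^{\beta}$ at $z=1$; reading the $O(z-1)$ coefficients in \eqref{eq:dk-1} together with $f(z)^{\beta}=f'(1)^{\beta}(z-1)^{\beta}\bigl(1+\tfrac{\beta f''(1)}{2f'(1)}(z-1)+\cdots\bigr)$ yields the diagonal entries of \eqref{eq:E-expand-coeff-2}, where the identity $e^{\pi i/6}+e^{-\pi i/6}=\sqrt 3$ is used to rewrite $\tfrac{\beta}{3}+\tfrac{\beta e^{\pi i/6}}{3\sqrt 3}$ in the equivalent form $-\tfrac{\beta e^{-\pi i/6}}{3\sqrt 3}+\tfrac{2\beta}{3}$ appearing in the statement.

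There is no analytic difficulty in this argument; everything reduces to substitution into formulas already in hand. The main obstacle is simply the algebraic bookkeeping: tracking the numerous phase factors $e^{\pm\alpha\pi i/3}$, $e^{\pm\beta\pi i/3}$, $e^{-2\beta\pi i/3}$, and $e^{\pm\pi i/6}$, carrying out the conjugation by the diagonal matrix $D(1)$ in a consistent way, and recognising the trigonometric rewriting needed to bring the diagonal entries into the symmetric form displayed in \eqref{eq:E-expand-coeff-2}.
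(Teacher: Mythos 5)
Your proposal is correct and follows essentially the same route as the paper: the paper also obtains \eqref{eq:E-expand-coeff-1} and \eqref{eq:E-expand-coeff-2} by substituting the local expansions \eqref{eq:dk-1} (together with the expansion of $f(z)^{\pm\beta}$) into \eqref{def:E1} and using the key identity $L_+^{-1}\diag(1,0,-1)L_+=\frac{i}{\sqrt{3}}\bigl(\begin{smallmatrix}0&1&-1\\-1&0&1\\1&-1&0\end{smallmatrix}\bigr)$, which is exactly your conjugation step. Your factorization through $D(z)$ and the trigonometric rewriting via $e^{\pi i/6}+e^{-\pi i/6}=\sqrt{3}$ simply make explicit the ``straightforward calculations'' the paper leaves to the reader.
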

\begin{proof}
The coefficients in $\eqref{eq:E-expand}$  follow directly from  \eqref{eq:dk-1}, \eqref{def:E1} and straightforward calculations. In particular, to derive \eqref{eq:E-expand-coeff-2}, one needs to use the fact that
$$L_+^{-1}\diag(1,0,-1)L_+=\frac{i}{\sqrt{3}}\begin{pmatrix} 0&1 & -1
\\
-1 &0& 1
\\
1 & -1 & 0
\end{pmatrix}.
$$
This completes the proof of Proposition \ref{prop:E1expansion}.
\end{proof}
\subsection{Final transformation and the small norm RH problem}
The final transformation is defined by
\begin{equation}\label{def:StoR}
 R(z)=\left\{
                \begin{array}{ll}
 S(z)N(z)^{-1}, &\qquad  \hbox{$z\in \mathbb{C}\setminus \{D(0,\delta)\cup D(1,\delta)\cup  \Sigma_S\}$,}
 \\
        S(z)P^{(0)}(z)^{-1},  &\qquad  \hbox{$z \in D(0,\delta)$,}\\
        S(z)P^{(1)}(z)^{-1},  &\qquad  \hbox{$z \in D(1,\delta)$.}
                \end{array}
              \right.
\end{equation}
The RH problems for $N$, $P^{(0)}$ and $P^{(1)}$ particularly show that $R$ is analytic for $z\in D(0,\delta)\cup D(1,\delta)\setminus \{0,1\}$. One can then check that both $0$ and $1$ are actually removable singularities of $R$. Then,  $R$ satisfies the following RH problem.
\begin{rhp}\label{rhp:R}
\hfill
\begin{enumerate}
\item[\rm (1)]  $R(z)$ is analytic in $\mathbb{C} \setminus \Sigma_{R}$,
where
\begin{equation}
\Sigma_R:=\Sigma_S \cup \partial D(0,\delta) \cup \partial D(1,\delta) \setminus \{\mathbb{R}
\cup D(0,\delta) \cup D(1,\delta) \}.
\end{equation}

\item[\rm (2)]  $R(z)$ satisfies the jump condition  $$ R_+(z)=R_-(z)J_R(z), \qquad z\in \Sigma_R,$$
where
  \begin{equation}\label{def:JR}
                     J_{R}(z)=\left\{
                                      \begin{array}{ll}
                                        P^{(0)}(z) N(z)^{-1}, & \hbox{ $z \in \partial D(0, \delta)$,} \\
                                        P^{(1)}(z) N(z)^{-1}, & \hbox{ $z \in \partial D(1, \delta)$,} \\
                                        N(z) J_S(z) N(z)^{-1}, & \hbox{ $ z \in \Sigma_{R} \setminus
                                                    \{D(0,\delta) \cup D(1,\delta) \}$,}
                                      \end{array}
\right.
  \end{equation}
  and where the orientations of $\partial D(0, \delta)$ and $\partial D(1, \delta)$ are taken in a clockwise manner.
\item[\rm (3)] As $z \to \infty$, we have
$R(z)=I+\Boh(z^{-1}).$
\end{enumerate}
\end{rhp}

For $z \in \Sigma_{R} \setminus \{\partial D(0,\delta) \cup \partial D(1,\delta) \}$, it is readily seen from \eqref{def:JS}, \eqref{eq:NSolution} and  \eqref{def:JR} that there exists a positive constant $c>0$ such that
\begin{equation}\label{eq:JRexpsmall}
 J_R(z) = I + \Boh \left( e^{-c s^{2/3}} \right), \qquad s \to \infty.
\end{equation}
For $z \in \partial D(0, \delta) \cup \partial D(1, \delta)$, it follows from \eqref{eq:MatchingCond0}, \eqref{eq:MatchingCondP1} and \eqref{def:JR} that
\begin{equation} \label{Jr-exp}
 J_R(z) = I +\frac{J_1(z)}{s^{1/3}} + \frac{J_2(z)}{s^{2/3}}+ \Boh (s^{-1}),  \qquad s \to \infty,
\end{equation}
for some matrix-valued functions $J_1(z)$ and $J_2(z)$. The above approximations show that $R$ satisfies a small norm RH problem when $s$ is large, which leads to the following approximation of $R$.

\begin{proposition} \label{prop-R-exp}
As $s\to \infty$, we have
\begin{equation} \label{R-large-s-exp}
 R(z) = I +\frac{R_1(z)}{s^{1/3}}+ \frac{R_2(z)}{s^{2/3}} + \Boh (s^{-1}),
\end{equation}
uniformly for $z \in \mathbb{C} \setminus \Sigma_R,$
where
\begin{equation} \label{R1-explicit}
R_1(z) =  \frac{1}{z}  \Res_{\zeta = 0} J_1^{(0)}(\zeta)  -\begin{cases}   J_1^{(0)}(z), & z\in D(0,\delta), \\
  0, &  z\in \mathbb{C}\setminus D(0,\delta),
\end{cases}
\end{equation}
and
\begin{multline} \label{R2-explicit}
R_2(z) =  \frac{1}{z} \Res_{\zeta = 0}\left[J_2^{(0)}(\zeta) + R_{1}(\zeta) J_1^{(0)} (\zeta) \right] + \frac{1}{z-1}  \Res_{\zeta = 1} J_2^{(1)}(\zeta)
\\ - \begin{cases} J_2^{(0)}(z) + R_{1}(z) J_1^{(0)} (z) ,& z\in D(0,\delta), \\ J_2^{(1)}(z) ,& z\in D(1,\delta), \\ 0,& \textrm{otherwise}.\end{cases}
\end{multline}
In \eqref{R1-explicit} and \eqref{R2-explicit}, the functions $J_k^{(0)}(z)$, $k=1,2$, and $J_1^{(1)}(z)$ are given by
\begin{align} \label{jk-formula-0}
J_k^{(0)}(z) &= z^{-\frac{k}{3}} E_0(z) \diag \left(z^{\frac13},1,z^{-\frac13} \right) \Psi_{k}(\rho) \nonumber
\\
& \quad \times \diag \left(z^{-\frac13},1,z^{\frac13} \right) E_0(z)^{-1}, \qquad z \in D(0, \delta)\setminus \{0\},
\end{align}
and
\begin{equation} \label{jk-formula-1}
J_2^{(1)}(z) = \frac{2}{3\sqrt{3} (z^{\frac{2}{3}} - 1)} E_1(z)\diag\Big(\Phi^{(\CHF)}_1, 0 \Big) E_1(z)^{-1} , \qquad z \in  D(1, \delta)\setminus\{1\},
\end{equation}
where $E_0$, $\Psi_{k}$, $f$, $E_1$ and $\Phi^{(\CHF)}_1$  are defined in  \eqref{eq:E0}, \eqref{def:psi1rho}, \eqref{def:psi2rho}, \eqref{eq:f}, \eqref{def:E1} and \eqref{Phi1-def}, respectively.
\end{proposition}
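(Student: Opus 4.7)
The plan is to apply the standard Neumann iteration for small-norm RH problems. From \eqref{eq:JRexpsmall} and \eqref{Jr-exp}, the jump matrix satisfies $\|J_R - I\|_{L^\infty(\Sigma_R) \cap L^2(\Sigma_R)} = \Boh(s^{-1/3})$, so $R$ exists for all sufficiently large $s$ and admits the Plemelj--Sokhotski integral representation
\begin{equation*}
R(z) = I + \frac{1}{2\pi i}\int_{\Sigma_R} \frac{R_-(\zeta)\bigl(J_R(\zeta) - I\bigr)}{\zeta - z}\,d\zeta, \qquad z \in \mathbb{C} \setminus \Sigma_R.
\end{equation*}
Substituting $R_- = I + R_{1,-}/s^{1/3} + \Boh(s^{-2/3})$ and the expansion \eqref{Jr-exp} into the integrand gives $R = I + R_1/s^{1/3} + R_2/s^{2/3} + \Boh(s^{-1})$, where $R_1$ is produced by the $s^{-1/3}$-contribution of $J_R - I$ (supported on $\partial D(0,\delta)$), and $R_2$ by its $s^{-2/3}$-contribution together with the cross term $R_{1,-}(\zeta) J_1^{(0)}(\zeta)$ on $\partial D(0,\delta)$.

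The second step is to make $J_1^{(0)}$, $J_2^{(0)}$ and $J_2^{(1)}$ explicit. Substituting the refined expansion \eqref{eq:asyPsi-1/3} into \eqref{eq:P0Solution} and grouping by powers of $s^{-1/3}$, the definition \eqref{eq:E0} of $E_0$ is arranged precisely so that the leading block reproduces $N(z)$; the next two blocks then yield \eqref{jk-formula-0} after using the identity $L_\pm \diag(e^{\pm\alpha\pi i/3},e^{\mp\alpha\pi i/3},1)\,N_\infty(z)^{-1} = z^{\alpha/3}\diag(z^{-1/3},1,z^{1/3})$ to absorb the right-hand factors into $E_0(z)^{-1}$. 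A fully analogous computation around $z = 1$, based on the leading asymptotics $\Phi^{(\CHF)}(\zeta;\beta) = I + \Phi^{(\CHF)}_1/\zeta + \Boh(\zeta^{-2})$ (modulo the standard $\zeta^{-\beta\sigma_3}$, $e^{-i\zeta\sigma_3/2}$ factors that are cancelled by \eqref{def:E1}) together with the expansion $f(z) = \tfrac{3\sqrt{3}}{2}(z^{2/3} - 1) + \Boh(s^{-1/3})$ from \eqref{eq:f}, produces \eqref{jk-formula-1}. Finally, the explicit block patterns in \eqref{def:psi1rho}--\eqref{def:psi2rho}, combined with $f'(1) \neq 0$, show by direct inspection that $J_1^{(0)}$ and $J_2^{(0)}$ are meromorphic in $D(0,\delta)$ with only a simple pole at $\zeta = 0$, and that $J_2^{(1)}$ is meromorphic in $D(1,\delta)$ with only a simple pole at $\zeta = 1$.

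The last step is to evaluate the Cauchy integrals by residues. Since each $\partial D(\cdot,\delta)$ is oriented clockwise, $\frac{1}{2\pi i}\oint_{\partial D(0,\delta)} J_1^{(0)}(\zeta)/(\zeta-z)\,d\zeta$ equals $\frac{1}{z}\Res_{\zeta=0}J_1^{(0)}(\zeta)$ when $z$ lies outside the disc, and additionally picks up $-J_1^{(0)}(z)$ when $z$ lies inside, giving \eqref{R1-explicit}. For $R_2$, observe that the formula \eqref{R1-explicit} forces $R_1$ to extend analytically into $D(0,\delta)$, the simple pole of $J_1^{(0)}$ at $\zeta = 0$ being exactly cancelled by $\frac{1}{z}\Res_{\zeta=0}J_1^{(0)}$; consequently the product $R_{1,-}(\zeta) J_1^{(0)}(\zeta)$ inherits only a simple pole at $\zeta = 0$, and combining it with the simple poles of $J_2^{(0)}$ at $\zeta = 0$ and of $J_2^{(1)}$ at $\zeta = 1$ and applying the same clockwise residue book-keeping delivers \eqref{R2-explicit}. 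The principal technical obstacle is in the second step: the conjugating factors $L_\pm$, $\diag(e^{\pm\alpha\pi i/3},e^{\mp\alpha\pi i/3},1)$, the $(1-\gamma)$-diagonal, and the scaling $s^{2/3} f(z)$ must be tracked carefully in order to reduce the naive expansion of $P^{(0)} N^{-1}$ and $P^{(1)} N^{-1}$ to the clean conjugation forms \eqref{jk-formula-0}--\eqref{jk-formula-1}; once the cancellations implicit in \eqref{eq:E0} and \eqref{def:E1} are unpacked, the remaining residue computations are routine.
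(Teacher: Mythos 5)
Your proposal is correct and follows essentially the same route as the paper: a standard small-norm argument based on \eqref{eq:JRexpsmall}--\eqref{Jr-exp}, explicit computation of $J_1^{(0)},J_2^{(0)},J_2^{(1)}$ by inserting the refined expansion \eqref{eq:asyPsi-1/3} into $P^{(0)}N^{-1}$ and the confluent hypergeometric expansion \eqref{H at infinity} into $P^{(1)}N^{-1}$, and then residue evaluation of the Cauchy integrals using the simple-pole structure coming from \eqref{def:psi1rho}--\eqref{def:psi2rho} and the simple zero of $z^{2/3}-1$ at $z=1$. The only (inessential) difference is that you extract $R_1$ and $R_2$ directly from the Neumann/Sokhotski integral representation, whereas the paper formulates separate additive RH problems for $R_1$ and $R_2$ and solves them by Cauchy's theorem, which amounts to the same computation.
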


\begin{proof}
By a standard argument for the small norm RH problem (cf. \cite{DeiftBook,DZ93}), the expansion \eqref{R-large-s-exp} follows directly from asymptotics of the jump $J_R(z)$ given in \eqref{eq:JRexpsmall} and \eqref{Jr-exp}.

To obtain explicit expressions of $R_1(z)$ and $R_2(z)$, we need more information about the functions $J_1(z)$ and $J_2(z)$ in the expansion \eqref{Jr-exp}. For $z \in \partial D(0, \delta)$ and $\Im z > 0$, combining the large-$z$ expansion of $\Psi_\alpha$ in \eqref{eq:asyPsi-1/3} and the definitions of $P^{(0)} (z)$ and $E_0(z)$ in \eqref{eq:P0Solution} and \eqref{eq:E0},  we have, as $s\to \infty$
\begin{align*}
&P^{(0)}(z) N(z)^{-1}
\\
& =  \frac{i z^{-\alpha/3} } {\sqrt{3}} E_0(z)  \diag \left(z^{\frac13},1,z^{-\frac13} \right)  \Big( I + \frac{\Psi_1(\rho)}{(sz)^{1/3}} +  \frac{\Psi_2 (\rho)}{(sz)^{2/3}} +  \Boh (s^{-1}) \Big) L_+ \nonumber \\
&~~~~  \times \diag \left( e^{\frac{\alpha \pi i}{3}}, e^{-\frac{\alpha \pi i}{3}}, 1 \right) \diag\left((1-\gamma)^{-\frac23},(1-\gamma)^{\frac13},(1-\gamma)^{\frac13}\right) N(z)^{-1} \nonumber \\
&  = E_0(z)  \diag \left(z^{\frac13},1,z^{-\frac13} \right)  \Big( I + \frac{\Psi_1 (\rho)}{(sz)^{1/3}} +  \frac{\Psi_2 (\rho)}{(sz)^{2/3}} +  \Boh (s^{-1}) \Big) \diag \left(z^{-\frac13},1,z^{\frac13} \right) E_0(z)^{-1},
\end{align*}
where $\Psi_{1}(\rho)$ and $\Psi_{2}(\rho)$ are given in \eqref{def:psi1rho} and \eqref{def:psi2rho}, respectively. It is easy to verify that the above expansion also holds for $z \in \partial D(0, \delta)$ and $\Im z < 0$. As $E_0(z)$ is independent of $s$, the above formula gives us
\begin{equation} \label{eq:MatchingCond0-more}
P^{(0)}(z) N(z)^{-1} = I +\frac{J_1^{(0)}(z)}{s^{1/3}} + \frac{J_2^{(0)}(z)}{s^{2/3}}+ \Boh (s^{-1}), \qquad s\to \infty,
\end{equation}
for $z \in \partial D(0, \delta)$ with $J_k^{(0)}(z)$, $k=1,2$, given in \eqref{jk-formula-0}.

Similarly, for $z \in \partial D(1, \delta) $ and $\Im z > 0$, we have from \eqref{eq:P1Solution} and \eqref{H at infinity} that
\begin{multline}
P^{(1)}(z) N(z)^{-1}  = E_1(z) \left( I + \frac{\diag\left(\Phi^{(\CHF)}_1, 0 \right)}{s^{2/3} f(z)} +    \Boh (s^{-\frac{4}{3}}) \right)
\\ \times \diag\left(\left(e^{-\frac{\beta \pi i}{2}-\frac{s^{2/3}f(z)i}{2}}s^{-\frac{2\beta}{3}}f(z)^{-\beta}\right)^{\sigma_3},1\right) \diag\left(  e^{\frac{\theta_2(sz)-\theta_1(sz)}{2} \sigma_3}  ,1 \right)  N(z)^{-1}.  \nonumber
\end{multline}
Recalling the definitions of $f(z)$ and $E_1(z)$ in \eqref{eq:f} and \eqref{def:E1}, we have
\begin{equation*}
P^{(1)}(z) N(z)^{-1} = E_1(z) \left( I + \frac{2 \diag\left(\Phi_1^{(\CHF) }  , 0 \right)}{ 3\sqrt{3} (z^{2/3} - 1) s^{2/3} } +    \Boh (s^{-1}) \right) E_1(z)^{-1}.
\end{equation*}
Again, the above formula holds for $z \in \partial D(1, \delta) $ and $\Im z < 0$ as well. Although $E_1(z)$ depends on $s$, it is bounded when $s \to \infty $ since  both $\theta_1(s) - \theta_2(s) = (-\frac{3 \sqrt{3}}{2}  s^{2/3} + \sqrt{3}\rho  s^{1/3})i$ and $\beta = \frac{1}{2 \pi i} \ln(1-\gamma)$ are purely imaginary. Then, we have from the above formula that
\begin{equation} \label{eq:MatchingCond1-more}
P^{(1)}(z) N(z)^{-1} = I +\frac{J_2^{(1)}(z)}{s^{2/3}} + \Boh (s^{-1}), \qquad s\to \infty,
\end{equation}
for $z \in \partial D(1, \delta)$ with $J_2^{(1)}(z)$ given in \eqref{jk-formula-1}. Therefore, it follows from \eqref{def:JR}, \eqref{Jr-exp}, \eqref{eq:MatchingCond0-more} and \eqref{eq:MatchingCond1-more} that
\begin{equation}
J_1(z) = \begin{cases} J_1^{(0)} (z), &   z \in \partial D(0, \delta), \\
0, &   z \in \partial D(1, \delta), \end{cases} \qquad
J_2(z) = \begin{cases} J_2^{(0)} (z), &   z \in \partial D(0, \delta), \\
J_2^{(1)} (z), &   z \in \partial D(1, \delta). \end{cases}
\end{equation}

Now, we are ready to derive explicit expressions of $R_1(z)$ and $R_2(z)$ in \eqref{R-large-s-exp}. From the RH problem \ref{rhp:R} for $R$, it is easy to see from \eqref{eq:JRexpsmall}--\eqref{R-large-s-exp} that $R_1(z)$ satisfies an RH problem as follows:
\begin{rhp}\label{rhp:R1}
\hfill
\begin{enumerate}
\item[\rm (1)]  $R_1(z)$ is analytic in $\mathbb{C} \setminus \partial D(0, \delta)$.

\item[\rm (2)]  For $z \in \partial D(0, \delta)$, we have
  \begin{equation}
  R_{1,+}(z)  - R_{1,-}(z) = J_1(z) = J_1^{(0)}(z),
  \end{equation}
 where $J_1^{(0)}(z)$ is given in \eqref{jk-formula-0}.
\item[\rm (3)] As $z \to \infty$, we have $R_1(z)=\Boh(z^{-1}).$
\end{enumerate}
\end{rhp}
By Cauchy's theorem, we have
\begin{equation}\label{eq:R1integral}
R_1(z) = \frac{1}{2\pi i}\int_{\partial D(0,\delta)}\frac{J_1^{(0)}(\zeta)}{\zeta - z}\ud \zeta.
\end{equation}
To evaluate this integral, we observe from \eqref{def:psi1rho} that
\begin{multline}\label{eq:zpsi1rho}
z^{-\frac{1}{3}} \diag \left(z^{\frac13},1,z^{-\frac13} \right) \Psi_1(\rho) \diag \left(z^{-\frac13},1,z^{\frac13} \right)
\\
 =  \begin{pmatrix}
0 & \pi_3(\rho) & 0
\\
0 & 0 & \pi_3(\rho) + \frac{\rho}{3}
\\
\left(\pi_3(\rho)  + \frac{2\rho}{3} \right)/z & 0 & 0
\end{pmatrix},
\end{multline}
where $\pi_3(\rho)$ is given in \eqref{pi3-def}. This, together with \eqref{jk-formula-0} and the fact that $E_0(z)$ is analytic near $z =0$, implies that $J_1(z)$ admits a simple pole at $z=0$. We then obtain \eqref{R1-explicit} from \eqref{eq:R1integral} and the residue theorem.

The RH problem for $R_2$ reads as follows.
\begin{rhp}\label{rhp:R2}
\hfill
\begin{enumerate}
\item[\rm (1)]  $R_2(z)$ is analytic in $\mathbb{C} \setminus \{ \partial D(0, \delta) \cup \partial D(1, \delta) \}$.

\item[\rm (2)]  For $z \in \partial D(0, \delta) \cup \partial D(1, \delta)$, we have
  \begin{align}
  R_{2,+}(z)  - R_{2,-}(z) &=  J_2(z) + R_{1,-}(z) J_1(z)
  \nonumber
  \\
  &= \begin{cases} J_2^{(0)}(z) + R_{1,-}(z) J_1^{(0)}(z) , & z \in \partial D(0, \delta), \\ J_2^{(1)}(z), & z \in \partial D(1, \delta), \end{cases}
  \end{align}
 where the functions $J_k^{(0)}(z)$, $k=1,2$, and $J_2^{(1)}(z)$ are given in \eqref{jk-formula-0} and \eqref{jk-formula-1}.

\item[\rm (3)] As $z \to \infty$, we have $R_2(z)=\Boh(z^{-1}).$
\end{enumerate}
\end{rhp}
Similar to the calculation of $R_1$, we note that, due to \eqref{def:psi2rho}, the function
\begin{multline}\label{eq:psi2prod}
z^{-\frac{2}{3}} \diag \left(z^{\frac13},1,z^{-\frac13} \right) \Psi_2(\rho) \diag \left(z^{-\frac13},1,z^{\frac13} \right)
\\
=  \begin{pmatrix} 0 & 0 & \pi_6(\rho) \\ \left( \pi_6(\rho) + \frac{\rho}{3} \pi_3(\rho) + \frac{1 - \alpha}{3} \right)/z & 0 & 0 \\ 0 & \left(\pi_6(\rho) + \frac{2 \rho}{3} \pi_3(\rho)+ \frac{\rho^2}{9}+ \frac{1 - 2 \alpha}{3} \right)/z & 0
\end{pmatrix}
\end{multline}
is analytic in a neighbourhood of the origin except for a simple pole at $z = 0$ in the (2,1) and (3,2) entries. This implies  $J_2^{(0)}(z) + R_{1,-}(z) J_1^{(0)}(z)$ only has a simple pole at $z = 0$ in $D(0,\delta)$. As $z^{2/3} - 1$ is analytic at $z = 1$, we have $J_2^{(1)}(z)$ is analytic in $D(1,\delta)\setminus\{1\}$. Thus, we obtain \eqref{R2-explicit} from
\begin{equation}
R_2(z) = \frac{1}{2 \pi i} \oint_{\partial D(0, \delta)} \frac{J_2^{(0)}(\zeta) + R_{1,-}(\zeta) J_1^{(0)}(\zeta)}{\zeta - z } \ud \zeta +  \frac{1}{2 \pi i} \oint_{\partial D(1, \delta)} \frac{J_2^{(1)}(\zeta)}{\zeta - z } \ud \zeta
\end{equation}
and the residue theorem.

This finishes the proof of Proposition \ref{prop-R-exp}.
\end{proof}

To facilitate our future calculations, it is convenient to rewrite the large-$s$ expansion of $R(z)$ in \eqref{R-large-s-exp} as
\begin{equation} \label{eq:R-R-hat}
R(z) = C_N \widehat{R} (z) C_N^{-1} = C_N \left( I +\frac{\widehat{R}_1(z)}{s^{1/3}}+ \frac{\widehat{R}_2(z)}{s^{2/3}} + \Boh (s^{-1}) \right) C_N^{-1},
\end{equation}
where $C_N$ is given in \eqref{eq:CN} and
\begin{equation}\label{def:hatRi}
\widehat R_{i}(z)=C_N^{-1} R_i(z) C_N, \qquad i=1,2.
\end{equation}
We conclude this section by calculating $\widehat R_1(0)$ and $\widehat R_2(0)$ explicitly.

We start with the observation that if
$$ \mathcal{J}(z) = \mathcal{E}(z)\left(\mathcal{C}_0 + \frac{\mathcal{C}_1}{z}   \right) \mathcal{E}(z)^{-1},$$
where $\mathcal{C}_0$ and $\mathcal{C}_1$ are two constant matrices and $\mathcal{E}(z)$ is analytic near the origin, then
\begin{equation}\label{eq:residue}
\lim_{z\to 0} \left(\frac{1}{z} \Res_{\zeta=0} \mathcal{J}(\zeta) - \mathcal{J}(z) \right)=   \mathcal{E}(0)  \Big(  [\mathcal{C}_1,  \mathcal{E}(0)^{-1}  \mathcal{E}'(0)] - \mathcal{C}_0 \Big)  \mathcal{E}(0)^{-1}.
\end{equation}
Thus, by \eqref{R1-explicit}, \eqref{jk-formula-0}, \eqref{eq:zpsi1rho} and the above formula, it follows that
\begin{equation*}
R_1(0)=E_0(0)\left(\left[\begin{pmatrix}
0 & 0 & 0
\\
0 & 0 & 0
\\
\pi_3(\rho)+\frac{2\rho}{3} & 0 & 0
\end{pmatrix}, E_0(0)^{-1}E_0'(0)\right]-\begin{pmatrix}
0 & \pi_3(\rho) & 0
\\
0 & 0 & \pi_3(\rho)+\frac{\rho}{3}
\\
0 & 0 & 0
\end{pmatrix}\right)E_0(0)^{-1}.
\end{equation*}
Inserting \eqref{eq:E0zero} and \eqref{eq:E0zero-prime} into the above formula, we obtain after a straightforward calculation that
\begin{equation}\label{eq:R10expl}
\widehat{R}_1(0)= C_N^{-1} R_1(0)   C_N = \begin{pmatrix}
\frac{2\sqrt{3}}{3}\rho\beta i & -\pi_3(\rho) & 0
\\
\frac{\sqrt{3}}{2}\rho\beta\left(\sqrt{3}\beta-\frac{i}{3}\right) & -\frac{\sqrt{3}}{3}\rho \beta i & -\pi_3(\rho)-\frac{\rho}{3}
\\
-\frac{\sqrt{3}}{2}\rho \beta^2 \left(\sqrt{3}+\beta i\right) &  \frac{\sqrt{3}}{3}\rho\beta i & -\frac{\sqrt{3}}{3}\rho \beta i
\end{pmatrix} .
\end{equation}


To evaluate $\widehat R_2(0)$, it is readily seen from \eqref{R1-explicit}, \eqref{R2-explicit} and \eqref{def:hatRi} that
\begin{equation}\label{eq:R20pre}
\widehat R_2(0)=C_N^{-1}\left(-\Res_{\zeta = 1} J_2^{(1)}(\zeta)+\lim_{z\to 0} \left(\frac{1}{z} \Res_{\zeta=0} J_2^{(0)}(\zeta) - J_2^{(0)}(z) \right)+R_1(0)^2\right)C_N.
\end{equation}
For later use, it suffices to find the first row and the last column of $\widehat R_2(0)$. We next calculate three terms on the right hand side of the above formula separately. First, it follows from
\eqref{jk-formula-1}, \eqref{eq:E-expand-coeff-1} and \eqref{Phi1-def} that
\begin{align}
 &\Res_{\zeta = 1} C_N^{-1} J_2^{(1)}(\zeta) C_N = \frac{1}{\sqrt{3}} C_N^{-1} E_1(1)\diag\Big(\Phi^{(\CHF)}_1, 0 \Big) E_1(1)^{-1} C_N \label{eq:resJ2at1-old}
\\
&=\frac{1}{3\sqrt{3}}  \nonumber
\\
&\times
\begin{pmatrix}
2\Im \left(\frac{\Gamma(1-\beta) c_1(s)^2}{\Gamma(\beta)\omega }\right) & -\sqrt{3}\beta^2+2\Im \left(\frac{\Gamma(1-\beta)c_1(s)^2\omega}{\Gamma(\beta)}\right) &  \sqrt{3}\beta^2+2\Im \left(\frac{\Gamma(1-\beta) c_1(s)^2}{\Gamma(\beta)}\right)
\\
\sqrt{3}\beta^2+2\Im \left(\frac{\Gamma(1-\beta) c_1(s)^2\omega}{\Gamma(\beta)}\right) & 2\Im \left(\frac{\Gamma(1-\beta) c_1(s)^2}{\Gamma(\beta)}\right) &  -\sqrt{3}\beta^2+2\Im \left(\frac{\Gamma(1-\beta) c_1(s)^2}{\Gamma(\beta) \omega}\right)
\\
-\sqrt{3}\beta^2+2\Im \left(\frac{\Gamma(1-\beta) c_1(s)^2}{\Gamma(\beta)}\right) & \sqrt{3}\beta^2+2\Im \left(\frac{\Gamma(1-\beta) c_1(s)^2}{\Gamma(\beta)\omega}\right) &  2\Im \left(\frac{\Gamma(1-\beta) c_1(s)^2\omega}{\Gamma(\beta)}\right)
\end{pmatrix} , \nonumber
\end{align}
where we have also made use of the fact that $\beta$ is purely imaginary. From the definition of $c_1(s)$ in \eqref{eq:c-1}, it is easily seen that
\begin{equation}\label{eq:c1theta}
\Gamma(1-\beta) c_1(s) = |\Gamma(1-\beta)| e^{i \vartheta(s)},
\end{equation}
where
\begin{align}\label{def:vtheta}
\vartheta(s) & = \frac{\alpha { \pi}  }{3} +  \arg \Gamma(1-\beta) -\beta i \left(\frac{2}{3}\ln s+\frac{3}{2}\ln3+\ln f'(1)\right) -i \frac{\theta_1(s)-\theta_2(s)} {2}.
\end{align}
Moreover, from \cite[Formula 5.5.3]{DLMF}, we have
\begin{equation*}
\Gamma(1-\beta) \Gamma(\beta) = \frac{\pi}{\sin(\beta \pi)},
\end{equation*}
and thus,
\begin{equation} \label{eq:Gamma-beta-relation}
\Gamma(1+\beta) \Gamma(1-\beta) =|\Gamma(1-\beta)^2| = \beta \Gamma (\beta)\Gamma(1-\beta)=\frac{\beta \pi}{\sin(\beta \pi)}.
\end{equation}
This gives us
\begin{equation}
\Im \frac{\Gamma(1-\beta) c_1(s)^2}{\Gamma(\beta)} = \Im \frac{\Gamma(1-\beta)^2 c_1(s)^2}{\Gamma(1-\beta)\Gamma(\beta)}= \Im\left( -\beta i \, e^{2i \vartheta(s) + \frac{\pi i }{2}} \right) = -\beta i \cos(2 \vartheta(s)).
\end{equation}
Therefore, \eqref{eq:resJ2at1-old} can be rewritten as
\begin{align}
 &\Res_{\zeta = 1} C_N^{-1} J_2^{(1)}(\zeta) C_N \label{eq:resJ2at1}
\\
&{\small =\frac{1}{3\sqrt{3}}
\begin{pmatrix}
-2 \beta i \cos(2 \vartheta(s) -\frac{2\pi}{3})  & -\sqrt{3}\beta^2 -2 \beta i \cos(2 \vartheta(s) +\frac{2\pi}{3}) &  \sqrt{3}\beta^2-2 \beta i \cos(2 \vartheta(s))
\\
\sqrt{3}\beta^2 -2 \beta i \cos(2 \vartheta(s) +\frac{2\pi}{3})  & -2 \beta i \cos(2 \vartheta(s))  &  -\sqrt{3}\beta^2-2 \beta i \cos(2 \vartheta(s) -\frac{2\pi}{3})
\\
-\sqrt{3}\beta^2-2 \beta i \cos(2 \vartheta(s)) & \sqrt{3}\beta^2-2 \beta i \cos(2 \vartheta(s) -\frac{2\pi}{3}) & -2 \beta i \cos(2 \vartheta(s) +\frac{2\pi}{3})
\end{pmatrix} .}  \nonumber
\end{align}
Next, by \eqref{jk-formula-0}, \eqref{eq:psi2prod} and \eqref{eq:residue},
it is easily seen that
\begin{align*}
&\lim_{z\to 0} \left(\frac{1}{z} \Res_{\zeta=0} J_0^{(2)}(\zeta) - J_0^{(2)}(z) \right)
\nonumber
\\
&=E_0(0)\left(\left[\begin{pmatrix}
0 & 0 & 0
\\
\mathcal{A}+ \pi_6(\rho)  & 0 & 0
\\
0 & \mathcal{B}+ \pi_6(\rho)  & 0
\end{pmatrix}, E_0(0)^{-1}E_0'(0)\right]-\begin{pmatrix}
0 & 0 & \pi_6(\rho)
\\
0 & 0 & 0
\\
0 & 0 & 0
\end{pmatrix}\right)E_0(0)^{-1},
\end{align*}
where
\begin{equation}\label{def:AB}
\mathcal{A}= \frac{\rho}{3} \pi_3(\rho) + \frac{1 - \alpha}{3},\quad \mathcal{B}= \frac{2 \rho}{3} \pi_3(\rho)+ \frac{\rho^2}{9}+ \frac{1 - 2 \alpha}{3}.
\end{equation}
This, together with \eqref{eq:E0zero} and \eqref{eq:E0zero-prime}, implies that
\begin{align}
\lim_{z\to 0} \left( C_N^{-1} \Big( \frac{1}{z}  \Res_{\zeta=0} J_0^{(2)}(\zeta) - J_0^{(2)}(z) \Big)C_N \right)_{11}& =\frac{\beta}{2}\left( \sqrt{3}\mathcal{A}(-i+\sqrt{3}\beta) -6\mathcal{B}\beta\right), 
\\
\lim_{z\to 0} \left(C_N^{-1} \Big(\frac{1}{z}\Res_{\zeta=0} J_0^{(2)}(\zeta) - J_0^{(2)}(z) \Big)C_N \right)_{12}& =\sqrt{3}\beta\mathcal{B}i,
\\
\lim_{z\to 0} \left( C_N^{-1} \Big(\frac{1}{z}\Res_{\zeta=0} J_0^{(2)}(\zeta) - J_0^{(2)}(z) \Big)C_N \right)_{13}& =-\pi_6(\rho),
\\
\lim_{z\to 0} \left( C_N^{-1} \Big(\frac{1}{z} \Res_{\zeta=0} J_0^{(2)}(\zeta) - J_0^{(2)}(z) \Big)C_N \right)_{23}& =-\sqrt{3}\beta \mathcal{A}i, 
\\
\lim_{z\to 0} \left(C_N^{-1} \Big(\frac{1}{z}\Res_{\zeta=0} J_0^{(2)}(\zeta) - J_0^{(2)}(z) \Big)C_N \right)_{33}& =\frac{\beta}{2} \left(\sqrt{3}\mathcal{B}(i+\sqrt{3}\beta) -6\mathcal{A}\beta \right).
\end{align}
We also observe from \eqref{eq:R10expl} that
\begin{align}\label{eq:R10square}
&\widehat{R}_1(0)^2 = C_N^{-1} R_1(0)^2   C_N =
\nonumber \\
&=
\begin{pmatrix}
-\frac43\rho^2\beta^2-\frac{\sqrt{3}}{2}\rho\pi_3(\rho)\beta\left(\sqrt{3}\beta-\frac{i}{3}\right) & -\frac{\sqrt{3}}{3}\rho\pi_3(\rho)\beta i & \pi_3(\rho)\left(\pi_3(\rho)+\frac{\rho}{3}\right)
\\
\ast & \ast & \frac{2\sqrt{3}}{3}\rho\beta\left(\pi_3(\rho)+\frac{\rho}{3}\right)i
\\
\ast & \ast & -\frac{\rho^2 \beta^2}{3}-\frac{\sqrt{3}}{3}\rho\beta \left(\pi_3(\rho)+\frac{\rho}{3}\right) i
\end{pmatrix}.
\end{align}
Thus, inserting \eqref{eq:resJ2at1}--\eqref{eq:R10square} into \eqref{eq:R20pre}, we finally obtain that
\begin{align}
\Big( \widehat{R}_2(0) \Big)_{11}&=\frac{\beta}{2}\left( \sqrt{3}\mathcal{A}(-i+\sqrt{3}\beta)-6\mathcal{B}\beta \right) 
 -\frac43\rho^2\beta^2
\nonumber
\\
& \quad -\frac{\sqrt{3}}{2}\rho\pi_3(\rho)\beta\left(\sqrt{3}\beta-\frac{i}{3}\right)+\frac{ 2 \beta i  }{3\sqrt{3}} \cos(2 \vartheta(s) -\frac{2\pi}{3})\nonumber\\
&= -\beta^2\left(3\rho\pi_3(\rho)+\frac{5}{3}\rho^2-\frac{3\alpha-1}{2}\right)+\frac{  \beta i  }{2\sqrt{3}}(\alpha-1) +\frac{ 2 \beta i  }{3\sqrt{3}} \cos(2 \vartheta(s) -\frac{2\pi}{3}),
 \label{eq:R2011}
\\
\Big( \widehat{R}_2(0) \Big)_{12}&=\sqrt{3}\beta \mathcal{B}i-\frac{\sqrt{3}}{3}\rho\pi_3(\rho)\beta i
 + \frac{\beta^2}{3}  +\frac{2 \beta i}{3\sqrt{3}}  \cos(2 \vartheta(s) +\frac{2\pi}{3}) \nonumber\\
 &= \frac{\beta^2}{3}+\frac{ \beta i}{\sqrt{3}}\left(\rho\pi_3(\rho)+\frac{1}{3}\rho^2-2\alpha+1\right) +\frac{2 \beta i}{3\sqrt{3}}  \cos(2 \vartheta(s) +\frac{2\pi}{3}),
\\
\Big( \widehat{R}_2(0) \Big)_{13}&=-\pi_6(\rho)+\pi_3(\rho)\left(\pi_3(\rho)+\frac{\rho}{3}\right) - \frac{\beta^2}{3}  +\frac{2 \beta i}{3\sqrt{3}}  \cos(2 \vartheta(s)) , \label{eq:R2013}
\\
\Big( \widehat{R}_2(0) \Big)_{23}&=-\sqrt{3}i\beta\mathcal{A}+\frac{2i\beta}{\sqrt{3}}\rho\left(\pi_3(\rho)+\frac{\rho}{3}\right)
+ \frac{\beta^2}{3}  +\frac{2 \beta i}{3\sqrt{3}}  \cos(2 \vartheta(s)-\frac{2\pi}{3}) \nonumber\\
&= \frac{\beta^2}{3}  +\frac{ \beta i}{\sqrt{3}} \left(\rho\pi_3(\rho)+\frac{2}{3}\rho^2+\alpha-1\right) +\frac{2 \beta i}{3\sqrt{3}}  \cos(2 \vartheta(s)-\frac{2\pi}{3}),
\\
\Big( \widehat{R}_2(0) \Big)_{33}&=\frac{\beta}{2}\left(\sqrt{3}\mathcal{B}(i+\sqrt{3}\beta)-6\mathcal{A}\beta \right)
 -\frac{\rho^2 \beta^2}{3}
\nonumber
\\
& \quad -\frac{\sqrt{3}}{3}\rho\beta \left(\pi_3(\rho)+\frac{\rho}{3}\right) i+\frac{ 2 \beta i  }{3\sqrt{3}} \cos(2 \vartheta(s) +\frac{2\pi}{3})
\nonumber
\\
&= -\frac{\beta^2}{2}\Big(1+\frac{\rho^2}{3}\Big)-\frac{  \beta i  }{2\sqrt{3}}\Big(\frac{\rho^2}{3}+2\alpha-1\Big)+\frac{ 2 \beta i  }{3\sqrt{3}} \cos(2 \vartheta(s) +\frac{2\pi}{3}),
\label{eq:R2033}
\end{align}
where $\pi_3(\rho)$, $\pi_6(\rho)$, $\mathcal{A}$ and $\mathcal{B}$ are given in \eqref{pi3-def}, \eqref{pi6-def} and \eqref{def:AB}, respectively.
\section{Asymptotic analysis of the RH problem for $\Phi$ as $s \to 0^+$}\label{sec:AsyPhi0}

Throughout this section, it is assumed that $0<\gamma \leq 1$.

\subsection{Global parametrix}
As $s \to 0^+$, the interval $(0,s)$ vanishes, and a comparison of the RH problems $\Psi_{\alpha}$ and $\Phi$ invokes us to define the global parametrix $\widetilde N$ for $|z|>\delta>s$ by
\begin{equation}\label{def:tildeN}
\widetilde N(z):=\Psi_{\alpha}(z)\left\{
                   \begin{array}{ll}
                     J_{\Psi_{\alpha},1}, & \hbox{$\arg z<\frac{\pi}{4}$ and $\arg (z-s)>\frac{\pi}{4}$,} \\
                     J_{\Psi_{\alpha},5}^{-1}, & \hbox{$\arg z>-\frac{\pi}{4}$ and $\arg (z-s)<-\frac{\pi}{4}$,} \\
                     I, & \hbox{elsewhere,}
                   \end{array}
                 \right.
\end{equation}
where $J_{\Psi_{\alpha},k}$, $k=0,1,\ldots,5$, stands for the jump matrix $J_{\Psi_{\alpha}}$ in \eqref{jumps:M} restricted on the contour $\Sigma_k$.

\subsection{Local parametrix}
Near $z=0$, we intend to find a function $\widetilde P^{(0)}(z)$ satisfying the following RH problem.
\begin{rhp}\label{rhp:tildeP0}
\hfill
\begin{enumerate}
\item[\rm (1)]  $\widetilde P^{(0)}(z)$ is defined and analytic in $D(0,\delta)\setminus \Sigma_\Phi$, where $\Sigma_\Phi$ is defined in \eqref{def:Sigmaphi}.

\item[\rm (2)]  For $z\in \Sigma_\Phi \cap D(0,\delta)$, $\widetilde P^{(0)}(z)$ satisfies the jump condition
\begin{equation}\label{eq:tildeP0-jump}
\widetilde P^{(0)}_+(z)=\widetilde P^{(0)}_-(z)J_\Phi(z),
\end{equation}
where $J_\Phi(z)$ is given in \eqref{def:JX}.

\item[\rm (3)]
As $ s \to 0^+$, we have the matching condition
\begin{equation}\label{eq:MatchingCondtildeP0}
\widetilde P^{(0)}(z)=\left(I+\Boh(s^{\alpha+1}) \right)\widetilde N(z),\qquad z\in \partial D(0,\delta),
\end{equation}
where $\widetilde N(z)$ is given in \eqref{def:tildeN}.
\end{enumerate}
\end{rhp}

Recall the analytic function $\Psi_\alpha^{(0)}(z)$ in \eqref{eq:Psizero} which describes the local behavior of $\Psi_{\alpha}$ near the origin, we look for a solution of the above RH problem in the form
\begin{multline}\label{eq:tildeP0}
\widetilde P^{(0)}(z)= \Psi_\alpha^{(0)}(z)\begin{pmatrix}
1 & \eta(z/s) & 0
\\
0 & 1 & 0
\\
0 & 0 & 1
\end{pmatrix}
\left\{
                                     \begin{array}{ll}
                                      \begin{pmatrix}
1 & \frac{i}{2\pi}\ln z & 0
\\
0 & z^{-\alpha} & 0
\\
0 & \frac{(-1)^{\alpha+1}}{2\pi}i\ln z & 1
                                      \end{pmatrix}, & \hbox{if $\alpha \in \mathbb{N}\cup \{0\}$}
\\
 \begin{pmatrix}
1 & \frac{e^{-\alpha \pi i}}{2\sin(\alpha \pi)}i & 0
\\
0 & z^{-\alpha} & 0
\\
0 & -\frac{i}{2\sin(\alpha \pi)} & 1
                                       \end{pmatrix}, & \hbox{if $\alpha \notin \mathbb{Z}$}
\end{array}
                                   \right\}
\\
\times \left\{
          \begin{array}{ll}
            J_{\Phi,1}^{-1}, & \hbox{$z\in D(0,\delta) \cap \textrm{I}$,} \\
            I, & \hbox{$z\in D(0,\delta) \cap \textrm{II}$,} \\
           J_{\Phi,2}^{-1}, & \hbox{$z\in D(0,\delta) \cap \textrm{III}$,} \\
            J_{\Phi,1}^{-1}J_{\Phi,0}^{-1}J_{\Phi,5}^{-1}J_{\Phi,4}, & \hbox{$z\in D(0,\delta) \cap \textrm{IV}$,} \\
           J_{\Phi,1}^{-1}J_{\Phi,0}^{-1}J_{\Phi,5}^{-1}, & \hbox{$z\in D(0,\delta) \cap \textrm{V}$,} \\
            J_{\Phi,1}^{-1}J_{\Phi,0}^{-1}, & \hbox{$z\in D(0,\delta) \cap \textrm{VI}$,} \\
           \end{array}
        \right.
\end{multline}
where $\eta(z)$ is a function to be determined later, $J_{\Phi,k}$, $k=0,1,\ldots,5$, denote the jump matrix $J_{\Phi}$ in \eqref{def:JX} restricted on the contour $\Sigma_0^{(s)}, \Sigma_1^{(s)}, \Sigma_2,\Sigma_3,\Sigma_4,\Sigma_5^{(s)}$, respectively, and the regions I--VI are shown in Figure \ref{fig:X}. By choosing the principal branch cut for $z^{\alpha}$  and $\ln z$, it is straightforward to check that
if $z<0$,
\begin{align}
&\begin{pmatrix}
1 & \frac{i}{2\pi}\ln z & 0
\\
0 & z^{-\alpha} & 0
\\
0 & \frac{(-1)^{\alpha+1}}{2\pi}i\ln z & 1
\end{pmatrix}_+
J_{\Phi,2}^{-1}=\begin{pmatrix}
1 & \frac{i}{2\pi}\ln z & 0
\\
0 & z^{-\alpha} & 0
\\
0 & \frac{(-1)^{\alpha+1}}{2\pi}i\ln z & 1
\end{pmatrix}_-
\begin{pmatrix}
1 & -1 & 0
\\
0 & 1 & 0
\\
0 & (-1)^{\alpha} & 1
\end{pmatrix}
J_{\Phi,2}^{-1}
\nonumber
\\
&=\begin{pmatrix}
1 & \frac{i}{2\pi}\ln z & 0
\\
0 & z^{-\alpha} & 0
\\
0 & \frac{(-1)^{\alpha+1}}{2\pi}i\ln z & 1
\end{pmatrix}_-J_{\Phi,1}^{-1}J_{\Phi,0}^{-1}J_{\Phi,5}^{-1}J_{\Phi,4}J_{\Phi,3}
\end{align}
for $\alpha \in \mathbb{N}\cup \{0\}$, and
\begin{align}
& \begin{pmatrix}
1 & \frac{e^{-\alpha \pi i}}{2\sin(\alpha \pi)}i & 0
\\
0 & z^{-\alpha} & 0
\\
0 & -\frac{i}{2\sin(\alpha \pi)} & 1
\end{pmatrix}_+
J_{\Phi,2}^{-1}= \begin{pmatrix}
1 & \frac{e^{-\alpha \pi i}}{2\sin(\alpha \pi)}i & 0
\\
0 & z^{-\alpha} & 0
\\
0 & -\frac{i}{2\sin(\alpha \pi)} & 1
                                       \end{pmatrix}_-
\begin{pmatrix}
1 & -e^{-2\alpha \pi i} & 0
\\
0 & e^{-2\alpha \pi i} & 0
\\
0 & e^{-\alpha \pi i} & 1
\end{pmatrix}
J_{\Phi,2}^{-1}
\nonumber
\\
&= \begin{pmatrix}
1 & \frac{e^{-\alpha \pi i}}{2\sin(\alpha \pi)}i & 0
\\
0 & z^{-\alpha} & 0
\\
0 & -\frac{i}{2\sin(\alpha \pi)} & 1
\end{pmatrix}_-J_{\Phi,1}^{-1} J_{\Phi,0}^{-1}J_{\Phi,5}^{-1}J_{\Phi,4}J_{\Phi,3}
\end{align}
for $\alpha \notin \mathbb{Z}$. This, together with the fact that
\begin{equation}
 J_{\Phi,1}^{-1}J_{\Phi,0}^{-1}J_{\Phi,5}^{-1}=
\begin{pmatrix}
1 & -1 & 0
\\
0 & 1 & 0
\\
0 & 0 & 1
\end{pmatrix},
\end{equation}
implies that $\widetilde P^{(0)}(z)$ in \eqref{eq:tildeP0} satisfies the jump condition \eqref{eq:tildeP0-jump} if the function $\eta(z)$ therein solves the following scalar RH problem.

\begin{rhp}\label{rhp:eta}
\hfill
\begin{enumerate}
\item[\rm (1)]  $\eta(z)$ is defined and analytic in $\C \setminus [0,1]$.

\item[\rm (2)] For $x \in (0,1)$, $\eta$ satisfies the jump condition
\begin{equation}
\eta_+(x)=\eta_-(x)-\gamma (sx)^\alpha,
\end{equation}
where the orientation of $(0,1)$ is taken from the left to the right.

\item[\rm (3)]
As $ z \to \infty$, we have $\eta(z)=\Boh(z^{-1})$.
\end{enumerate}
\end{rhp}
By the Sokhotske-Plemelj formula, it is easily seen that
\begin{equation}\label{def:eta}
\eta(z)=-\frac{\gamma s^{\alpha}}{2 \pi i}\int_0^1\frac{x^\alpha}{x-z} \ud x.
\end{equation}
Since $\eta(z/s)=\Boh(s^{\alpha+1})$ as $s \to 0^+$ for $z\in \partial D(0,\delta)$, we then obtain the matching condition \eqref{eq:MatchingCondtildeP0} from \eqref{def:tildeN}, \eqref{eq:tildeP0} and \eqref{eq:Psizero}.

\subsection{Final transformation}
The final transformation is defined by
\begin{equation}\label{def:tildeR}
\widetilde R(z)=\left\{
                  \begin{array}{ll}
                    \Phi(z)\widetilde N(z)^{-1}, & z \in \mathbb{C} \setminus D(0,\delta), \\
                    \Phi(z)\widetilde P^{(0)}(z)^{-1}, & z \in D(0, \delta).
                  \end{array}
                \right.
\end{equation}
By the RH problems for $\Phi$, $\Psi_{\alpha}$ and $\widetilde P^{(0)}$, it follows that $\widetilde R$ is analytic for $z\in D(0,\delta)\setminus \{0,s\}$. To this end, we obtain from \eqref{def:eta} and \cite[Section 30]{Musk} that as $z \to 0$,
\begin{equation}
\eta(z/s)=\left\{
            \begin{array}{ll}
              \frac{\gamma}{2\sin(\alpha\pi) i}e^{\mp \alpha \pi i}z^{\alpha}+\Boh(1), & \hbox{if $-1<\alpha<0$ and $\pm \Im z>0$,} \\
              \Boh(\ln z), & \hbox{if $\alpha=0$,} \\
              \Boh(1), & \hbox{if $\alpha>0$,}
            \end{array}
          \right.
\end{equation}
and as $z\to s$,
\begin{equation}
\eta(z/s)=\Boh(\ln(z-s)).
\end{equation}
A combination of the above two estimates, \eqref{eq:tildeP0}, and items (4) and (5) of Proposition \ref{rhp:X} implies that both $0$ and $s$ are removable singularities of $\widetilde R$ defined in \eqref{def:tildeR}. Thus, we arrive at the following RH problem for $\widetilde R$.
\begin{rhp}\label{rhp:tildeR}
\hfill
\begin{enumerate}
\item[\rm (1)]  $\widetilde R (z)$ is defined and analytic in $\C \setminus D(0,\delta)$.

\item[\rm (2)]  For $z \in \partial D(0,\delta)$, $\widetilde R (z)$ satisfies the jump condition
\begin{equation}
\widetilde R_+ (z)=\widetilde R_- (z)J_{\widetilde R}(z),
\end{equation}
where
$$J_{\widetilde R}(z) = \widetilde P^{(0)}(z) \widetilde N(z)^{-1} $$
and the orientation of $\partial D(0,\delta)$ is taken in a clockwise manner.

\item[\rm (3)]
As $ z \to \infty$, we have $\widetilde R(z)= I+\Boh(z^{-1})$.
\end{enumerate}
\end{rhp}
In view of \eqref{eq:MatchingCondtildeP0}, it is readily seen that $J_{\widetilde R}(z)=I+\Boh(s^{\alpha+1})$, as $s\to 0^+$, uniformly for the parameters $\gamma$ and $\rho$ in any compact subsets of $(0,1]$ and $\R$. As a consequence, we have
\begin{equation}\label{eq:esttildeR}
\widetilde R(z)=I+\Boh(s^{\alpha+1}),\qquad  \frac{\ud}{\ud z}\widetilde R(z)=\Boh(s^{\alpha+1}),  \qquad s\to 0^+,
\end{equation}
uniformly for $z\in \C \setminus \partial D(0,\delta)$.

\section{Asymptotics of the functions $p_{i,k}$ and $q_{i,k}$} \label{sec:Asypq}

By Proposition \ref{pro:Lax pair}, it is readily seen that there exists at least one solution to the system of equations \eqref{def:sysdiff} and \eqref{eq:constraint}. As an outcome of asymptotic analysis of the RH problem for $\Phi$, we are able to obtain asymptotics of this special solution, which is also needed in the proof of Theorem \ref{thm:FAsy}. To that end, it is convenient to introduce the functions $\hat p_{i,k}$ and $\hat q_{i,k}$, $i=0,1$, $k=1,2,3$, by
\begin{equation}\label{eq:hatq}
\begin{pmatrix}
      q_{i,1}(s)
      \\
      q_{i,2}(s)
      \\
      q_{i,3}(s)
\end{pmatrix}
    =C_\Psi \diag\left(s^{\frac13},1, s^{-\frac13}\right)C_N \diag\left(s^{-\frac13},1, s^{\frac13}\right) \begin{pmatrix}
      \hat{q}_{i,1}(s) \\
      \hat{q}_{i,2}(s) \\
      \hat{q}_{i,3}(s)
    \end{pmatrix}
 \end{equation}
and
\begin{equation}\label{eq:hatp}
\begin{pmatrix}
      p_{i,1}(s) \\
      p_{i,2}(s) \\
      p_{i,3}(s)
    \end{pmatrix}= C_\Psi^{-t}  \diag\left(s^{-\frac13},1, s^{\frac13}\right)C_N^{-t} \diag\left(s^{\frac13},1, s^{-\frac13}\right) \begin{pmatrix}
      \hat{p}_{i,1}(s) \\
      \hat{p}_{i,2}(s) \\
      \hat{p}_{i,3}(s)
    \end{pmatrix},
 \end{equation}
where the constant matrices $C_\Psi$ and $C_N$ are given in \eqref{eq:cons-C-Psi} and \eqref{eq:CN}, respectively.
\begin{proposition}\label{prop:asypq}
There exists one solution $\{p_{i,k}(s),q_{i,k}\}_{i=0,1,k=1,2,3}$ to the system of equations \eqref{def:sysdiff} and \eqref{eq:constraint} such that the following asymptotic behaviors hold. As $s\to +\infty$, we have, with $\{\hat p_{i,k}(s),\hat q_{i,k}\}_{i=0,1,k=1,2,3}$ defined in \eqref{eq:hatq} and \eqref{eq:hatp} and $0<\gamma <1$,
\begin{align}
\hat q_{0,1}(s) &= \frac{ (1-\gamma)^{1/3} e^{\rho^2/6} c(\alpha)}{ \sqrt{2 \pi}} \nonumber
\\
& \quad \times \Big(   -\pi_6(\rho)+\pi_3(\rho)\left(\pi_3(\rho)+\frac{\rho}{3}\right) - \frac{\beta^2}{3}  +\frac{2 \beta i}{3\sqrt{3}}  \cos(2 \psi(s)) + \Boh(s^{-\frac{1}{3}}) \Big), \label{eq:q01-asy-alpha}\\
\hat q_{0,2}(s) &= \frac{ (1-\gamma)^{1/3} e^{\rho^2/6} c(\alpha)}{ \sqrt{2 \pi}} \Big( - \pi_3(\rho) - \frac{\rho}{3} \nonumber\\
& \quad  + \Big( \frac{\beta^2}{3}  +\frac{ \beta i}{\sqrt{3}} \Big(\rho\pi_3(\rho)+\frac{2}{3}\rho^2+\alpha-1\Big) +\frac{2 \beta i}{3\sqrt{3}}  \cos(2 \psi (s)-\frac{2\pi}{3})\Big)s^{-\frac{1}{3}}
+ \Boh(s^{-\frac{2}{3}})\Big), \label{eq:q02-asy-alpha}\\
\hat q_{0,3}(s) &= \frac{ (1-\gamma)^{1/3} e^{\rho^2/6} c(\alpha)}{ \sqrt{2 \pi}}  \Big( 1  - \frac{\sqrt{3}}{3}\rho \beta i  s^{-\frac{1}{3}} \nonumber\\ 
& \quad -\Big( \frac{\beta^2}{2}\Big(1+\frac{\rho^2}{3}\Big)+\frac{  \beta i  }{2\sqrt{3}}\Big(\frac{\rho^2}{3}+2\alpha-1\Big)-\frac{ 2 \beta i  }{3\sqrt{3}} \cos(2 \psi(s) +\frac{2\pi}{3}) \Big)
s^{-\frac{2}{3}}
\nonumber\\
& \quad
+ \Boh(s^{-1}) \Big), \label{eq:q03-asy-alpha}
\end{align}
\begin{align}
\hat p_{0,1}(s) &=\frac{ \sqrt{2\pi} e^{-\rho^2/6} }{(1-\gamma)^{1/3} c(\alpha) }
 \Big(  1 +   \frac{i\beta}{\sqrt{3}} \rho s^{-\frac{1}{3}}+   \Big(\frac{1}{2} \beta^2\Big(3 \rho \pi_3(\rho)  + \frac{5}{3} \rho^2 + 1\Big) \nonumber\\
&\quad +\frac{i\beta}{2\sqrt{3}}\Big( \rho \pi_3(\rho)+\rho^2+2\alpha+1- \frac{ 4}{3 } \cos(2 \psi(s) -\frac{2}{3}\pi) \Big) \Big) s^{-\frac{2}{3}}+\Boh(s^{-1}) \Big), \label{eq:p01-asy-alpha} \\
\hat p_{0,2}(s) &=\frac{ \sqrt{2\pi} e^{-\rho^2/6} }{(1-\gamma)^{1/3} c(\alpha) }  \Big(  \pi_3(\rho) + \rho
 \nonumber\\
& \quad +\Big(-\frac{1}{3} \beta^2+\frac{i\beta}{3\sqrt{3}}\Big(3 \rho \pi_3(\rho)+2\rho^2-3\alpha-3- 2 \cos(2 \psi(s) +\frac{2}{3}\pi) \Big) \Big) s^{-\frac{1}{3}}
\nonumber\\
& \quad
+\Boh(s^{-\frac{2}{3}})
 \Big), \label{eq:p02-asy-alpha} \\
\hat p_{0,3}(s) &= \frac{  \sqrt{2\pi} e^{-\rho^2/6} }{(1-\gamma)^{1/3} c(\alpha) }   \Big( - \alpha + \pi_6(\rho)  + \rho \pi_3(\rho)  + \frac{\rho^2}{3}  + \frac{\beta^2}{3} - \frac{ 2 \beta i}{3 \sqrt{3} } \cos(2 \psi(s) )
\nonumber\\
& \quad+ \Boh(s^{-\frac{1}{3}}) \Big), \label{eq:p03-asy-alpha}
\end{align}
and
\begin{align}
\hat q_{1,1}(s) &=   \frac{2e^ {-(\frac{1}{2}\theta_3(s)+\frac{2}{3} \beta\pi i)} i } {\sqrt{3}} s^{-\frac{\alpha-1}{3}}  |\Gamma(1-\beta)|  \left(\cos(\psi(s)+\frac{2\pi }{3}) +\Boh(s^{-\frac13}) \right), \label{eq:q11-asy}
\\
\hat q_{1,2}(s) &=   \frac{2e^ {-(\frac{1}{2}\theta_3(s)+\frac{2}{3} \beta\pi i)} i } {\sqrt{3}} s^{-\frac{\alpha}{3}} |\Gamma(1-\beta)|  \left(\cos(\psi(s))+\Boh(s^{-\frac13}) \right), \label{eq:q12-asy} \\
\hat q_{1,3}(s) &=   \frac{2e^ {-(\frac{1}{2}\theta_3(s)+\frac{2}{3} \beta\pi i)} i } {\sqrt{3}} s^{-\frac{\alpha+1}{3}}  |\Gamma(1-\beta)|  \left( \cos(\psi(s)-\frac{2\pi }{3}) +\Boh(s^{-\frac13}) \right), \label{eq:q13-asy}
\\
\hat p_{1,1}(s) &=   \frac{\gamma e^ {\frac{1}{2}\theta_3(s)-\frac{1}{3} \beta\pi i} } {\sqrt{3} \pi i } s^{\frac{\alpha-1}{3}}   |\Gamma(1-\beta)|  \left( \sin(\psi(s)-\frac{\pi }{3})  +\Boh(s^{-\frac13}) \right), \label{eq:p11-asy}
\\
\hat  p_{1,2}(s) &=  - \frac{\gamma e^ {\frac{1}{2}\theta_3(s)-\frac{1}{3} \beta\pi i} } {\sqrt{3}\pi i } s^{\frac{\alpha}{3}}  |\Gamma(1-\beta)|   \left( \sin(\psi(s)) +\Boh(s^{-\frac13}) \right), \label{eq:p12-asy}
 \\
\hat p_{1,3}(s) & =    \frac{\gamma e^ {\frac{1}{2}\theta_3(s)-\frac{1}{3} \beta\pi i} } {\sqrt{3}\pi i } s^{\frac{\alpha+1}{3}}  |\Gamma(1-\beta)|  \left(\sin(\psi(s)+\frac{\pi }{3})  +\Boh(s^{-\frac13}) \right), \label{eq:p13-asy}
\end{align}
where
\begin{equation} \label{def:c-alpha}
c(\alpha):= \begin{cases}
	 1, & \alpha = 0, \\ 
	-\Gamma(\alpha), & \alpha \neq 0,	
	\end{cases}
\end{equation}
$\beta$, $\psi(s)$, $\pi_3(\rho)$ and $\theta_3(s)$ are given in \eqref{beta-def}, \eqref{def:psi}, \eqref{pi3-def} and \eqref{eq: theta-k-def}, respectively. As $s\to 0^+$, we have
\begin{equation} \label{eq: pk-qk-s=0}
\quad q_{0,k}(s)=\Boh(1), \quad p_{0,k}(s)=\Boh(1), \quad q_{1,k}(s)=\Boh(1), \quad p_{1,k}(s)=\Boh(s^\alpha), \quad k=1,2,3
\end{equation}
for $0<\gamma \leq 1$.
\end{proposition}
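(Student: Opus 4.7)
The plan is to extract $\Phi_0^{(0)}(s)$, $\Phi_1^{(0)}(s)$, $\Phi_0^{(1)}(s)$ and $\Phi_1^{(1)}(s)$ from the Deift--Zhou analyses of Sections~\ref{sec:AsyPhiinfty} and~\ref{sec:AsyPhi0}, and then to read off $q_{i,k}(s), p_{i,k}(s)$ through the definitions \eqref{def: q-0}--\eqref{def: q-1}. The rescalings \eqref{eq:hatq}--\eqref{eq:hatp} are chosen precisely to absorb the $s$-dependent conjugation $C_\Psi \diag(s^{1/3},1,s^{-1/3}) C_N \diag(s^{-1/3},1,s^{1/3})$ arising from undoing the transformation $\Phi \to T$ of \eqref{def:PhiToT} together with the constant factor $C_N$ carried by the global parametrix $N(z)$ of Lemma~\ref{lem:NSolution}.

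For the large-$s$ asymptotics at the origin, unwinding \eqref{def:StoR}, \eqref{def:TtoS} (trivial in region \texttt{II}) and \eqref{def:PhiToT}, together with the explicit solution \eqref{eq:P0Solution}, yields for $z \in \texttt{II} \cap D(0,\delta)$
\begin{equation*}
\Phi(sz) = \tfrac{i}{\sqrt 3}\, C_\Psi \diag(s^{1/3},1,s^{-1/3})\, R(z) E_0(z) \diag(s^{-1/3},1,s^{1/3})\, \Psi_\alpha(sz)\, \mathcal D,
\end{equation*}
where $\mathcal D = \diag((1-\gamma)^{-2/3},(1-\gamma)^{1/3},(1-\gamma)^{1/3})$. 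Writing $\Psi_\alpha(sz) = \Psi_\alpha^{(0)}(sz)$ times the singular factor from \eqref{eq:Psizero} and matching against \eqref{eq: Phi-expand-0} identifies $\Phi_0^{(0)}(s)$; passing through \eqref{def: q-0}, \eqref{def:p0i} and the change of variables \eqref{eq:hatq}--\eqref{eq:hatp} then realises $\hat q_{0,k}(s), \hat p_{0,k}(s)$ as specific entries of $\widehat R(0)\,(C_N^{-1}E_0(0)C_N)$ acting on column/row vectors built from $\Psi_\alpha^{(0)}(0)$, with the overall prefactor $c(\alpha)$ of \eqref{def:c-alpha} coming from the $(3,1)$-entry of $\Psi_\alpha^{(0)}(0)$. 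Substituting the explicit $\widehat R_1(0)$, $\widehat R_2(0)$ of \eqref{eq:R10expl}, \eqref{eq:R2011}--\eqref{eq:R2033}, and invoking the identity $\vartheta(s) = \psi(s) + \Boh(s^{-1/3})$ that follows from \eqref{def:vtheta}, \eqref{def:psi} and $f'(1) = \sqrt{3} + \Boh(s^{-1/3})$ in \eqref{eq:f'}, produces \eqref{eq:q01-asy-alpha}--\eqref{eq:p03-asy-alpha}.

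The derivation of \eqref{eq:q11-asy}--\eqref{eq:p13-asy} is parallel, but uses the local parametrix $P^{(1)}$ near $z=s$. The small-argument expansion of $\Phi^{(\CHF)}(\zeta;\beta)$ (Appendix~\ref{sec:CHF}) supplies the logarithm matching $-\tfrac{\gamma}{2\pi i}\ln(z-s)$ of \eqref{eq:X-near-s}, thereby identifying $\Phi_0^{(1)}(s)$ through \eqref{eq:P1Solution} and \eqref{def:E1}; the identity $\Gamma(1-\beta)c_1(s) = |\Gamma(1-\beta)|\,e^{i\vartheta(s)}$ of \eqref{eq:c1theta} then converts the complex exponentials $e^{\pm i\vartheta(s)}$ into the $\sin$ and $\cos$ expressions of $\psi(s)$ appearing in \eqref{eq:q11-asy}--\eqref{eq:p13-asy}. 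Substitution into \eqref{def: q-1}, followed by the rescaling \eqref{eq:hatq}--\eqref{eq:hatp}, closes the large-$s$ part of the argument.

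For the small-$s$ bounds \eqref{eq: pk-qk-s=0}, the estimate $\widetilde R(z) = I + \Boh(s^{\alpha+1})$ established in Section~\ref{sec:AsyPhi0} gives $\Phi(z) = \widetilde R(z)\widetilde P^{(0)}(z)$ inside $D(0,\delta)$. A direct inspection of \eqref{eq:tildeP0}, together with $\eta(0) = \Boh(s^\alpha)$ coming from \eqref{def:eta}, gives $\Phi_0^{(0)}(s) = \Boh(1)$ and hence $q_{0,k}(s), p_{0,k}(s) = \Boh(1)$. Near $z=s$, \eqref{def:eta} yields $\eta(z/s) = -\tfrac{\gamma s^\alpha}{2\pi i}\ln((z-s)/s) + \Boh(1)$; matching its logarithm against $-\tfrac{\gamma}{2\pi i}\ln(z-s)$ of \eqref{eq:X-near-s} forces the $(1,2)$-entry (and the associated entries) of $\Phi_0^{(1)}(s)$ to carry an explicit factor of $s^\alpha$, whereupon \eqref{def: q-1} delivers $q_{1,k}(s) = \Boh(1)$ and $p_{1,k}(s) = \Boh(s^\alpha)$. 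The main obstacle throughout is the careful singular-factor bookkeeping between the RH problems for $\Phi$ and for $\Psi_\alpha$, and the assembly of the many entries of $\widehat R_i(0)$ and $E_1(1)^{-1}E_1'(1)$ into the precise three-term expansions stated; once that bookkeeping is set up, the remainder of the proof reduces to the linear-algebraic simplifications already prepared in \eqref{eq:R10expl}--\eqref{eq:R2033} and \eqref{eq:E-expand-coeff-1}--\eqref{eq:E-expand-coeff-2}.
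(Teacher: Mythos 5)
Your overall scheme coincides with the paper's: unwind $\Phi\mapsto T\mapsto S\mapsto R$ (resp.\ $\Phi=\widetilde R\,\widetilde P^{(0)}$ for small $s$), identify $\Phi_0^{(0)}(s)$ and $\Phi_0^{(1)}(s)$ through the local parametrices, and then push the explicit data $\widehat R_1(0)$, $\widehat R_2(0)$, $E_1(1)$, $\Upsilon_0$, $c_1(s)$ and $\vartheta(s)=\psi(s)+\Boh(s^{-1/3})$ through \eqref{def: q-0}--\eqref{def: q-1} and \eqref{eq:hatq}--\eqref{eq:hatp}. However, there is a genuine gap in how you treat the $z\to0$ data of $\Psi_\alpha$. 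You take the ``column/row vectors built from $\Psi_\alpha^{(0)}(0)$'' as given, with $c(\alpha)$ simply ``coming from'' an entry of $\Psi_\alpha^{(0)}(0)$ (incidentally it is the relevant entry of the \emph{middle} column, not the $(3,1)$-entry, and for $\alpha=0$ a combination of the first and third columns). But these limits are not read off from the RH problem: to get \eqref{eq:q01-asy-alpha}--\eqref{eq:q03-asy-alpha} one must prove that $\lim_{z\to0}z^{\alpha}\bigl(p_3(z),p_3'(z),p_3''(z)\bigr)^t=(0,0,-\Gamma(\alpha))^t$ for $\alpha\in\mathbb N$, the analogous statement through $q(z)$ of \eqref{def:q} for $\alpha\notin\mathbb Z$, and through $p_1-p_2$ for $\alpha=0$; and to get \eqref{eq:p01-asy-alpha}--\eqref{eq:p03-asy-alpha} one needs $\lim_{z\to0}\Psi_\alpha(z)^{-t}(0,z^{-\alpha},0)^t=\frac{\sqrt{2\pi}e^{-\rho^2/6}}{\Gamma(\alpha+1)}(1,\rho,-\alpha)^t$, which rests on the Wronskian values $\mathcal{W}(0;\rho)=-(2\pi)^{3/2}e^{-\rho^2/2}/\Gamma(\alpha)$, $\mathcal{W}'(0;\rho)$ and on $p_1'p_2''-p_1''p_2'$ via the ODE \eqref{eq:Pearcey1}. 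This is precisely Proposition \ref{prop:pk}, proved in Appendix \ref{sec:appen-prop} by differential equations in $\rho$ and Hankel-loop integrals; without it neither the prefactor $c(\alpha)$ of \eqref{def:c-alpha} nor the $(1,\rho,-\alpha)$ structure underlying $\hat p_{0,k}$ is available, so all the explicit leading constants in \eqref{eq:q01-asy-alpha}--\eqref{eq:p03-asy-alpha} remain underived in your plan.

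A second, smaller defect is your mechanism for $p_{1,k}(s)=\Boh(s^{\alpha})$ as $s\to0^+$. In \eqref{eq:tildeP0} the $(1,2)$-entry of the combined right factor is $\eta(z/s)z^{-\alpha}+\Boh(1)$ and the $(2,2)$-entry is $z^{-\alpha}$, so after the logarithm of $\eta$ cancels the $-\frac{\gamma}{2\pi i}\ln(z-s)$ of \eqref{eq:X-near-s}, the second column of $\Phi_0^{(1)}(s)$ is of order $s^{-\alpha}$ when $\alpha>0$, not $s^{\alpha}$; moreover $p_{1,k}=-\frac{\gamma}{2\pi i}\Phi_0^{(1)}(s)^{-t}e_2$ is built from the cofactors of the second column, i.e.\ from the first and third columns of $\Phi_0^{(1)}(s)$ only, so the $(1,2)$-entry never enters. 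The estimate should instead be obtained from the facts that those two columns are $\Boh(1)$ (as in your argument for $q_{1,k}$) while $|\det\Phi_0^{(1)}(s)|=s^{-\alpha}$, which follows from $\det\Phi(z)=z^{-\alpha}$ and \eqref{eq: Phi-expand-s}; Cramer's rule then gives $\Phi_0^{(1)}(s)^{-t}e_2=\Boh(s^{\alpha})$ and hence the last bound in \eqref{eq: pk-qk-s=0}. With these two repairs the remainder of your outline matches the paper's proof.
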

The rest of this section is devoted to the proof of Proposition \ref{prop:asypq}.


\subsection*{Proofs of \eqref{eq:q01-asy-alpha}--\eqref{eq:p03-asy-alpha}}
In view of \eqref{def: q-0} and \eqref{def:p0i}, it is readily seen that the functions $p_{0,k}$ and $q_{0,k}$, $k=1,2,3$, are related to the function $\Phi_0^{(0)}(s) $. From \eqref{eq:X-near-0-2}, \eqref{eq:X-near-0-3} and \eqref{eq: Phi-expand-0}, it follows that
\begin{equation}\label{eq:phi00}
\Phi_0^{(0)}(s)  = \lim_{z \to 0, \atop z \in \mathtt{II}} \Phi(z) \begin{cases}
\begin{pmatrix}
	1 & -\frac{\gamma - 1}{2\pi i} z^\alpha \ln z  & 0 \\
	0 & z^{\alpha} & 0 \\
	0&  -\frac{e^{\alpha \pi i}}{2\pi i} z^\alpha \ln z   & 1
	\end{pmatrix}, & \textrm{if } \alpha \in \mathbb{N} \cup \{0\}, \\
	\begin{pmatrix}
	                        1 & -c_+ z^\alpha & 0 \\
	                        0 & z^{\alpha} & 0 \\
	                        0& -\frac{z^\alpha }{2 i \sin(\alpha \pi)}
	                          & 1
	                       \end{pmatrix}, & \textrm{if } \alpha \notin \mathbb{Z}.
\end{cases}
\end{equation}
By tracing back the transformations $\Phi \mapsto T \mapsto S \mapsto R $ in \eqref{def:PhiToT}, \eqref{def:TtoS}, \eqref{def:StoR}, we observe that
\begin{equation}\label{eq:phisz2}
\Phi(sz)=\frac{i}{\sqrt{3}}s^{-\frac{\alpha}{3}} C_\Psi \diag\left(s^{\frac13},1, s^{-\frac13}\right)R(z)P^{(0)}(z) e^{\Theta(sz)}, \qquad z\in D(0, \delta) \cap \mathtt{II}.
\end{equation}
This, together with \eqref{eq:phi00}, \eqref{def:cpm} and $P^{(0)}(z)$ in \eqref{eq:P0Solution}, implies that
\begin{align}
\Phi_0^{(0)}(s) & =  (1-\gamma)^{\frac13} \frac{i}{\sqrt{3}}  C_\Psi  \diag\left(s^{\frac13},1, s^{-\frac13}\right)R(0) E_0(0)  \diag(s^{-\frac{1}{3}}, 1,  s^{\frac{1}{3}}) \nonumber \\
& \quad \times \lim_{z \to 0, \atop \frac{\pi}{4}<\arg z<\frac{3\pi}{4}  } \Psi_{\alpha}( z)
\begin{cases}
\begin{pmatrix}
	 \frac{1}{1-\gamma} & \frac{  z^\alpha}{2\pi i}  \ln z  & 0 \\
	0 &   z^{\alpha} & 0 \\
	0&  -\frac{e^{\alpha \pi i}}{2\pi i}   z^\alpha \ln z  & 1
	\end{pmatrix}, & \textrm{if } \alpha \in \mathbb{N} \cup \{0\}, \medskip \\
\begin{pmatrix}
	 \frac{1}{1-\gamma}  & \frac{e^{-\alpha \pi i}   z^{\alpha}}{2 i\sin(\alpha \pi)} &0 \\
	0 &    z^{\alpha} & 0 \\
	0& -\frac{   z^{\alpha} }{2 i\sin(\alpha \pi)}  & 1
	\end{pmatrix} , & \textrm{if } \alpha \notin \mathbb{Z}.
\end{cases} \label{eq:Phi-0-0}
\end{align}
To study the above limit, we recall from \eqref{Psi-in-II} that, if $\pi/4 < \arg z < 3 \pi /4$, $\Psi_{\alpha}(z)$ admits an explicit representation in terms of $p_k(z)$, $k=1,2,3$, defined in \eqref{def:pkint}, and the functions $p_1(z)$ and $p_2(z)$ are entire functions.
We then introduce the following auxiliary functions:
\begin{equation}
\mathcal{W}(z):=\mathcal{W}(z;\rho)=\mathcal{W}[p_1, p_2] (z) = p_1(z) p_2'(z) -p_1'(z) p_2(z), \label{def:wronskian-def}
\end{equation}
and
\begin{equation}\label{def:q}
q(z):=p_3(z)-\frac{1}{2 i\sin (\alpha\pi)}p_1(z)+\frac{e^{-\pi i\alpha}}{2 i\sin (\alpha\pi)}p_2(z), \qquad \alpha \notin \mathbb{Z}, \qquad z\in \mathbb{C}\setminus i\mathbb{R}_-.
\end{equation}
The proposition below will play an important role for us to determine the limit in \eqref{eq:Phi-0-0}. Since the proof is lengthy, it is postponed to Appendix \ref{sec:appen-prop}.
\begin{proposition}\label{prop:pk}
Let $p_k(z)$, $k=1,2,3$, $\mathcal{W}(z)$ 
and $q(z)$ be functions defined in \eqref{def:pkint}, \eqref{def:wronskian-def} 
and \eqref{def:q}, respectively. We have
\begin{equation}\label{eq:wronskian0}
\mathcal{W}(z) = - \frac{(2\pi )^{3/2} }{e^{\rho^2/2}} \sum_{k=0}^\infty w_k z^k, \qquad \alpha > -1,
\end{equation}
for some constants $w_k$, $k=0,1,2,\ldots$,
with \begin{equation}\label{eq:w0w1}
w_0 = \frac{1}{\Gamma(\alpha)}, \qquad w_1 = \frac{\rho}{\Gamma(\alpha+1)};
\end{equation}
if $\alpha=0$,
\begin{equation} \label{eq:p-alpha-exp}
 p_1(z)-p_2(z) = \sum_{k=0}^{\infty}b_k z^{2+k},
\end{equation}
for some constants $b_k$, $k=0,1,2,\ldots$,
with \begin{equation}\label{eq:b0}
b_0 = \pi i;
\end{equation}
and
\begin{equation}\label{eq:qExpand}
q(z)=\sum_{k=0}^{\infty}c_k z^{2+k-\alpha}, \qquad \alpha \notin \mathbb{Z},
\end{equation}
for some constants $c_k$, $k=0,1,2,\ldots$,
with \begin{equation}\label{eq:c0}
c_0 =-\frac{\pi }{\sin(\alpha\pi)\Gamma(3-\alpha)}.
\end{equation}
Moreover, we have
\begin{equation}\label{eq:ddp3}
\lim_{z \to 0} z^{\alpha}p_3''(z)=-\Gamma(\alpha), \qquad \alpha\in \mathbb{N}.
\end{equation}

\end{proposition}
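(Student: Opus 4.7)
The plan is to treat the four assertions of Proposition~\ref{prop:pk} separately, using the integral representations \eqref{def:pkint} together with the third-order ODE \eqref{eq:Pearcey1} satisfied by each $p_k$. For \eqref{eq:wronskian0}--\eqref{eq:w0w1}, since $p_1,p_2$ are entire, so is $\mathcal{W}=p_1p_2'-p_1'p_2$, and it admits a Taylor series $\sum w_k z^k$ at the origin. Differentiating $\mathcal{W}$ repeatedly and eliminating each $p_j'''$ using \eqref{eq:Pearcey1} produces the linear third-order ODE
\begin{equation*}
z^{2}\mathcal{W}'''+2\alpha z\mathcal{W}''-\bigl(\rho z+\alpha(1-\alpha)\bigr)\mathcal{W}'+\bigl(z+\rho(1-\alpha)\bigr)\mathcal{W}=0,
\end{equation*}
whose $z^{0}$-order substitution gives $\alpha w_1=\rho w_0$, i.e.\ the relation in \eqref{eq:w0w1} once $w_0$ is known. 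For $w_0=\mathcal{W}(0)$ I would split the product $p_1(0)p_2'(0)-p_1'(0)p_2(0)$ into products of single integrals on $\Gamma_1$ and $\Gamma_2$, apply the substitution $u=1/t$ (which converts each loop through the origin into a vertical line in the $u$-plane), deform each line to the imaginary axis, and reduce the result to standard Gaussian integrals expressible as products of $\sin$, $\cos$ and $\Gamma$ functions. A combination of Euler reflection $\Gamma(\alpha)\Gamma(1-\alpha)=\pi/\sin(\pi\alpha)$ with Legendre duplication $\Gamma(1-\alpha)=2^{-\alpha}\pi^{-1/2}\Gamma((1-\alpha)/2)\Gamma(1-\alpha/2)$ then collapses the answer to $\mathcal{W}(0)=-(2\pi)^{3/2}e^{-\rho^{2}/2}/\Gamma(\alpha)$, confirming \eqref{eq:w0w1}.

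For \eqref{eq:p-alpha-exp}--\eqref{eq:b0}, the same $u=1/t$ substitution converts the integrals for $p_k^{(j)}(0)$ ($k=1,2$, $j=0,1,2$) into contour integrals of $u^{\nu}e^{\rho u+u^{2}/2}$ on the two vertical lines $\Re u=\pm c$. Closing them by a large rectangle at $\pm i\infty$ (whose horizontal sides vanish by Gaussian decay) either gives $0$ when $\nu\ge 0$ (no pole inside) or captures the residue $e^{0}=1$ of $u^{-1}e^{\rho u+u^{2}/2}$ at the origin when $\nu=-1$. The first case yields $p_1(0)=p_2(0)$ and $p_1'(0)=p_2'(0)$, proving $p_1-p_2=O(z^{2})$; the second produces $(p_1-p_2)''(0)=2\pi i$, hence $b_{0}=i\pi$. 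For \eqref{eq:qExpand}--\eqref{eq:c0} (with $\alpha\notin\mathbb{Z}$), the indicial exponents $0,1,2-\alpha$ of \eqref{eq:Pearcey1} at the origin differ by non-integers, so Frobenius theory provides independent solutions with leading behaviours $1$, $z$ and $z^{2-\alpha}(1+O(z))$. The entire functions $p_1,p_2$ span the first two, while the combination $q$ in \eqref{def:q} is chosen precisely so that the entire contributions of $p_3$ cancel, forcing $q(z)=c_{0}z^{2-\alpha}(1+O(z))$. To pin down $c_{0}$ I would rescale $s=zt$ in the integral for $p_3$, giving $p_3(z)\sim e^{-\alpha\pi i}z^{2-\alpha}\int_{\mathcal{H}}s^{\alpha-3}e^{s}\,ds$ as $z\to 0$, where the rescaled contour $\mathcal{H}$ degenerates to a Hankel contour wrapping the negative real axis; the Hankel formula $\int_{\mathcal{H}}s^{-\nu}e^{s}\,ds=2\pi i/\Gamma(\nu)$ then gives $c_{0}=-\pi/(\sin(\alpha\pi)\,\Gamma(3-\alpha))$ after combining with the $-1/(2i\sin(\alpha\pi))$ prefactor in \eqref{def:q}.

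Finally, for \eqref{eq:ddp3} with $\alpha\in\mathbb{N}$, the three indicial roots collide or differ by integers, so $p_3$ acquires logarithmic corrections near the origin and the constructive argument of the previous paragraph breaks down. I would instead write $p_3''(z)=e^{-\alpha\pi i}\int_{\Gamma_3}t^{\alpha-1}e^{zt+\rho/t+1/(2t^{2})}dt$ and isolate the $t\to\infty$ tail of $\Gamma_3$, where $\rho/t$ and $1/(2t^{2})$ are negligible. Applying the scaling $s=zt$ to this tail alone produces a factor $z^{-\alpha}$ times a Hankel-type integral $\int s^{\alpha-1}e^{s}\,ds$ that can be recast via Hankel's formula for $1/\Gamma$; the remaining finite part of the contour contributes a term that is analytic in $z$ and hence subdominant under $z^{\alpha}(\cdot)''(z)$. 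Tracking the prefactors carefully yields $z^{\alpha}p_3''(z)\to-\Gamma(\alpha)$ as $z\to 0$, proving \eqref{eq:ddp3}. The main obstacle will be this last asymptotic analysis in the resonant regime: the splitting of $\Gamma_3$ into a bounded analytic part and a singular infinite tail has to be done so that the matching between the two Hankel pieces and the original contour is faithful, and the branch choices for $t^{\alpha}$ across $\Gamma_1,\ldots,\Gamma_4$ must be tracked consistently throughout; by contrast, the computations for parts (1) and (2) reduce to Gaussian and residue integrals and are essentially routine once orientations are fixed.
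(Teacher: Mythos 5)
Your overall architecture (ODEs plus contour manipulations of the integral representations \eqref{def:pkint}) is in the spirit of the paper, and your rectangle/residue treatment of $p_1-p_2$ at $\alpha=0$ is correct and is in fact a nice $\rho$-uniform alternative to the paper's argument. However, the crux of \eqref{eq:w0w1} — the value of $w_0$ for general $\rho$ — is not established by your method. After $u=1/t$ the quantities $p_i^{(j)}(0)$ become integrals of $u^{\nu}e^{\rho u+u^2/2}$ over vertical lines; for $\rho\neq 0$ these are parabolic cylinder functions (completing the square shifts the contour against a non-integer power of $u$), and they do not reduce to "standard Gaussian integrals expressible as products of $\sin$, $\cos$ and $\Gamma$ functions". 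So the asserted collapse to $\mathcal{W}(0)=-(2\pi)^{3/2}e^{-\rho^2/2}/\Gamma(\alpha)$ via reflection and duplication is unjustified except at $\rho=0$; the factor $e^{-\rho^2/2}$ has to come from somewhere, e.g.\ the known Wronskian identities for parabolic cylinder functions, or (as the paper does) from an explicit computation at $\rho=0$ combined with the first-order equations in $\rho$, $\tfrac{\ud}{\ud\rho}\mathcal{W}(0;\rho)=-\rho\,\mathcal{W}(0;\rho)$ and its companion for $\mathcal{W}'(0;\rho)$, obtained from $\tfrac{\ud}{\ud\rho}p_i^{(k+1)}(0)=p_i^{(k)}(0)$ and integration by parts. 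In addition, your third-order ODE for $\mathcal{W}$ is correct, but its $z^0$ coefficient reads $(1-\alpha)\bigl(\rho w_0-\alpha w_1\bigr)=0$, which degenerates at $\alpha=0$ and $\alpha=1$; the case $\alpha=0$ (where $w_0=0$, $w_1=\rho$) is genuinely needed later (cf.\ \eqref{eq:p1'p2''-2}), so it must be supplied separately or by a continuity argument you do not mention.

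There are also gaps in the last two parts. For \eqref{eq:qExpand}--\eqref{eq:c0}, the intermediate claim $p_3(z)\sim e^{-\alpha\pi i}z^{2-\alpha}\int_{\mathcal{H}}s^{\alpha-3}e^{s}\ud s$ is false for $-1<\alpha<2$, where $p_3(0)$ is finite and generically nonzero; the $z^{2-\alpha}$ behaviour belongs to the specific combination $q$, and the fact that $q$ has no entire component — which you simply assert ("chosen precisely so that the entire contributions cancel") — is exactly what must be proved. The paper does this by fixing the branches of $t^{\alpha}$ in the three integrals so that the contributions near $t=0$ cancel, merging the contours into a single Hankel loop as in \eqref{eq:qInt-1}, and only then rescaling $zt\to t$; the Frobenius classification alone cannot replace this bookkeeping. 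For \eqref{eq:ddp3}, after rescaling the contour is a ray running from infinity into the origin, not a closed Hankel loop, and for $\alpha\in\mathbb{N}$ "Hankel's formula for $1/\Gamma$" would give $2\pi i/\Gamma(1-\alpha)=0$ rather than $-\Gamma(\alpha)$; the correct evaluation is the elementary Gamma integral $e^{-\alpha\pi i}\int_{-\infty}^{0}t^{\alpha-1}e^{t}\ud t=-\Gamma(\alpha)$, and no contour splitting is needed — rescaling the whole integral and dominated convergence suffice, as in \eqref{eq:dp3Int}--\eqref{eq:dp3expand}.
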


With the aid of the above proposition, we are now ready to derive the asymptotics of $\hat q_{0,k}$ and $\hat p_{0,k}$, $k=1,2,3$. To show  \eqref{eq:q01-asy-alpha}--\eqref{eq:q03-asy-alpha}, we first have from \eqref{Psi-in-II}, \eqref{eq:p-alpha-exp} and \eqref{eq:b0} that, if $\alpha = 0$,
\begin{equation}
\lim_{z \to 0, \atop \frac{\pi}{4}<\arg z<\frac{3\pi}{4}  } \frac{\sqrt{2 \pi}}{e^{\rho^2/6}}  \Psi_{0}( z)
\begin{pmatrix}
	 \frac{1}{1-\gamma} & \frac{ 1}{2\pi i}  \ln z  & 0 \\
	0 &   1 & 0 \\
	0&  -\frac{1}{2\pi i}    \ln z  & 1
	\end{pmatrix} \begin{pmatrix}
      \gamma-1\\
    0 \\
    1
    \end{pmatrix} = \begin{pmatrix}
	p_1(0)-p_2(0) \\
	p_1'(0) -p_2'(0) \\
	p_1''(0)-p_2''(0)
	\end{pmatrix}
	 =\begin{pmatrix}
	0 \\
	0  \\
	2 \pi i
	\end{pmatrix}.
\end{equation}
If $\alpha \in \mathbb{N}$, we note from the fact $p_1(z)$ and $p_2(z)$ are entire functions that
$$\lim_{z\to 0} (\ln z) z^{\alpha}  p_k(z)=0,\qquad k=1,2.$$
Thus, it is readily seen from 
\eqref{Psi-in-II} and \eqref{eq:ddp3}  that
\begin{multline}
 \lim_{z \to 0, \atop \frac{\pi}{4}<\arg z<\frac{3\pi}{4}  } \frac{\sqrt{2 \pi}}{e^{\rho^2/6}} \Psi_{\alpha}( z)
\begin{pmatrix}
	 \frac{1}{1-\gamma} & \frac{  z^\alpha}{2\pi i}  \ln z  & 0 \\
	0 &   z^{\alpha} & 0 \\
	0&  -\frac{e^{\alpha \pi i}}{2\pi i}   z^\alpha \ln z  & 1
	\end{pmatrix} \begin{pmatrix}
      0\\
    1 \\
    0
    \end{pmatrix}
    \\
    = \lim_{z \to 0, \atop \frac{\pi}{4}<\arg z<\frac{3\pi}{4}  }
    \begin{pmatrix}
    z^\alpha p_3(z) \smallskip \\
    z^\alpha p_3'(z)\smallskip \\
    z^\alpha p_3''(z)
    \end{pmatrix}  = \begin{pmatrix}
	0 \\
	0  \\
	- \Gamma(\alpha)
	\end{pmatrix} .
\end{multline}
If $\alpha \notin \mathbb{Z}$, we obtain from  \eqref{eq:qExpand} and \eqref{eq:c0}  that 
\begin{multline}
\lim_{z \to 0, \atop \frac{\pi}{4}<\arg z<\frac{3\pi}{4}  } \frac{\sqrt{2 \pi}}{e^{\rho^2/6}} \Psi_{\alpha}( z)
\begin{pmatrix}
	 \frac{1}{1-\gamma}  & \frac{e^{-\alpha \pi i}   z^{\alpha}}{2 i\sin(\alpha \pi)} &0 \\
	0 &    z^{\alpha} & 0 \\
	0& -\frac{   z^{\alpha} }{2 i\sin(\alpha \pi)}  & 1
	\end{pmatrix} \begin{pmatrix}
      0\\
    1 \\
    0
    \end{pmatrix}  \\
     = \lim_{z \to 0, \atop \frac{\pi}{4}<\arg z<\frac{3\pi}{4}  } \begin{pmatrix}
z^{\alpha} q(z)
\\
z^{\alpha} q'(z)
\\
z^{\alpha} q''(z)
\end{pmatrix}  = \begin{pmatrix}
	0 \\
	0  \\
	- \Gamma(\alpha)
	\end{pmatrix} .
\end{multline}
Therefore, a combination of the above three formulas, \eqref{def: q-0} and \eqref{eq:Phi-0-0} yields
\begin{equation}
\begin{pmatrix}
    q_{0,1}(s) \\
    q_{0,2}(s) \\
    q_{0,3}(s)
    \end{pmatrix}  =  (1-\gamma)^{\frac{1}{3}} \frac{i e^{\rho^2/6}}{\sqrt{6 \pi}}  C_\Psi  \diag\left(s^{\frac13},1, s^{-\frac13}\right)R(0) E_0(0)  \diag(s^{-\frac{1}{3}}, 1,  s^{\frac{1}{3}})  \begin{pmatrix}
	0 \\
	0  \\
	c(\alpha)
	\end{pmatrix},  \label{eq:q0-limits}
\end{equation}
where the function $c(\alpha)$ is defined in \eqref{def:c-alpha}.

We next substitute \eqref{eq:hatq} into \eqref{eq:q0-limits} and obtain
from \eqref{eq:R-R-hat} that
\begin{equation}\label{eq:hatq0exac}
\begin{pmatrix}
      \hat{q}_{0,1}(s) \\
      \hat{q}_{0,2}(s) \\
    \hat{q}_{0,3}(s)
    \end{pmatrix} =  (1-\gamma)^{\frac{1}{3}} \frac{i e^{\rho^2/6}}{\sqrt{6 \pi}}   \diag\left(s^{\frac13},1, s^{-\frac13}\right)\widehat{R}(0) C_N^{-1} E_0(0)  \diag(s^{-\frac{1}{3}}, 1,  s^{\frac{1}{3}})  \begin{pmatrix}
	0 \\
	0  \\
	c(\alpha)
	\end{pmatrix} .
\end{equation}
%
%
From $E_0(0)$ in \eqref{eq:E0zero}, it is easily seen that
\begin{align}
& \diag\left(s^{\frac13},1, s^{-\frac13}\right) \widehat{R}(0) C_N^{-1} E_0(0) \diag(s^{-\frac{1}{3}}, 1,  s^{\frac{1}{3}}) \begin{pmatrix}
	0 \\
	0  \\
	c(\alpha)
	\end{pmatrix}   \nonumber \\
& = -\sqrt{3}\,i  \diag\left(s^{\frac13},1, s^{-\frac13}\right) \Big( C_N^t + \widehat{R}_1(0) C_N^t s^{-\frac{1}{3}}  + \widehat{R}_2(0) C_N^t s^{-\frac{2}{3}} + \Boh(s^{-1}) \Big)
\nonumber
\\
& \quad \times
\diag\left(s^{-\frac13},1, s^{\frac13}\right) \begin{pmatrix}
	0 \\
	0  \\
	c(\alpha)
	\end{pmatrix}  \nonumber \\
& = -\sqrt{3}  \, i \, c(\alpha)  \begin{pmatrix}
\Big( \widehat{R}_1(0) \Big)_{13} s^{\frac{1}{3}}+ \Big( \widehat{R}_2(0) \Big)_{13} + \Boh(s^{-\frac{1}{3}})   \\
 \Big(  \widehat{R}_1(0) \Big)_{23} + \Big(  \widehat{R}_2(0) \Big)_{23} s^{-\frac{1}{3}}  + \Boh(s^{-\frac{2}{3}}) \\
 1 + \Big( \widehat{R}_1(0) \Big)_{33} s^{-\frac{1}{3}} + \Big( \widehat{R}_2(0) \Big)_{33} s^{-\frac{2}{3}}   + \Boh(s^{-1})
\end{pmatrix}.
\end{align}
Inserting the above formula into \eqref{eq:hatq0exac}, we finally arrive at \eqref{eq:q01-asy-alpha}--\eqref{eq:q03-asy-alpha} from $\widehat{R}_1(0)$ in \eqref{eq:R10expl}, the first row and the last column of  $\widehat{R}_2(0)$  in \eqref{eq:R2011}--\eqref{eq:R2033} and the fact that
\begin{align}\label{eq:estvtheta}
\vartheta(s) = \psi(s)+\Boh(s^{-\frac13}),
\end{align}
where  $\vartheta(s)$ and $\psi(s)$ are defined in \eqref{def:vtheta} and \eqref{def:psi}, respectively.


The asymptotics of $\hat p_{0,k}(s)$, $k=1,2,3$, in \eqref{eq:p01-asy-alpha}--\eqref{eq:p03-asy-alpha} can be proved in a similar manner. From the definitions of $p_{0,k}(s)$ in \eqref{def:p0i}, we need to study $\Phi_0^{(0)}(s)^{-t}$, which, by \eqref{eq:Phi-0-0}, is
\begin{align}
\Phi_0^{(0)}(s)^{-t} & = -\frac{\sqrt{3} \, i}{(1-\gamma)^{1/3} }  C_\Psi^{-t} \diag\left(s^{-\frac13},1, s^{\frac13}\right)R(0)^{-t} E_0(0)^{-t}  \diag(s^{\frac{1}{3}}, 1,  s^{-\frac{1}{3}}) \nonumber \\
& \quad \times \lim_{z \to 0, \atop \frac{\pi}{4}<\arg z<\frac{3\pi}{4} } \Psi_{\alpha}( z)^{-t}
\begin{cases}
\begin{pmatrix}
	1-\gamma & 0  & 0 \\
	-\frac{  1-\gamma}{2\pi i}  \ln z &   z^{-\alpha} & \frac{e^{\alpha \pi i}}{2\pi i}   \ln z  \\
	0&  0 & 1
	\end{pmatrix}, & \textrm{if } \alpha \in \mathbb{N} \cup \{0\}, \medskip \\
\begin{pmatrix}
	 1-\gamma  &0 &0 \\
	 -\frac{(1-\gamma) e^{-\alpha \pi i}   }{2 i\sin(\alpha \pi)} &    z^{-\alpha} & \frac{   1 }{2 i\sin(\alpha \pi)}  \\
	0& 0 & 1
	\end{pmatrix} , & \textrm{if } \alpha \notin \mathbb{Z}. \label{eq:Phi-0-0-t}
\end{cases}
\end{align}
Recall that $\det \Psi_\alpha(z) = z^{-\alpha}$, we then obtain from \eqref{Psi-in-II} and the definition of $\mathcal{W}(z)$ in \eqref{def:wronskian-def} that
\begin{align}
\Psi_{\alpha}( z)^{-t} \begin{pmatrix}
0 \\ z^{-\alpha} \\0
\end{pmatrix} & = \frac{e^{\rho^2/3}  }{2\pi } \begin{pmatrix}
 p_1'(z) p_2''(z) - p_1''(z) p_2'(z)  \\
 - \Big(p_1(z) p_2''(z) - p_1''(z) p_2(z) \Big)   \\
 p_1(z) p_2'(z) -p_1'(z) p_2(z)
\end{pmatrix} \nonumber \\
& = \frac{e^{\rho^2/3}  }{2\pi }  \begin{pmatrix}
 p_1'(z) p_2''(z) - p_1''(z) p_2'(z)   \\
 -\mathcal{W}'(z)   \\
 \mathcal{W}(z)
\end{pmatrix} .
\end{align}
Using \eqref{eq:wronskian0}, \eqref{eq:w0w1}, \eqref{eq:p1'p2''-1} and \eqref{eq:p1'p2''-2}, we further have
\begin{align}
\lim_{z \to 0} \Psi_{\alpha}( z)^{-t} \begin{pmatrix}
0 \\ z^{-\alpha} \\0
\end{pmatrix}  = \frac{\sqrt{2\pi}  e^{-\rho^2/6}}{\Gamma(\alpha + 1)} \begin{pmatrix}
1  \\
 \rho  \\
-\alpha
\end{pmatrix}.
\end{align}
A combination of \eqref{def:p0i}, \eqref{eq:Phi-0-0-t} and the above formula gives us that
\begin{multline}
 \begin{pmatrix}
      p_{0,1}(s) \\
      p_{0,2}(s) \\
    p_{0,3}(s)
    \end{pmatrix}  =  \frac{\sqrt{6 \pi}  e^{-\rho^2/6} }{ i(1-\gamma)^{1/3} c(\alpha)}   C_\Psi^{-t} \diag\left(s^{-\frac13},1, s^{\frac13}\right)R(0)^{-t} E_0(0)^{-t}
    \\
      \times \diag(s^{\frac{1}{3}}, 1,  s^{-\frac{1}{3}}) \begin{pmatrix}
1   \\
 \rho   \\
 -\alpha
\end{pmatrix},
\end{multline}
where $c(\alpha)$ is given in \eqref{def:c-alpha}. By inserting \eqref{eq:hatp} into the above formula, we see from \eqref{eq:R-R-hat} that
\begin{multline}\label{eq:hatp0exac}
\begin{pmatrix}
      \hat{p}_{0,1}(s) \\
      \hat{p}_{0,2}(s) \\
    \hat{p}_{0,3}(s)
    \end{pmatrix} =  \frac{\sqrt{6 \pi}  \, e^{-\rho^2/6} }{ i(1-\gamma)^{1/3} c(\alpha)} \diag\left(s^{-\frac13},1, s^{\frac13}\right) \widehat{R}(0)^{-t} C_N^t E_0(0)^{-t}
    \\
    \times \diag(s^{\frac{1}{3}}, 1,  s^{-\frac{1}{3}})
    \begin{pmatrix}
   1   \\
 \rho   \\
 -\alpha
\end{pmatrix}.
\end{multline}
Since
\begin{align*}
\widehat{R}(0)^{-t} & = \left( I+ \frac{\widehat{R}_1(0)}{s^{1/3}} + \frac{\widehat{R}_2(0)}{s^{2/3}} + \Boh(s^{-1})\right)^{-t}
\\
& = I - \frac{\widehat{R}_1(0)^{t}}{s^{1/3}} + \frac{( \widehat{R}_1(0)^2-\widehat{R}_2(0))^t}{s^{2/3}} + \Boh(s^{-1}),\quad s\to +\infty,
\end{align*}
it follows from the explicit formula of $E_0(0)$ in \eqref{eq:E0zero} that
\begin{align}\label{eq:asyhatpointer}
  & \diag \left(s^{-\frac13},1, s^{\frac13}\right) \widehat{R}(0)^{-t} C_N^t  E_0(0)^{-t}  \diag(s^{\frac{1}{3}}, 1,  s^{-\frac{1}{3}}) \nonumber \\
  & =  \frac{i}{\sqrt{3}}   \diag\left(s^{-\frac13},1, s^{\frac13}\right) \Big( C_N^{-1} -\frac{\widehat{R}_1(0)^t C_N^{-1}}{s^{1/3}}  + \frac{(\widehat{R}_1(0)^2 - \widehat{R}_2(0))^t C_N^{-1}}{s^{2/3}} + \Boh(s^{-1}) \Big) \nonumber \\
  & \quad \times \diag\left(s^{\frac13},1, s^{-\frac13}\right) = \frac{i}{\sqrt{3}} \mathcal{P}(s),
  \end{align}
where
\begin{align*}
\left(\mathcal{P}(s)\right)_{11}&=1 -\big( \widehat{R}_1(0) \big)_{11}s^{-\frac{1}{3}}+\big( \widehat{R}_1(0)^2 - \widehat{R}_2(0) \big)_{11} s^{-\frac{2}{3}}+\Boh(s^{-1}),
\\
(\mathcal{P}(s))_{12}&=\sqrt{3} \beta i s^{-\frac{1}{3}}  -  \big(\sqrt{3} \beta i \big( \widehat{R}_1(0) \big)_{11} +\big( \widehat{R}_1(0) \big)_{21}\big) s^{-\frac{2}{3}}+\Boh(s^{-1}),
\\
(\mathcal{P}(s))_{13}&=- \left( \frac{3}{2} \beta^2 +  \frac{\sqrt{3} \beta i }{2} \right) s^{-\frac{2}{3}} + \Boh(s^{-1} ),
\\
(\mathcal{P}(s))_{21}&=- \big( \widehat{R}_1(0) \big)_{12}  +   \big( \widehat{R}_1(0)^2 - \widehat{R}_2(0) \big)_{12} s^{-\frac{1}{3}}+\Boh(s^{-\frac{2}{3}}),
\\
(\mathcal{P}(s))_{22}&=1  -\big(\sqrt{3} \beta i \big( \widehat{R}_1(0) \big)_{12} +\big( \widehat{R}_1(0) \big)_{22}\big) s^{-\frac{1}{3}}+  \Boh(s^{-\frac{2}{3}}),
\\
(\mathcal{P}(s))_{23}&= \sqrt{3} \beta i s^{-\frac{1}{3}}  + \Boh(s^{-\frac{2}{3}}),
\\
(\mathcal{P}(s))_{31}&= - \big( \widehat{R}_1(0) \big)_{13} s^{\frac{1}{3}}+ \big( \widehat{R}_1(0)^2 - \widehat{R}_2(0) \big)_{13} + \Boh(s^{-\frac{1}{3}}),
\\
(\mathcal{P}(s))_{32}&=- \sqrt{3} \beta i \big( \widehat{R}_1(0) \big)_{13}  - \big( \widehat{R}_1(0) \big)_{23}  + \Boh(s^{-\frac{1}{3}}),
\\
(\mathcal{P}(s))_{33}&=1 +  \Boh(s^{-\frac{1}{3}}).
\end{align*}
By further evaluating each entry in $\mathcal P$ with the aid of explicit formulas of $\widehat{R}_1(0)$, $ \widehat{R}_1(0)^2 $ and $\big(\widehat{R}_2(0))_{13}$ in \eqref{eq:R10expl}, \eqref{eq:R10square} and \eqref{eq:R2013}, we then obtain \eqref{eq:p01-asy-alpha}--\eqref{eq:p03-asy-alpha} from \eqref{eq:hatp0exac} and \eqref{eq:asyhatpointer}.


\subsection*{Proofs of \eqref{eq:q11-asy}--\eqref{eq:p13-asy}}
From the definitions of $p_{1,k} $ and $q_{1,k}$, $k=1,2,3$, in \eqref{def: q-1}, one needs to study $\Phi_0^{(1)}(s)$ which appears in the local behavior of $\Phi(z)$ as $z \to s$. By \eqref{eq: Phi-expand-s}, it follows that
\begin{equation}\label{eq:phi01s}
\Phi_0^{(1)}(s) = \lim_{z \to s, \atop z\in \mathtt{II}} \Phi(z)  \begin{pmatrix} 	1 & \frac{\gamma}{2\pi i} \ln(z-s) &0 \\
	0 & 1 & 0 \\
	0& 0 & 1
	\end{pmatrix}.
\end{equation}
To evaluate the above limit, we trace back the transformations $\Phi \mapsto T \mapsto S \mapsto R $ in \eqref{def:PhiToT}, \eqref{def:TtoS} \eqref{def:StoR}, and obtain
\begin{equation}\label{eq:phisz1}
\Phi(sz)=\frac{i}{\sqrt{3}}s^{-\frac{\alpha}{3}} C_\Psi \diag\left(s^{\frac13},1, s^{-\frac13}\right)R(z)P^{(1)}(z) e^{\Theta(sz)}, \qquad z\in D(1, \delta) \cap \mathtt{II}.
\end{equation}
In view of $P^{(1)}(z)$ in \eqref{eq:P1Solution}, we see from \eqref{eq:phi01s}, \eqref{eq:phisz1} and the fact $\theta_1(z)+ \theta_2(z) + \theta_3(z) = 0$ that
\begin{align*}
\Phi_0^{(1)}(s)& = \frac{i s^{-\frac{\alpha}{3}}}{\sqrt{3}}  C_\Psi \diag\left(s^{\frac13},1, s^{-\frac13}\right)R(1) E_1(1) \nonumber \\
& \quad \times \lim_{z \to 1, \atop z\in \mathtt{II}} \left[ e^{-\frac{1}{2}\theta_3(sz)} \diag \left( \Phi^{(\CHF)}(s^{\frac23}f(z))e^{-\frac\beta2 \pi i\sigma_3},e^{\frac{3}{2}\theta_3(sz)} \right)\begin{pmatrix}
	1 & \frac{\gamma}{2\pi i} \ln(sz-s) &0 \\
	0 & 1 & 0 \\
	0& 0 & 1
	\end{pmatrix} \right],
\end{align*}
where we have made use of fact that $R(z)$ and $E_1(z)$ are analytic at $z = 1$. Since $f(z)$ is analytic near $z=1$ and $f(1) = 0$ (see \eqref{eq:f'}),
we have from the local behavior of $\Phi^{(\CHF)}(z)$ near $z = 0$  in \eqref{eq:H-expand-2} that
\begin{align}
\Phi_0^{(1)}(s) &= \frac{i s^{-\alpha/3}}{\sqrt{3}} e^{-\frac{1}{2}\theta_3(s)}  C_\Psi \diag\left(s^{\frac13},1, s^{-\frac13}\right)R(1) E_1(1) \diag \left( \Upsilon_0 ,1 \right)  \nonumber \\
& \quad \times \begin{pmatrix}
	1 & \frac{\gamma}{2\pi i} \left( \ln s  - \ln (e^{-\frac{\pi i}{2}} s^{\frac{2}{3}} f'(1)) \right) & 0\\
	0 & 1 & 0 \\
	0& 0 & e^{\frac{3}{2}\theta_3(s)}
	\end{pmatrix} , \label{eq:Phi-0-1}
\end{align}
where $\Upsilon_0$ is the constant matrix given in \eqref{eq:H-expand-coeff-0}. We are now ready to derive the asymptotics of $\hat q_{1,k}$ and $\hat p_{1,k}$, $k=1,2,3$.

To show \eqref{eq:q11-asy}--\eqref{eq:q13-asy}, we see from \eqref{eq:Phi-0-1}, the definitions of $q_{1,k}$ in \eqref{def: q-1}, $R(z)$ in \eqref{eq:R-R-hat} and $\Upsilon_0$  in \eqref{eq:H-expand-coeff-0} that
 \begin{align}\label{eq:q1-expand}
\begin{pmatrix}
      q_{1,1}(s) \\
      q_{1,2}(s) \\
    q_{1,3}(s)
    \end{pmatrix}& =\Phi_0^{(1)}(s) \begin{pmatrix}
      1\\
    0 \\
   0
    \end{pmatrix}
    \nonumber \\
  &=\frac{i}{\sqrt{3}} s^{-\frac{\alpha}{3}}e^{-\frac{1}{2}\theta_3(s)} C_\Psi \diag\left(s^{\frac13},1, s^{-\frac13}\right) C_N \widehat{R}(1) C_N^{-1}E_1(1)\begin{pmatrix}
   \Gamma(1-\beta)e^{-\beta\pi i}  \\
      \Gamma(1+\beta) \\
0
    \end{pmatrix} .
\end{align}
On account of $E_1(1)$ in \eqref{eq:E-expand-coeff-1}, we note that
\begin{equation}
E_1(1)\begin{pmatrix}
   \Gamma(1-\beta)e^{-\beta\pi i}  \\
      \Gamma(1+\beta) \\
0
    \end{pmatrix} = e^{-\frac{2\beta\pi i}{3}}  C_N \begin{pmatrix}
\omega \Gamma(1-\beta)c_1(s)+\omega^2 \Gamma(1+\beta)  c_1(s)^{-1} \vspace{4pt}  \\
 \Gamma(1-\beta)c_1(s)+\Gamma(1+\beta)  c_1(s)^{-1} \vspace{4pt}  \\
\omega^2 \Gamma(1-\beta)c_1(s)+\omega \Gamma(1+\beta)  c_1(s)^{-1} \vspace{4pt}  \\    \end{pmatrix},
\end{equation}
with $c_1(s)$  given in \eqref{eq:c-1}. As $\Re \beta = 0$ and $\Re( \theta_1(s) - \theta_2(s) )= 0$, each entry in the above column matrix is a sum of two complex conjugates. This observation gives us
 \begin{equation}
E_1(1)\begin{pmatrix}
   \Gamma(1-\beta)e^{-\beta\pi i}  \\
      \Gamma(1+\beta) \\
0
    \end{pmatrix} =
2e^{-\frac{2}{3} \beta\pi i} C_N\Re \begin{pmatrix}
\omega \Gamma(1-\beta)c_1(s)\vspace{4pt}  \\
 \Gamma(1-\beta)c_1(s) \vspace{4pt}  \\
\omega^2 \Gamma(1-\beta)c_1(s)\vspace{4pt}  \\    \end{pmatrix}.
\end{equation}
Inserting the above formula into \eqref{eq:q1-expand}, it is readily seen from \eqref{eq:hatq} that
\begin{equation}
\begin{pmatrix}
      \hat{q}_{1,1}(s) \\
      \hat{q}_{1,2}(s) \\
    \hat{q}_{1,3}(s)
    \end{pmatrix} = \frac{i2e^ {-(\frac{1}{2}\theta_3(s)+\frac{2}{3} \beta\pi i)} } {\sqrt{3}} s^{-\frac{\alpha}{3}}  \diag\left(s^{\frac13},1, s^{-\frac13}\right)\widehat{R}(1) \, \Re \begin{pmatrix}
\omega \Gamma(1-\beta)c_1(s)\vspace{4pt}  \\
 \Gamma(1-\beta)c_1(s) \vspace{4pt}  \\
\omega^2 \Gamma(1-\beta)c_1(s)\vspace{4pt}  \\    \end{pmatrix}.
\end{equation}
The asymptotic formulas \eqref{eq:q11-asy}--\eqref{eq:q13-asy} then follow from the above formula, \eqref{eq:c1theta}, \eqref{eq:estvtheta} and asymptotics of $\widehat{R}(z)$ in \eqref{eq:R-R-hat}.

The asymptotics of $\hat p_{1,k}(s)$, $k=1,2,3$, in \eqref{eq:p11-asy}--\eqref{eq:p13-asy} can be derived in a similar manner.
We have from \eqref{def: q-1}, \eqref{eq:R-R-hat} and \eqref{eq:Phi-0-1} that
\begin{align}\label{eq:pasym1}
& \begin{pmatrix}
  p_{1,1}(s) \\
      p_{1,2}(s) \\
    p_{1,3}(s)
    \end{pmatrix} =-\frac{\gamma}{2\pi i}\Phi_0^{(1)}(s)^{-t}\begin{pmatrix}
      0\\
    1 \\
   0
    \end{pmatrix} \nonumber \\
     & = \frac{\sqrt{3}\gamma  e^{\theta_3(s)/2}}{2\pi } s^{\frac{\alpha}{3}} C_\Psi^{-t}  \diag\left(s^{-\frac13},1, s^{\frac13}\right) C_N^{-t} \widehat{R}(1)^{-t} C_N^tE_1(1)^{-t} \diag \left( \Upsilon_0^{-t} ,1 \right) \begin{pmatrix}
 0 \\
     1 \\
   0
    \end{pmatrix}.
\end{align}
Again, using $E_1(1)$ and $\Upsilon_0$  in \eqref{eq:E-expand-coeff-1} and \eqref{eq:H-expand-coeff-0}, it follows that
\begin{multline}\label{eq:pasym2}
E_1(1)^{-t} \diag \left( \Upsilon_0^{-t} ,1 \right) \begin{pmatrix}
 0 \\
     1 \\
   0
    \end{pmatrix}
    \\
= \frac{1}{3}e^{-\frac{1}{3}\beta\pi i} C_N^{-t} \begin{pmatrix}
-e^{-\frac{\pi }{3} i} \Gamma(1-\beta)c_1(s)+e^{\frac{\pi }{3}i} \Gamma(1+\beta)  c_1(s)^{-1}   \\
 \Gamma(1-\beta)c_1(s)-\Gamma(1+\beta)  c_1(s)^{-1}   \\
-e^{\frac{\pi }{3} i}  \Gamma(1-\beta)c_1(s)+e^{-\frac{\pi }{3} i}  \Gamma(1+\beta)  c_1(s)^{-1}  \end{pmatrix}.
\end{multline}
Inserting \eqref{eq:pasym2} into \eqref{eq:pasym1}, we obtain from \eqref{eq:hatp} that
\begin{equation}
\begin{pmatrix}
      \hat{p}_{1,1}(s) \\
      \hat{p}_{1,2}(s) \\
    \hat{p}_{1,3}(s)
    \end{pmatrix} =  - \frac{\gamma e^ {\frac{1}{2}\theta_3(s)-\frac{1}{3} \beta\pi i} } {\sqrt{3}\pi i }  s^{\frac{\alpha}{3}}   \diag\left(s^{-\frac13},1, s^{\frac13}\right) \widehat{R}(1)^{-t}  \, \Im  \begin{pmatrix}
-e^{-\frac{\pi }{3} i} \Gamma(1-\beta)c_1(s) \vspace{4pt}  \\
 \Gamma(1-\beta)c_1(s) \vspace{4pt}  \\
-e^{\frac{\pi }{3} i}  \Gamma(1-\beta)c_1(s) \vspace{4pt}  \\    \end{pmatrix}.
\end{equation}
A combination of \eqref{eq:c1theta}, \eqref{eq:estvtheta}, \eqref{eq:R-R-hat} and the above formula gives us \eqref{eq:p11-asy}--\eqref{eq:p13-asy}.

\subsection*{Proof of \eqref{eq: pk-qk-s=0}}
By \eqref{eq:phi01s}, we obtain from \eqref{def: q-1}, \eqref{def:tildeR}, \eqref{eq:esttildeR} and \eqref{eq:tildeP0} that, as $s\to 0^+$,
 \begin{align}\label{eq:q11zero}
\begin{pmatrix}
q_{1,1}(s)
\\
q_{1,2}(s)
\\
q_{1,3}(s)
\end{pmatrix}& =\Phi_0^{(1)}(s) \begin{pmatrix}
      1\\
    0 \\
   0
    \end{pmatrix}=\lim_{z \to s, \atop z\in \mathtt{II}} \Phi(z)  \begin{pmatrix} 	1 & \frac{\gamma}{2\pi i} \ln(z-s) &0 \\
	0 & 1 & 0 \\
	0& 0 & 1
	\end{pmatrix}\begin{pmatrix}
      1\\
    0 \\
   0
    \end{pmatrix}
    \nonumber \\
  &=\lim_{z \to s, \atop z\in \mathtt{II}} \widetilde R(z) \widetilde P^{(0)}(z)
  \begin{pmatrix}
      1\\
    0 \\
   0
    \end{pmatrix}=(I+\Boh(s^{\alpha+1}))(\Psi_\alpha^{(0)}(0)+\Boh(s))\begin{pmatrix}
      1\\
    0 \\
   0
    \end{pmatrix}
    \nonumber
    \\
    &=\frac{e^{\rho^2/6}}{\sqrt{2\pi }}(I+\Boh(s^{\alpha+1}))\begin{pmatrix}
      p_2(0)+\Boh(s)\\
    p_2'(0)+\Boh(s) \\
   p_2''(0)+\Boh(s)
    \end{pmatrix},
\end{align}
where $p_2(z)$ defined in \eqref{def:pkint} is an entire function. Hence, it is readily seen that, as $s\to 0^+$,
\begin{equation}
q_{1,k}(s)=\Boh(1), \qquad k=1,2,3.
\end{equation}
The other estimates in \eqref{eq: pk-qk-s=0} can be proved in a similar manner, and we omit the details here.

This completes the proof of Proposition \ref{prop:asypq}. \qed


\section{Proofs of main results}\label{sec:proofmainresults}
\subsection{Proof of Theorem \ref{thm:integralRep}}\label{sec:proofintegral}
By \eqref{eq:derivativeins} and \eqref{eq:H-F}, it follows that
\begin{equation}\label{eq:dF-H}
\frac{\ud}{\ud s} F(s;\gamma,\rho)=H(s;\gamma,\rho), \qquad s>0.
\end{equation}
We then obtain the integral representation of $F$ in \eqref{eq: integralRep} after integrating \eqref{eq:dF-H}.

To show the asymptotics of $H(s)$, we make use of \eqref{eq:H-F}, which establishes a relation between $H(s)$ and the function $\Phi_1^{(1)}(s)$ in  \eqref{eq: Phi-expand-s}. It is then easily seen from \eqref{eq:X-near-s} that
\begin{align}
H(s) & = -\frac{\gamma}{2\pi i} \Big(\Phi_1^{(1)}(s) \Big)_{21} \nonumber \\
&= -\frac{\gamma}{2\pi s i} \begin{pmatrix}
      0 & 1 & 0
    \end{pmatrix} \Phi_0^{(1)}(s)^{-1}  \lim_{z \to 1, \atop z\in \mathtt{II}} \Biggl[ \Phi(sz)  \begin{pmatrix}
	1 & \frac{\gamma}{2\pi i} \ln(sz-s) &0\\
	0 & 1 & 0 \\
	0& 0 & 1
	\end{pmatrix} \Biggr]' \begin{pmatrix}
      1\\
    0 \\
   0
    \end{pmatrix}, \label{Phi-10-formula}
\end{align}
where $'$ denotes the derivative with respect to $z$.

The small-$s$ asymptotics of $H(s)$ in \eqref{eq:Hasy0} then follows directly  from  \eqref{eq:tildeP0}, \eqref{def:tildeR}, \eqref{eq:esttildeR} and the above formula.

The large-$s$ asymptotics of $H(s)$ is more involved. We start with the calculation of the product outside the limit in \eqref{Phi-10-formula}.  From \eqref{eq:Phi-0-1}, we have
\begin{align}
&\begin{pmatrix}
      0 & 1 & 0
    \end{pmatrix} \Phi_0^{(1)}(s)^{-1} \nonumber \\
    &  = -\sqrt{3}i e^ {\frac{1}{2}\theta_3(s)}  s^{\frac{\alpha}{3}}
    \begin{pmatrix}
      0 & 1 & 0
    \end{pmatrix} \diag \left( \Upsilon_0^{-1} ,1 \right) E_1(1)^{-1} R(1)^{-1}    \diag\left(s^{-\frac13},1, s^{\frac13}\right)  C_\Psi^{-1}.
\end{align}
For the limit in \eqref{Phi-10-formula}, we observe from \eqref{eq:P1Solution} and \eqref{eq:phisz1} that
\begin{align}
& -\sqrt{3}i   s^{\frac{\alpha}{3}}   C_\Psi^{-1} \diag\left(s^{-\frac13},1, s^{\frac13}\right)
 \lim_{z \to 1, \atop z\in \mathtt{II}} \Biggl[ \Phi(sz)  \begin{pmatrix}
	1 & \frac{\gamma}{2\pi i} \ln(sz-s) & 0\\
	0 & 1 & 0 \\
	0& 0 & 1
	\end{pmatrix} \Biggr]' \begin{pmatrix}
      1\\
    0 \\
   0
    \end{pmatrix}
    \nonumber  \\
& =
\lim_{z \to 1, \atop z\in \mathtt{II}} \Biggl[ R(z) E_1(z) e^{-\frac{1}{2}\theta_3(sz)} \diag \left( \Phi^{(\CHF)}(s^{\frac23}f(z))e^{-\frac\beta2 \pi i\sigma_3}, 1 \right)
\nonumber \\
& \quad \times
\begin{pmatrix}
	1 & \frac{\gamma}{2\pi i} \ln(sz-s) & 0\\
	0 & 1 & 0 \\
	0& 0 & e^{\frac{3}{2} \theta_3(s)}
	\end{pmatrix}  \begin{pmatrix}
      1\\
    0 \\
   0
    \end{pmatrix} \Biggr] '   \nonumber \\
    & =
\lim_{z \to 1, \atop z\in \mathtt{II}} \Biggl[ R(z) E_1(z) e^{-\frac{1}{2}\theta_3(sz)} \diag \left( \Phi^{(\CHF)}(s^{\frac23}f(z))e^{-\frac\beta2 \pi i\sigma_3}, 1  \right) \begin{pmatrix}
      1\\
    0 \\
   0
    \end{pmatrix} \Biggr] '  .
\end{align}
By \eqref{eq:f'}, we can calculate the above limit with the aid of \eqref{eq:H-expand-2} and the result reads
\begin{multline}
e^{-\frac{1}{2}\theta_3(s)} \big(   R'(1) E_1(1) \diag \left( \Upsilon_0 , 1 \right) + R(1) E_1'(1) \diag \left( \Upsilon_0 , 1 \right)
\\
+ s^{\frac23} f'(1) R(1) E_1(1) \diag \left(  \Upsilon_0 \Upsilon_1 , 0 \right)  -\frac{s \theta_3'(s)}{2}  R(1) E_1(1) \diag \left( \Upsilon_0 , 1 \right)  \big) \begin{pmatrix}
      1\\
    0 \\
   0
    \end{pmatrix}. \label{Phi-10-limit-final}
\end{multline}
A combination of \eqref{Phi-10-formula}--\eqref{Phi-10-limit-final} gives us
\begin{align}
H(s) &= -\frac{\gamma}{2 \pi s i} \Big( \diag \left( \Upsilon_0^{-1} , 1 \right) E_1(1)^{-1} R(1)^{-1} R'(1) E_1(1) \diag \left( \Upsilon_0 ,  1 \right)
\nonumber \\
 & \quad + \diag \left( \Upsilon_0^{-1} ,1 \right) E_1(1)^{-1} E_1'(1) \diag \left( \Upsilon_0 ,  1 \right) +   s^{\frac23} f'(1)  \diag \left( \Upsilon_1,  0\right) -\frac{s \theta_3'(s)}{2} I\Big)_{21}. \label{H-final-form}
\end{align}
Next, we estimate the quantities contained in the above round bracket term by term.  We note from \eqref{R-large-s-exp} that $R(1)=I+\Boh(s^{-1/3})$ and $R'(1)=\Boh(s^{-1/3})$. As $\Re \beta = 0$ (see \eqref{beta-def}), it then follows from \eqref{eq:E-expand-coeff-1} that $ E_1(1)=\Boh(1)$. Combining all these estimates, we obtain
\begin{equation}\label{eq:term1}
  \Big(\diag \left( \Upsilon_0^{-1} ,1 \right) E_1(1)^{-1} R(1)^{-1} R'(1) E_1(1) \diag \left( \Upsilon_0 ,1 \right) \Big)_{21} = \Boh(s^{-\frac13}).
\end{equation}
For the second term in the bracket, we have from the explicit expression of $\Upsilon_0$ in \eqref{eq:H-expand-coeff-0} that
\begin{align}\label{eq:term2}
&\Big( \diag \left( \Upsilon_0^{-1} ,1 \right) E_1(1)^{-1} E_1'(1) \diag \left( \Upsilon_0 ,1 \right) \Big)_{21} \nonumber \\ &=\Gamma(1+\beta)\Gamma(1-\beta)e^{-\beta \pi i} \left( \left(E_1(1)^{-1}E_1'(1)\right)_{22}-\left(E_1(1)^{-1}E_1'(1)\right)_{11} \right) \nonumber\\
& \quad  -\Gamma(1+\beta)^2 \left( E_1(1)^{-1}E_1'(1) \right)_{12} + \Gamma(1-\beta)^2e^{-2\beta \pi i} \left( E_1(1)^{-1}E_1'(1) \right)_{21}.
\end{align}
In view of \eqref{eq:f'}, \eqref{eq:c1theta}, \eqref{eq:Gamma-beta-relation} and  $ E_1(1)^{-1} E_1'(1)$ in \eqref{eq:E-expand-coeff-2}, it follows from straightforward calculations that
\begin{align}
   &\Gamma(1+\beta)\Gamma(1-\beta) e^{-\beta \pi i} \left(\left(E_1(1)^{-1}E_1'(1)\right)_{22}-\left(E_1(1)^{-1}E_1'(1)\right)_{11} \right) \nonumber \\
   &= -\frac{\beta^2 \pi}{\sin(\beta \pi)} e^{-\beta \pi i} \left(\frac{f''(1)}{f'(1)}  + 1  \right) = -\frac{\beta^2 \pi}{\sin(\beta \pi)} e^{-\beta \pi i} \left(\frac{2}{3} + \Boh(s^{-\frac13}) \right), \label{H:derivation-1}
\end{align}
and
\begin{align}
  & -\Gamma(1+\beta)^2 \left( E_1(1)^{-1}E_1'(1) \right)_{12} + \Gamma(1-\beta)^2e^{-2\beta \pi i} \left( E_1(1)^{-1}E_1'(1) \right)_{21} \nonumber \\
  &= -\frac{\sqrt{3}i } {  9 } e^{-\beta \pi i}  \Big( \Gamma(1+\beta)^2 c_1(s)^{-2} + \Gamma(1-\beta)^2c_1(s)^2  \Big) \nonumber \\
  &= -\frac{2\sqrt{3} i}{ 9  }\frac{ \beta \pi   }{\sin(\beta \pi )} e^{-\beta \pi i} \cos (2\vartheta(s) ) , \label{H:derivation-2}
\end{align}
where $\vartheta(s)$ is given in \eqref{def:vtheta}.
For the last two terms in the bracket, we see from \eqref{eq:f'}
and \eqref{eq:H-expand-coeff-1} that
\begin{equation} \label{H:derivation-3}
  \Bigl( s^{\frac23} f'(1)  \diag \left( \Upsilon_1, 0 \right) \Bigr)_{21} = \frac{ \beta \pi i \, e^{-\beta \pi i} }{\sin(\beta \pi )} \left(   \sqrt{3} s^{\frac23} -\frac{\sqrt{3}}{3}\rho s^{\frac13} \right),
\end{equation}
and obviously
\begin{equation}\label{eq:term4}
\left(\frac{s \theta_3'(s)}{2} I \right)_{21} = 0.
\end{equation}
On account of the fact that $\gamma = -2i \sin(\beta \pi) e^{\beta \pi i}$ (see \eqref{beta-def}), we finally obtain the desired large-$s$ asymptotics of $H(s)$ in \eqref{thm:H-asy-infty} by combining \eqref{H-final-form}--\eqref{eq:term4} with \eqref{eq:estvtheta}.

This completes the proof of Theorem \ref{thm:integralRep}. \qed

\subsection{Proof of Theorem \ref{thm:FAsy}}

The proof is based on the differential identities established in Proposition \ref{prop:H-diff}. First, we have from \eqref{eq: integralRep}, \eqref{eq: action-diff} and the small-$s$ behavior of $H(s)$ in \eqref{eq:Hasy0} that
\begin{equation} \label{eq: H-integral-1}
F(s;\gamma,\rho)=\int_0^s H(\tau) \ud \tau = \int_0^s \left( \sum_{i=0}^1 \sum_{k=1}^3p_{i,k}(\tau)q_{i,k}'(\tau)-H(\tau) \right) \ud \tau
+ s H(s) .
\end{equation}
By an easy calculation, one can see that the differential identity with respect to $\gamma$ in \eqref{eq: dH-gamma} still holds if we replace $\gamma$ by $\beta$. Integrating both sides of \eqref{eq: dH-gamma} with respect to $s$, we have
\begin{align} \label{eq: dH-gamma-int}
  \frac{\partial}{\partial \beta} \int_0^s \left( \sum_{i=0}^1\sum_{k=1}^3p_{i,k}(\tau)q_{i,k}'(\tau)-H(\tau) \right) \ud \tau
   =\sum_{i=0}^1\sum_{k=1}^3p_{i,k}(s) \frac{\partial}{\partial \beta}q_{i,k}(s).
\end{align}
Here, we have used the fact that  $$\sum\limits_{i=0}^1\sum\limits_{k=1}^3p_{i,k}(0) \frac{\partial}{\partial \beta}q_{i,k}(0)= 0,$$ which follows from the small-$s$ asymptotics of $p_{i,k}(s),q_{i,k}(s)$, $i=0,1$, $k=1,2,3$, given in \eqref{eq: pk-qk-s=0} and \eqref{eq:q11zero}.

We next rewrite the second integral in \eqref{eq: H-integral-1} by integrating both sides of \eqref{eq: dH-gamma-int} with respect to $\beta$. Since $\gamma = 1 - e^{2 \beta \pi i}$, it follows that $\beta = 0$ if $\gamma=0$. This, together with \eqref{def: q-1} and \eqref{eq:H-F}, implies that $p_{1,k}(s) = 0$ and $H(s) =0$ for $\beta=0$. In addition, from the first equation in \eqref{def:sysdiff}, we also have $q_{0,k}'(s) = 0$ as $p_{1,k}(s) = 0$. These identities particularly give us
\begin{equation}
\sum_{i=0}^1\sum_{k=1}^3p_{i,k}(s)q_{i,k}'(s)-H(s)  \equiv 0, \qquad \beta = 0.
\end{equation}
Integrating both sides of \eqref{eq: dH-gamma-int}, we then obtain from the above formula that
\begin{align} \label{eq: dH-gamma-int-2}
  \int_0^s \left( \sum_{i=0}^1\sum_{k=1}^3p_{i,k}(\tau)q_{i,k}'(\tau)-H(\tau) \right) \ud \tau
   =\int_0^\beta\sum_{i=0}^1\sum_{k=1}^3p_{i,k}(s) \frac{\partial}{\partial \beta'}q_{i,k}(s) \ud \beta'.
\end{align}
A combination of \eqref{eq: dH-gamma-int-2} and \eqref{eq: H-integral-1} yields
\begin{equation} \label{eq: H-integral-2}
F(s;\gamma,\rho) = \int_0^\beta\sum_{i=0}^1\sum_{k=1}^3p_{i,k}(s) \frac{\partial}{\partial \beta'}q_{i,k}(s) \ud \beta' + s H(s) .
\end{equation}
Since the large-$s$ asymptotics of $p_{i,k}(s)$, $q_{i,k}(s)$ and $H(s)$ established in Proposition \ref{prop:asypq} and \eqref{thm:H-asy-infty} are uniform in $\beta$, we are able to obtain the asymptotic expansion of $F(s)$ by estimating each term in the above integrand. To that end, we see from the definitions of $\hat p_{i,k}$ and $\hat q_{i,k}$ in \eqref{eq:hatq} and \eqref{eq:hatp} that
\begin{align}\label{eq:pqtohatpq}
\sum_{k=1}^3p_{i,k}(s) \frac{\partial}{\partial \beta}q_{i,k}(s) &  = \begin{pmatrix}
      \hat{p}_{i,1}(s) \\
      \hat{p}_{i,2}(s) \\
    \hat{p}_{i,3}(s)
    \end{pmatrix}^t \diag\left(s^{\frac13},1, s^{-\frac13}\right)C_N^{-1} \diag\left(s^{-\frac13},1, s^{\frac13}\right) C_\Psi^{-1} \nonumber \\
    & \quad \times \frac{\partial}{\partial \beta}  \left( C_\Psi \diag\left(s^{\frac13},1, s^{-\frac13}\right)C_N \diag\left(s^{-\frac13},1, s^{\frac13}\right) \begin{pmatrix}
      \hat{q}_{i,1}(s) \\
      \hat{q}_{i,2}(s) \\
    \hat{q}_{i,3}(s)
    \end{pmatrix} \right).
\end{align}
Note that only $C_N$ depends on $\beta$, but it is not the case for $C_\Psi$; see their definitions in  \eqref{eq:cons-C-Psi} and \eqref{eq:CN}. By \eqref{eq:pqtohatpq} and the fact that
\begin{equation}
C_N^{-1} \frac{\partial}{\partial \beta} C_N = \begin{pmatrix}
0 & -\sqrt{3} i & \frac{\sqrt{3} }{2} i \\
0 & 0 &  -\sqrt{3} i  \\
0 & 0 & 0
\end{pmatrix},
\end{equation}
it follows
\begin{multline*}
\sum_{k=1}^3p_{i,k}(s) \frac{\partial}{\partial \beta}q_{i,k}(s) = \begin{pmatrix}
      \hat{p}_{i,1}(s) \\
      \hat{p}_{i,2}(s) \\
    \hat{p}_{i,3}(s)
    \end{pmatrix}^t \frac{\partial}{\partial \beta}   \begin{pmatrix}
      \hat{q}_{i,1}(s) \\
      \hat{q}_{i,2}(s) \\
    \hat{q}_{i,3}(s)
    \end{pmatrix}
    \\
    + \begin{pmatrix}
      \hat{p}_{i,1}(s) \\
      \hat{p}_{i,2}(s) \\
    \hat{p}_{i,3}(s)
    \end{pmatrix}^t \begin{pmatrix}
0 & -\sqrt{3} i s^{\frac{1}{3}} & \frac{\sqrt{3} }{2} i s^{\frac{2}{3}} \\
0 & 0 &  -\sqrt{3} i s^{\frac{1}{3}}  \\
0 & 0 & 0
\end{pmatrix}    \begin{pmatrix}
      \hat{q}_{i,1}(s) \\
      \hat{q}_{i,2}(s) \\
    \hat{q}_{i,3}(s)
    \end{pmatrix}.
\end{multline*}
Thus, it is easily seen that
\begin{align}
 &\sum_{i=0}^1\sum_{k=1}^3p_{i,k}(s) \frac{\partial}{\partial \beta}q_{i,k}(s) \nonumber
 \\
 &= \sum_{i=0}^1\sum_{k=1}^3\hat{p}_{i,k}(s) \frac{\partial}{\partial \beta}\hat{q}_{i,k}(s) 
 -\sqrt{3}is^{\frac{1}{3}} \sum_{i=0}^1 \left( \hat{ p}_{i,1}(s)\hat{q}_{i,2}(s)+ \hat{p}_{i,2}(s)\hat{q}_{i,3}(s) \right)
 \nonumber
 \\
 & \quad +\frac{\sqrt{3}}{2}is^{\frac{2}{3}}\sum_{i=0}^1 \hat{ p}_{i,1}(s)\hat{q}_{i,3}(s). \label{eq:pdbetaq}
\end{align}
It comes out that we can calculate the last two terms in the above formula explicitly. Note that the matrix
$ C_\Psi \diag\left(s^{\frac13},1, s^{-\frac13}\right)C_N \diag\left(s^{-\frac13},1, s^{\frac13}\right)$ commutes with the matrix  $A^2$, while
\begin{align}
&\diag\left(s^{\frac13},1, s^{-\frac13}\right)C_N^{-1} \diag\left(s^{-\frac13},1, s^{\frac13}\right) C_\Psi^{-1}  \cdot A \cdot C_\Psi \diag\left(s^{\frac13},1, s^{-\frac13}\right)C_N \diag\left(s^{-\frac13},1, s^{\frac13}\right) \nonumber \\
& = A + \frac{\rho}{3} A^2,
\end{align}
where $A$ is defined in \eqref{def:A}. It is then easily seen from \eqref{eq:hatq}, \eqref{eq:hatp} and the relations \eqref{eq: equationpqk-1}, \eqref{eq: equationpqk-2} that
\begin{align}
& \sum_{i=0}^1 \hat{ p}_{i,1}(s)\hat{q}_{i,3}=\sum_{i=0}^1  p_{i,1}(s)q_{i,3}=1, \label{eq:hatpqk-1} \\
& \sum_{i=0}^1 \Big( \hat{ p}_{i,1}(s)\hat{q}_{i,2}+ \hat{p}_{i,2}(s)\hat{q}_{i,3} \Big)=\sum_{i=0}^1
\Big( p_{i,1}(s)q_{i,2}+ p_{i,2}(s)q_{i,3}\Big) - \frac{\rho}{3} \sum_{i=0}^1 \hat{ p}_{i,1}(s)\hat{q}_{i,3} = \frac{2}{3} \rho. \label{eq:hatpqk-2}
\end{align}
Substituting the above two formulas into \eqref{eq:pdbetaq}, we have
\begin{equation}
\sum_{i=0}^1\sum_{k=1}^3p_{i,k}(s) \frac{\partial}{\partial \beta}q_{i,k}(s) = \sum_{i=0}^1\sum_{k=1}^3\hat{p}_{i,k}(s) \frac{\partial}{\partial \beta}\hat{q}_{i,k}(s) - \frac{2 \sqrt{3}}{3} \rho i s^{\frac{1}{3}} + \frac{\sqrt{3}}{2} i s^{\frac{2}{3}}. \label{eq: pkqk-pkqkhat-relation}
\end{equation}

We now compute large-$s$ asymptotics of $\hat{p}_{i,k}(s) \partial\hat{q}_{i,k}(s)/\partial \beta$. By rewriting $\partial \hat q_{1,k}(s)/\partial \beta $ as $\hat q_{1,k}(s) \partial \ln \hat q_{1,k}(s)/\partial \beta$, it follows from \eqref{eq:q11-asy}--\eqref{eq:p13-asy} that
\begin{align}
\hat  p_{1,1}(s)\frac{\partial}{\partial \beta} \hat  q_{1,1}(s)  &= -\frac{\gamma e^{- \beta \pi i}}{3\pi} |\Gamma(1-\beta)^2 | \Big( \sin (2 \psi(s) - \frac{2\pi}{3} ) + \Boh(s^{-\frac{1}{3}})   \Big)
\nonumber\\
&\quad \times \Big(-\frac{2}{3} \pi i + \frac{\partial}{\partial \beta} |\Gamma(1-\beta)| - \tan(\psi(s) - \frac{\pi}{3}) \frac{\partial}{\partial \beta} \psi(s) +  \Boh(s^{-\frac{1}{3}} \ln s) \Big),\label{eq:p11-dq11}
\\
 \hat  p_{1,2}(s)\frac{\partial}{\partial \beta}\hat  q_{1,2}(s)  &= -\frac{\gamma e^{- \beta \pi i}}{3\pi} |\Gamma(1-\beta)^2 |  \Big( \sin (2 \psi(s) ) + \Boh(s^{-\frac{1}{3}})  \Big)
\nonumber
\\
&\quad \times \Big(-\frac{2}{3} \pi i + \frac{\partial}{\partial \beta} |\Gamma(1-\beta)| - \tan(\psi (s) ) \frac{\partial}{\partial \beta} \psi(s)  + \Boh(s^{-\frac{1}{3}} \ln s) \Big),\label{eq:p12-dq12}
\\
 \hat  p_{1,3}(s)\frac{\partial}{\partial \beta}\hat  q_{1,3}(s)  &= -\frac{\gamma e^{- \beta \pi i}}{3\pi} |\Gamma(1-\beta)^2 | \Big( \sin (2 \psi(s) + \frac{2\pi}{3} ) + \Boh(s^{-\frac{1}{3}}) \Big)
\nonumber \\
&\quad \times \Big(-\frac{2}{3} \pi i + \frac{\partial}{\partial \beta} |\Gamma(1-\beta)| - \tan(\psi(s) + \frac{\pi}{3}) \frac{\partial}{\partial \beta} \psi(s)  + \Boh(s^{-\frac{1}{3}} \ln s ) \Big). \label{eq:p13-dq13}
\end{align}
Note that, there is a factor $s^\beta$ in error estimates $\Boh(s^{-\frac{1}{3}} )$ in \eqref{eq:q11-asy}--\eqref{eq:p13-asy}. Then, an additional $\ln s$ factor appears in the error estimates after differentiating with respect to $\beta$ in the above formulas. Adding \eqref{eq:p11-dq11}--\eqref{eq:p13-dq13} together, we have from  \eqref{eq:Gamma-beta-relation} and the facts $1-\gamma = e^{2 \beta \pi i }$,
\begin{align*}
\sin(2 \psi) + \sin(2  \psi - \frac{2\pi}{3}) + \sin(2  \psi + \frac{2\pi}{3}) & =0,\\
\sin( \psi)^2 + \sin( \psi- \frac{\pi}{3})^2 + \sin( \psi + \frac{\pi}{3})^2 & = \frac{3}{2},
\end{align*}
that
\begin{align}
\sum_{k=1}^3 \hat  p_{1,k}(s)\frac{\partial}{\partial \beta}\hat  q_{1,k}(s) & = -2 \beta i \frac{\partial}{\partial \beta} \psi(s)  + \Boh(s^{-\frac{1}{3}} \ln s) \nonumber \\
& = -2\beta i \frac{\partial}{\partial \beta} \arg \Gamma(1-\beta) - 2\beta \left(\frac{2}{3} \ln s + 2 \ln 3 \right)  + \Boh(s^{-\frac{1}{3}} \ln s) . \label{eq:p1q1hat-sum-asy}
\end{align}
Similarly, since $\partial (1- \gamma)^{1/3}/ \partial \beta  =2\pi i(1- \gamma)^{1/3}/3$,
it is readily seen from \eqref{eq:q01-asy-alpha}--\eqref{eq:q03-asy-alpha} and \eqref{eq:p01-asy-alpha}--\eqref{eq:p03-asy-alpha} that
\begin{align}
\hat  p_{0,2}(s)\frac{\partial}{\partial \beta} \hat q_{0,2}(s) & = (\pi_3(\rho) + \rho  + \Boh(s^{-\frac{1}{3}})) \left(\frac{2 \pi i}{3} + \frac{\partial}{\partial \beta} \right) \Big( - \pi_3(\rho) - \frac{\rho}{3}+ \Boh(s^{-\frac{1}{3}})  \Big) \nonumber \\
& = -\frac{2 \pi i}{3} (\pi_3(\rho) + \rho ) (\pi_3(\rho) + \frac{\rho}{3})  + \Boh(s^{-\frac{1}{3}} \ln s) ,
\end{align}
and
\begin{align}
 & \hat p_{0,1}(s)\frac{\partial}{\partial \beta}\hat q_{0,1}(s)  +   \hat p_{0,3}(s)\frac{\partial}{\partial \beta}\hat q_{0,3}(s) \nonumber \\
  &=   (1 + \Boh(s^{-\frac{1}{3}})) \left(\frac{2 \pi i}{3} + \frac{\partial}{\partial \beta} \right) \Big(  -\pi_6(\rho)+\pi_3(\rho)\left(\pi_3(\rho)+\frac{\rho}{3}\right) - \frac{\beta^2}{3}  +\frac{2 \beta i}{3\sqrt{3}}  \cos(2 \psi(s)) + \Boh(s^{-\frac{1}{3}})  \Big)  \nonumber \\
  &  \quad + \Big(- \alpha  + \pi_6(\rho) + \rho \pi_3(\rho)  + \frac{\rho^2}{3}   + \frac{\beta^2}{3} - \frac{ 2 \beta i}{3 \sqrt{3} } \cos(2 \psi(s) )   + \Boh(s^{-\frac{1}{3}})  \Big) \left(\frac{2 \pi i}{3} + \Boh(s^{-\frac{1}{3}}) \right) \nonumber \\
  & =  \frac{2 \pi i}{3}\Big( -\alpha +( \pi_3(\rho) + \rho)  (\pi_3(\rho) + \frac{ \rho}{3}) \Big) + \frac{\partial}{\partial \beta} \Big( \frac{2 \beta i }{3 \sqrt{3}}   \cos(2 \psi(s) )  - \frac{\beta^2}{3}\Big)  +  \Boh(s^{-\frac{1}{3}} \ln s).
\end{align}
Adding the above two formulas together, we obtain
\begin{equation}
\sum_{k=1}^3 \hat  p_{0,k}(s)\frac{\partial}{\partial \beta}\hat  q_{0,k}(s) = -\frac{2 \alpha \pi i}{3} +  \frac{\partial}{\partial \beta} \Big( \frac{2 \beta i }{3 \sqrt{3}}   \cos(2 \psi(s) )  - \frac{\beta^2}{3}\Big)  +  \Boh(s^{-\frac{1}{3}} \ln s).  \label{eq:p0q0hat-sum-asy}
\end{equation}
Inserting \eqref{eq:p1q1hat-sum-asy} and \eqref{eq:p0q0hat-sum-asy} into \eqref{eq: pkqk-pkqkhat-relation}, we have, after integrating with respect to $\beta$ and a straightforward calculation, that
\begin{align}
& \int_0^\beta\sum_{i=0}^1\sum_{k=1}^3p_{i,k}(s) \frac{\partial}{\partial \beta'}q_{i,k}(s) \ud \beta'
\nonumber
\\
& = -2i \int_{0}^\beta \beta' \frac{\partial}{\partial \beta} \arg \Gamma(1-\beta') \ud \beta' - \left(\frac{2}{3} \ln s + 2 \ln 3 \right) \beta^2 \nonumber \\
& \quad  - \frac{2\alpha \beta \pi i}{3}  + \frac{2 \beta i }{3 \sqrt{3}}    \cos(2 \psi(s) ) -  \frac{\beta^2}{3}  - \frac{2 \sqrt{3}}{3} \rho \beta i s^{\frac{1}{3}} + \frac{\sqrt{3}}{2} \beta i s^{\frac{2}{3}} + \Boh(s^{-\frac{1}{3}} \ln s) \nonumber \\
& = \frac{\sqrt{3}}{2} \beta i s^{\frac{2}{3}} - \frac{2 \sqrt{3}}{3} \rho \beta i s^{\frac{1}{3}}   -\frac{2 \beta ^2}{3} \ln s  + \frac{2 \beta i }{3 \sqrt{3}}    \cos(2 \psi(s) ) \nonumber \\
& \quad + \ln \Big(G(1+\beta) G(1-\beta) \Big) + \Big(\frac{2}{3}-2\ln 3 \Big) \beta^2 - \frac{2\alpha \beta \pi i}{3}  + \Boh(s^{-\frac{1}{3}} \ln s),
\end{align}
where $G(z)$ is the Barnes G-function; see a similar integral computation in \cite[Equation (7.50)]{DXZ202}.

Finally, we obtain \eqref{eq: large gap asy} with  the error estimate $\Boh(s^{-\frac{1}{3}} \ln s)$ by substituting the above formula and \eqref{thm:H-asy-infty} into \eqref{eq: H-integral-2}. 
Integrating on both sides of \eqref{thm:H-asy-infty}, we see that the error estimate $\Boh(s^{-\frac{1}{3}} \ln s)$ can be replaced by $\Boh(s^{-\frac{1}{3}} )$ eventually.


This completes the proof of Theorem \ref{thm:FAsy}. \qed

\begin{appendices}

\section{Confluent hypergeometric parametrix}\label{sec:CHF}
The confluent hypergeometric parametrix $\Phi^{(\CHF)}(z)=\Phi^{(\CHF)}(z;\beta)$ with $\beta$ being a parameter is a solution of the following RH problem.

\subsection*{RH problem for $\Phi^{(\CHF)}$}
 \begin{description}
  \item(a)   $\Phi^{(\CHF)}(z)$ is analytic in $\mathbb{C}\setminus \{\cup^6_{j=1}\widehat\Sigma_j\cup\{0\}\}$, where the contours $\widehat\Sigma_j$, $j=1,\ldots,6,$ are indicated in Fig. \ref{fig:jumps-Phi-C}.

  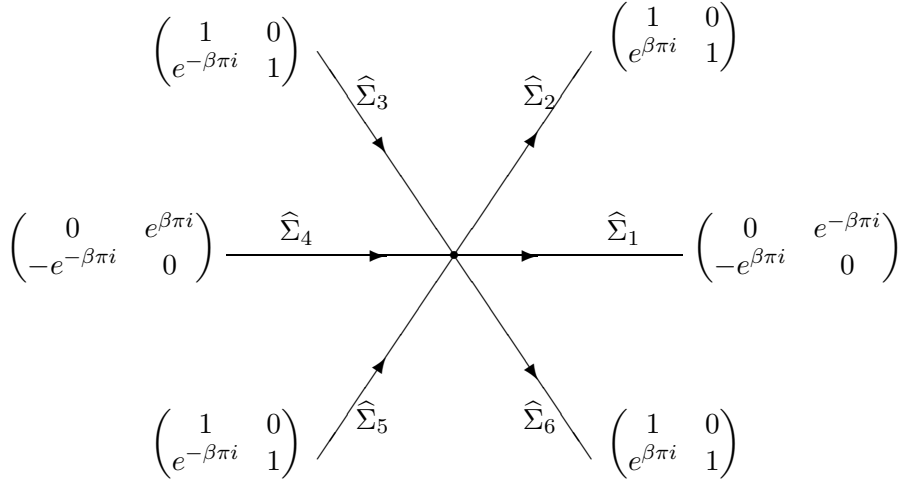
\begin{figure}[h]
\begin{center}
   \setlength{\unitlength}{1truemm}
   \begin{picture}(100,70)(-5,2)
       \put(40,40){\line(-2,-3){18}}
       \put(40,40){\line(-2,3){18}}
       \put(40,40){\line(-1,0){30}}
       \put(40,40){\line(1,0){30}}
  \put(40,40){\line(2,-3){18}}
    \put(40,40){\line(2,3){18}}

       \put(30,55){\thicklines\vector(2,-3){1}}
       \put(30,40){\thicklines\vector(1,0){1}}
       \put(50,40){\thicklines\vector(1,0){1}}
       \put(30,25){\thicklines\vector(2,3){1}}
      \put(50,25){\thicklines\vector(2,-3){1}}
       \put(50,55){\thicklines\vector(2,3){1}}



       \put(27,17){$\widehat\Sigma_5$}
       \put(0,14) {$\begin{pmatrix}
     1 & 0 \\
     e^{- \beta\pi i} & 1
     \end{pmatrix}$}

       \put(27,60){$\widehat\Sigma_3$}
       \put(0,66){$\begin{pmatrix}
    1 & 0 \\
    e^{ -\beta\pi i} & 1
    \end{pmatrix}$}

       \put(17,42){$\widehat\Sigma_4$}
       \put(-19,40){$\begin{pmatrix}
    0 &   e^{\beta\pi i} \\
     -  e^{-\beta\pi i} &  0
     \end{pmatrix}$}

       \put(49,17){$\widehat\Sigma_6$}
       \put(60,14){$\begin{pmatrix}
   1 & 0 \\
   e^{\beta\pi i} & 1
   \end{pmatrix}$}
       \put(49,60){$\widehat \Sigma_2$}
       \put(60,68) {$\begin{pmatrix}
    1 & 0 \\
    e^{ \beta \pi i } & 1
    \end{pmatrix}$}

       \put(60,42){$\widehat \Sigma_1$}
       \put(70,40){ $\begin{pmatrix}
    0 &   e^{-\beta \pi i} \\
    -  e^{\beta \pi i} &  0
    \end{pmatrix}$ }
%

       \put(40,40){\thicklines\circle*{1}}
\end{picture}
   \caption{The jump contours $\widehat \Gamma_i$ and the corresponding jump matrices $\widehat J_i$, $i=1,\ldots,6$,
    in the RH problem for $\Phi^{(\CHF)}$.}
   \label{fig:jumps-Phi-C}
\end{center}
\end{figure}

  \item(b) $\Phi^{(\CHF)}$ satisfies the following jump condition:
  \begin{equation}\label{HJumps}
  \Phi^{(\CHF)}_+(z)=\Phi^{(\CHF)}_-(z) \widehat J_i(z), \quad z \in \widehat\Sigma_i,\quad j=1,\ldots,6,
  \end{equation}
  where the constant matrices $J_i$ are shown in Figure \ref{fig:jumps-Phi-C}.

  \item(c) $\Phi^{(\CHF)}(z)$ satisfies the following asymptotic behavior at infinity:
  \begin{align}\label{H at infinity}
 \Phi^{(\CHF)}(z)=& ~ \left(I + \frac{\Phi^{(\CHF)}_1}{z}+\Boh(z^{-2})\right) z^{-\beta \sigma_3}e^{-\frac{iz}{2}\sigma_3}
  \nonumber
\\
& \times \left\{\begin{array}{ll}
                         I, & ~0< \arg z <\pi,
                         \\
                        \begin{pmatrix}
                                                             0 &   -e^{\beta\pi i} \\
                                                            e^{-\beta\pi i } &  0
                        \end{pmatrix}, &~ \pi< \arg z<\frac{3\pi}{2},
                        \\
                        \begin{pmatrix}
                        0 &   -e^{-\beta\pi i} \\
                        e^{\beta\pi i} &  0
                        \end{pmatrix}, & -\frac{\pi}{2}<\arg z<0,
 \end{array}\right.
\end{align}
where
\begin{equation} \label{Phi1-def}
\Phi^{(\CHF)}_1=
\begin{pmatrix}
\beta^2 i & -\frac{\Gamma(1-\beta)}{\Gamma(\beta)}e^{-\beta \pi i}i
\\
\frac{\Gamma(1+\beta)}{\Gamma(-\beta)}e^{\beta \pi i}i & -\beta^2 i
\end{pmatrix}.
\end{equation}
\item(d) As $z\to 0$, we have $\Phi^{(\CHF)}(z)=\Boh(\ln |z|)$.

\end{description}

From \cite{IK}, it follows that the above RH problem can be solved explicitly in terms of the confluent hypergeometric functions. Moreover, as $z\to 0$, we have (cf. \cite[Equation (A.10)]{DXZ202})
\begin{equation}\label{eq:H-expand-2}
\Phi^{(\CHF)}(z) e^{-\frac{\beta \pi i}{2} \sigma_3} = \Upsilon_0\left( I+ \Upsilon_1z+\Boh(z^2) \right) \begin{pmatrix} 1 & -\frac{\gamma}{2\pi i} \ln (e^{-\frac{\pi i}{2}}z) \\
0 & 1  \end{pmatrix},
\end{equation}
for $z$ belonging to the region bounded by the rays $\widehat \Sigma_2$ and $\widehat \Sigma_3$, where $\gamma=1-e^{2\beta \pi i}$,
\begin{align}\label{eq:H-expand-coeff-0}
\Upsilon_0
=\begin{pmatrix} \Gamma\left(1-\beta\right) e^{-\beta \pi i} &\frac{1}{\Gamma(\beta)} \left( \frac{\Gamma'\left(1-\beta\right)}{\Gamma\left(1-\beta\right)} +2\gamma_{\textrm{E}} \right) \vspace{5pt} \\
\Gamma\left(1+\beta\right) & -\frac{e^{\beta \pi i}}{\Gamma(-\beta)} \left( \frac{\Gamma'\left(-\beta\right)}{\Gamma\left(-\beta\right) } +2\gamma_{\textrm{E}}\right) \end{pmatrix}
\end{align}
with $\gamma_{\textrm{E}}$ being the Euler's constant,
and
 \begin{equation}\label{eq:H-expand-coeff-1}
(\Upsilon_1)_{21}=\frac{ \beta \pi i \, e^{-\beta \pi i} }{\sin(\beta \pi )}.
\end{equation}

\section{Proof of Proposition \ref{prop:pk}} \label{sec:appen-prop}
Since the functions $p_1(z)$ and $p_2(z)$ are entire, it is clear that their Wronskian $\mathcal{W}(z)$ admits an expansion as shown in \eqref{eq:wronskian0}. To establish the first two coefficients therein, we start with the special case that $\rho=0$. In this case, it is easily seen from \eqref{def:pkint} that
\begin{equation}
p_1(z)=\int_{\Gamma_1} t^{\alpha-3} e^{zt+\frac{1}{2t^2}} \ud t, \qquad p_2(-z)=e^{-\alpha \pi i}\int_{\Gamma_2} t^{\alpha-3} e^{-zt+\frac{1}{2t^2}} \ud t,
\end{equation}
where the contours $\Gamma_1$ and $\Gamma_2$ are illustrated in Figure \ref{fig: Gamma-k}. We then obtain the relation $p_1(z)=-p_2(-z)$ after a change of variable $t=e^{\pi i} \tau$ in the above integral representation of $p_2(-z)$. By inserting the expansion of $e^{tz}$ into the integral representation of $p_1(z)$, we find that
\begin{equation}\label{eq:p1expand}
p_1(z)=\sum_{k=0}^{\infty}\int_{\Gamma_1} t^{\alpha+k-3}e^{\frac{1}{2t^2}} \ud t \cdot \frac{z^k}{k!}.
\end{equation}
To evaluate the integral in \eqref{eq:p1expand}, we see after a change of variable $\tau=\frac{1}{2t^2}$ that
\begin{equation}\label{eq:p1Coef-1}
\int_{\Gamma_1} t^{\alpha+k-3}e^{\frac{1}{2t^2}} \ud t=2^{-\frac{\alpha+k}{2}} \int_{\mathcal{H}}  \tau^{-\frac{\alpha+k}{2}}e^{\tau} \ud \tau,
\end{equation}
where $\mathcal{H}$ is the Hankel's loop which starts at $-\infty$, circles the origin once in the positive direction and returns to $-\infty$.
Using the Hankel's loop integral for $1/\Gamma(z)$ (see \cite[Formula 5.9.2]{DLMF}), we have
\begin{equation}\label{eq:p1Coe2}
\int_{\Gamma_1} t^{\alpha+k-3}e^{\frac{1}{2t^2}}\ud t= 2^{-\frac{\alpha+k}{2}} \frac{ 2\pi i} {\Gamma((\alpha+k)/2)}.
 \end{equation}
This, together with \eqref{eq:p1expand}, gives us
\begin{equation}\label{eq:p1p2expand}
p_1(z)=-p_2(-z)=2\pi i\sum_{k=0}^{\infty}2^{-\frac{\alpha+k}{2}}\frac{z^k}{k!\Gamma((\alpha+k)/2))},
\end{equation}
for $\rho=0$. As a consequence, we get
\begin{align}
\mathcal{W}(0;0) &= 2p_1(0)p_1'(0)=-\frac{\pi^22^{-\alpha+5/2}}{\Gamma(\alpha/2)\Gamma((\alpha+1)/2)}
=-\frac{(2\pi )^{3/2}}{\Gamma(\alpha)}, \label{eq:w0} \\
\mathcal{W}'(0;0) &= 0, \label{eq:w0-d}
\end{align}
which agree with those given in \eqref{eq:wronskian0} and \eqref{eq:w0w1} for $\rho = 0$. The third equality in \eqref{eq:w0} follows from the following duplication formula for Gamma function:
\begin{equation*}
\Gamma(2z)=\frac{2^{2z-1}}{\sqrt{\pi}}\Gamma(z)\Gamma\left(z+\frac12\right),\qquad 2z \neq 0,-1,-2,\ldots;
\end{equation*}
cf. \cite[Formula 5.5.5]{DLMF}.

We next deal with the case for general $\rho$. By \eqref{def:pkint}, it is easily seen that
\begin{equation} \label{eq:dp10rho-1}
\frac{\ud}{\ud \rho}p_i^{(k+1)}(0)= p_i^{(k)}(0), \quad  i=1,2,\quad k =0,1,2,\ldots,
\end{equation}
where $p_i^{(k)}$ indicates the $k$-th derivative with respect to $z$. This, together with the definition of $\mathcal{W}$ in \eqref{def:wronskian-def}, implies that
\begin{equation}\label{eq:dwrho}
\frac{\ud}{\ud \rho}\mathcal{W}(0;\rho)=\frac{\ud}{\ud \rho}p_1(0)\cdot p_2'(0)-\frac{\ud}{\ud \rho}p_2(0)\cdot p_1'(0).
\end{equation}
Note that
\begin{align}\label{eq:dp10rho-2}
\frac{\ud}{\ud \rho}p_1(0)&= \int_{\Gamma_1}t^{\alpha-4}e^{\frac{\rho}{t}+\frac{1}{2t^2}}\ud t=- \int_{\Gamma_1}t^{\alpha-1}e^{\frac{\rho}{t}}\frac{\ud}{\ud t}\left(e^{\frac{1}{2t^2}}\right)
\nonumber
\\&=(\alpha-1)p_1'(0)-\rho  p_1(0),
\end{align}
and similarly,
\begin{equation}\label{eq:dp10rho-3}
\frac{\ud}{\ud \rho}p_2(0)=(\alpha-1)p_2'(0)-\rho p_2(0).
\end{equation}
We then obtain from \eqref{eq:dwrho}--\eqref{eq:dp10rho-3} that
\begin{equation}
\frac{\ud}{\ud \rho}\mathcal{W}(0;\rho)=-\rho \mathcal{W}(0;\rho).
\end{equation}
Solving this first order differential equation with the initial condition \eqref{eq:w0} gives us
\begin{equation} \label{eq:w0-value}
\mathcal{W}(0;\rho) = - \frac{(2\pi )^{3/2} }{\Gamma(\alpha) } e^{-\rho^2/2}, \qquad  \alpha > -1.
\end{equation}

The function $\mathcal{W}'(0;\rho)$  satisfies a similar first order differential equation as well. From \eqref{eq:dp10rho-1} and \eqref{eq:dp10rho-2}, we have
\begin{align}
\frac{\ud}{\ud \rho}\mathcal{W}'(0;\rho) &= \frac{\ud}{\ud \rho} p_1(0) \cdot p_2''(0) +  p_1(0) \cdot \frac{\ud}{\ud \rho} p_2''(0) - \frac{\ud}{\ud \rho} p_1''(0) \cdot p_2(0) -  p_1''(0) \cdot \frac{\ud}{\ud \rho} p_2(0) \nonumber \\
&=  - \rho \mathcal{W}'(0;\rho)+\mathcal{W}(0;\rho) + (\alpha - 1) \Big( p_1'(0) p_2''(0) - p_1''(0) p_2'(0)  \Big). \label{eq:dwrho'}
\end{align}
In view of \eqref{eq:p1p2expand} and \eqref{eq:w0-value}, we obtain the  following initial condition  after setting $\rho=0$ in \eqref{eq:dwrho'}
\begin{equation} \label{eq:dw0-value}
\frac{\ud}{\ud \rho}\mathcal{W}'(0;\rho)  \Big |_{\rho=0} = - \frac{(2\pi )^{3/2} }{\Gamma(\alpha+1) }.
\end{equation}
Recall that both $p_1$ and $p_2$ satisfy the equation \eqref{eq:Pearcey1}, by setting $x=0$ in \eqref{eq:Pearcey1},  we  have
\begin{equation}
\alpha p_i''(0) = \rho p_i'(0) + p_i(0), \qquad i = 1,2.
\end{equation}
The above formula gives us that, if $\alpha \neq 0$
\begin{align}
 p_1'(0) p_2''(0) - p_1''(0) p_2'(0) = \frac{1}{\alpha} \Big( p_1'(0) p_2(0) - p_1(0) p_2'(0) \Big) = -\frac{1}{\alpha}  \mathcal{W}(0; \rho), \label{eq:p1'p2''-1}
\end{align}
and if $\alpha=0$,
\begin{align}
p_1'(0) p_2''(0) - p_1''(0) p_2'(0) = - \frac{1}{\rho} \Big( p_1(0) p_2''(0) - p_1''(0) p_2(0) \Big) = -\frac{1}{\rho}  \mathcal{W}'(0; \rho). \label{eq:p1'p2''-2}
\end{align}
Substituting the above two formulas into \eqref{eq:dwrho'}, we obtain the following differential equations for $\mathcal{W}'(0;\rho)$:
\begin{align}
\frac{\ud}{\ud \rho}\mathcal{W}'(0;\rho)  = \begin{cases}
- \rho \mathcal{W}'(0;\rho)+\frac{1}{\alpha}\mathcal{W}(0;\rho) ,  & \textrm{if } \alpha \neq 0, \\
- (\rho - \frac{1}{\rho})  \mathcal{W}'(0;\rho) + \mathcal{W}(0;\rho) , & \textrm{if } \alpha = 0,
\end{cases}
\end{align}
Solving the above equations with $\mathcal{W}(0;\rho)$ in \eqref{eq:w0-value}, the initial conditions \eqref{eq:w0-d} and \eqref{eq:dw0-value}, we obtain
\begin{equation} \label{eq:w0'-value}
\mathcal{W}'(0;\rho) = - \frac{(2\pi )^{3/2} }{\Gamma(\alpha+1) } \rho e^{-\rho^2/2}, \qquad \alpha > -1.
\end{equation}
A combination of \eqref{eq:wronskian0}, \eqref{eq:w0-value} and \eqref{eq:w0'-value} yields \eqref{eq:w0w1}.

To prove \eqref{eq:p-alpha-exp} and \eqref{eq:b0}, we introduce
\begin{equation}
p(z;\alpha)=e^{\alpha \pi i} p_1(z) - p_2(z). \label{def:p-alpha}
\end{equation}
 First, we differentiate \eqref{eq:dp10rho-2} with respect to $\rho$ once again and obtain
\begin{align}
\frac{\ud^2}{\ud \rho^2}p_i(0) &= (\alpha - 1) \frac{\ud}{\ud \rho}p'_i(0)  - p_i(0) - \rho  \frac{\ud}{\ud \rho}p_i(0) \nonumber \\
& = (\alpha - 2) p_i(0)   - \rho  \frac{\ud}{\ud \rho}p_i(0) , \qquad i = 1,2,
\end{align}
where \eqref{eq:dp10rho-1} is used in the last step. This, together with the definition of $p(z;\alpha)$ in \eqref{def:p-alpha}, implies that $p(0;0)$ satisfies the following second order differential equation
\begin{equation} \label{eq:p-alpha-eqn-1}
\frac{\ud^2}{\ud \rho^2}p(0;0) = -2 p(0;0)   - \rho  \frac{\ud}{\ud \rho}p(0;0) .
\end{equation}
Similar to the derivation of \eqref{eq:dp10rho-2}, we also note that
\begin{align}
\frac{\ud}{\ud \rho}p_i'(0)  =  \alpha \, p_i''(0)-\rho \, p_i'(0), \qquad i=1,2. \label{eq:dp10rho-4}
\end{align}
We then obtain from \eqref{eq:dp10rho-1}, \eqref{eq:dp10rho-2} and the above differential relations the following differential equations for $p'(0;0)$ and $p''(0;0)$:
\begin{align}
\frac{\ud^2}{\ud \rho^2}p'(0;0) &= - p'(0;0)   - \rho  \frac{\ud}{\ud \rho}p'(0;0), \label{eq:p-alpha-eqn-2} \\
\frac{\ud^2}{\ud \rho^2}p''(0;0)&= - \rho  \frac{\ud}{\ud \rho}p''(0;0) .  \label{eq:p-alpha-eqn-3}
\end{align}
We next establish the initial conditions for these differential equations. By setting $\alpha = 0$ in \eqref{eq:p1p2expand}, we have, if $\rho=0$,
\begin{equation}
p(0; 0) = 2 p_1(0) =  0, \quad p'(0;0) = 0, \quad p''(0;0) = 2p_1''(0)= 2 \pi i.
\end{equation}
This, together with \eqref{eq:dp10rho-1} and \eqref{eq:dp10rho-2}, implies that, for $\rho=0$,
\begin{equation}
\frac{\ud}{\ud \rho} p(0; 0) =  0, \quad \frac{\ud}{\ud \rho} p'(0;0) = 0, \quad \frac{\ud}{\ud \rho} p''(0;0) =  0.
\end{equation}
Solving the differential equations \eqref{eq:p-alpha-eqn-1}, \eqref{eq:p-alpha-eqn-2} and \eqref{eq:p-alpha-eqn-3} with the above initial conditions, we have
\begin{equation}
p(0; 0) =  0, \quad p'(0;0) = 0, \quad p''(0;0) =  2 \pi i,
\end{equation}
for general $\rho$. Since $p(z;\alpha)$ is an entire function in $z$, the above formulas give us \eqref{eq:p-alpha-exp} and \eqref{eq:b0}.

To show \eqref{eq:qExpand} and \eqref{eq:c0}, we see from \eqref{def:q} that
\begin{equation}\label{eq:q}
q(z)=-\frac{1}{2\sin(\alpha\pi)i}\left(p_1(z)-e^{-\alpha\pi i}p_2(z)-(e^{\alpha \pi i}-e^{-\alpha\pi i})p_3(z)\right).
\end{equation}
We intend to rewrite right-hand side of the above formula as a single integral.  Note that, although the integrands of $p_k(z)$, $k=1,2,3$, in \eqref{def:pkint} appear to be the same, the branches of $t^\alpha$ are chosen differently. In \eqref{eq:q}, we maintain the branches of $t^\alpha$ in the integrals of $p_1(z)$ and  $e^{\alpha \pi i} p_3(z)$, while modify the branches of $t^\alpha$ in the integrals of $e^{-\alpha\pi i}p_2(z)$ and $e^{-\alpha\pi i}p_3(z)$ by making a change of variables $t \to t e^{2\pi i}$, that is,
\begin{align}
e^{-\alpha\pi i} p_2(z) &=  \int_{\Gamma_2} t^{\alpha-3}e^{zt + \frac{\rho}{t}+\frac{1}{2t^2}}\ud t, \quad \textrm{with } -\frac{3 \pi}{2} < \arg t <- \frac{\pi}{2}, \label{eq:p2-new-int} \\
e^{-\alpha\pi i} p_3(z) &= \int_{\Gamma_3} t^{\alpha-3}e^{zt + \frac{\rho}{t}+\frac{1}{2t^2}}\ud t, \quad \textrm{with } -2 \pi < \arg t <- \pi. \label{eq:p3-new-int}
\end{align}
Furthermore, we require that the integral contour $\Gamma_3$ of $e^{\alpha \pi i} p_3(z)$ and $e^{-\alpha\pi i} p_3(z) $ lies in the right/left half plane, respectively. Thus, the argument in \eqref{eq:p3-new-int} is eventually set to be $-3 \pi /2 < \arg t <- \pi$. By \eqref{def:pkint} and \eqref{eq:p2-new-int}--\eqref{eq:p3-new-int}, the contribution near $t=0$ cancels in \eqref{eq:q}. Therefore, we can deform the contour of  integral in \eqref{eq:q} to be a single Hankel loop $\mathcal{H} e^{-\frac{\pi}{2} i}$, i.e., for $\pi/4<\arg z< 3 \pi/4$, we have
\begin{equation}\label{eq:qInt-1}
q(z)=-\frac{1}{2 \sin(\alpha \pi)i}\int_{\mathcal{H} e^{-\frac{\pi}{2} i}} t^{\alpha-3}e^{zt + \frac{\rho}{t}+\frac{1}{2t^2}}\ud t, \quad  \textrm{with } -\frac{3 \pi}{2} < \arg t < \frac{\pi}{2}.
\end{equation}
A further change of variable $zt \to t$ yields
\begin{equation}\label{eq:qInt-2}
q(z)=-\frac{ z^{2-\alpha}}{2 \sin(\alpha \pi)i}\int_{\mathcal{H}} t^{\alpha-3}e^{t+\frac{z\rho}{t}+\frac{z^2}{2t^2}}\ud t.
\end{equation}
Expanding $\exp\left(\frac{z\rho}{t}+\frac{z^2}{2t^2}\right)$ in the above formula, we obtain \eqref{eq:qExpand}
with
\begin{equation}\label{eq:c00}
c_0=-\frac{1}{2 \sin(\alpha \pi)i}\int_{\mathcal{H}} t^{\alpha-3}e^t\ud t =-\frac{\pi }{\sin(\alpha \pi)\Gamma(3-\alpha)},
\end{equation}
as shown in \eqref{eq:c0}.

Finally, to show  \eqref{eq:ddp3}, we see from the integral representation for $p_3(z)$ in \eqref{def:pkint} that
\begin{equation}\label{eq:dp3}
p_3''(z)=e^{-\alpha \pi i}\int_{\Gamma_3} t^{\alpha-1}e^{zt+\frac{\rho}{t}+\frac{1}{2t^2}}\ud t,
\end{equation}
where the integral contour $\Gamma_3$ goes along the imaginary axis from $\infty e^{ \frac{\pi}{2} i }$ to the origin for  $\pi/4<\arg z< 3 \pi/4$. If $\alpha> 0$, we obtain after a change of variable $zt \to t$ that
\begin{equation}\label{eq:dp3Int}
p_3''(z)=e^{-\alpha \pi i}z^{-\alpha} \int_{-\infty}^0 t^{\alpha-1}e^{t+\frac{z \rho}{t}+\frac{z^2}{2t^2}}\ud t.
\end{equation}
As $z\to 0$, it follows that
\begin{equation}\label{eq:dp3expand}
\lim_{z\to0} z^{\alpha}p_3''(z)=e^{-\pi i\alpha} \int_{-\infty}^0 t^{\alpha-1}e^t\ud t=-\Gamma(\alpha),
\end{equation}
as required.

This completes the proof of Proposition \ref{prop:pk}. \hfill \qed

\end{appendices}

\section*{Acknowledgements}
Dan Dai was partially supported by grants from the City University of Hong Kong (Project No. 7005597 and 7005252), and grants from the Research Grants Council of the Hong Kong Special Administrative Region, China (Project No. CityU 11303016 and CityU 11300520). Shuai-Xia Xu was partially supported by National Natural Science Foundation of China under grant numbers 11971492, 11571376 and 11201493, and Natural Science Foundation for Distinguished Young Scholars of Guangdong Province of China  under grant number 2022B1515020063. Lun Zhang was partially supported by National Natural Science Foundation of China under grant numbers 11822104, ``Shuguang Program'' supported by Shanghai Education Development Foundation and Shanghai Municipal Education Commission.


\end{document}